\numberwithin{equation}{section}
\theoremstyle{plain}
\newtheorem{thm}{Theorem}[section]
\newtheorem{lemma}[thm]{Lemma}
\theoremstyle{definition}
\newtheorem{remark}{Remark}[section]
  \let\oldparagraph\paragraph
  \renewcommand{\paragraph}{
    \@ifstar
      \xxxParagraphStar
      \xxxParagraphNoStar
  }
  \newcommand{\xxxParagraphStar}[1]{\oldparagraph*{#1}\mbox{}}
  \newcommand{\xxxParagraphNoStar}[1]{\oldparagraph{#1}\mbox{}}
  \let\oldsubparagraph\subparagraph
  \renewcommand{\subparagraph}{
    \@ifstar
      \xxxSubParagraphStar
      \xxxSubParagraphNoStar
  }
  \newcommand{\xxxSubParagraphStar}[1]{\oldsubparagraph*{#1}\mbox{}}
  \newcommand{\xxxSubParagraphNoStar}[1]{\oldsubparagraph{#1}\mbox{}}
\patchcmd\longtable{\par}{\if@noskipsec\mbox{}\fi\par}{}{}
\def\maxwidth{\ifdim\Gin@nat@width>\linewidth\linewidth\else\Gin@nat@width\fi}
\def\maxheight{\ifdim\Gin@nat@height>\textheight\textheight\else\Gin@nat@height\fi}
\def\fps@figure{htbp}
  \renewcommand*\contentsname{Table of contents}
  \newcommand\contentsname{Table of contents}
  \renewcommand*\listfigurename{List of Figures}
  \newcommand\listfigurename{List of Figures}
  \renewcommand*\listtablename{List of Tables}
  \newcommand\listtablename{List of Tables}
  \renewcommand*\figurename{Figure}
  \newcommand\figurename{Figure}
  \renewcommand*\tablename{Table}
  \newcommand\tablename{Table}
\newcommand{\anon}{1}
\begin{document}

\def\spacingset#1{\renewcommand{\baselinestretch}%
{#1}\small\normalsize} \spacingset{1}


\if1\anon
{
  \title{\bf Random Subset Averaging\thanks{We thank Yuhong Yang for useful comments. We also benefit a lot from the discussion with Jiandong Wang. Wenhao Cui gratefully acknowledges financial support from the National Natural Science Foundation of China (NSFC, 72473006, 72103014). Jie Hu's research is supported by the Postdoctoral Fellowship Program and China Postdoctoral Science Foundation under Grant Number BX20240182. }}
  \author{Wenhao Cui\\
    School of Economics and Management, Beihang University \\
    Jie Hu \thanks{Correspondence should be addressed to Jie Hu: $<$\href{mailto:}{hujie\_86@163.com}$>$. \\
    }\\
    Yau Mathematical Sciences Center, Tsinghua University \\
	Haitao Zheng\\
    School of Economics and Management, Beihang University \\ }
  \maketitle
} \fi

\if0\anon
{
  \bigskip
  \bigskip
  \bigskip
  \begin{center}
    {\LARGE\bf Random Subset Averaging}
\end{center}
  \medskip
} \fi

\bigskip
\begin{abstract}
We propose a new ensemble prediction method, Random Subset Averaging (RSA), tailored for settings with many correlated covariates, including extreme regimes in which the number of predictors far exceeds the sample size and the covariates exhibit strong dependence. RSA constructs candidate models via a binomial random subset strategy and aggregates their predictions through a two-round weighting scheme, yielding a hierarchical aggregation structure that separates model-fit and subset-construction uncertainty. All tuning parameters are selected via cross-validation, requiring no prior knowledge of covariate relevance. We establish the asymptotic optimality of RSA under mild rate conditions, allowing for data-dependent first-round weights. Under orthogonal designs, RSA incurs no asymptotic approximation loss relative to flat Mallows averaging while substantially relaxing the associated rate conditions, and achieves a lower finite-sample risk bound than both nested Mallows averaging and random subset regression. Simulation studies demonstrate that RSA consistently delivers accurate and stable predictive performance across a wide range of sample sizes, dimensional settings, sparsity levels and correlation structures, outperforming conventional model selection and ensemble learning methods. An empirical application to financial return forecasting further illustrates its practical utility. 
\end{abstract}

\noindent%
{\it Keywords:} Ensemble learning, Hierarchical aggregation, Mallows criterion, Model uncertainty, Random subset methods
\vfill

\newpage
\spacingset{1.8} 

\section{Introduction}

Forecasting with many correlated covariates arises in a wide range of applications, including macroeconomic forecasting, asset pricing, genomics, and climate studies. In such settings, the number of predictors often grows with the sample size and may exceed it, while substantial dependence among predictors is common and may intensify with dimensionality. High dimensionality amplifies estimation error, rendering it infeasible to include all predictors in a single model, whereas strong dependence induces near-singularity in the design matrix, complicating both variable selection and parameter estimation. 

Importantly, correlation among predictors is not inherently detrimental to forecasting. Redundant covariates may contain overlapping information and can serve as substitutes for one another, potentially stabilizing predictions when properly exploited. The central challenge is therefore not to eliminate dependence, but to utilize it without inducing instability. 

A natural approach is model selection, implemented via information criteria such as AIC or BIC \citep{akaike1974new,schwarz1978estimating}, or regularization methods such as Lasso, SCAD, and MCP \citep{tibshirani1996regression,fan2001variable,zhang2010nearly}. These methods aim to identify a single predictive model by balancing fit and complexity. However, since the selected model arises from a data-dependent selection procedure, such approaches inherently ignore model uncertainty \citep{yuan2005combining,nan2014variable}. In the presence of correlated predictors or weak signals, many models often exhibit similar in-sample performance, so that small perturbations in the data or tuning parameters can lead to substantially different selected models and unstable forecasts. 

Ensemble methods mitigate this instability by aggregating predictors across multiple models. Examples include feature bagging \citep{ho1998random}, random forests \citep{breiman2001random}, and random subspace regression \citep{elliott2013complete,boot2019forecasting}. While these methods can reduce variance, their performance depends critically on how the candidate models are constructed. Fixed subset sizes or ad hoc randomization may introduce bias when model complexity is unknown, and equal-weight aggregation ignores heterogeneity in predictive performance, leading to inefficiency \citep{liang2011optimal,zhang2016dominance}.

Model averaging provides a more principled alternative by assigning optimal convex weights to candidate models under explicit risk criteria \citep{hansen2007least,wan2010least}. Although theoretically appealing and effective in dense-signal environments \citep{peng2022improvability}, existing model averaging methods typically rely on structured candidate sets, such as nested models, and often require prior knowledge of variable ordering. In high-dimensional settings with complex dependence, such structure is rarely available, and data-driven construction of candidate models, via marginal screening \citep{ando2014model} or variable selection paths \citep{zhang2019parsimonious}, remains inherently unstable. 

A key observation underlying this paper is that forecast instability arises from two distinct sources of uncertainty: (i) model-fit uncertainty within a given subset of predictors, and (ii) uncertainty induced by the construction of the subset itself. Existing methods implicitly conflate these two sources, treating model uncertainty as a single-layer problem. 

This paper proposes Random Subset Averaging (RSA), a hierarchical aggregation framework that explicitly separates these two sources of uncertainties. RSA constructs candidate models via binomial random subsets, allowing model sizes to vary and avoiding reliance on prespecified ordering or screening. Predictions are then aggregated through a two-round convex weighting scheme: the first round addresses model-fit uncertainty within subsets, while the second accounts for randomness in subset construction. This decomposition yields a multi-layer aggregation that stabilizes prediction while retaining the informational benefits of correlated predictors.

From a theoretical perspective, we show that RSA achieves asymptotic optimality under mild conditions, even when the first-round weights are data-dependent. This result extends classical model averaging theory from single-layer aggregation to a broader class of hierarchical aggregation procedures. Under orthogonal designs, RSA incurs no asymptotic approximation loss relative to flat Mallows averaging, while substantially relaxing the rate conditions required for optimal weighting. It also attains strictly lower finite-sample risk than nested model averaging and random subspace regression. 

From a statistical perspective, the hierarchical structure of RSA transforms a high-dimensional aggregation problem into a sequence of lower-dimensional subproblems, thereby enlarging the admissible class of candidate models without sacrificing optimality. This is particularly important in high-dimensional settings, where stringent rate conditions on the number of candidate models restrict the effective model class and limit the ability to account for model instability. The hierarchical structure of RSA relaxes these constraints, enabling substantially richer candidate constructions and improving stability while preserving optimality. Simulation and empirical results further demonstrate that RSA performs comparably to existing methods under weak correlation, while delivering substantial gains in predictive accuracy and stability as dependence strengthens.

The remainder of the paper is organized as follows. Section 2 introduces the RSA estimator and develops its theoretical properties. Section 3 reports simulation evidence, Section 4 presents the empirical application, and Section 5 concludes.

\section{Random Subset Averaging and Its Properties}\label{sec2}

Section \ref{sec2.1} introduces the RSA estimator as a hierarchical aggregation scheme for ensemble prediction. Section \ref{sec2.2} establishes its asymptotic optimality under mild conditions, extending classical model averaging theory to multi-layer aggregation. Section \ref{sec2.3} provides a risk comparison in an orthogonal setting to highlight its statistical advantages relative to existing methods.

\subsection{The Random Subset Averaging Estimator}\label{sec2.1}

We consider a homoscedastic linear regression model throughout this study:
\begin{align}
	y_i = \mu_i + e_i,\text{ where } i=1,2,\ldots,N, \text{ and } 
	\mu_i = \sum_{j = 1}^{K}\beta_j x_{ij} = x_i^\top \beta,\label{eq2.1}
\end{align}
where $x_i = (x_{i1},\ldots,x_{iK})^\top $ and $\beta = (\beta_1,\ldots,\beta_K)^\top $ are both $K$-dimensional column vectors. The number of regressors $K$ are allowed to grow with the sample size $N$ and can even exceed it. The response variable $y_i$ is real-valued, and we assume $E(e_i \mid x_i) = 0$ and $E(e_i^2 \mid x_i) = \sigma^2$. The model in Eq. \eqref{eq2.1} encompasses both classical and high-dimensional regimes and can be written in matrix form as $Y = X\beta + e$, where $Y$ is an $N \times 1$ response vector, $X$ is an $N \times K$ design matrix, and the error term $e$ satisfies $E(e \mid X) = 0$ and $E(ee^\top \mid X) = \sigma^2 I_N$.

RSA defines a hierarchical aggregation scheme that combines random subset construction with a two-round convex weighting procedure. Its key feature is a two-layer architecture (Figure \ref{fig:RSA}) that decomposes model uncertainty into within-subset and between-subset components. In the first layer, predictions are constructed from models based on randomly selected subsets of covariates. These predictions are aggregated within each subgroup via convex weighting, thereby forming the second-layer elements. A second round of convex weighting is then applied to these elements to produce the final estimator. This two-round construction stabilizes prediction while preserving flexibility in model construction.

\begin{figure}[htbp]
	\centering
	\includestandalone[width=\textwidth]{plot-of-two-layer-RSA-v2}
	\caption{Graphical illustration of RSA.}
	\label{fig:RSA}
\end{figure}

\underline{Design of the first layer.}
To fix ideas, let $R = diag(r_1, \dots, r_K)$ be a random selection matrix, where each $r_j$ is an independent Bernoulli random variable with selection probability $p_j \in [0,1]$. The resulting candidate model is given by $Y = XR\beta_{R}+ U,$ where $XR$ represents a randomly selected subset of the original $K$ covariates. The selection probabilities $p_j$ can, in principle, vary across covariates to reflect prior beliefs or empirical relevance. However, determining optimal covariate-specific probabilities is nontrivial and left for future work. For simplicity, we adopt a common selection probability $p$ in both simulations and empirical analyses, yielding an expected subset size of $Kp$, which can be substantially smaller than $N$ and thus serves as a form of dimension reduction in high-dimensional settings.

Relying on a single selection matrix $R$ may lead to unstable and potentially misspecified predictions, particularly when important covariates are omitted. To mitigate this issue, we generate independent selection matrices $R_m^{(\ell)} $ for $m = 1, \dots, M$ and $\ell = 1, \dots, L$, thereby forming $L$ groups of $M$ candidate models. The corresponding design matrices $XR_m^{(\ell)}$ are non-nested and heterogeneous in dimension, which helps reduce the risk of systematic misspecification. For each design matrix $XR_{m}^{(\ell)}$, the first-layer prediction is defined as the best linear prediction of $Y$, given by
\begin{equation} \label{eq:muhatj}
	\hat{\mu}_{m}^{(\ell)} = XR_{m}^{(\ell)}\hat{\beta}_{R_{m}^{(\ell)}} = XR_{m}^{(\ell)}(R_{m}^{(\ell)}X^\top  XR_{m}^{(\ell)})^{-}R_{m}^{(\ell)}X^\top  Y,
\end{equation}
where $A^{-}$ denotes the Moore–Penrose generalized inverse of matrix $A$.

\underline{Design of the second layer.} Within each group $\ell$, the predictions are aggregated via a convex combination:
\begin{equation}\label{eq:muhat}
	\hat{\mu}^{(\ell)} = \sum_{m=1}^{M} \hat{w}_{m}^{(\ell)}\hat{\mu}_{m}^{(\ell)},
\end{equation}
where $\hat{w}_{m}^{(\ell)}$ for $m = 1, \dots, M$ lies in the simplex. The weights may be specified deterministically, for example as $\hat{w}_{m}^{(\ell)} = \frac{1}{M}$, corresponding to naive averaging in ensemble methods, or in data-driven manner to improve performance. A common data-driven choice is the Mallows weights, defined as the solution to the following optimization problem:
\begin{equation} \label{eq:whatj}
	(\hat{w}_1^{(\ell)}, \dots, \hat{w}_M^{(\ell)}) = \arg \min_{(w_1, \dots, w_M)\in H_M} \left\Vert Y - \sum_{m=1}^{M}w_{m}\hat{\mu}_{m}^{(\ell)}\right\Vert^2 + 2\sigma^2\sum_{m=1}^{M} w_{m} k_m^{(\ell)},
\end{equation}
with $H_M \equiv\left\{w\in[0,1]^M:\sum_{m = 1}^{M}w_{m} = 1\right\}$. Here, $k_m^{(\ell)}$ denotes the number of selected covariates in model $m$ of group $\ell$, which equals the trace of the projection matrix $P_{XR_m^{(\ell)}}$, with $P_A = A(A^\top A)^{-}A^\top $ for any matrix $A$.

\underline{Design of the output layer.} To further reduce variability induced by random subset construction, RSA performs a second-round aggregation across groups. Given the $L$ group-level predictions $\hat{\mu}^{(\ell)}$ from the second layer, the final RSA estimator is formed as a convex combination:
\begin{equation} \label{eq:RSAprediction}
	\hat{\mu}_{RSA} = \sum_{\ell=1}^{L} \hat{\mathbbm{w}}_{\ell}\hat{\mu}^{(\ell)},
\end{equation} 
where the weights are obtained by minimizing a Mallows criterion:
\begin{equation} \label{eq:whatell}
	(\hat{\mathbbm{w}}_{1}, \dots, \hat{\mathbbm{w}}_{L}) = \arg \min_{(\mathbbm{w}_1, \dots, \mathbbm{w}_L) \in H_L} \left\Vert Y - \sum_{\ell=1}^{L}\mathbbm{w}_{\ell}\hat{\mu}^{(\ell)} \right\Vert^2 + 2\sigma^2\sum_{\ell=1}^{L} \mathbbm{w}_{\ell} k^{(\ell)},
\end{equation}
with the feasible set defined as $H_L \equiv\left\{\mathbbm{w}\in[0,1]^L:\sum_{\ell = 1}^{L}\mathbbm{w}_{\ell} = 1\right\}$. Here, $k^{(\ell)} = \sum_{m = 1}^{M}\hat{w}_m^{(\ell)}k_m^{(\ell)}$ denotes the effective model dimension associated with $\hat{\mu}^{(\ell)}$. 

This two-layer construction is not merely algorithmic, but reflects a decomposition of model uncertainty into within-subset and between-subset components. The second-round aggregation reduces variability and improves stability, particularly in settings where individual submodels are weak or unstable.

\begin{remark}[Implicit Decorrelation via Random Subset Sampling]
	Random subset sampling substantially reduces the probability that highly correlated covariates are selected jointly. Consider an AR(1) dependence structure with correlation $\rho^{|i-j|}$. For a given threshold $t \in (0,1)$ and selection probability $p$, the probability that the maximal pairwise correlation among the selected covariates exceeds $t$ is of order $K p^2 \frac{\log t}{\log \rho}$. By choosing $p$ sufficiently small so that this quantity remains negligible, the selected covariates are approximately weakly dependent with high probability. In this sense, random subset sampling acts as an implicit decorrelation mechanism: even if the original predictors exhibit strong dependence, the induced submodels tend to operate in a regime closer to orthogonality. Moreover, stronger dependence (larger $\rho$) increases clustering among predictors, so maintaining a low-correlation structure requires a smaller selection probability $p$. The data-driven choice of $p$ via cross-validation in our simulations is consistent with this implication, suggesting that RSA adaptively exploits dependence to stabilize prediction. 
\end{remark}

\begin{remark}
	RSA is related to several ensemble methods that introduce randomness to improve predictive stability, including dropout, stacking, and random forests. Like these methods, RSA incorporates randomness via covariate subsampling and aggregates across the resulting submodels. Unlike heuristic aggregation schemes, however, RSA employs a structured two-round convex weighting procedure, yielding a tractable and theoretically transparent framework. In particular, RSA promotes diversity through random covariate subspaces while maintaining a unified model structure and data-driven aggregation. Overall, RSA can be viewed as a structured ensemble method that integrates randomness with principled aggregation, while retaining flexibility and theoretical interpretability through its two-layer architecture.
\end{remark}

\subsection{Asymptotic Optimality of Random Subset Averaging} \label{sec2.2}

In this section, we establish the asymptotic optimality of the RSA estimator defined in Eq. (\ref{eq:RSAprediction}). A distinctive feature of RSA is its hierarchical aggregation structure, in which the final estimator depends on both within-group and between group weighting schemes. In particular, RSA permits data‑dependent weighting in the first‑round aggregation, so that the resulting estimator is not necessarily linear. While this additional flexibility can improve performance, it also introduce new challenges for establishing theoretical guarantees.

To analyze this two-layer structure, we formulate the problem in terms of a second-round weighting parameter $\mathbbm{w}$. Let the squared $L_2$ loss be defined as $\mathcal{L}_N(\mathbbm{w}) = (\hat{\mu}_{RSA}(\mathbbm{w})-\mu)^\top (\hat{\mu}_{RSA}(\mathbbm{w})-\mu)$, with associated risk, $\mathcal{R}_N(\mathbbm{w}) = E\left[\mathcal{L}_N(\mathbbm{w})|X,R\right]$, where the expectation is taken conditional on the full set of covariates $X$ and all random selection matrices $R = \{R_m^{(\ell)}: m = 1,\dots, M, \ell = 1, \dots, L\}$. We assume that the first-round weights $\hat{w}^{(\ell)} = (\hat{w}_1^{(\ell)}, \dots, \hat{w}_M^{(\ell)})^\top $ converge in probability to deterministic (data-independent) limits $w^{(\ell)} = (w_1^{(\ell)}, \dots, w_M^{(\ell)})^\top$. Under this assumption, the RSA estimator with second-round weights $ \mathbbm{w} $ can be written as: $$\hat{\mu}_{RSA}(\mathbbm{w}) = \sum_{\ell=1}^{L} \mathbbm{w}_{\ell} \hat{\mu}^{(\ell)} = \sum_{\ell=1}^{L} \mathbbm{w}_{\ell} \sum_{m=1}^{M} \hat{w}_{m}^{(\ell)} P_{XR_{m}^{(\ell)}} Y,$$
We define its asymptotic counterpart $\tilde{\mu}_{RSA}(\mathbbm{w})$ by replacing the random weights with their limits: $$\tilde{\mu}_{RSA}(\mathbbm{w}) = \sum_{\ell=1}^{L} \mathbbm{w}_{\ell} \sum_{m=1}^{M} w_{m}^{(\ell)} P_{XR_{m}^{(\ell)}} Y.$$ The corresponding loss and risk functions are:
\begin{align*}
	\tilde{\mathcal{L}}_N(\mathbbm{w}) = (\tilde{\mu}_{RSA}(\mathbbm{w})-\mu)^\top (\tilde{\mu}_{RSA}(\mathbbm{w})-\mu), \quad \tilde{\mathcal{R}}_N(\mathbbm{w}) = E\left[\tilde{\mathcal{L}}_N(\mathbbm{w})|X,R\right].
\end{align*}

The following theorem shows that, despite the hierarchical structure and the presence of data-dependent intermediate weights, RSA achieves the same asymptotic risk as the oracle convex combination over the induced model class. In this sense, RSA extends classical model averaging optimality from single-layer convex aggregation to to a broader class of multi-layer aggregation procedures.
\begin{thm}\label{theorem2.0}
	Let $\xi_{N}=\inf _{\mathbbm{w} \in H_L} \tilde{\mathcal{R}}_{N}(\mathbbm{w})$ and assume the following conditions hold:
	\begin{enumerate}
		\renewcommand{\labelenumi}{(\roman{enumi})}
		\item $E(e_i^4|x_i)< \infty$.
		\item $E[\Vert \hat{w}^{(\ell)} - w^{(\ell)} \Vert^4|X,R] = O(r_{N,M}^4)$ for all $\ell = 1,\ldots,L$.
		\item $\xi_{N}^{-1} r_{N,M}^2 LMN \to 0$, $\xi_{N}^{-1} r_{N,M} L\sqrt{MN} \to 0$, and $\xi_{N}^{-1}L^2 \to 0$.
	\end{enumerate}
	Then, we have 
	\begin{align*}
		\frac{\mathcal{L}_{N}(\hat{\mathbbm{w}})}{\inf _{\mathbbm{w} \in H_L} \mathcal{L}_{N}(\mathbbm{w})} \stackrel{p}{\rightarrow} 1.
	\end{align*} 
\end{thm}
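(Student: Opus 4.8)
\noindent\emph{Proof proposal.}
The plan is to run the classical argument for asymptotic optimality of Mallows-type weighting (see, e.g., \citet{hansen2007least}, \citet{wan2010least}), inserting one extra \emph{linearization} step that is forced on us because $\hat{\mu}_{RSA}(\mathbbm{w})$ is not linear in $Y$: it depends on $Y$ both through the projections and through the data-driven first-round weights $\hat{w}^{(\ell)}$. Writing the criterion in Eq.~\eqref{eq:whatell} as $C_N(\mathbbm{w})=\Vert Y-\hat{\mu}_{RSA}(\mathbbm{w})\Vert^2+2\sigma^2\sum_{\ell}\mathbbm{w}_\ell k^{(\ell)}$ and substituting $Y=\mu+e$ gives the identity
\[
C_N(\mathbbm{w})=\Vert e\Vert^2+\mathcal{L}_N(\mathbbm{w})+2e^\top\bigl(\mu-\hat{\mu}_{RSA}(\mathbbm{w})\bigr)+2\sigma^2\sum_{\ell=1}^{L}\mathbbm{w}_\ell k^{(\ell)}.
\]
Since $\Vert e\Vert^2$ is free of $\mathbbm{w}$, it suffices to establish (i) $\inf_{\mathbbm{w}\in H_L}\mathcal{L}_N(\mathbbm{w})=\xi_N(1+o_p(1))$ and (ii) the two $\mathbbm{w}$-dependent remainders, net of their population analogue, are $o_p(\xi_N)$ uniformly over $H_L$; together these give $\mathcal{L}_N(\hat{\mathbbm{w}})/\inf_{H_L}\mathcal{L}_N\stackrel{p}{\to}1$.

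\emph{Step 1 (linearization).} Replace $\hat{\mu}_{RSA}(\mathbbm{w})$ by the deterministic-weight surrogate $\tilde{\mu}_{RSA}(\mathbbm{w})=\tilde{P}(\mathbbm{w})Y$, $\tilde{P}(\mathbbm{w})=\sum_{\ell}\mathbbm{w}_\ell\sum_m w_m^{(\ell)}P_{XR_m^{(\ell)}}$, which, conditional on $(X,R)$, is a genuine linear smoother with $0\preceq\tilde{P}(\mathbbm{w})\preceq I$ and $\operatorname{tr}\tilde{P}(\mathbbm{w})=\sum_\ell\mathbbm{w}_\ell\tilde{k}^{(\ell)}$, where $\tilde{k}^{(\ell)}=\sum_m w_m^{(\ell)}k_m^{(\ell)}$. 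Because $\mathbbm{w}$ and each $w^{(\ell)}$ lie in simplices, a single group controls the discrepancy:
\[
\sup_{\mathbbm{w}\in H_L}\bigl\Vert\hat{\mu}_{RSA}(\mathbbm{w})-\tilde{\mu}_{RSA}(\mathbbm{w})\bigr\Vert\le\max_\ell\Bigl\Vert\sum_m(\hat{w}_m^{(\ell)}-w_m^{(\ell)})P_{XR_m^{(\ell)}}Y\Bigr\Vert\le\max_\ell\Vert\hat{w}^{(\ell)}-w^{(\ell)}\Vert\Bigl(\sum_m\Vert P_{XR_m^{(\ell)}}Y\Vert^2\Bigr)^{1/2}
\]
by Cauchy--Schwarz, and $\Vert P_{XR_m^{(\ell)}}Y\Vert^2\le\Vert Y\Vert^2=O_p(N)$. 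Invoking Condition~2, which gives $\Vert\hat{w}^{(\ell)}-w^{(\ell)}\Vert=O_p(r_{N,M})$ in the sense of fourth moments (so the maximum over the $L$ groups costs only a harmless power of $L$), one obtains $\sup_{\mathbbm{w}}\Vert\hat{\mu}_{RSA}(\mathbbm{w})-\tilde{\mu}_{RSA}(\mathbbm{w})\Vert^2=O_p(r_{N,M}^2LMN)$, which is $o_p(\xi_N)$ by the first part of Condition~3; a parallel bound controls $\max_\ell|k^{(\ell)}-\tilde{k}^{(\ell)}|$. Pushing these through the expansions of $C_N(\mathbbm{w})-\tilde{C}_N(\mathbbm{w})$ (with $\tilde{C}_N(\mathbbm{w})=\Vert Y-\tilde{\mu}_{RSA}(\mathbbm{w})\Vert^2+2\sigma^2\operatorname{tr}\tilde{P}(\mathbbm{w})$) and of $\mathcal{L}_N(\mathbbm{w})-\tilde{\mathcal{L}}_N(\mathbbm{w})$, the leftover cross terms such as $\langle Y-\tilde{\mu}_{RSA}(\mathbbm{w}),\,\tilde{\mu}_{RSA}(\mathbbm{w})-\hat{\mu}_{RSA}(\mathbbm{w})\rangle$ and the penalty gap $\sum_\ell\mathbbm{w}_\ell(k^{(\ell)}-\tilde{k}^{(\ell)})$ must be estimated crudely by Cauchy--Schwarz --- there is no independence between $\hat{w}^{(\ell)}$ and $e$, hence no mean-zero shortcut --- and the second part of Condition~3 is exactly what makes them $o_p(\xi_N)$ uniformly in $\mathbbm{w}$. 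Consequently $\sup_{H_L}|C_N-\tilde{C}_N|=o_p(\xi_N)$ and $\sup_{H_L}|\mathcal{L}_N-\tilde{\mathcal{L}}_N|=o_p(\xi_N)$, so $\hat{\mathbbm{w}}$ is an $o_p(\xi_N)$-approximate minimizer of $\tilde{C}_N$ and it is enough to prove the optimality claim for the linear estimator $\tilde{\mu}_{RSA}$.

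\emph{Step 2 (optimality of the linearized criterion).} Now $\tilde{\mu}_{RSA}(\mathbbm{w})$ is linear in $Y$ with effective dimension $\operatorname{tr}\tilde{P}(\mathbbm{w})$, and the $L$ group smoothers $\tilde{P}^{(\ell)}=\sum_m w_m^{(\ell)}P_{XR_m^{(\ell)}}$ act as candidate models, so we are squarely in the Mallows-averaging setting. From $\tilde{C}_N(\mathbbm{w})-\Vert e\Vert^2=\tilde{\mathcal{L}}_N(\mathbbm{w})+2e^\top(I-\tilde{P}(\mathbbm{w}))\mu-2\bigl(e^\top\tilde{P}(\mathbbm{w})e-\sigma^2\operatorname{tr}\tilde{P}(\mathbbm{w})\bigr)$, one must show that each term divided by $\tilde{\mathcal{R}}_N(\mathbbm{w})\ge\xi_N$, and also $|\tilde{\mathcal{L}}_N(\mathbbm{w})-\tilde{\mathcal{R}}_N(\mathbbm{w})|/\tilde{\mathcal{R}}_N(\mathbbm{w})$, tends to zero uniformly over $H_L$. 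Writing $\tilde{P}(\mathbbm{w})=\sum_\ell\mathbbm{w}_\ell\tilde{P}^{(\ell)}$ and using $\sum_\ell\mathbbm{w}_\ell=1$, the linear-in-$\mathbbm{w}$ pieces reduce to a maximum over the $L$ vertices of $H_L$ and the quadratic-in-$\mathbbm{w}$ piece $\tilde{\mathcal{L}}_N(\mathbbm{w})-\tilde{\mathcal{R}}_N(\mathbbm{w})=\sum_{\ell,\ell'}\mathbbm{w}_\ell\mathbbm{w}_{\ell'}\{v_\ell^\top v_{\ell'}-E[v_\ell^\top v_{\ell'}\mid X,R]\}$ (with $v_\ell=\tilde{P}^{(\ell)}Y-\mu$) expands over the $O(L^2)$ pairs; each vertex- and pair-term is then dispatched by Chebyshev's inequality using the fourth-moment Condition~1 together with $\operatorname{tr}((\tilde{P}^{(\ell)})^2)\le\operatorname{tr}(\tilde{P}^{(\ell)})=\tilde{k}^{(\ell)}$ (from $0\preceq\tilde{P}^{(\ell)}\preceq I$), and a union bound over the $O(L^2)$ terms closes the argument --- this is the role of the requirement $\xi_N^{-1}L^2\to0$ in Condition~3. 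The ensuing uniform loss--risk equivalence $\tilde{\mathcal{L}}_N(\mathbbm{w})/\tilde{\mathcal{R}}_N(\mathbbm{w})\stackrel{p}{\to}1$ yields $\inf_{H_L}\tilde{\mathcal{L}}_N=\xi_N(1+o_p(1))$ and, combined with the approximate-minimizer property from Step~1, $\tilde{\mathcal{L}}_N(\hat{\mathbbm{w}})=\inf_{H_L}\tilde{\mathcal{L}}_N+o_p(\xi_N)$; transferring back through $\sup_{H_L}|\mathcal{L}_N-\tilde{\mathcal{L}}_N|=o_p(\xi_N)$ delivers the claim.

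\emph{Main obstacle.} The genuinely new work is Step~1: in ordinary Mallows averaging the estimator is linear and this step is vacuous, whereas here the data-driven first-round weights destroy linearity and correlate $\hat{w}^{(\ell)}$ with $e$, so the linearization remainders cannot be centred and must be bounded by brute force while both $L$ and $M$ grow. The delicate point is that these crude bounds must nonetheless be $o_p(\xi_N)$; matching the $L,M,N$ and $r_{N,M}$ rates is precisely the content of the three requirements in Condition~3, and checking that the whole bookkeeping closes --- rather than any single inequality --- is where the effort goes. (Treating $\sigma^2$ as known in Eq.~\eqref{eq:whatell} is immaterial: replacing it by a consistent estimator $\hat{\sigma}^2$ adds only a term of the same order as the remainders already handled.)
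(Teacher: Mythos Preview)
Your two-step architecture---linearize by replacing $\hat\mu_{RSA}$ with the deterministic-weight surrogate $\tilde\mu_{RSA}$, then run the Mallows argument for the linear smoother---matches the paper's. But in both steps the rates you claim do not follow from the techniques you invoke, and Condition~3 is calibrated to sharper bounds than yours.

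\emph{Step 1.} You assert that crude Cauchy--Schwarz on $\langle Y-\tilde\mu_{RSA},\Delta\rangle$ (with $\Delta=\hat\mu_{RSA}-\tilde\mu_{RSA}$), together with the second part of Condition~3, gives $o_p(\xi_N)$. It does not: Cauchy--Schwarz yields $\|Y-\tilde\mu_{RSA}\|\cdot\|\Delta\|=O_p(\sqrt N)\cdot O_p(r_{N,M}\sqrt{LMN})=O_p(r_{N,M}N\sqrt{LM})$, whereas Condition~3 controls only $r_{N,M}L\sqrt{MN}$; the ratio is $\sqrt{N/L}$, which typically diverges. The penalty gap $\sum_\ell\mathbbm w_\ell(k^{(\ell)}-\tilde k^{(\ell)})$ handled on its own has the same overshoot. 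The ``mean-zero shortcut'' you dismiss is in fact available and essential: the dependence of $\hat w^{(\ell)}$ on $e$ is irrelevant because the projections $P_{XR_m^{(\ell)}}$ depend only on $(X,R)$. The paper writes $e^\top\Delta-\sigma^2(k-\tilde k)=\sum_\ell\mathbbm w_\ell\sum_m(\hat w_m^{(\ell)}-w_m^{(\ell)})\bigl[e^\top P_m^{(\ell)}\mu+(e^\top P_m^{(\ell)}e-\sigma^2 k_m^{(\ell)})\bigr]$, applies Cauchy--Schwarz in $m$ to peel off $\|\hat w^{(\ell)}-w^{(\ell)}\|$, and only \emph{then} uses the conditional moments $E[(e^\top P_m^{(\ell)}\mu)^2\mid X,R]\le\sigma^2\|\mu\|^2=O(N)$ and $E[(e^\top P_m^{(\ell)}e-\sigma^2 k_m^{(\ell)})^2\mid X,R]=O(k_m^{(\ell)})$. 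This yields $O_p(r_{N,M}L\sqrt{MN})$, exactly the second rate in Condition~3. Note also that the penalty gap must be paired with $e^\top P e$ to get the centering; it cannot be bounded separately at the required rate.

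\emph{Step 2.} Lower-bounding $\tilde{\mathcal R}_N(\mathbbm w)$ by $\xi_N$ and then reducing the linear-in-$\mathbbm w$ numerator to a maximum over vertices does not close. Writing $e^\top(I-\tilde P(\mathbbm w))\mu=\mathbbm w^\top\rho$ with $\rho_\ell=e^\top\mathcal A_\ell\mu$, one has $\mathrm{Var}(\rho_\ell\mid X,R)=\sigma^2\|\mathcal A_\ell\mu\|^2$, which can be of order $N$; Chebyshev plus a union bound gives $\max_\ell|\rho_\ell|=O_p(\sqrt{LN})$, so you would need $\xi_N^{-2}LN\to 0$, which $\xi_N^{-1}L^2\to 0$ does not imply (take $L$ fixed). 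The paper instead keeps one power of the $\mathbbm w$-dependent risk in the denominator and uses the generalized Rayleigh-quotient bound $\sup_{\mathbbm w}(\mathbbm w^\top\rho)^2/\bigl(\mathbbm w^\top(\Phi+\Psi_0)\mathbbm w\bigr)\le\rho^\top(\Phi+\Psi_0)^{-1}\rho$ (Lemma~1 of \citet{zhang2021new}), where $\tilde{\mathcal R}_N(\mathbbm w)\ge\mathbbm w^\top(\Phi+\Psi_0)\mathbbm w$ with $\Phi=(\mu^\top\mathcal A_\ell\mathcal A_s\mu)_{\ell,s}$ and $\Psi_0=\sigma^2\mathrm{diag}(\mathrm{tr}\,\mathcal B_\ell^2)$. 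Since $E[\rho\rho^\top\mid X,R]=\sigma^2\Phi$, one gets $E[\rho^\top(\Phi+\Psi_0)^{-1}\rho]=\sigma^2\mathrm{tr}[(\Phi+\Psi_0)^{-1}\Phi]\le\sigma^2 L$, hence a bound $O_p(\xi_N^{-1}L)$. The same device, with more bookkeeping, handles $e^\top\tilde P(\mathbbm w)^2 e-\sigma^2\mathrm{tr}\,\tilde P(\mathbbm w)^2$ and the loss--risk comparison, producing the $O_p(\xi_N^{-1}L^2)$ that explains the third rate in Condition~3. A vertex/union-bound argument cannot recover this because it throws away the covariance structure the Rayleigh bound exploits.
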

The proof is given in Appendix.

\begin{remark}[Verifiability of Condition (ii)]\label{remark2.2}
	Condition (i) is a standard moment requirement for establishing asymptotic optimality. Condition (ii) requires the first-round weights to converge in probability to a (possibly pseudo-true) deterministic limit. For the Mallows weights defined in Eq. \eqref{eq:whatj}, which arise from a constrained least squares problem, \citet{liew1976inequality} shows that $r_{N,M} = \sqrt{M/N}$ when the limiting weight lies in the interior of $H_M$; faster rate may be achieved when the constraint binds. In the special case of deterministic weighting, such as equal weights, the convergence rate becomes zero. These results indicate that condition (ii) is mild and is satisfies by a broad class of weighting schemes commonly used in practice. 
\end{remark}

\begin{remark}[Asymptotic Optimality at Each Stage] \label{remark2.3}
	Using Mallows weights in the first-round aggregation provides a natural route to constructing effective group-level predictors. As discussed in Remark \ref{remark2.2}, the rate $r_{N,M} = \sqrt{M/N}$ represents the least favorable case. Under this rate, condition (iii) reduces to $\xi_{N}^{-1}  LM^2 \to 0$ and $\xi_{N}^{-1}L^2 \to 0$. In particular, the condition $\xi_{N}^{-1}  LM^2 \to 0$ implies that, within each group, the first-round aggregation closely approximates the best convex combination of the corresponding candidate models. This follows from the fact that $\xi_{N} \le \xi^{(\ell)}_{N}$, where $\xi^{(\ell)}_{N}$ denotes the minimal achievable risk over convex combinations of candidate models within group $\ell$. As a result, the effective complexity requirement at the group level is no stronger than that of the overall problem. Together with condition (i), this ensures that the first-round aggregation achieves the same asymptotic performance as the optimal convex combination within each group (see \citet{zhang2021new}). The second-round aggregation then performs optimal weighting over the induced class of group-level predictors. Consequently, each aggregation step is asymptotically optimal in its respective stage.
\end{remark}

\begin{remark}
	RSA can be naturally extended to a multi-round aggregation framework, in which additional layers are introduced to further refine intermediate predictors. Under suitable conditions on the convergence of intermediate weights, the resulting estimator continues to achieve asymptotic optimality over the induced model class, without incurring additional loss from the multi-layer construction. From a computational perspective, RSA is well suited to parallel implementation, as both subset construction and within-group aggregation can be carried out independently across groups. This feature helps mitigate the computational cost associated with multi-round extensions.
\end{remark}

The performance of the proposed RSA estimator depends on the tuning parameters $(p,M,L)$. The parameter $p$ controls model complexity by determining the expected subset size, approximately $Kp$, for each candidate model. The parameters $M$ and $L$ specify the number of candidate models and the number of groups, respectively. We select $(p,M,L)$ via 5-fold CV over a predefined grid, choosing the combination that minimizes the prediction error. Simulation results indicate that $p$ plays the primary role, as it directly governs the complexity of individual candidate models. In contrast, $M$ and $L$ have a relatively limited impact on performance, provided they are chosen to be sufficiently large (e.g., $L = 30$ in our simulations). Supporting evidence is provided in Section \ref{sec3.1} and the supplementary material.

\subsection{Orthogonal design: Shrinkage Structure and Risk Comparison}\label{sec2.3}

Theorem \ref{theorem2.0} establishes the optimality of the second-round procedure, while Remark \ref{remark2.3} characterizes the behavior of the first-round aggregation under Mallows weighting. While these results provide optimality guarantees at each stage, they do not directly imply that the two-round procedure is jointly optimal. A natural question is therefore whether the hierarchical aggregation in RSA can achieve the same oracle performance as flat Mallows averaging when applied to the same pool of $LM$ candidate submodels, or whether any loss is incurred due to the hierarchical structure. To address this question, we consider an orthogonal design, as is standard in the model averaging literature (e.g., \citet{peng2022improvability}, \citet{peng2024optimality}), under which covariates are mutually orthogonal. This setting eliminates cross-coordinate interactions and allows for a transparent characterization of the shrinkage structure induced by hierarchical weighting. We first compare the risk of RSA with that of flat Mallows averaging and then extend the comparison to other commonly used procedures to further illustrate the advantages of RSA.

Consider the linear model
\begin{equation*}
	y_i = \sum_{j = 1}^{K}\beta_j x_{ij} + e_i= x_i^\top \beta+ e_i, \quad i=1,2,\ldots,N,
\end{equation*}
where $X^\top X = NI_K$, $E(e_i|x_i) = 0$ and $E(e_i^2|x_i) = \sigma^2$. Assume that $|\beta_j|$ is non-increasing in $j$ and $\sum_{j=1}^{K}\beta_j^2 < \infty$.

We first characterize the minimal squared $L_2$ risk of RSA and flat Mallows as $M, L \to \infty$. Two regimes are considered: (i) uniform selection probability $p_j = p$, and (ii) covariate-specific probabilities $p_j$. Throughout, $A \simeq B$ denotes that $A = B + o(B)$ as $M, L \to \infty$. Recall $\xi_{N}$ is the minimal squared risk of RSA. 

\begin{thm}\label{theorem2.3}
	Under orthogonality,
	\begin{enumerate}
		\item If $p_j = p$ for all $j$, 
		\begin{align*}
			\xi_{N} \simeq \frac{K\sigma^2\sum_{j=1}^{K}N \beta_j^2}{\sum_{j=1}^{K}N\beta_j^2+ K\sigma^2}, \quad \xi_{N}^{flat} \simeq \frac{K\sigma^2\sum_{j=1}^{K}N \beta_j^2}{\sum_{j=1}^{K}N\beta_j^2+ K\sigma^2}
		\end{align*}
		\item If $p_j$ varies across coordinates,
		\begin{align*}
			\xi_{N} \simeq \sum_{j=1}^{K}\frac{N\beta_j^2\sigma^2}{N\beta_j^2+\sigma^2}, \quad \xi_{N}^{flat} \simeq \sum_{j=1}^{K}\frac{N\beta_j^2\sigma^2}{N\beta_j^2+\sigma^2}.
		\end{align*}
	\end{enumerate} 	
\end{thm}
The proof is given in Appendix.

Theorem \ref{theorem2.3} shows that, under orthogonality and sufficiently rich candidate construction, RSA incurs no asymptotic approximation loss relative to flat Mallows averaging. Furthermore, when the selection probabilities $p_j$ are appropriately chosen, both procedures achieve the minimal attainable risk under orthogonality, given by $\sum_{j=1}^{K}\frac{N\beta_j^2\sigma^2}{N\beta_j^2+\sigma^2}$. This risk corresponds to the optimal coordinatewise shrinkage rule obtained by minimizing  
\begin{align*}
	E \Vert \hat{\mu}(\theta) - \sum_{j=1}^{K} \beta_j x_{ij} \Vert^2 = N \sum_{j=1}^{K} \left[ (1-\theta_j)^2\beta_j^2 + \theta_j^2\frac{\sigma^2}{N} \right],
\end{align*}
over $\theta \in [0,1]^K$, where $\hat{\mu}(\theta) = \sum_{j=1}^{K} \theta_j x_{ij} \hat{\beta}_j$ and $\hat{\beta}_j$ is the marginal coefficient. The combination of this result with Theorem \ref{theorem2.0} and Remark \ref{remark2.3} shows that RSA attains the minimal achievable risk within the class of convex shrinkage estimators under orthogonality, thereby establishing joint optimality in this class.

Despite achieving the same asymptotic risk as flat Mallows averaging, RSA offers improved scalability. While flat Mallows averaging requires that $(\xi_{N}^{flat})^{-1} (LM)^2 \to 0$ (see \cite{zhang2021new}), the hierarchical structure of RSA reduces the effective dimensionality of the weight-learning problem, requiring only $(\xi_{N})^{-1} (LM^2 + L^2) \to 0$. Since $\xi_{N}^{flat} \le \xi_{N}$, RSA can accommodate a larger pool of candidate models without incurring additional asymptotic cost. In effect, RSA transforms a single ($LM$)-dimensional simplex optimization into a sequence of lower-dimensional subproblems, thereby relaxing the admissible growth rate of the candidate models relative to the sample size.

We next compare the bounds in Theorem \ref{theorem2.3} with those of nested model averaging (MA), random projection regression (RPR), and random subset regression (RSR). All these methods admit the reduced-form representation:
\begin{equation*}
	y_i = x_i^\top R\beta_R + u_i,
\end{equation*}
where $R$ is a data-driven dimension reduction matrix mapping the covariates from dimension $K$ to $P$. Different methods correspond to different constructions of $R$. In nested MA, $R$ sequentially select the first $1, 2, \dots, K$ covariates, followed by convex aggregation. In RPR, the entries of $R$ are i.i.d. $N(0,1/\sqrt{P})$. In RSR \citep{elliott2013complete,boot2019forecasting}, $R$ selects $P$ covariates uniformly at random and averages the resulting predictions equally. Under orthogonality, this unified representation permits closed-form expressions for the minimal risks.

\begin{lemma}\label{lemma2.4}
	Let $\xi_{N}^{MA}$, $\xi_{N}^{RPR}$, $\xi_{N}^{RSR}$ denote the minimal squared $L_2$ risks of MA, RPR, and RSR. Then
	\begin{align*}
		\xi_{N}^{MA} = \sigma^2 + \sum_{j=2}^{K}\frac{N\beta_j^2\sigma^2}{N\beta_j^2+\sigma^2}, \quad \xi_{N}^{RPR} = \xi_{N}^{RSR} = \frac{K\sigma^2\sum_{j=1}^{K}N \beta_j^2}{\sum_{j=1}^{K}N\beta_j^2+ K\sigma^2},
	\end{align*}
	where the minimum of $\xi_{N}^{RPR}$ and $\xi_{N}^{RSR}$ is achieved at $P = \frac{K\sum_{j = 1}^{K}N\beta_j^2}{K\sigma^2+\sum_{j = 1}^{K}N\beta_j^2}$.
\end{lemma}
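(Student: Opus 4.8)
The plan is to exploit the orthogonal design to reduce each of the three ensemble rules to an explicit coordinatewise shrinkage estimator, after which each minimal risk falls out of a one–dimensional optimization. Write $\hat\beta_j = x_j^\top Y/N$ for the marginal least squares coefficient, so $\hat\beta_j = \beta_j + \epsilon_j$ with $\epsilon_j = x_j^\top e/N$ satisfying $E[\epsilon_j\mid X]=0$, $\mathrm{Var}(\epsilon_j\mid X)=\sigma^2/N$, and $\mathrm{Cov}(\epsilon_j,\epsilon_k\mid X)=0$ for $j\neq k$ (immediate from $x_j^\top x_k = N$ if $j=k$ and $0$ otherwise). Since the columns of $X$ are orthogonal, for any subset $S\subseteq\{1,\dots,K\}$ the projection of $Y$ onto $\mathrm{span}\{x_j:j\in S\}$ is exactly $\sum_{j\in S}\hat\beta_j x_j$, and $\|\sum_j a_j x_j\|^2 = N\sum_j a_j^2$. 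These two facts turn every conditional risk below into a sum of one–dimensional quadratics in shrinkage factors.

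\textbf{Nested MA.} A convex combination $\sum_{s=1}^{K}\mathbbm{w}_s\hat\mu_{M_s}$ of the nested fits $\hat\mu_{M_s}=\sum_{j\le s}\hat\beta_j x_j$ equals $\sum_{j=1}^{K}W_j\hat\beta_j x_j$ with $W_j=\sum_{s\ge j}\mathbbm{w}_s$, and as $\mathbbm{w}$ ranges over the weight simplex $\{w\in[0,1]^K:\sum_s w_s=1\}$ the vector $(W_1,\dots,W_K)$ ranges exactly over $\{1=W_1\ge W_2\ge\cdots\ge W_K\ge 0\}$ (the correspondence $\mathbbm{w}_s=W_s-W_{s+1}$, $\mathbbm{w}_K=W_K$ is a bijection). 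Using the displayed facts, the conditional risk is $\sum_{j=1}^{K}\big[N(W_j-1)^2\beta_j^2+W_j^2\sigma^2\big]$. The $j=1$ term is pinned at $\sigma^2$ (the first covariate is always included, so $W_1=1$); each remaining term is minimized at $W_j^\star=N\beta_j^2/(N\beta_j^2+\sigma^2)$, with value $N\beta_j^2\sigma^2/(N\beta_j^2+\sigma^2)$, and because $|\beta_j|$ is non-increasing the sequence $W_2^\star\ge W_3^\star\ge\cdots$ is automatically monotone and bounded above by $W_1=1$, so the monotonicity constraint is slack. Summing over $j$ gives the stated $\xi_N^{MA}$.

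\textbf{RSR and RPR.} For $M$ draws the equal-weighted ensemble is $M^{-1}\sum_m P_{XR_m}Y$, which by the law of large numbers (conditional on $X$) converges as $M\to\infty$ to $\bar P Y$ with $\bar P = E_R[P_{XR}]$; the corresponding minimal risk is $E[\|\bar P Y-\mu\|^2\mid X]$ minimized over the projection dimension $P$. For RSR, $P_{XR}=\sum_{j\in S}x_j x_j^\top/N$ for the uniform $P$-subset $S$, and since $\Pr(j\in S)=P/K$ we obtain $\bar P=(P/K)\,XX^\top/N$, i.e. the ensemble is $(P/K)\sum_j\hat\beta_j x_j$. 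For RPR, writing $X=\sqrt{N}Q$ with $Q^\top Q=I_K$ (possible since $X^\top X=NI_K$ forces $K\le N$), we have $P_{XR}=Q\,\Pi_R\,Q^\top$ where $\Pi_R$ projects onto $\mathrm{col}(R)$; by rotational invariance of the Gaussian $R$ this column space is uniform on the Grassmannian of $P$-planes, so $E[\Pi_R]$ commutes with $O(K)$ and hence equals $cI_K$, and taking traces gives $c=P/K$, so again $\bar P=(P/K)\,QQ^\top=(P/K)\,XX^\top/N$. Thus RPR and RSR produce the identical limiting estimator and the identical risk; substituting $\bar P$ gives the risk $(P/K-1)^2 N\sum_j\beta_j^2+(P/K)^2 K\sigma^2$, a one-variable quadratic whose minimizer is $P=K\sum_j N\beta_j^2/(K\sigma^2+\sum_j N\beta_j^2)$ with minimal value $K\sigma^2\sum_j N\beta_j^2/(\sum_j N\beta_j^2+K\sigma^2)$, as claimed.

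\textbf{Main obstacle.} Everything except one step is elementary one-variable calculus. The step requiring genuine care is the RPR identity $E_R[P_{XR}]=(P/K)\,XX^\top/N$: one must reduce the projection onto $\mathrm{col}(XR)$ to a projection onto a uniformly random $P$-dimensional subspace of $\mathrm{col}(X)$, invoke the symmetry-plus-trace argument, and keep track of the regime $P\le K$ (the only relevant one, since the optimal $P$ above is strictly below $K$) together with the negligible event that $R$ drops rank. A secondary point is making precise the sense in which the $M\to\infty$ equal-weighted ensemble—and hence $\xi_N^{RSR}$ and $\xi_N^{RPR}$ as minimal risks—is well defined, which is handled by the conditional law of large numbers noted above.
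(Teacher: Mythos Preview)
Your proof is correct and follows the same overall architecture as the paper's: for RPR and RSR, compute the averaged projection $E_R[P_{XR}]$, write out the risk as a quadratic in $P/K$, and minimize. The paper obtains $E_R[P_{XR}]=(P/K)XX^\top/N$ by citing Lemma 1 of \citet{thanei2017random} for RPR and the inclusion probability $\Pr(j\in S)=P/K$ for RSR; your rotational-invariance/Grassmannian argument for RPR is a clean self-contained alternative that avoids the citation and makes explicit why the two methods coincide in this setting. Similarly, for $\xi_N^{MA}$ the paper simply invokes Theorem 1 of \citet{peng2022improvability}, whereas your cumulative-weight reparametrization $W_j=\sum_{s\ge j}\mathbbm{w}_s$ and the observation that the ordering $|\beta_1|\ge|\beta_2|\ge\cdots$ makes the monotonicity constraint slack gives a direct derivation. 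So the routes are the same in spirit; yours is more self-contained, the paper's is terser by outsourcing two steps to the literature.
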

The proof is given in Appendix.

If $p_j = p$, Theorem \ref{theorem2.3} and Lemma \ref{lemma2.4} imply 
$$\xi_{N} \simeq \xi_{N}^{flat} \simeq \xi_{N}^{RPR} = \xi_{N}^{RSR}.$$ 
Thus under homogeneous sampling, RSA matches the asymptotic risk of random projection and random subset methods. If $p_j$ varies across coordinates, 
$$ \xi_{N} \simeq \xi_{N}^{flat} \simeq \sum_{j=1}^{K}\frac{N\beta_j^2\sigma^2}{N\beta_j^2+\sigma^2} \leq \xi_{N}^{RPR} = \xi_{N}^{RSR} ,$$
where the inequality follows from Jensen’s inequality. Hence, coordinate-adaptive subsampling improves upon global random subspace methods. Moreover, $$\xi_{N} < \xi_{N}^{MA},$$
and the gap depends primarily on the signal strength of the leading coordinates. For fixed $\sigma^2$, smaller $|\beta_1|$ enlarges the advantage of RSA over nested MA, reflecting the benefit of adaptive shrinkage in weak-signal regimes. In summary, under orthogonality, RSA matches RPR and RSR under homogeneous sampling, strictly improves upon them under heterogeneous sampling, and dominates nested MA except in cases where the leading coordinate is overwhelmingly strong. 


Nested MA is known to achieve full asymptotic optimality, defined as attaining the asymptotic risk over all nested models \citep{peng2024optimality}. In view of Theorem \ref{theorem2.3} and Lemma \ref{lemma2.4}, RSA weakly dominates nested MA in terms of asymptotic risk under orthogonality and therefore inherits this property under the same conditions. Notably, the attainability of nested MA relies on additional structural assumptions, such as a prescribed ordering of covariates, whereas RSA does not depend on such assumptions or a nested construction, thereby offering greater flexibility. Supplementary simulations based on the design of \citet{peng2024optimality} further indicate that RSA yields finite-sample gains under weak correlation and exhibits improved robustness when covariates are strongly correlated.
Moreover, Theorem \ref{theorem2.3} implies that the oracle risk characterization of RSA continues to hold in high-dimensional regimes with $K > N$ under orthogonality, as the variable sampling probabilities can adapt to signal strength. In contrast, nested MA requires truncating the number of candidate models to be no larger than the sample size to ensure estimability, which restricts its applicability in high-dimensional settings.

\begin{remark}[RSA as Adaptive Shrinkage]
	The risk improvement of RSA can be understood from a shrinkage perspective. Under orthogonality, the OLS estimator satisfies $\hat{\beta} = \frac{1}{N} X^\top Y$, while the ridge estimator takes the form $\hat{\beta}_{\lambda}^{ridge} = \frac{N}{N+\lambda}\hat{\beta} $, where $\lambda$ is the regularization parameter. For a random selection matrix $R$, the corresponding estimator can be written as $ R\hat{\beta}_R = \left[R(R^\top X^\top XR)^{-}R^\top \right]X^\top Y $. Under homogeneous sampling, taking expectation w.r.t. $R$ yields $ E_R[R\hat{\beta}_R] = p\hat{\beta} $, which corresponds to a global ridge-type shrinkage. When the selection probabilities $p_j$ vary across coordinates, the induced shrinkage becomes coordinate-specific. In this case, $ E_R[R\hat{\beta}_R] = diag(\frac{N\beta_1^2}{\sigma^2 + N\beta_1^2}, \ldots, \frac{N\beta_K^2}{\sigma^2 + N\beta_K^2})\hat{\beta} $. Thus, RSA induces coordinatewise shrinkage with factor $ \frac{N\beta_j^2}{\sigma^2 + N\beta_j^2} $, which coincides with the oracle shrinkage intensity under orthogonality. In this sense, RSA can be interpreted as an adaptive ridge-type estimator, with shrinkage intensities governed by the signal-to-noise ratios. This perspective highlights that heterogeneous sampling enables RSA to approximate the oracle coordinatewise shrinkage, thereby improving risk efficiency.
\end{remark}

\section{Simulation Study}\label{sec3}

We conduct Monte Carlo simulations to evaluate the finite-sample performance of the proposed RSA method relative to several widely used benchmarks, including random subset regression (RSR; \citealp{elliott2013complete, boot2019forecasting}), random forests (RF; \citealp{breiman2001random}), Lasso \citep{tibshirani1996regression}, SCAD \citep{fan2001variable}, MCP \citep{zhang2010nearly}, parsimonious model averaging (PMA; \citealp{zhang2019parsimonious}), model averaging via cross-validated weights (MCV; \citealp{ando2014model}), and Mallows model averaging (MMA; \citealp{hansen2007least}). In addition, we implement Mallows averaging over candidate models generated by random subsets, which corresponds to a one-round version of RSA and is denoted by RSA.1. We denote RSA with cross-validated tuning parameters by RSA.o and RSA with fixed parameter settings by RSA.f. 

RSR and RF construct ensemble predictors by aggregating forecasts from base learners trained on randomly selected subsets of covariates. Lasso, SCAD, and MCP are standard regularization-based methods for high-dimensional regression. In contrast, MMA, PMA and MCV combine forecasts across multiple candidate models, with PMA and MCV specifically designed for high-dimensional settings. Among the latter, both methods achieve asymptotic optimality under model misspecification: PMA constructs candidate models along the solution path of the adaptive lasso, while MCV generates candidates based on marginal correlations and relaxes the sum-to-one constraint on averaging weights. Although, in principle, RSA cannot achieve lower risk than flat Mallows averaging over all $LM$ candidate models, the latter requires substantially stronger rate conditions to ensure asymptotic optimality and practical feasibility. To facilitate a meaningful and implementable comparison, we therefore include RSA.1, corresponding to the degenerate case $L = 1$ with $M$ candidate models. This specification preserves the random-subset generation mechanism while avoiding the stringent dimensionality requirements associated with flat Mallows averaging over $LM$ models.

\subsection{Simulation Setup} \label{sec3.1}

We consider the following linear data-generating process (DGP):
\begin{equation}\label{simultionmodel}
	y_i = x_i^\top\beta + e_i, i = 1, \ldots, N, 
\end{equation}
where $x_i \in \mathbb{R}^K$. The sample size is set to $N \in \{200, 400, 800\}$, and the number of covariates is defined as $K = \delta N$, with $\delta \in \{0.1, 0.5, 1, 1.5\}$. The number of nonzero coefficients in $\beta$, denoted $K^{*}$, is set to $K$ when $\delta = 0.1$ and to $0.3K$ otherwise. This design spans both low- and high-dimensional regimes with varying degrees of sparsity. 

The nonzero coefficients in $\beta$ follow two decay patterns: (i) polynomial decay, $\{j^{-0.51}:j = 1,\dots, K^{*}\}$; and (ii) exponential decay,  $\{\exp(-j^{0.25}): j = 1, \dots, K^{*}\}$. These coefficients are randomly positioned within $\beta$, with all remaining coefficients set to zero. The faster exponential decay yields smaller signal magnitudes and fewer effectively relevant covariates, thereby inducing a sparser setting relative to the polynomial decay case. Notably, the random placement of nonzero coefficients violates the implicit ordering assumption underlying nested candidate models in MMA. Moreover, when $\delta \ge 1$, the number of covariates exceeds the sample size $(K \geq N)$, rendering conventional MMA infeasible. To address this, we construct nested MMA candidate models using only the first $N - 2$ covariates under a natural ordering. 

The covariate vector $x_i$ is generated from a multivariate normal distribution with mean zero and covariance matrix $\Sigma$, where $\Sigma_{ij} = \rho^{|i-j|}$ for $i,j = 1, \dots, K$. We consider $\rho \in \{0.1, 0.9\}$ to represent low and high-correlation settings. Additional results based on random correlation matrices, capturing more complex dependence structures encountered in practice, are reported in the Supplementary Material.

We evaluate one-step-ahead forecasting performance using the Mean Squared Forecast Error (MSFE): $ MSFE = \frac{1}{N_{test}} \sum_{i=1}^{N_{test}} (\hat{y}_{i} - x_i^\top\beta)^2 $ and assess in-sample fit via the Mean Squared Error (MSE): $ MSE = \frac{1}{N_{train}} \sum_{i=1}^{N_{train}} (\hat{y}_{i} - x_i^\top\beta)^2 $. The training sample size is set to be twice that of the testing sample.

Tuning parameters are selected via cross-validation when applicable, and default choices are used otherwise. Each configuration is evaluated over 100 Monte Carlo replications. In addition, we employ the Model Confidence Set (MCS) test \citep{hansen2011model} to formally compare predictive performance across methods.

Figure \ref{fig:CVdemonstration} illustrates a representative cross-validation landscape for RSA with $N = 800$ under the polynomial decay setting. Darker blue blocks indicate lower CV errors. The optimal selection probability decreases as the number of covariates increases, whereas the number of candidate models has a relatively limited effect, under both low ($\rho=0.1$) and high ($\rho = 0.9$) correlation regimes. Moreover, the optimal selection probability is generally smaller under high correlation, reflecting the adverse effect of multicollinearity. This pattern is consistent with the theoretical insight that stronger dependence increases redundancy among predictors, leading to more aggressive shrinkage through smaller sampling probabilities. Similar patterns for other configurations are documented in the supplementary material. 




\begin{figure}[!h]
	\centering
	
	\begin{minipage}{0.95\textwidth}
		\centering
		\includegraphics[width=0.24\linewidth]{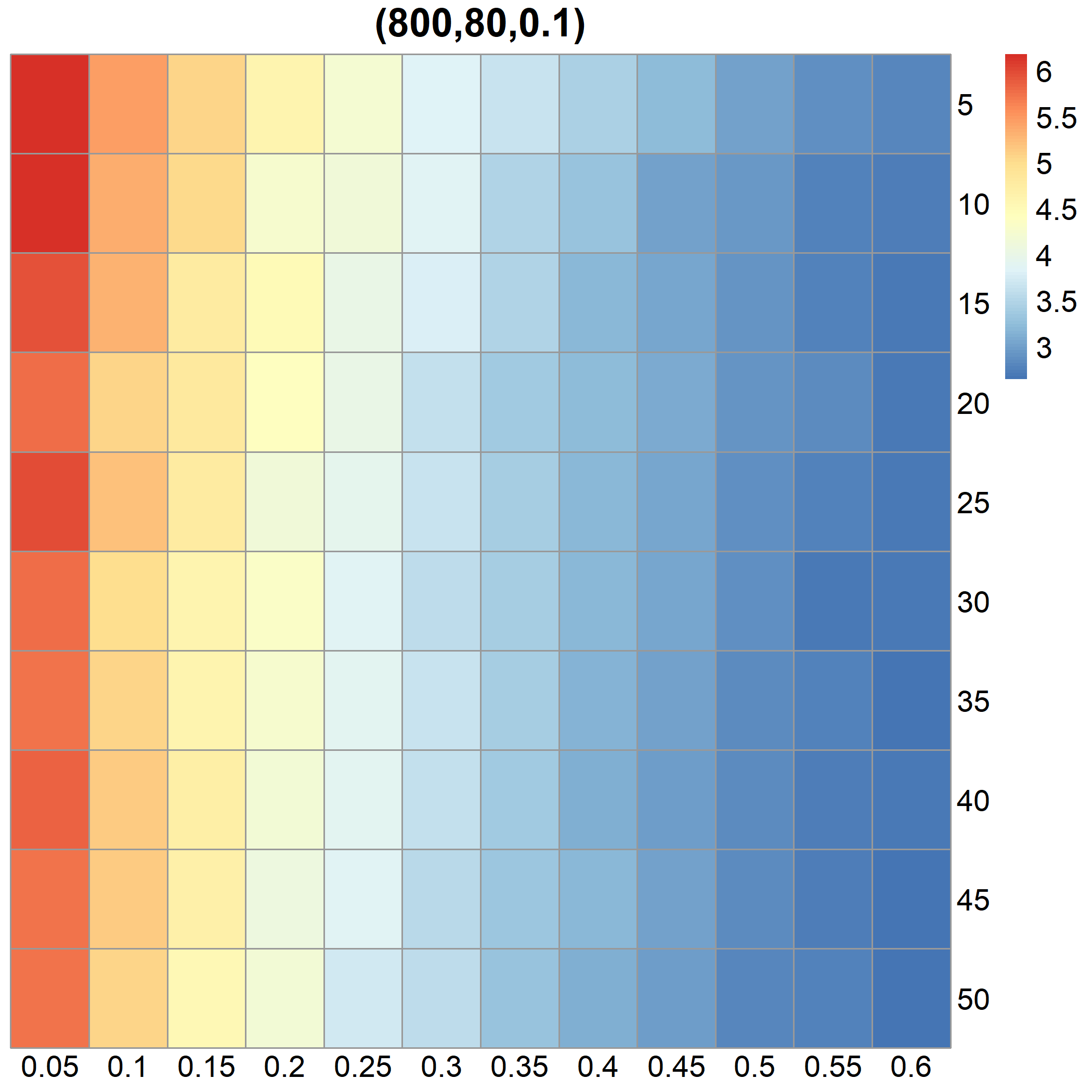}
		\includegraphics[width=0.24\linewidth]{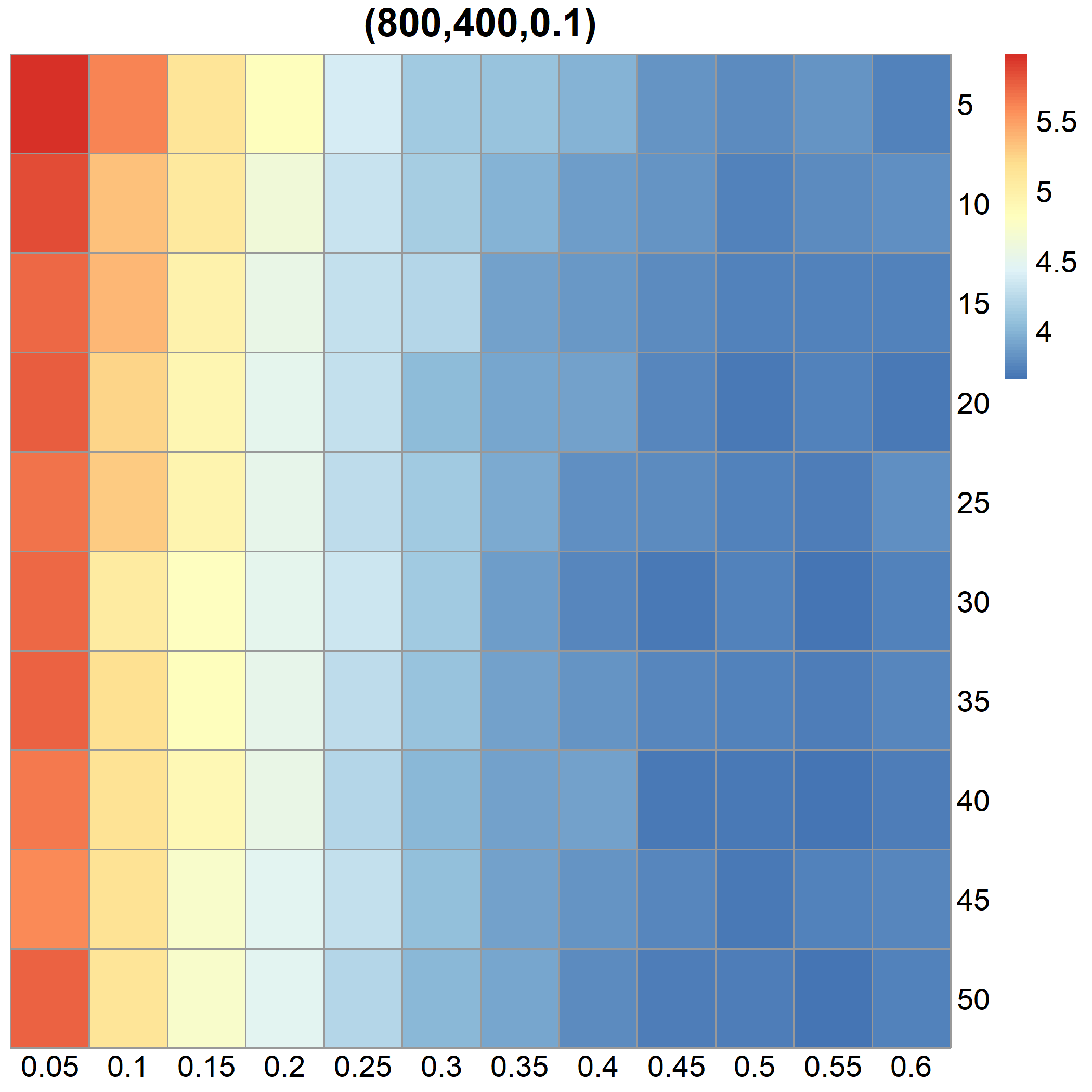}
		\includegraphics[width=0.24\linewidth]{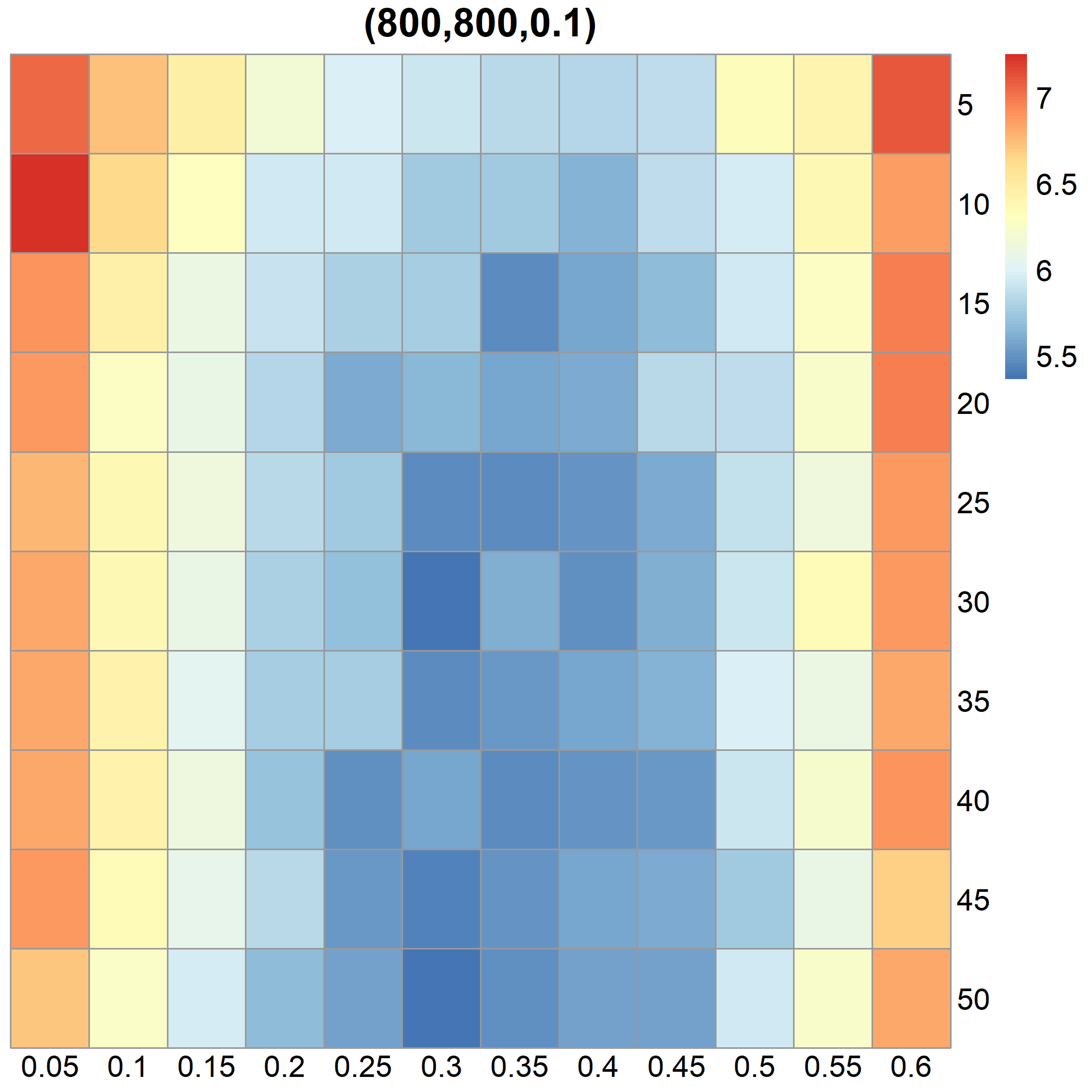}
		\includegraphics[width=0.24\linewidth]{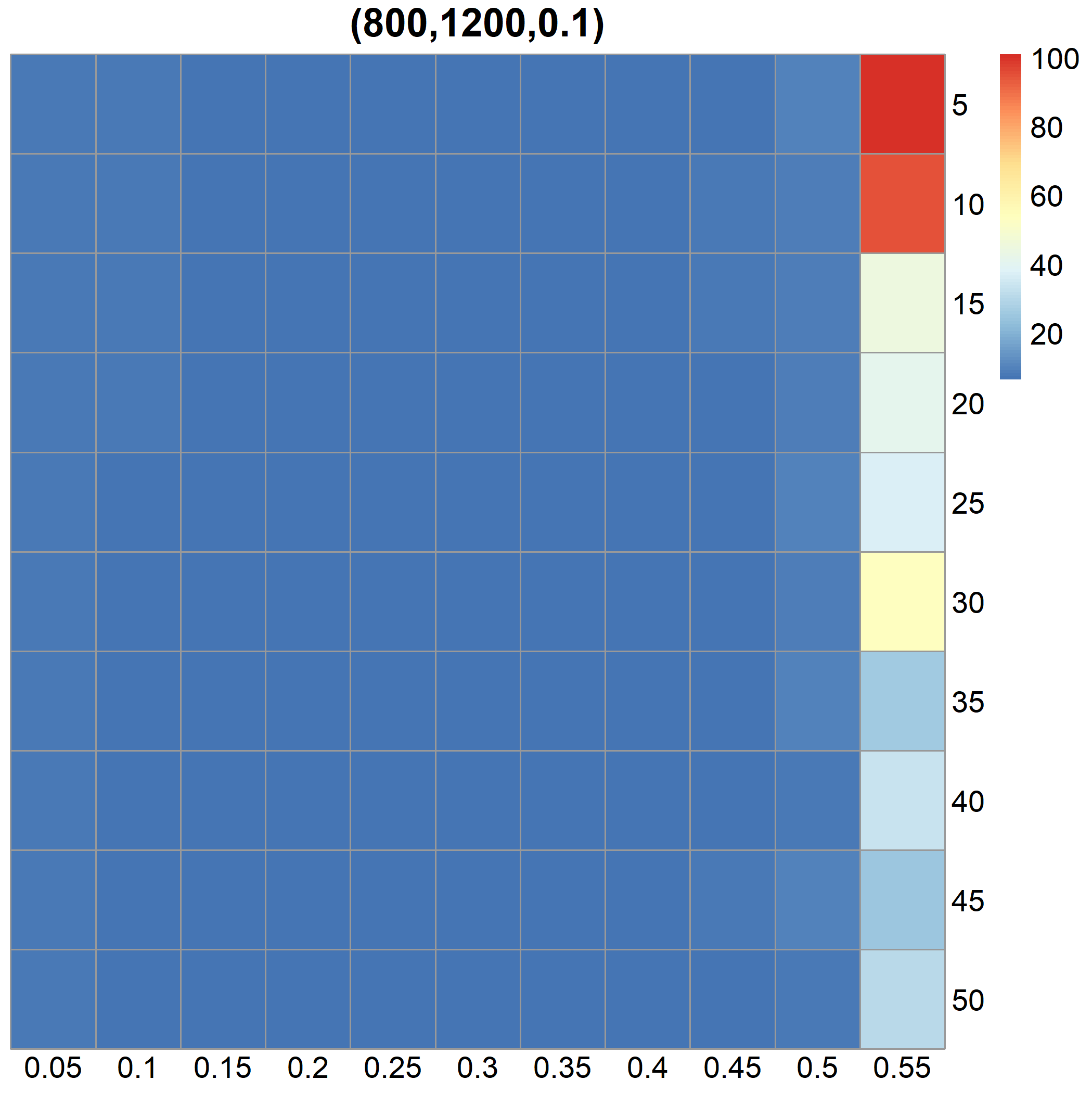}
		
		\vspace{3pt}
		\small (a) $\rho = 0.1$
	\end{minipage}
	
	\vspace{6pt}
	
	\begin{minipage}{0.95\textwidth}
		\centering
		\includegraphics[width=0.24\linewidth]{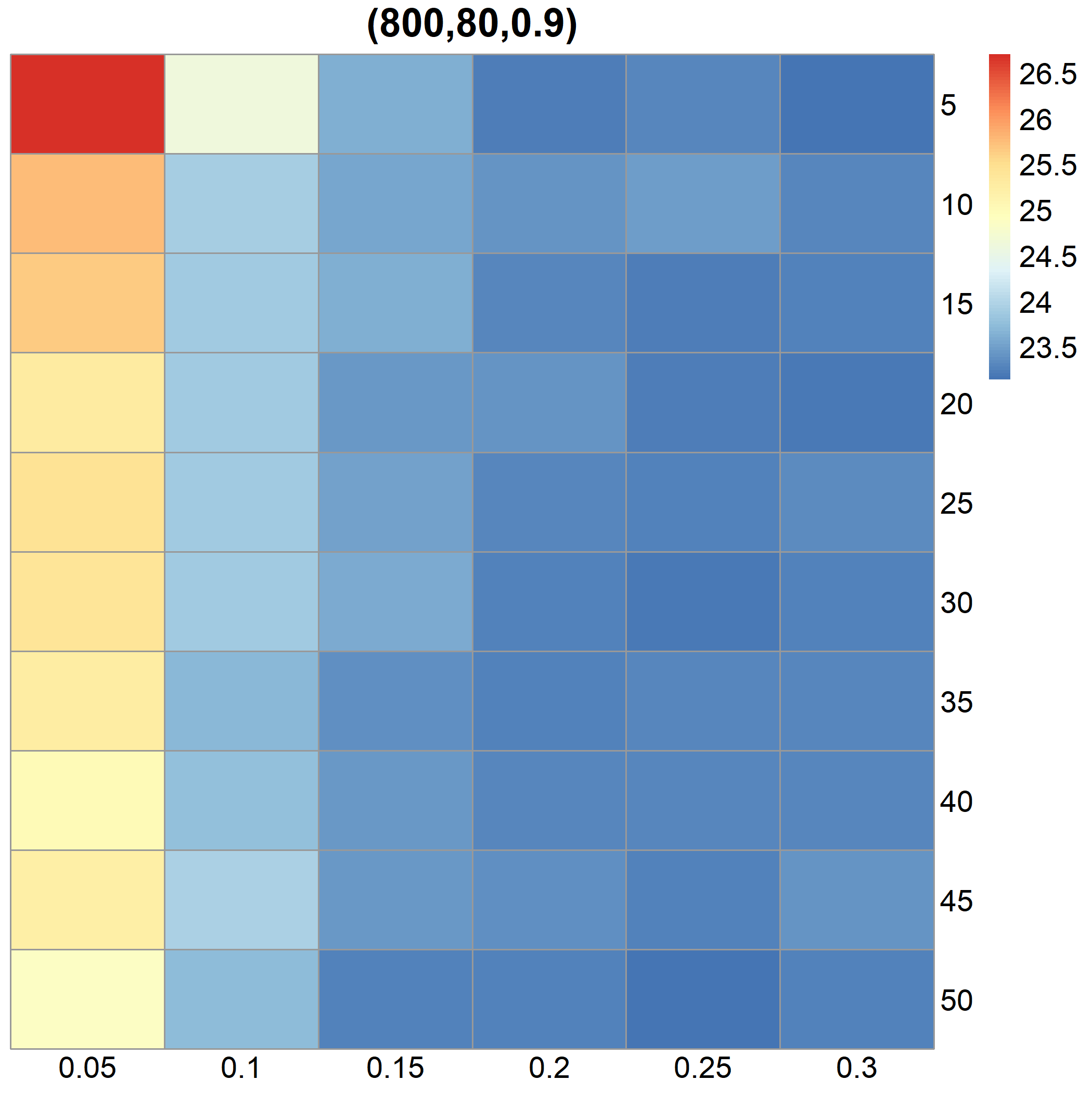}
		\includegraphics[width=0.24\linewidth]{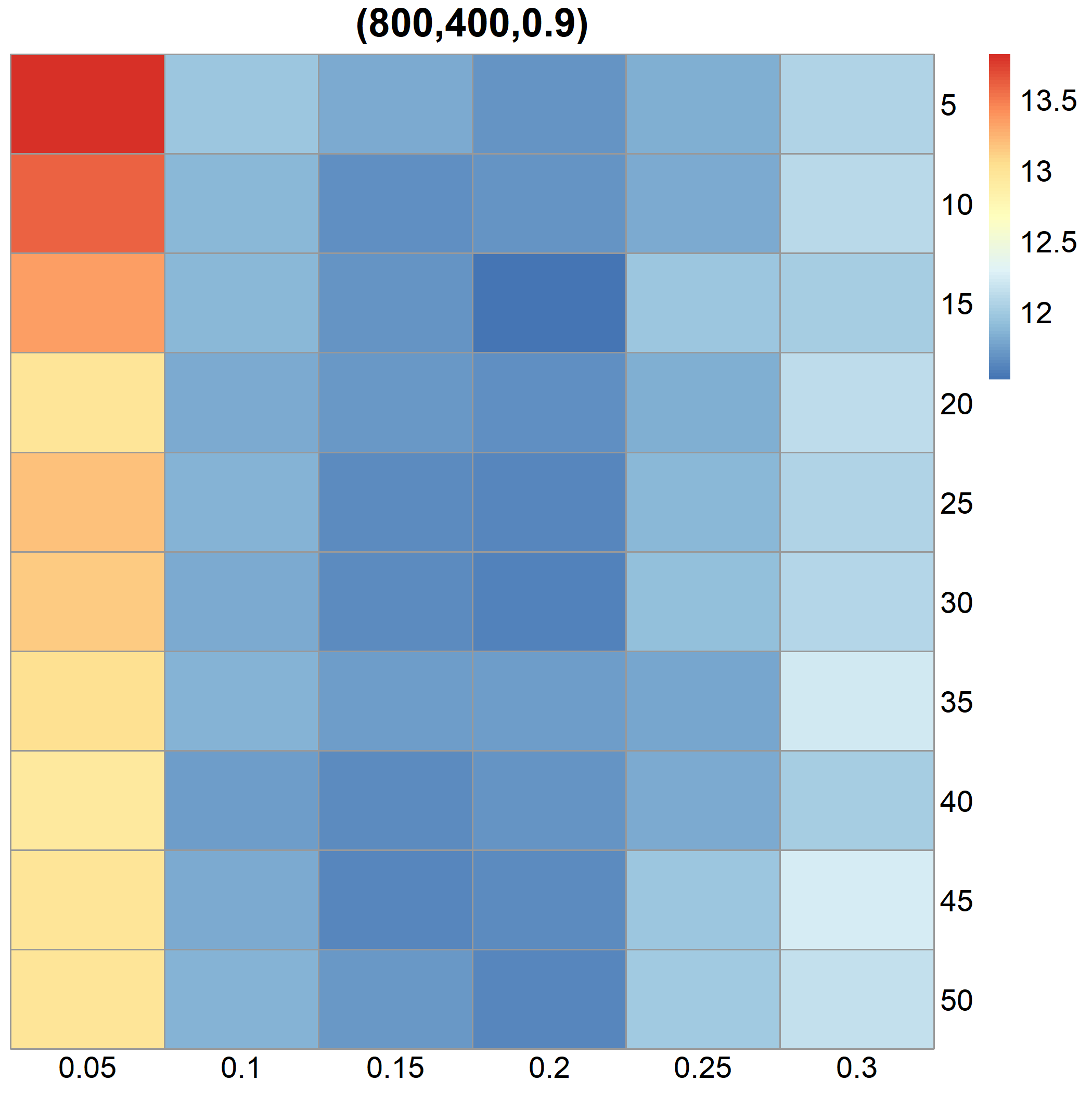}
		\includegraphics[width=0.24\linewidth]{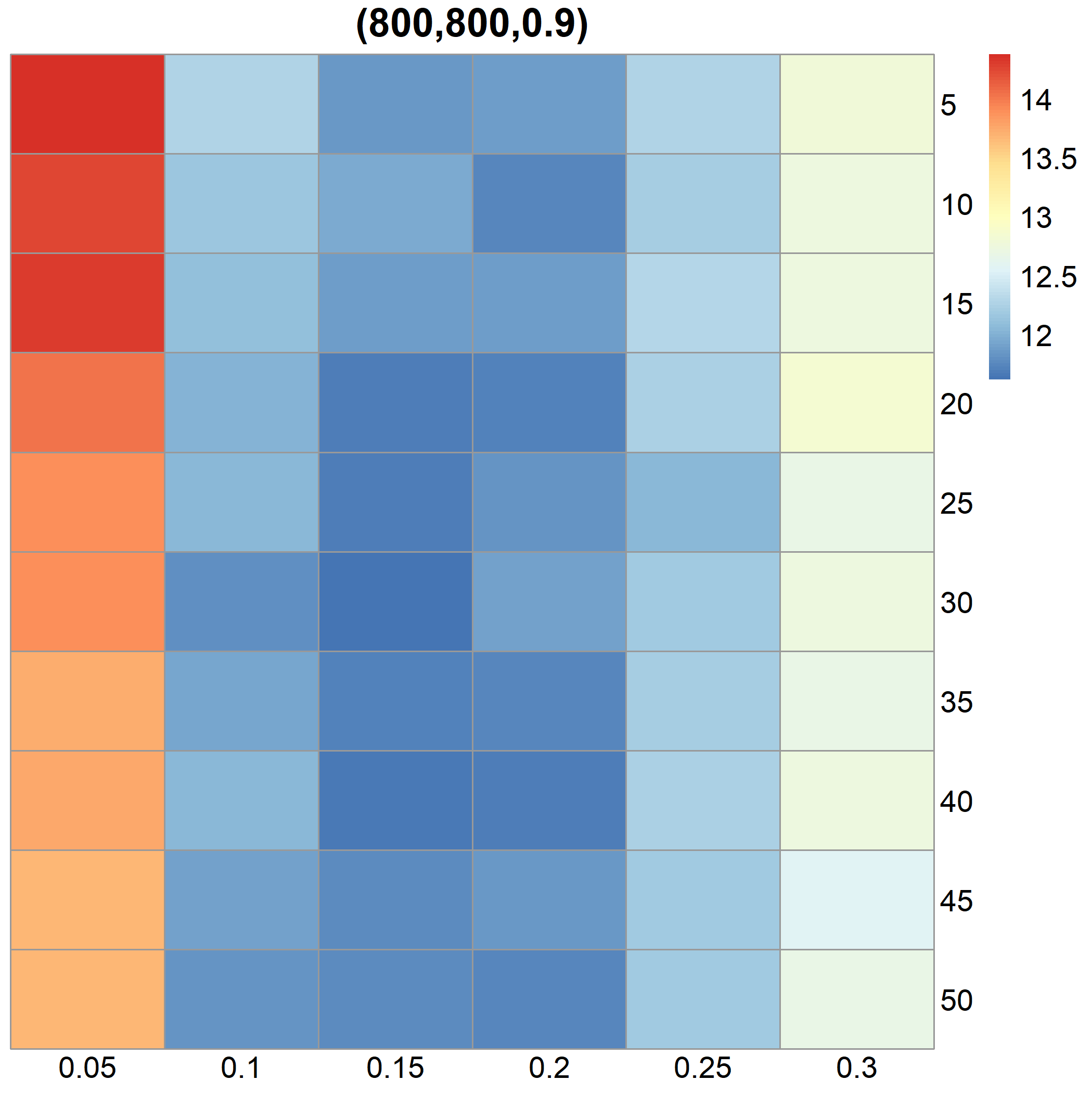}
		\includegraphics[width=0.24\linewidth]{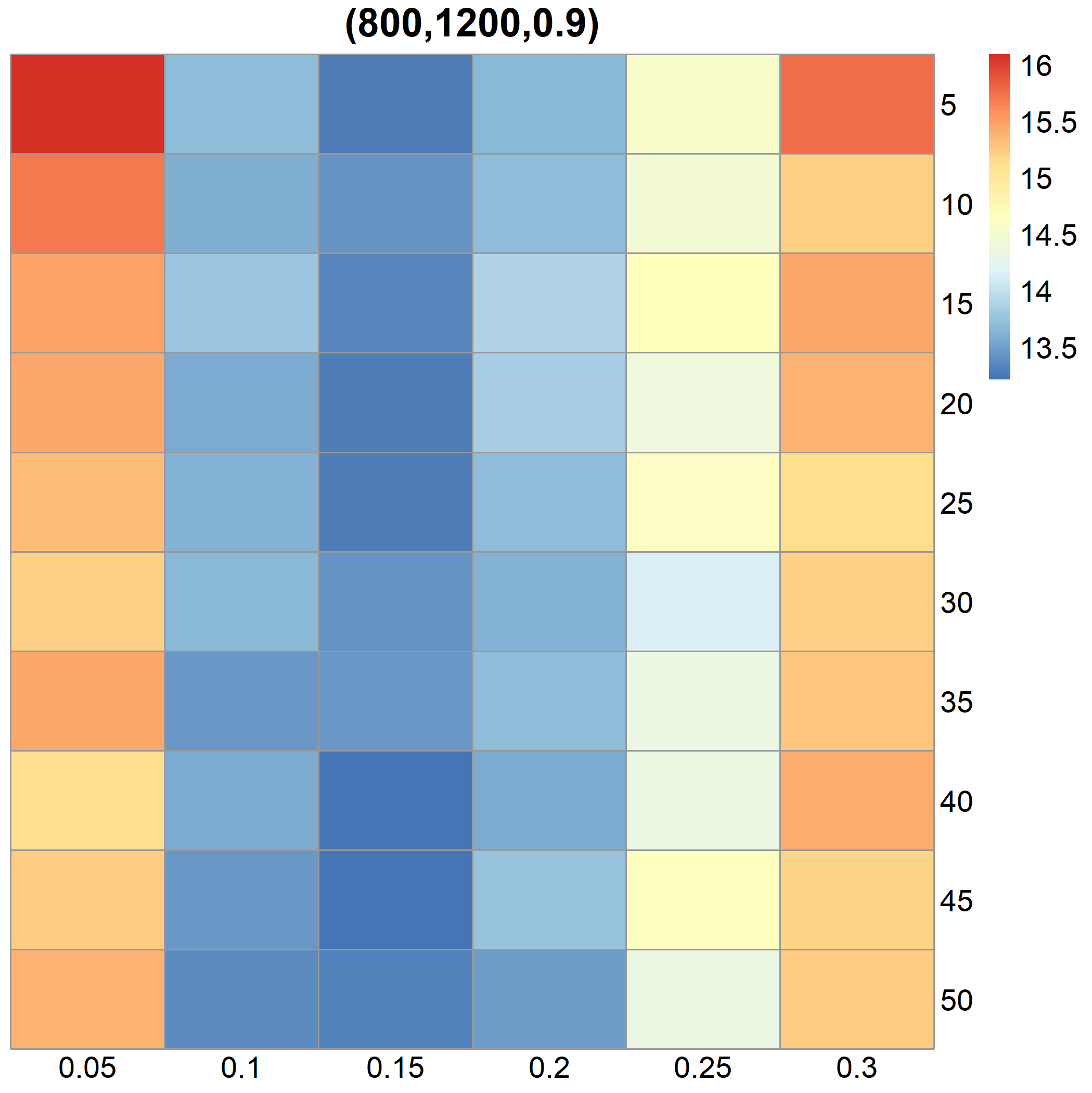}
		
		\vspace{3pt}
		\small (b) $\rho = 0.9$
	\end{minipage}
	
	\caption{Cross-validation results under polynomial decay with $N=800$. Values in parentheses denote $(N, K, \rho)$. The horizontal axis represents the selection probability $p$, and the vertical axis indicates the number of candidate models $M$. Darker regions correspond to $(p, M)$ combinations yielding lower CV errors.}
	\label{fig:CVdemonstration}
\end{figure}

\subsection{MSFE Comparison under Low and High Correlation}\label{sec3.2}
Tables \ref{tab:polyrho0.1} and \ref{tab:exprho0.1} report MSFE results under low correlation ($\rho = 0.1$) for polynomially and exponentially decaying signals, respectively, while Tables \ref{tab:polyrho0.9} and \ref{tab:exprho0.9} present the corresponding results under high correlation ($\rho = 0.9$), allowing for a direct comparison of performance across correlation regimes. 

\begin{table}[t]
	\centering
	\caption{MSFE comparison for $\rho = 0.1$ under polynomially decaying coefficients.}
	\scalebox{0.80}{
		\begin{threeparttable}
			\begin{tabular}{ccccccccccccc}
				\toprule
				$N$     & $K$     & RSA.o & RSA.f & RSA.1 & RSR   & RF    & Lasso & SCAD  & MCP   & PMA   & MCV   & MMA \\
				\midrule
				\multirow[t]{8}[2]{*}{200} & \multirow[t]{2}[1]{*}{20} & 0.27  & 1.47  & 0.33  & 3.12  & 1.71  & \textbf{0.21} & 0.21  & 0.21  & 0.70  & 0.74  & \textbf{0.21} \\
				&       & (0.09) & (0.29) & (0.11) & (0.46) & (0.3) & \textbf{(0.07)} & (0.07) & (0.07) & (0.22) & (0.22) & \textbf{(0.07)} \\
				& \multirow[t]{2}[0]{*}{100} & \textbf{0.85} & 1.84  & 0.96  & 3.26  & 2.60  & 1.07  & 0.99  & 1.07  & 1.10  & 1.44  & 1.26 \\
				&       & \textbf{(0.18)} & (0.32) & (0.2) & (0.47) & (0.42) & (0.33) & (0.28) & (0.36) & (0.26) & (0.32) & (0.31) \\
				& \multirow[t]{2}[0]{*}{200} & \textbf{1.84} & 2.66  & 2.19  & 3.92  & 3.58  & 2.57  & 2.22  & 2.41  & 2.07  & 2.43  & 1555.74 \\
				&       & \textbf{(0.33)} & (0.43) & (0.4) & (0.53) & (0.49) & (0.72) & (0.49) & (0.56) & (0.44) & (0.51) & (13606.73) \\
				& \multirow[t]{2}[1]{*}{300} & \textbf{2.59} & 3.09  & 3.15  & 4.16  & 4.03  & 3.51  & 3.21  & 3.21  & 2.87  & 3.05  & 518.32 \\
				&       & \textbf{(0.49)} & (0.44) & (0.55) & (0.52) & (0.53) & (0.84) & (0.75) & (0.66) & (0.58) & (0.55) & (1741.73) \\
				\midrule
				\multirow[t]{8}[2]{*}{400} & \multirow[t]{2}[1]{*}{40} & 0.36  & 2.13  & 0.49  & 3.81  & 2.41  & \textbf{0.22} & \textbf{0.22} & \textbf{0.22} & 0.88  & 0.87  & \textbf{0.22} \\
				&       & (0.08) & (0.25) & (0.12) & (0.37) & (0.27) & \textbf{(0.06)} & \textbf{(0.06)} & \textbf{(0.06)} & (0.18) & (0.18) & \textbf{(0.06)} \\
				& \multirow[t]{2}[0]{*}{200} & \textbf{0.97} & 2.39  & 1.12  & 3.88  & 3.20  & 1.24  & 1.15  & 1.20  & 1.23  & 1.62  & 1.39 \\
				&       & \textbf{(0.15)} & (0.27) & (0.17) & (0.41) & (0.35) & (0.23) & (0.27) & (0.28) & (0.25) & (0.28) & (0.25) \\
				& \multirow[t]{2}[0]{*}{400} & 2.33  & 3.12  & 2.87  & 4.41  & 4.06  & 2.67  & 2.35  & 2.50  & \textbf{2.19} & 2.61  & 420.07 \\
				&       & (0.3) & (0.39) & (0.42) & (0.49) & (0.46) & (0.45) & (0.38) & (0.36) & \textbf{(0.33)} & (0.38) & (1125.75) \\
				& \multirow[t]{2}[1]{*}{600} & \textbf{2.92} & 3.57  & 3.59  & 4.72  & 4.57  & 3.71  & 3.33  & 3.44  & \textbf{2.97} & 3.31  & 1147.42 \\
				&       & \textbf{(0.34)} & (0.38) & (0.41) & (0.49) & (0.5) & (0.58) & (0.43) & (0.48) & \textbf{(0.37)} & (0.41) & (6375.5) \\
				\midrule
				\multirow[t]{8}[2]{*}{800} & \multirow[t]{2}[1]{*}{80} & 0.42  & 2.68  & 0.52  & 4.35  & 3.04  & 0.27  & 0.27  & 0.27  & 1.06  & 1.00  & \textbf{0.26} \\
				&       & (0.06) & (0.22) & (0.07) & (0.29) & (0.23) & (0.05) & (0.05) & (0.05) & (0.11) & (0.13) & \textbf{(0.05)} \\
				& \multirow[t]{2}[0]{*}{400} & \textbf{1.09} & 2.88  & 1.26  & 4.32  & 3.68  & 1.36  & 1.25  & 1.26  & 1.37  & 1.70  & 1.57 \\
				&       & \textbf{(0.12)} & (0.25) & (0.15) & (0.32) & (0.3) & (0.21) & (0.18) & (0.18) & (0.17) & (0.2) & (0.19) \\
				& \multirow[t]{2}[0]{*}{800} & 2.40  & 3.57  & 2.84  & 4.86  & 4.54  & 2.84  & 2.52  & 2.60  & \textbf{2.27} & 2.72  & 3280.55 \\
				&       & (0.22) & (0.31) & (0.27) & (0.35) & (0.35) & (0.4) & (0.26) & (0.24) & \textbf{(0.19)} & (0.25) & (18600.18) \\
				& \multirow[t]{2}[1]{*}{1200} & 3.15  & 4.08  & 3.74  & 5.27  & 5.05  & 3.80  & 3.38  & 3.46  & \textbf{3.04} & 3.47  & 1310.68 \\
				&       & (0.27) & (0.31) & (0.35) & (0.37) & (0.34) & (0.42) & (0.34) & (0.37) & \textbf{(0.29)} & (0.33) & (4041.43) \\
				\bottomrule
			\end{tabular}%
			\vspace{1ex}
			{\raggedright Note: RSA.o represents the RSA method with CV-determined parameters and RSA.f refers to the RSA method with fixed parameters, specifically $M=L=30$ and $p=0.1$. RSA.1 denotes single-round Mallows averaging with the number of candidate models determined by CV. Values in bold indicate the top performers within the 95\% MCS test while values in parentheses represent the standard deviation of the reported MSFEs.\par}
		\end{threeparttable}
	}
	\label{tab:polyrho0.1}%
\end{table}%

\begin{table}[t]
	\centering
	\caption{MSFE comparison for $\rho = 0.1$ under exponentially decaying coefficient.}
	\scalebox{0.80}{
		\begin{threeparttable}
			\begin{tabular}{ccccccccccccc}
				\toprule
				$N$     & $K$     & RSA.o & RSA.f & RSA.1 & RSR   & RF    & Lasso & SCAD  & MCP   & PMA   & MCV   & MMA \\
				\midrule
				\multirow[t]{8}[2]{*}{200} & \multirow[t]{2}[1]{*}{20} & 0.08  & 0.40  & 0.11  & 0.73  & 0.44  & \textbf{0.05} & \textbf{0.05} & \textbf{0.05} & 0.17  & 0.18  & \textbf{0.05} \\
				&       & (0.02) & (0.07) & (0.03) & (0.1) & (0.07) & \textbf{(0.02)} & \textbf{(0.02)} & \textbf{(0.02)} & (0.05) & (0.06) & \textbf{(0.02)} \\
				& \multirow[t]{2}[0]{*}{100} & \textbf{0.21} & 0.49  & 0.25  & 0.75  & 0.65  & 0.27  & 0.25  & 0.25  & 0.26  & 0.36  & 0.30 \\
				&       & \textbf{(0.04)} & (0.09) & (0.05) & (0.11) & (0.1) & (0.08) & (0.07) & (0.08) & (0.06) & (0.08) & (0.07) \\
				& \multirow[t]{2}[0]{*}{200} & \textbf{0.53} & 0.71  & 0.62  & 0.92  & 0.91  & 0.67  & 0.63  & 0.65  & 0.57  & 0.63  & 77.31 \\
				&       & \textbf{(0.09)} & (0.11) & (0.1) & (0.13) & (0.13) & (0.2) & (0.15) & (0.14) & (0.12) & (0.13) & (320.28) \\
				& \multirow[t]{2}[1]{*}{300} & \textbf{0.68} & 0.82  & 0.83  & 1.00  & 1.03  & 0.90  & 0.84  & 0.90  & 0.78  & 0.81  & 295.11 \\
				&       & \textbf{(0.12)} & (0.14) & (0.15) & (0.16) & (0.17) & (0.2) & (0.15) & (0.17) & (0.16) & (0.14) & (1755.67) \\
				\midrule
				\multirow[t]{8}[2]{*}{400} & \multirow[t]{2}[1]{*}{40} & 0.09  & 0.56  & 0.11  & 0.88  & 0.61  & 0.06  & 0.06  & 0.06  & 0.22  & 0.21  & \textbf{0.06} \\
				&       & (0.02) & (0.06) & (0.02) & (0.09) & (0.07) & (0.02) & (0.02) & (0.02) & (0.04) & (0.05) & \textbf{(0.02)} \\
				& \multirow[t]{2}[0]{*}{200} & \textbf{0.25} & 0.64  & 0.28  & 0.91  & 0.82  & 0.31  & 0.27  & 0.28  & 0.31  & 0.40  & 0.35 \\
				&       & \textbf{(0.04)} & (0.08) & (0.05) & (0.1) & (0.1) & (0.07) & (0.05) & (0.06) & (0.06) & (0.08) & (0.05) \\
				& \multirow[t]{2}[0]{*}{400} & \textbf{0.50} & 0.79  & 0.60  & 1.01  & 1.00  & 0.62  & 0.57  & 0.60  & 0.52  & 0.62  & 98.70 \\
				&       & \textbf{(0.07)} & (0.09) & (0.08) & (0.1) & (0.1) & (0.1) & (0.1) & (0.11) & (0.08) & (0.1) & (265.1) \\
				& \multirow[t]{2}[1]{*}{600} & 0.72  & 0.88  & 0.84  & 1.08  & 1.11  & 0.81  & 0.75  & 0.76  & \textbf{0.68} & 0.76  & 256.81 \\
				&       & (0.1) & (0.11) & (0.11) & (0.12) & (0.13) & (0.14) & (0.12) & (0.11) & \textbf{(0.1)} & (0.1) & (926.26) \\
				\midrule
				\multirow[t]{8}[2]{*}{800} & \multirow[t]{2}[1]{*}{80} & 0.11  & 0.71  & 0.13  & 1.03  & 0.79  & 0.06  & 0.06  & 0.06  & 0.25  & 0.23  & \textbf{0.06} \\
				&       & (0.02) & (0.06) & (0.02) & (0.07) & (0.06) & (0.01) & (0.01) & (0.01) & (0.03) & (0.04) & \textbf{(0.01)} \\
				& \multirow[t]{2}[0]{*}{400} & \textbf{0.27} & 0.76  & 0.33  & 1.02  & 0.95  & 0.32  & 0.28  & 0.28  & 0.30  & 0.39  & 0.38 \\
				&       & \textbf{(0.03)} & (0.07) & (0.04) & (0.08) & (0.08) & (0.05) & (0.04) & (0.04) & (0.04) & (0.04) & (0.05) \\
				& \multirow[t]{2}[0]{*}{800} & 0.55  & 0.86  & 0.63  & 1.09  & 1.07  & 0.58  & 0.50  & 0.51  & \textbf{0.45} & 0.56  & 846.08 \\
				&       & (0.05) & (0.07) & (0.06) & (0.08) & (0.08) & (0.09) & (0.06) & (0.06) & \textbf{(0.06)} & (0.06) & (2657.81) \\
				& \multirow[t]{2}[1]{*}{1200} & 0.72  & 0.90  & 0.86  & 1.11  & 1.11  & 0.71  & 0.63  & 0.61  & \textbf{0.53} & 0.64  & 685.52 \\
				&       & (0.06) & (0.06) & (0.08) & (0.08) & (0.07) & (0.08) & (0.06) & (0.06) & \textbf{(0.05)} & (0.07) & (1658.35) \\
				\bottomrule
			\end{tabular}%
			\vspace{1ex}
			{\raggedright Note: RSA.o represents the RSA method with CV-determined parameters and RSA.f refers to the RSA method with fixed parameters, specifically $M=L=30$ and $p=0.1$. RSA.1 denotes single-round Mallows averaging with the number of candidate models determined by CV. Values in bold indicate the top performers within the 95\% MCS test while values in parentheses represent the standard deviation of the reported MSFEs.\par}
		\end{threeparttable}
	}
	\label{tab:exprho0.1}%
\end{table}%

Under low correlation, MMA performs competitively only when signals are dense ($K = 0.1N$). In this case, the nested candidate model set contains the true model, and the convex combination in MMA yields predictive gains relative to single-model selection approaches, although variable selection methods achieve comparable performance when they correctly recover the true model. As dimensionality increases and signals become sparse due to the presence of irrelevant variables, variable selection methods dominate owing to their consistency properties under weak correlation. With larger sample sizes, PMA eventually outperforms variable selection by more accurately estimating optimal aggregation weights over candidate models. 

In contrast, RSA delivers the best performance in small-sample, high-dimensional settings and consistently outperforms RSR across all scenarios, highlighting the effectiveness of its binomial sampling strategy and two-round weighting scheme in balancing model complexity and predictive accuracy. The same qualitative pattern is observed under the weak-signal setting in Table \ref{tab:exprho0.1}. Additional simulations in high-dimensional regimes with many relevant variables further confirm the superiority of RSA under low correlation over all competing methods; see Supplementary Material \ref{app:manyrelevantvariables}. Notably, RSA ranks among the top two methods under polynomial decay, whereas its relative advantage diminishes under exponential decay as $N$ and $K$ grow. This pattern suggests that RSA remains competitive except in regimes where signal magnitudes decay sufficiently rapidly that the effective signal-to-noise ratio becomes negligible.

\begin{table}[t]
	\centering
	\caption{MSFE comparison for $\rho = 0.9$ under polynomially decaying coefficients.}
	\scalebox{0.80}{
		\begin{threeparttable}
			\begin{tabular}{ccccccccccccc}
				\toprule
				$N$     & $K$     & RSA.o & RSA.f & RSA.1 & RSR   & RF    & Lasso & SCAD  & MCP   & PMA   & MCV   & MMA \\
				\midrule
				\multirow[t]{8}[2]{*}{200} & \multirow[t]{2}[1]{*}{20} & 1.09  & \textbf{0.89} & 2.08  & 2.44  & 2.00  & \textbf{0.88} & 1.38  & 1.31  & 2.16  & 1.40  & 1.56 \\
				&       & (0.45) & \textbf{(0.38)} & (0.87) & (0.88) & (0.61) & \textbf{(0.4)} & (0.53) & (0.48) & (0.72) & (0.51) & (0.53) \\
				& \multirow[t]{2}[0]{*}{100} & \textbf{1.09} & 1.17  & 1.20  & 1.83  & 3.75  & 1.58  & 2.13  & 2.16  & 2.55  & 1.82  & 5.08 \\
				&       & \textbf{(0.31)} & (0.33) & (0.35) & (0.5) & (0.8) & (0.74) & (0.69) & (0.63) & (0.64) & (0.67) & (1.24) \\
				& \multirow[t]{2}[0]{*}{200} & \textbf{1.79} & 1.88  & 1.96  & 2.25  & 6.94  & 3.02  & 3.65  & 3.80  & 4.50  & 2.84  & 979.01 \\
				&       & \textbf{(0.51)} & (0.54) & (0.53) & (0.62) & (1.32) & (1.34) & (1.13) & (0.86) & (1.09) & (1)   & (2382.3) \\
				& \multirow[t]{2}[1]{*}{300} & \textbf{2.58} & \textbf{2.59} & 3.09  & 2.80  & 9.62  & 4.45  & 4.97  & 5.51  & 5.99  & 4.05  & 4631.86 \\
				&       & \textbf{(0.59)} & \textbf{(0.58)} & (0.68) & (0.67) & (1.56) & (1.62) & (1.4) & (1.4) & (1.31) & (1.17) & (20243.33) \\
				\midrule
				\multirow[t]{8}[2]{*}{400} & \multirow[t]{2}[1]{*}{40} & \textbf{0.81} & 1.02  & 0.94  & 2.89  & 3.00  & 1.11  & 1.72  & 1.69  & 2.87  & 3.09  & 2.14 \\
				&       & \textbf{(0.29)} & (0.35) & (0.3) & (0.72) & (0.68) & (0.55) & (0.55) & (0.49) & (0.64) & (0.86) & (0.61) \\
				& \multirow[t]{2}[0]{*}{200} & \textbf{1.10} & 1.26  & 1.59  & 2.05  & 5.76  & 1.59  & 2.18  & 2.21  & 2.74  & 2.44  & 6.10 \\
				&       & \textbf{(0.22)} & (0.25) & (0.33) & (0.37) & (0.81) & (0.49) & (0.51) & (0.45) & (0.53) & (0.73) & (1.04) \\
				& \multirow[t]{2}[0]{*}{400} & \textbf{1.75} & 1.97  & 2.09  & 2.51  & 9.67  & 2.92  & 3.70  & 3.86  & 4.74  & 3.16  & 2805.92 \\
				&       & \textbf{(0.31)} & (0.35) & (0.37) & (0.38) & (1.01) & (1.01) & (0.73) & (0.79) & (0.85) & (0.67) & (10015.95) \\
				& \multirow[t]{2}[1]{*}{600} & \textbf{2.44} & 2.60  & 2.76  & 2.89  & 11.93 & 4.58  & 5.53  & 5.63  & 6.38  & 4.30  & 26095.99 \\
				&       & \textbf{(0.38)} & (0.39) & (0.43) & (0.43) & (1.34) & (1.46) & (1.2) & (1.05) & (0.94) & (0.83) & (225420.01) \\
				\midrule
				\multirow[t]{8}[2]{*}{800} & \multirow[t]{2}[1]{*}{80} & \textbf{0.78} & 1.26  & 1.16  & 3.39  & 5.54  & 1.17  & 1.92  & 1.90  & 3.39  & 4.76  & 2.50 \\
				&       & \textbf{(0.23)} & (0.36) & (0.3) & (0.6) & (0.75) & (0.37) & (0.43) & (0.4) & (0.61) & (1.38) & (0.47) \\
				& \multirow[t]{2}[0]{*}{400} & \textbf{1.10} & 1.51  & 1.36  & 2.30  & 8.45  & 1.70  & 2.21  & 2.31  & 3.09  & 2.86  & 6.62 \\
				&       & \textbf{(0.17)} & (0.25) & (0.2) & (0.34) & (0.85) & (0.48) & (0.37) & (0.4) & (0.45) & (0.59) & (0.88) \\
				& \multirow[t]{2}[0]{*}{800} & \textbf{1.88} & 2.28  & 2.18  & 2.86  & 12.72 & 3.12  & 3.92  & 4.11  & 5.13  & 3.65  & 8016.38 \\
				&       & \textbf{(0.28)} & (0.34) & (0.34) & (0.39) & (1.06) & (0.62) & (0.62) & (0.63) & (0.62) & (0.61) & (25866.27) \\
				& \multirow[t]{2}[1]{*}{1200} & \textbf{2.50} & 2.83  & 2.82  & 3.20  & 14.95 & 4.51  & 5.21  & 5.68  & 6.76  & 4.56  & 5602.16 \\
				&       & \textbf{(0.3)} & (0.35) & (0.36) & (0.4) & (1.12) & (0.75) & (0.65) & (0.67) & (0.7) & (0.66) & (9122.91) \\
				\bottomrule
			\end{tabular}%
			\vspace{1ex}
			{\raggedright Note: RSA.o represents the RSA method with CV-determined parameters and RSA.f refers to the RSA method with fixed parameters, specifically $M=L=30$ and $p=0.1$. RSA.1 denotes single-round Mallows averaging with the number of candidate models determined by CV. Values in bold indicate the top performers within the 95\% MCS test while values in parentheses represent the standard deviation of the reported MSFEs.\par}
		\end{threeparttable}
	}
	\label{tab:polyrho0.9}%
\end{table}%

\begin{table}[t]
	\centering
	\caption{MSFE comparison for $\rho = 0.9$ under exponentially decaying coefficient.}
	\scalebox{0.80}{
		\begin{threeparttable}
			\begin{tabular}{ccccccccccccc}
				\toprule
				$N$     & $K$     & RSA.o & RSA.f & RSA.1 & RSR   & RF    & Lasso & SCAD  & MCP   & PMA   & MCV   & MMA \\
				\midrule
				\multirow[t]{8}[2]{*}{200} & \multirow[t]{2}[1]{*}{20} & \textbf{0.18} & 0.20  & 0.19  & 0.57  & 0.47  & 0.21  & 0.32  & 0.34  & 0.52  & 0.32  & 0.37 \\
				&       & \textbf{(0.07)} & (0.08) & (0.08) & (0.19) & (0.16) & (0.1) & (0.13) & (0.11) & (0.14) & (0.11) & (0.11) \\
				& \multirow[t]{2}[0]{*}{100} & \textbf{0.28} & \textbf{0.28} & 0.28  & 0.45  & 0.97  & 0.41  & 0.55  & 0.56  & 0.65  & 0.49  & 1.35 \\
				&       & \textbf{(0.09)} & \textbf{(0.08)} & (0.08) & (0.12) & (0.19) & (0.23) & (0.17) & (0.19) & (0.17) & (0.2) & (0.38) \\
				& \multirow[t]{2}[0]{*}{200} & \textbf{0.47} & \textbf{0.48} & 0.51  & 0.57  & 1.84  & 0.83  & 1.04  & 0.98  & 1.19  & 0.76  & 501.63 \\
				&       & \textbf{(0.11)} & \textbf{(0.12)} & (0.11) & (0.15) & (0.34) & (0.39) & (0.3) & (0.22) & (0.28) & (0.2) & (2435.62) \\
				& \multirow[t]{2}[1]{*}{300} & \textbf{0.61} & 0.62  & 0.68  & 0.65  & 2.36  & 1.11  & 1.31  & 1.39  & 1.52  & 0.97  & 466.23 \\
				&       & \textbf{(0.15)} & (0.15) & (0.17) & (0.15) & (0.37) & (0.4) & (0.44) & (0.33) & (0.3) & (0.29) & (1269) \\
				\midrule
				\multirow[t]{8}[2]{*}{400} & \multirow[t]{2}[1]{*}{40} & \textbf{0.22} & 0.27  & 0.23  & 0.74  & 0.80  & 0.28  & 0.44  & 0.45  & 0.79  & 0.79  & 0.57 \\
				&       & \textbf{(0.08)} & (0.1) & (0.08) & (0.17) & (0.16) & (0.12) & (0.13) & (0.13) & (0.19) & (0.23) & (0.15) \\
				& \multirow[t]{2}[0]{*}{200} & \textbf{0.28} & 0.34  & 0.35  & 0.52  & 1.52  & 0.42  & 0.58  & 0.59  & 0.76  & 0.62  & 1.57 \\
				&       & \textbf{(0.06)} & (0.07) & (0.06) & (0.09) & (0.18) & (0.15) & (0.12) & (0.13) & (0.12) & (0.17) & (0.26) \\
				& \multirow[t]{2}[0]{*}{400} & \textbf{0.44} & 0.49  & 0.49  & 0.59  & 2.36  & 0.76  & 0.97  & 1.01  & 1.21  & 0.75  & 459.74 \\
				&       & \textbf{(0.07)} & (0.08) & (0.08) & (0.09) & (0.26) & (0.26) & (0.2) & (0.22) & (0.19) & (0.15) & (1206.14) \\
				& \multirow[t]{2}[1]{*}{600} & \textbf{0.56} & 0.60  & 0.63  & 0.67  & 2.75  & 1.04  & 1.26  & 1.30  & 1.45  & 0.98  & 4355.83 \\
				&       & \textbf{(0.09)} & (0.1) & (0.09) & (0.1) & (0.27) & (0.33) & (0.27) & (0.2) & (0.23) & (0.19) & (16858.64) \\
				\midrule
				\multirow[t]{8}[2]{*}{800} & \multirow[t]{2}[1]{*}{80} & \textbf{0.22} & 0.32  & 0.32  & 0.83  & 1.39  & 0.33  & 0.52  & 0.51  & 0.89  & 1.19  & 0.68 \\
				&       & \textbf{(0.06)} & (0.09) & (0.08) & (0.15) & (0.15) & (0.1) & (0.11) & (0.1) & (0.15) & (0.37) & (0.11) \\
				& \multirow[t]{2}[0]{*}{400} & \textbf{0.27} & 0.37  & 0.36  & 0.54  & 2.08  & 0.41  & 0.55  & 0.57  & 0.78  & 0.71  & 1.65 \\
				&       & \textbf{(0.04)} & (0.05) & (0.04) & (0.07) & (0.19) & (0.12) & (0.08) & (0.09) & (0.1) & (0.15) & (0.22) \\
				& \multirow[t]{2}[0]{*}{800} & \textbf{0.42} & 0.48  & 0.62  & 0.59  & 2.68  & 0.65  & 0.84  & 0.86  & 1.05  & 0.78  & 3373.20 \\
				&       & \textbf{(0.05)} & (0.06) & (0.08) & (0.07) & (0.2) & (0.13) & (0.15) & (0.12) & (0.14) & (0.1) & (25710.25) \\
				& \multirow[t]{2}[1]{*}{1200} & \textbf{0.50} & 0.56  & 0.58  & 0.65  & 2.81  & 0.83  & 0.96  & 1.04  & 1.20  & 0.89  & 2261.93 \\
				&       & \textbf{(0.06)} & (0.07) & (0.06) & (0.07) & (0.2) & (0.17) & (0.14) & (0.13) & (0.14) & (0.12) & (9162.98) \\
				\bottomrule
			\end{tabular}%
			\vspace{1ex}
			{\raggedright Note: RSA.o represents the RSA method with CV-determined parameters and RSA.f refers to the RSA method with fixed parameters, specifically $M=L=30$ and $p=0.1$. RSA.1 denotes single-round Mallows averaging with the number of candidate models determined by CV. Values in bold indicate the top performers within the 95\% MCS test while values in parentheses represent the standard deviation of the reported MSFEs.\par}
		\end{threeparttable}
	}
	\label{tab:exprho0.9}%
\end{table}%

Under high correlation, RSA with CV tuning (RSA.o) generally attains the lowest out-of-sample predictive error, except when signals are dense ($K = 0.1N$) with polynomial decay at $N = 200$. This illustrates the strong adaptivity of RSA in forecasting with highly correlated predictors. In contrast, methods that do not rely on random subset construction exhibit marked performance deterioration as correlation increases, as seen by comparing Tables \ref{tab:polyrho0.1} with \ref{tab:polyrho0.9} and Tables \ref{tab:exprho0.1} with \ref{tab:exprho0.9}. This behavior aligns with theory: strong collinearity inflates the inverse Gram matrix, weakens identification of relevant predictors, and degrades the performance of OLS-based and variable selection methods that rely on fitting the full predictor set. These effects are further amplified in high-dimensional model averaging, where substitutability among highly correlated predictors leads to unstable candidate models and volatile forecasts. 

By contrast, the improved performance of random subset-based methods (RSA.o, RSA.f, RSA.1 and RSR) reflects the benefit of controlling model complexity while exploiting dependence. Among them, RSA variants tend to outperform RSR due to their allowance for heterogeneous subset sizes and optimal convex aggregation. Moreover, the advantage of RSA.o over RSA.f and RSA.1 underscores the value of data-driven tuning and adaptive regularization of weak signals. When the sample size is very small (e.g., $ N=100 $), strong collinearity can still affect RSA.o, in which case RSR may perform better due to its lower estimation burden arising from fixed subset sizes. Nevertheless, RSA.o typically maintains an advantage over RSR. Additional evidence is provided in Supplementary Material \ref{app:manyrelevantvariables}. 
Furthermore, RSA consistently ranks first in highly correlated designs, whereas its relative advantage attenuates when predictors are weakly correlated. This divergence suggests that RSA effectively exploits cross-predictor dependence to stabilize estimation in low signal-to-noise regimes.

Overall, the simulation results indicate that RSA performs comparably to existing benchmark methods when predictors are weakly correlated, while achieving larger out-of-sample risk reductions as predictor dependence strengthens. These gains reflect RSA’s ability to stabilize prediction by separating model-fit uncertainty from the randomness induced by subset selection. Consistent patterns across Tables \ref{tab:polyrho0.1}--\ref{tab:exprho0.9}, including random covariance designs (Supplementary Material \ref{app:randomcov}) and small-$n$ large-$K$ settings, support this conclusion. The close alignment between in-sample MSE (Supplementary Material \ref{app:mse}) and out-of-sample errors further indicates stable generalization of RSA.

\section{Empirical Illustration on Asset Return Forecasting}\label{sec4}

Predicting asset returns remains a central challenge in financial economics, with important implications for portfolio construction, risk management, and market efficiency. Early studies, such as \citet{fama1993common}, emphasized the role of fundamental factors, including size (SMB) and value (HML). Since then, a large number of predictors have been proposed, often referred to as the ``Factor Zoo'' \citep{cochrane2011presidential}. While these predictors can exhibit strong in-sample performance, their out-of-sample reliability is often limited \citep{mclean2016does,chen2021open}.

This proliferation of predictors highlights a fundamental challenge in high-dimensional forecasting: although many covariates may contain useful information, naively incorporating them can lead to instability due to multicollinearity and overfitting. Existing approaches typically address this issue through the use of theoretically motivated factors \citep{fama2015five}, machine-learning-based dimension reduction \citep{gu2020empirical} or factor selection \citep{feng2020taming,wan2024mining,hwang2022bayesian}. In contrast, RSA adopts a different strategy: rather than selecting a subset of predictors, it stabilizes prediction by combining randomized subset construction with adaptive aggregation, thereby exploiting information across the full predictor set while controlling model complexity. 

We evaluate RSA using the high-frequency factor dataset constructed by \citet{pelger2019understanding}, which is well suited for high-dimensional forecasting. The dataset comprises 332 U.S. stocks and was originally used to extract latent factors via high-frequency principal component analysis. Its high dimensionality and strong cross-sectional dependence reflect the empirical challenges that RSA is designed to address. To mitigate microstructure frictions, we aggregate intraday returns into daily observations, yielding 3,270 time series spanning January 2004 to December 2016. These series are used to forecast daily S\&P 500 returns. To account for structural breaks associated with the 2008–2009 financial crisis, we exclude the crisis period and split the sample into pre-crisis (2004–2007) and post-crisis (2010–2016) subperiods.

As shown in Figure \ref{fig:correlation}, predictors exhibit strong dependence in both subperiods. Such dependence poses challenges for conventional selection-based methods, as collinearity weakens identification and amplifies estimation variance. To facilitate comparability across methods, we orthogonalize predictors prior to estimation. We adopt a rolling-window forecasting scheme with a 252-day estimation window and forecast horizons up to 22 days ahead, which corresponds approximately to using one year of past data to forecast one month ahead.

\begin{figure}[!h]
	\centering
	\begin{minipage}[b]{0.48\textwidth}
		\centering
		\includegraphics[width=0.85\linewidth, trim={0 0cm 0 0cm}]{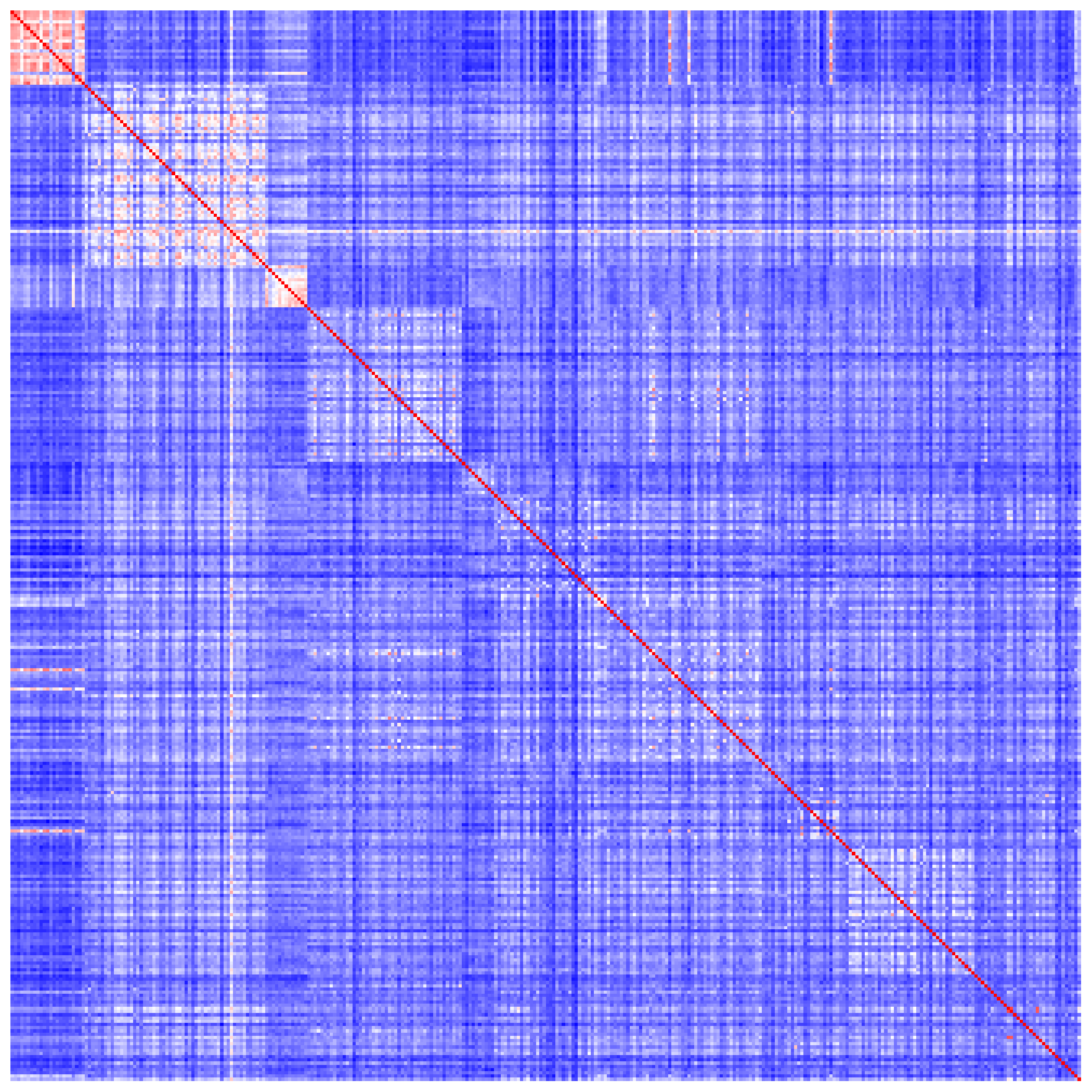}
		
		\small (a) Pre-crisis
	\end{minipage}
	\hfill
	\begin{minipage}[b]{0.48\textwidth}
		\centering
		\includegraphics[width=0.85\linewidth, trim={0 0cm 0 0cm}]{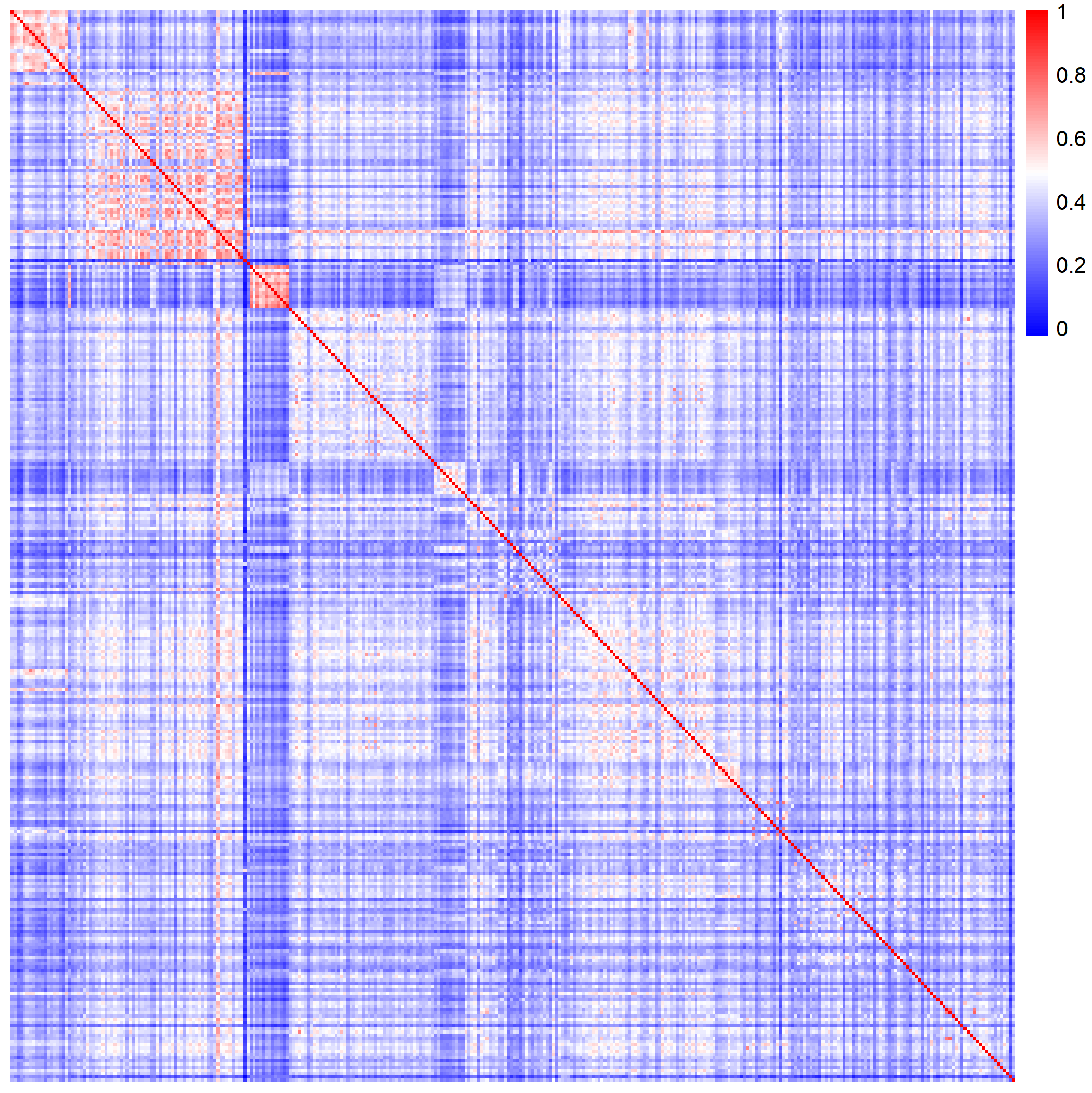}
		
		\small (b) Post-crisis
	\end{minipage}
	\caption{Correlation structure of the original factors across periods.}
	\label{fig:correlation}
\end{figure}

We compare RSA with several benchmark methods, including Lasso, SCAD, and MCP as selection-based approaches, RSR and RF as ensemble methods, and PMA and MCV as representative model averaging procedures for high-dimensional settings. Predictive performance is evaluated using the daily MSFE. Since RSA with CV-tuned parameters tends to outperform its fixed-parameter counterpart, we focus on the CV-selected specification and its one-round version RSA.1.


\begin{figure}[!h]
	\centering
	\begin{minipage}[b]{0.48\textwidth}
		\centering
		\includegraphics[width=\linewidth, trim={0 0cm 0 0cm}]{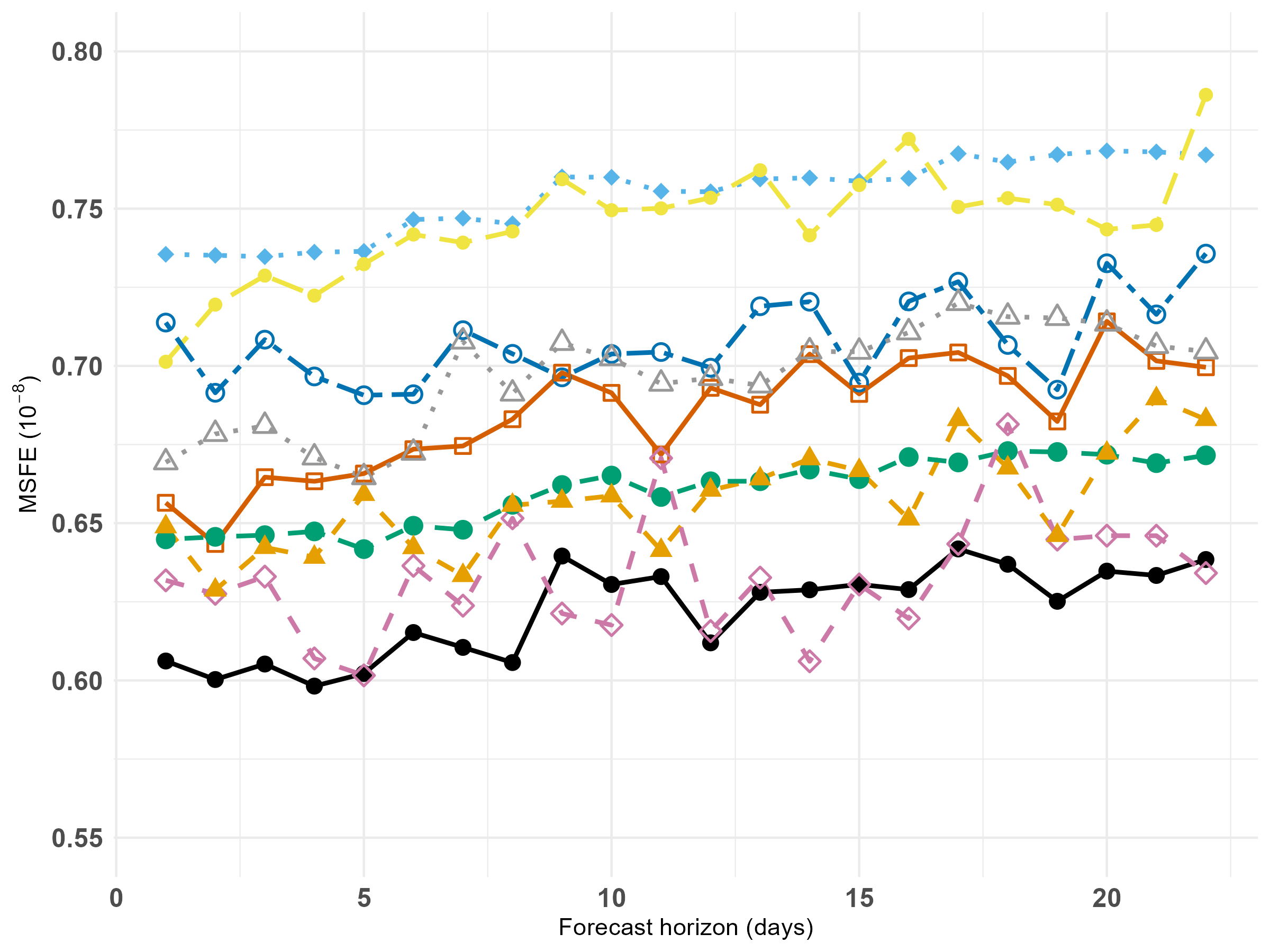}
		
		\small (a) Pre-crisis
	\end{minipage}
	\hfill
	\begin{minipage}[b]{0.48\textwidth}
		\centering
		\includegraphics[width=\linewidth, trim={0 0cm 0 0cm}]{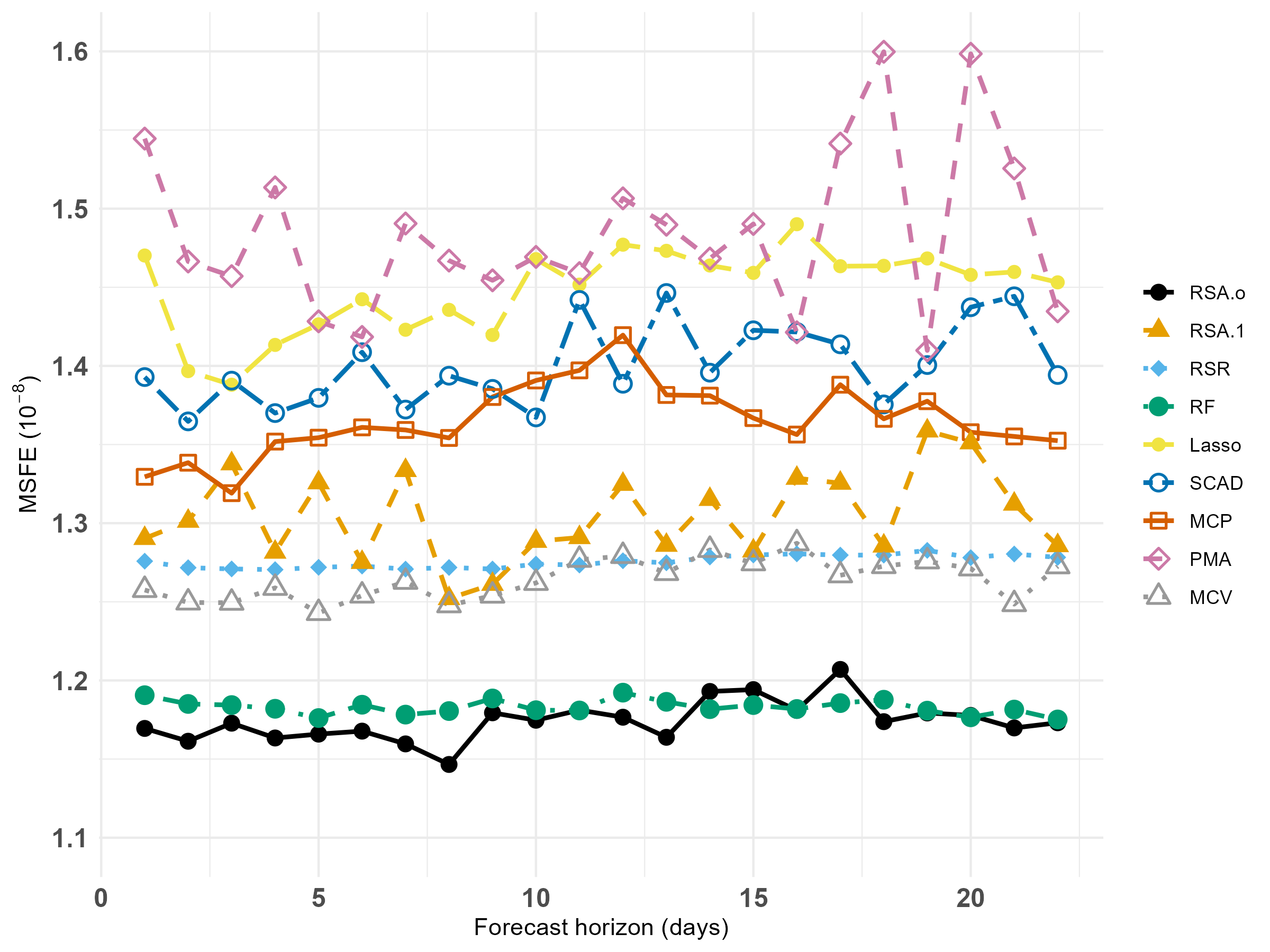}
		
		\small (b) Post-crisis
	\end{minipage}
	\caption{Mean MSFE for each period.}
	\label{fig:empiricalmsfemean}
\end{figure}

Figure \ref{fig:empiricalmsfemean} reports the MSFE across forecast horizons in both subperiods, with standard deviations within each horizon reported in the Supplementary Material. A common pattern emerges: RSA achieves a favorable bias-variance trade-off, combining low prediction error with stable performance across horizons.  
In the pre-crisis period, RSA performs competitively across all horizons, with PMA occasionally achieving similar MSFE levels but exhibiting substantially higher variability, both across horizons and over time. In the post-crisis period, increased macroeconomic uncertainty leads to a general deterioration in predictive accuracy across all methods. Nevertheless, RSA remains among the most stable and accurate procedures, while RF achieves comparable average performance at the cost of greater variability. In contrast, selection-based methods (Lasso, SCAD, MCP) and alternative model averaging methods (RSA.1, MCV, PMA) exhibit systematically higher prediction error and greater instability, reflecting their sensitivity to model selection uncertainty. Furthermore, the inferior predictive performance of RSR relative to RSA underscores the advantage of combining subset models of varying sizes with an optimally designed weighting scheme.

These empirical findings are consistent with the theoretical insights developed earlier. By combining random subset construction with adaptive aggregation, RSA effectively separates model uncertainty from subset uncertainty, thereby mitigating the adverse effects of multicollinearity.
Random subset construction reduces variance by avoiding reliance on a single, potentially unstable model, while adaptive weighting shrinks noisy components and emphasizes stronger signals. This two-layer structure allows RSA to stabilize prediction without discarding information, which is particularly important in environments with strongly correlated predictors.

Formal comparisons based on the MCS tests further support these findings. RSA is selected as a superior model with frequencies of 68.2\% in the pre-crisis period and 77.3\% in the post-crisis period across the 22 forecast horizons. This suggests that the performance gains of RSA are systematic rather than driven by isolated horizons.

\section{Conclusions}\label{sec5}

This paper proposes Random Subset Averaging (RSA), a new ensemble method for high-dimensional forecasting with complex covariate dependence. RSA constructs candidate models via a binomial random subset strategy and aggregates them through a two-round convex weighting scheme, striking a balance between model complexity and predictive accuracy, while enhancing stability. We establish its asymptotic optimality under general conditions with data-dependent first-round weights, and derive finite-sample risk bounds under orthogonal designs. Theoretical analysis shows that RSA incurs no asymptotic approximation loss relative to flat Mallows averaging and can outperform nested model averaging and random subspace methods when selection probabilities are appropriately tuned. Simulation studies and an empirical application further provide consistent evidence of RSA's strong and stable predictive performance relative to variable selection, model averaging, and machine-learning-based ensembles.

While RSA shares conceptual similarities with ensemble methods such as dropout, stacking, and random forests, it is distinguished by its structured two-layer architecture and convex aggregation scheme. These design features enhance robustness and computational tractability while enabling rigorous theoretical analysis. As a result, RSA offers a principled and computationally feasible approach to high-dimensional prediction, particularly in settings characterized by model uncertainty and strong covariate multicollinearity. Looking ahead, RSA can be extended to broader modeling frameworks, including classification problems, generalized linear models, and multi-round or adaptive ensemble constructions. This perspective highlights the role of structured randomization and adaptive shrinkage in stabilizing prediction under dependence.



%
%

\bibliography{ref-1.bib}

\newpage
\appendix
\appendixpage

\setcounter{table}{0}
\setcounter{figure}{0}
\renewcommand{\thetable}{\Alph{section}.\arabic{table}}
\renewcommand{\thefigure}{\Alph{section}.\arabic{figure}}

\newpage
\section*{Supplementary Material: Additionally numerical results}

In the supplementary material, we provide additional results from the simulation study and empirical analysis that supplement the main findings of the paper. 

Section \ref{app:implement} describes the implementation details of the competing methods used in the simulation experiments.

Section \ref{app:comparewithpeng} conducts simulation studies following the design of \citet{peng2024optimality}, extended to allow for correlated covariates. These results are intended to illustrate the finite-sample improvements for RSA established in Section \ref{sec2.3}, relative to MMA and RSR, two benchmark methods with well-understood theoretical properties. 

Section \ref{app:manyrelevantvariables} presents additional simulation studies in small-sample, high-dimensional settings, providing further evidence in support of the superior performance of RSA.o documented in Section \ref{sec3.2}.

Section \ref{app:randomcov} reports MSFE comparisons under a random covariance structure, complementing the results in Section \ref{sec3.2}. This setting captures scenarios in which the dependence structure among covariates is complex and not directly observable. RSA continues to perform comparable to its performance under low-correlation designs, suggesting that its effectiveness extends to more intricate dependence structures.

Section \ref{app:cv} presents heatmaps summarizing the CV results for the selection probability and the number of candidate models. Across all scenarios, the selection probability $p$ plays a more influential role in determining predictive performance, whereas the number of candidate models has a comparatively modest effect.

Section \ref{app:mse} reports in-sample MSE comparisons corresponding to the simulation studies in this paper. The close alignment between in-sample and out-of-sample errors provides additional empirical support for RSA's favorable generalization properties.

Finally, Section \ref{app:empirical} presents additional results from the empirical analysis, including asset return volatility in the pre- and post-crisis periods, heatmaps for tuning parameter selection, the standard deviation of MSFE across forecast horizons, and MSE comparisons across methods. Taking together, these findings further support the practical relevance of RSA.

\newpage

\subsection{Implementation detail} \label{app:implement}

RSA.o is implemented with CV-tuning. RSA.f is implemented under a fixed parameter specification with $p = 0.1 $ and $ M = L = 30$. RSA.1 is implemented using a one-round convex aggregation of candidate models generated via random subsets, employing the same CV-tuned probability and number of candidates as RSA.o. The RSR method is implemented following \cite{elliott2013complete}. The RF is implemented using the R package \texttt{randomForestSRC} \citep{ishwaran2023fast}.

The Lasso is implemented via the \texttt{glmnet} package, while SCAD and MCP are implemented using \texttt{ncvreg}. For PMA, we construct a nested set of candidate models based on the solution path of the Adaptive Lasso \citep{zhang2019parsimonious}. The MCV method is implemented following \citet{ando2014model}. 

For MMA, we construct a nested set of candidate models based on the natural ordering of covariates. When this approach is not directly applicable in high-dimensional settings, we instead use the first $N-2$ covariates to form the nested models. Notably, when the nonzero coefficients are ordered accordingly, this construction coincides with the canonical nested model set described in \cite{hansen2007least}. However, when the nonzero coefficients are randomly located, the resulting nested model set may fail to exhibit desirable theoretical properties. 

\subsection{Finite sample improvement of RSA under \citet{peng2024optimality}}\label{app:comparewithpeng}

We follow the simulation setup of \citet{peng2024optimality}, while allowing for correlated covariates. Sample sizes are set to $N \in \{100, 300, 1000\}$ with $K = \lfloor 2N/3 \rfloor$ covariates, where $\lfloor \cdot \rfloor$ denotes the floor function. Covariates are drawn from a multivariate normal distribution with mean zero and covariance $\Sigma_{ij} = \rho^{|i-j|}$ for $\rho \in \{0.1, 0.9\}$. The case $\rho = 0.1$ approximates independent covariates as in \citet{peng2024optimality}, while $\rho = 0.9$ introduces strong dependence, beyond the scope of their theory. The coefficients $\beta_j$ follow either (i) polynomial decay $\beta_j = j^{-\alpha_1}$ with $\alpha_1 = 0.51$, or (ii) exponential decay $\beta_j = \exp(-j^{\alpha_2})$ with $\alpha_2 = 0.25$. Errors $e_i$ are drawn from $N(0, \sigma_e^2)$, calibrated to yield a signal-to-noise ratio of $0.7$, defined as $\frac{Var(x_i^\top\beta)}{Var(x_i^\top\beta) + \sigma_{e}^2}$.

\begin{table}[htb]
	\centering
	\caption{In-sample MSE results.}
	\scalebox{0.80}{
		\begin{threeparttable}
			\begin{tabular}{ccccccccccccccc}
				\toprule
				DGP   & $\rho$   & $N$     & $K$     & RSA.o & RSA.f & RSA.1 & RSR   & RF    & Lasso & SCAD  & MCP   & PMA   & MCV   & MMA \\
				\midrule
				\multirow[t]{6}[2]{*}{poly} & \multirow[t]{3}[1]{*}{0.1} & 100   & 66    & 1.04  & 2.13  & 1.13  & 3.61  & 1.60  & 1.44  & 1.63  & 1.91  & 1.93  & 2.17  & \textbf{1.01} \\
				&       & 300   & 200   & 1.19  & 3.01  & 1.34  & 4.60  & 1.91  & 1.70  & 1.84  & 2.08  & 2.18  & 2.36  & \textbf{1.10} \\
				&       & 1000  & 666   & 1.33  & 4.00  & 1.47  & 5.50  & 2.28  & 1.98  & 2.06  & 2.30  & 2.48  & 2.62  & \textbf{1.17} \\
				& \multirow[t]{3}[1]{*}{0.9} & 100   & 66    & \textbf{3.41} & 3.59  & 3.78  & 4.65  & 8.40  & 4.93  & 6.01  & 6.75  & 7.98  & 5.02  & 10.02 \\
				&       & 300   & 200   & \textbf{3.79} & 3.84  & 4.38  & 6.58  & 13.22 & 5.72  & 7.51  & 7.68  & 10.77 & 7.39  & 13.00 \\
				&       & 1000  & 666   & \textbf{3.65} & 4.09  & 6.40  & 7.58  & 19.19 & 6.53  & 8.69  & 8.96  & 13.18 & 10.42 & 15.11 \\
				\midrule
				\multirow[t]{6}[2]{*}{exp} & \multirow[t]{3}[1]{*}{0.1} & 100   & 66    & 0.29  & 0.56  & 0.33  & 0.88  & 0.40  & 0.36  & 0.43  & 0.49  & 0.49  & 0.56  & \textbf{0.26} \\
				&       & 300   & 200   & 0.27  & 0.73  & 0.29  & 1.05  & 0.45  & 0.39  & 0.41  & 0.46  & 0.49  & 0.53  & \textbf{0.23} \\
				&       & 1000  & 666   & 0.30  & 0.82  & 0.34  & 1.11  & 0.47  & 0.37  & 0.34  & 0.36  & 0.38  & 0.42  & \textbf{0.17} \\
				& \multirow[t]{3}[1]{*}{0.9} & 100   & 66    & 0.94  & \textbf{0.92} & 0.98  & 1.07  & 2.19  & 1.34  & 1.69  & 1.75  & 2.13  & 1.20  & 2.66 \\
				&       & 300   & 200   & \textbf{0.80} & 0.85  & 0.99  & 1.45  & 3.08  & 1.40  & 1.83  & 1.83  & 2.39  & 1.66  & 2.85 \\
				&       & 1000  & 666   & \textbf{0.68} & 0.74  & 0.81  & 1.56  & 3.61  & 1.31  & 1.57  & 1.65  & 2.01  & 1.93  & 2.14 \\
				\bottomrule
			\end{tabular}%
			\vspace{1ex}
			{\raggedright Note: ``poly'' refers to polynomially decaying coefficients, while ``exp'' denotes exponentially decaying coefficients. RSA.o represents the RSA method with CV-determined parameters, and RSA.f refers to the RSA method with fixed parameters, specifically $M=L=30$ and $p=0.1$. Values in bold indicate the smallest MSE. \par}
		\end{threeparttable}
	}
	\label{tab:mse12}%
\end{table}%

\begin{table}[htb]
	\centering
	\caption{Out-of-sample MSFE results.}
	\scalebox{0.80}{
		\begin{threeparttable}
			\begin{tabular}{ccccccccccccccc}
				\toprule
				DGP   & $\rho$   & $N$     & $K$     & RSA.o & RSA.f & RSA.1 & RSR   & RF    & Lasso & SCAD  & MCP   & PMA   & MCV   & MMA \\
				\midrule
				\multirow[t]{12}[2]{*}{poly} & \multirow[t]{6}[1]{*}{0.1} & \multirow[t]{2}[1]{*}{100} & \multirow[t]{2}[1]{*}{66} & \textbf{1.91} & 3.03  & 2.14  & 4.23  & 3.88  & 2.93  & 3.31  & 3.80  & 3.08  & 3.33  & \textbf{1.87} \\
				&       &       &       & \textbf{(0.54)} & (0.71) & (0.71) & (0.81) & (0.81) & (0.9) & (1.12) & (1.12) & (0.81) & (0.86) & \textbf{(0.67)} \\
				&       & \multirow[t]{2}[0]{*}{300} & \multirow[t]{2}[0]{*}{200} & \textbf{1.95} & 3.72  & 2.28  & 5.24  & 4.84  & 3.18  & 3.29  & 3.85  & 3.23  & 3.43  & \textbf{1.89} \\
				&       &       &       & \textbf{(0.33)} & (0.47) & (0.38) & (0.58) & (0.6) & (0.57) & (0.52) & (0.65) & (0.46) & (0.47) & \textbf{(0.36)} \\
				&       & \multirow[t]{2}[0]{*}{1000} & \multirow[t]{2}[0]{*}{666} & 2.13  & 4.67  & 2.41  & 6.23  & 5.92  & 3.49  & 3.44  & 3.84  & 3.39  & 3.67  & \textbf{1.91} \\
				&       &       &       & (0.17) & (0.33) & (0.19) & (0.42) & (0.38) & (0.34) & (0.3) & (0.33) & (0.26) & (0.24) & \textbf{(0.21)} \\
				& \multirow[t]{6}[1]{*}{0.9} & \multirow[t]{2}[0]{*}{100} & \multirow[t]{2}[0]{*}{66} & \textbf{3.94} & 4.15  & 4.62  & 5.53  & 11.42 & 6.13  & 8.05  & 9.11  & 10.34 & 5.82  & 18.99 \\
				&       &       &       & \textbf{(1.66)} & (1.67) & (1.87) & (2.24) & (3.37) & (4.51) & (4.07) & (4.51) & (2.94) & (2.66) & (6.38) \\
				&       & \multirow[t]{2}[0]{*}{300} & \multirow[t]{2}[0]{*}{200} & \textbf{4.16} & 4.26  & 4.92  & 7.44  & 19.14 & 6.78  & 9.63  & 9.65  & 13.17 & 8.87  & 23.45 \\
				&       &       &       & \textbf{(0.96)} & (1.03) & (1.08) & (1.49) & (3.29) & (2.99) & (2.8) & (2.46) & (1.96) & (2.93) & (5.3) \\
				&       & \multirow[t]{2}[1]{*}{1000} & \multirow[t]{2}[1]{*}{666} & \textbf{3.97} & 4.57  & 7.47  & 8.68  & 33.79 & 7.48  & 10.69 & 10.86 & 15.75 & 12.23 & 25.82 \\
				&       &       &       & \textbf{(0.53)} & (0.71) & (1.82) & (1.04) & (2.37) & (1.62) & (1.7) & (1.4) & (1.58) & (1.72) & (2.99) \\
				\midrule
				\multirow[t]{12}[2]{*}{exp} & \multirow[t]{6}[1]{*}{0.1} & \multirow[t]{2}[1]{*}{100} & \multirow[t]{2}[1]{*}{66} & \textbf{0.52} & 0.82  & 0.58  & 1.04  & 1.01  & 0.76  & 0.96  & 1.09  & 0.87  & 0.91  & \textbf{0.52} \\
				&       &       &       & \textbf{(0.16)} & (0.17) & (0.14) & (0.23) & (0.22) & (0.25) & (0.35) & (0.36) & (0.25) & (0.22) & \textbf{(0.18)} \\
				&       & \multirow[t]{2}[0]{*}{300} & \multirow[t]{2}[0]{*}{200} & 0.44  & 0.92  & 0.49  & 1.21  & 1.18  & 0.72  & 0.74  & 0.86  & 0.74  & 0.78  & \textbf{0.38} \\
				&       &       &       & (0.07) & (0.13) & (0.07) & (0.15) & (0.15) & (0.16) & (0.14) & (0.15) & (0.12) & (0.12) & \textbf{(0.08)} \\
				&       & \multirow[t]{2}[0]{*}{1000} & \multirow[t]{2}[0]{*}{666} & 0.46  & 0.96  & 0.54  & 1.27  & 1.24  & 0.57  & 0.52  & 0.54  & 0.50  & 0.56  & \textbf{0.23} \\
				&       &       &       & (0.04) & (0.07) & (0.05) & (0.08) & (0.08) & (0.07) & (0.05) & (0.05) & (0.04) & (0.04) & \textbf{(0.03)} \\
				& \multirow[t]{6}[1]{*}{0.9} & \multirow[t]{2}[0]{*}{100} & \multirow[t]{2}[0]{*}{66} & \textbf{1.11} & \textbf{1.06} & 1.13  & 1.19  & 2.80  & 1.66  & 2.25  & 2.35  & 2.72  & 1.43  & 4.96 \\
				&       &       &       & \textbf{(0.46)} & \textbf{(0.44)} & (0.52) & (0.49) & (0.95) & (1.17) & (0.96) & (1.2) & (1)   & (0.93) & (1.69) \\
				&       & \multirow[t]{2}[0]{*}{300} & \multirow[t]{2}[0]{*}{200} & \textbf{0.84} & 0.90  & 1.13  & 1.61  & 4.09  & 1.62  & 2.29  & 2.28  & 2.87  & 1.92  & 4.81 \\
				&       &       &       & \textbf{(0.2)} & (0.21) & (0.27) & (0.33) & (0.58) & (0.67) & (0.66) & (0.56) & (0.53) & (0.7) & (1.08) \\
				&       & \multirow[t]{2}[1]{*}{1000} & \multirow[t]{2}[1]{*}{666} & \textbf{0.72} & 0.80  & 0.88  & 1.75  & 5.04  & 1.46  & 1.82  & 1.92  & 2.27  & 2.39  & 3.00 \\
				&       &       &       & \textbf{(0.11)} & (0.13) & (0.14) & (0.24) & (0.41) & (0.36) & (0.29) & (0.24) & (0.25) & (0.35) & (0.36) \\
				\bottomrule
			\end{tabular}%
			\vspace{1ex}
			{\raggedright Note: ``poly'' refers to polynomially decaying coefficients, while ``exp'' denotes exponentially decaying coefficients. RSA.o represents the RSA method with CV-determined parameters and RSA.f refers to the RSA method with fixed parameters, specifically $M=L=30$ and $p=0.1$. Values in bold indicate the top performers within the 95\% MCS test while values in parentheses represent the standard deviation of the reported MSFEs.\par}
		\end{threeparttable}
	}
	\label{tab:msfe12}%
\end{table}%

Table \ref{tab:mse12} reports the in-sample MSE for all methods, with the smallest values highlighted in bold. Under low correlation ($\rho = 0.1$), MMA consistently attains the lowest MSE across all designs, in line with the findings of \cite{peng2022improvability}. Notably, RSA.o, whose tuning parameters are selected via CV method, remains highly competitive, achieving MSE values close to those of MMA. In contrast, under strong correlation ($\rho = 0.9$), RSA.o uniformly outperforms all competing methods.

Table \ref{tab:msfe12} presents the corresponding out-of-sample MSFE results, with models included in the 95\% MCS highlighted in bold. The out-of-sample patterns mirrors the in-sample findings: MMA achieves the lowest MSFE under weak correlation, with RSA.o delivering comparable performance. Under strong correlation, however, RSA.o emerges as the sole best-performing method, while all other alternatives exhibit substantially larger forecast errors. Moreover, RSA.o attains the smallest MSFE standard deviations across all settings, indicating superior stability in predictive performance. 

Overall, these simulations demonstrate that RSA delivers robust predictive performance, both in- and out-of-sample. While MMA outperforms variable selection under polynomial decay and matches its performance under exponential decay when covariates are weakly correlated \citep{peng2022improvability}, RSA remains stable across both decay patterns and clearly dominates in the presence of strong predictor dependence. Moreover, despite their structural similarities, RSA consistently outperforms RSR and RF, underscoring the advantage of its two-layer aggregation structure.

\subsection{MSFE Comparison under Many Relevant Covariates} \label{app:manyrelevantvariables}

Sections \ref{sec3.2} demonstrate RSA’s strong performance in high-dimensional settings with correlated covariates. Here, we examine a practical scenario with many potentially relevant covariates but limited sample size. The relatively small sample size makes it infeasible to include all relevant variables in a single model and undermines the selection consistency of variable selection methods. In contrast, RSA and, to some extent, RSR should be more robust by aggregating predictions from multiple smaller models. We also expect RSA to outperform RSR due to its binomial random subset strategy and two-round convex weighting scheme.

\begin{table}[!b]
	\centering
	\caption{MSFE comparison under many relevant covariates: polynomial decay.}
	\scalebox{0.80}{
		\begin{threeparttable}
			\begin{tabular}{cccccccccccccc}
				\toprule
				$\rho$   & $N$     & $K$     & RSA.o & RSA.f & RSA.1 & RSR   & RF    & Lasso & SCAD  & MCP   & PMA   & MCV   & MMA \\
				\midrule
				\multirow[t]{12}[2]{*}{0.1} & \multirow[t]{6}[1]{*}{100} & \multirow[t]{2}[1]{*}{100} & \textbf{2.74} & 3.56  & 3.77  & 4.48  & 4.47  & 4.56  & 4.42  & 5.01  & 4.30  & 4.12  & 316.15 \\
				&       &       & \textbf{(0.66)} & (0.8) & (1.04) & (0.86) & (0.94) & (1.96) & (1.41) & (1.71) & (1.64) & (1.01) & (957.46) \\
				&       & \multirow[t]{2}[0]{*}{125} & \textbf{3.29} & 3.92  & 3.90  & 4.85  & 4.97  & 4.95  & 4.92  & 5.40  & 4.66  & 4.40  & 277.50 \\
				&       &       & \textbf{(0.81)} & (0.81) & (0.92) & (0.92) & (0.99) & (1.31) & (1.41) & (1.51) & (1.35) & (0.92) & (908.41) \\
				&       & \multirow[t]{2}[0]{*}{150} & \textbf{3.78} & 4.26  & 4.51  & 5.10  & 5.27  & 5.43  & 5.46  & 5.77  & 5.18  & 4.87  & 941.12 \\
				&       &       & \textbf{(0.86)} & (0.87) & (1.01) & (1.05) & (1.05) & (1.51) & (1.49) & (1.51) & (3.21) & (1.04) & (3218.87) \\
				& \multirow[t]{6}[1]{*}{300} & \multirow[t]{2}[0]{*}{300} & \textbf{2.91} & 4.31  & 3.72  & 5.48  & 5.45  & 4.70  & 4.51  & 4.97  & 4.24  & 4.45  & 1358.36 \\
				&       &       & \textbf{(0.45)} & (0.52) & (0.58) & (0.66) & (0.61) & (0.88) & (0.75) & (0.82) & (0.53) & (0.57) & (4978.29) \\
				&       & \multirow[t]{2}[0]{*}{375} & \textbf{3.32} & 4.48  & 3.88  & 5.56  & 5.70  & 5.25  & 4.92  & 5.34  & 4.57  & 4.76  & 987.35 \\
				&       &       & \textbf{(0.47)} & (0.53) & (0.58) & (0.62) & (0.68) & (1.17) & (0.81) & (0.88) & (0.58) & (0.62) & (3384.89) \\
				&       & \multirow[t]{2}[1]{*}{450} & \textbf{4.34} & 4.63  & 5.07  & 5.66  & 5.85  & 5.78  & 5.36  & 5.64  & 4.77  & 5.04  & 1556.13 \\
				&       &       & \textbf{(0.57)} & (0.56) & (0.74) & (0.7) & (0.75) & (1.12) & (0.82) & (0.81) & (0.69) & (0.72) & (7902.62) \\
				\midrule
				\multirow[t]{12}[2]{*}{0.9} & \multirow[t]{6}[1]{*}{100} & \multirow[t]{2}[1]{*}{100} & 5.55  & 5.51  & 5.85  & \textbf{4.84} & 15.96 & 9.01  & 12.19 & 13.10 & 14.10 & 7.33  & 1773.48 \\
				&       &       & (1.95) & (2.09) & (1.83) & \textbf{(1.89)} & (4.06) & (4.77) & (4.98) & (5.93) & (3.75) & (3.53) & (5384.73) \\
				&       & \multirow[t]{2}[0]{*}{125} & 6.16  & 6.35  & 7.25  & \textbf{5.14} & 20.30 & 10.67 & 13.10 & 14.43 & 16.94 & 8.27  & 2379.49 \\
				&       &       & (2.25) & (2.59) & (2.23) & \textbf{(1.93)} & (4.6) & (7.24) & (4.92) & (5.34) & (5.38) & (3.21) & (6632.76) \\
				&       & \multirow[t]{2}[0]{*}{150} & 6.79  & 6.79  & 7.86  & \textbf{5.70} & 22.58 & 13.04 & 16.27 & 16.51 & 18.22 & 9.49  & 5891.74 \\
				&       &       & (2.46) & (2.4) & (2.58) & \textbf{(2.11)} & (4.77) & (8.04) & (5.62) & (5.9) & (5.02) & (3.51) & (31263.39) \\
				& \multirow[t]{6}[1]{*}{300} & \multirow[t]{2}[0]{*}{300} & \textbf{4.63} & 5.09  & 5.15  & 5.37  & 27.11 & 8.91  & 12.30 & 12.24 & 16.53 & 6.79  & 6260.09 \\
				&       &       & \textbf{(0.96)} & (1.11) & (1.16) & (1.17) & (4.17) & (3.06) & (2.55) & (2.27) & (3.05) & (1.83) & (14356.44) \\
				&       & \multirow[t]{2}[0]{*}{375} & \textbf{5.75} & \textbf{5.74} & 6.56  & \textbf{5.75} & 31.17 & 10.60 & 14.54 & 14.60 & 18.58 & 7.98  & 7595.74 \\
				&       &       & \textbf{(1.23)} & \textbf{(1.24)} & (1.36) & \textbf{(1.3)} & (3.66) & (3.41) & (3.3) & (2.75) & (3.41) & (2.32) & (24330.51) \\
				&       & \multirow[t]{2}[1]{*}{450} & \textbf{6.05} & 6.53  & 6.75  & \textbf{6.15} & 33.21 & 13.14 & 17.24 & 17.70 & 20.67 & 9.52  & 9206.80 \\
				&       &       & \textbf{(1.12)} & (1.4) & (1.22) & \textbf{(1.33)} & (4.39) & (4.82) & (3.94) & (3.86) & (3.54) & (2.49) & (23827.29) \\
				\bottomrule
			\end{tabular}%
			\vspace{1ex}
			{\raggedright Note: RSA.o represents the RSA method with CV-determined parameters and RSA.f refers to the RSA method with fixed parameters, specifically $M=L=30$ and $p=0.1$. Values in bold indicate the top performers within the 95\% MCS test while values in parentheses represent the standard deviation of the reported MSFEs.\par}
		\end{threeparttable}
	}
	\label{tab:msfe56}%
\end{table}%

In this section, we set $N \in \{100, 300\}$ and define the number of covariates as $K=\delta N$, where $\delta \in \{1, 1.25, 1.5\}$. The values in coefficient vector $\beta$ follow either a polynomial decay $\beta_j = j^{-0.51}$ or an exponential decay $\beta_j = \exp(-j^{0.25})$ for $j = 1, \dots, K$. Although all entries in $\beta$ are nonzero, the exponential decay results in a sparser effective structure. The MSFE results for these two decay patterns are summarized in Tables \ref{tab:msfe56} and \ref{tab:msfe910}.

Under low correlation $(\rho = 0.1)$, the RSA method consistently outperforms alternative approaches across all sample sizes. In contrast, under high correlation $(\rho = 0.9)$, RSA demonstrate superior performance in large-sample cases, whereas RSR yields the best results when the sample size is small.

As previously discussed, RSR restricts each candidate model to a fixed number of covariates, whereas RSA allows the number of covariates to vary following a binomial distribution. Consequently, RSA has a positive probability of generating candidate models with a large number of covariates. When the sample size is small, these larger models with strong covariate correlations exhibit high estimation variability, resulting in reduced prediction accuracy. 
However, as the sample size increases, this variability diminishes, leading to improved predictive performance of RSA.  

\begin{table}[!h]
	\centering
	\caption{MSFE comparison under many relevant covariates: exponential decay.}
	\scalebox{0.80}{
		\begin{threeparttable}
			\begin{tabular}{cccccccccccccc}
				\toprule
				$\rho$   & $N$     & $K$     & RSA.o & RSA.f & RSA.1 & RSR   & RF    & Lasso & SCAD  & MCP   & PMA   & MCV   & MMA \\
				\midrule
				\multirow[t]{12}[2]{*}{0.1} & \multirow[t]{6}[1]{*}{100} & \multirow[t]{2}[1]{*}{100} & \textbf{0.67} & 0.90  & 0.75  & 1.07  & 1.09  & 1.15  & 1.14  & 1.30  & 1.12  & 1.02  & 80.18 \\
				&       &       & \textbf{(0.18)} & (0.21) & (0.19) & (0.22) & (0.24) & (0.37) & (0.35) & (0.4) & (1.08) & (0.26) & (237.48) \\
				&       & \multirow[t]{2}[0]{*}{125} & \textbf{0.83} & 0.96  & 0.90  & 1.10  & 1.16  & 1.37  & 1.34  & 1.41  & 1.22  & 1.15  & 75.61 \\
				&       &       & \textbf{(0.23)} & (0.17) & (0.23) & (0.19) & (0.2) & (0.59) & (0.36) & (0.34) & (0.57) & (0.24) & (235.53) \\
				&       & \multirow[t]{2}[0]{*}{150} & \textbf{0.91} & 1.03  & 1.03  & 1.15  & 1.23  & 1.43  & 1.37  & 1.49  & 1.31  & 1.18  & 70.27 \\
				&       &       & \textbf{(0.21)} & (0.22) & (0.21) & (0.23) & (0.25) & (0.44) & (0.36) & (0.41) & (0.37) & (0.27) & (204.69) \\
				& \multirow[t]{6}[1]{*}{300} & \multirow[t]{2}[0]{*}{300} & \textbf{0.64} & 0.95  & 0.72  & 1.16  & 1.20  & 0.94  & 0.90  & 0.99  & 0.84  & 0.92  & 141.87 \\
				&       &       & \textbf{(0.09)} & (0.11) & (0.1) & (0.13) & (0.14) & (0.22) & (0.15) & (0.15) & (0.14) & (0.13) & (422.05) \\
				&       & \multirow[t]{2}[0]{*}{375} & \textbf{0.71} & 0.97  & 0.85  & 1.16  & 1.22  & 1.05  & 0.96  & 1.02  & 0.90  & 0.95  & 332.14 \\
				&       &       & \textbf{(0.1)} & (0.11) & (0.12) & (0.13) & (0.13) & (0.21) & (0.14) & (0.15) & (0.12) & (0.13) & (1546.52) \\
				&       & \multirow[t]{2}[1]{*}{450} & \textbf{0.90} & 0.98  & 1.00  & 1.15  & 1.22  & 1.08  & 1.02  & 1.10  & 0.94  & 0.98  & 3451.48 \\
				&       &       & \textbf{(0.13)} & (0.14) & (0.14) & (0.14) & (0.15) & (0.2) & (0.17) & (0.18) & (0.15) & (0.15) & (33382.76) \\
				\midrule
				\multirow[t]{12}[2]{*}{0.9} & \multirow[t]{6}[1]{*}{100} & \multirow[t]{2}[1]{*}{100} & 1.45  & 1.41  & 2.10  & \textbf{1.31} & 4.30  & 2.42  & 2.96  & 3.11  & 3.78  & 1.95  & 841.95 \\
				&       &       & (0.57) & (0.55) & (0.77) & \textbf{(0.61)} & (1.15) & (1.61) & (1.11) & (1.1) & (1.18) & (1.03) & (4077.29) \\
				&       & \multirow[t]{2}[0]{*}{125} & 1.98  & 1.50  & 2.72  & \textbf{1.21} & 5.08  & 2.66  & 3.55  & 3.69  & 4.05  & 2.04  & 330.94 \\
				&       &       & (0.71) & (0.53) & (0.9) & \textbf{(0.49)} & (1.11) & (1.56) & (1.63) & (1.65) & (1.23) & (0.91) & (641.36) \\
				&       & \multirow[t]{2}[0]{*}{150} & 1.84  & 1.66  & 2.20  & \textbf{1.32} & 5.38  & 3.14  & 3.96  & 3.98  & 4.42  & 2.47  & 483.43 \\
				&       &       & (0.62) & (0.63) & (0.85) & \textbf{(0.59)} & (1.47) & (1.75) & (1.63) & (1.53) & (1.4) & (1.33) & (1114.05) \\
				& \multirow[t]{6}[1]{*}{300} & \multirow[t]{2}[0]{*}{300} & \textbf{0.88} & 0.98  & 1.02  & 1.05  & 5.36  & 1.69  & 2.34  & 2.36  & 3.13  & 1.39  & 2514.16 \\
				&       &       & \textbf{(0.17)} & (0.21) & (0.2) & (0.23) & (0.64) & (0.49) & (0.47) & (0.46) & (0.52) & (0.4) & (12983.34) \\
				&       & \multirow[t]{2}[0]{*}{375} & \textbf{0.96} & 1.07  & 1.23  & 1.08  & 5.61  & 1.95  & 2.62  & 2.74  & 3.35  & 1.52  & 1815.31 \\
				&       &       & \textbf{(0.21)} & (0.22) & (0.25) & (0.21) & (0.7) & (0.75) & (0.64) & (0.56) & (0.56) & (0.41) & (6117.41) \\
				&       & \multirow[t]{2}[1]{*}{450} & \textbf{1.04} & 1.10  & 1.47  & 1.08  & 5.64  & 2.11  & 2.69  & 2.89  & 3.38  & 1.66  & 2920.46 \\
				&       &       & \textbf{(0.21)} & (0.25) & (0.3) & (0.24) & (0.76) & (0.64) & (0.58) & (0.55) & (0.6) & (0.47) & (11731.58) \\
				\bottomrule
			\end{tabular}%
			\vspace{1ex}
			{\raggedright Note: RSA.o represents the RSA method with CV-determined parameters and RSA.f refers to the RSA method with fixed parameters, specifically $M=L=30$ and $p=0.1$. Values in bold indicate the top performers within the 95\% MCS test while values in parentheses represent the standard deviation of the reported MSFEs.\par}
		\end{threeparttable}
	}
	\label{tab:msfe910}%
\end{table}%

\subsection{MSFE comparison under random covariance matrix} \label{app:randomcov}

In empirical applications, covariates correlations are often complex and unpredictable, motivating our evaluation of model performance under randomly generated covariance structures. Tables \ref{tab:polyrhoRA} to \ref{tab:msfemorecovRA} report the corresponding MSFE results for Sections \ref{sec3.2} and \ref{app:manyrelevantvariables}. Table \ref{tab:polyrhoRA} shows the MSFE results under polynomially decaying coefficients. In this setting, RSA.o consistently delivering superior out-of-sample predictive accuracy, especially when both the sample size $N$ and the number of covariates $K$ are large. 

\begin{table}[htbp]
	\centering
	\caption{MSFE comparison for random covariance matrix (polynomial decay).}
	\scalebox{0.80}{
		\begin{threeparttable}
			\begin{tabular}{ccccccccccccc}
				\toprule
				$N$     & $K$     & RSA.o & RSA.f & RSA.1 & RSR   & RF    & Lasso & SCAD  & MCP   & PMA   & MCV   & MMA \\
				\midrule
				\multirow[t]{8}[2]{*}{200} & \multirow[t]{2}[1]{*}{20} & 0.18  & 0.28  & 0.23  & 1.43  & 0.57  & 0.15  & 0.17  & 0.18  & 0.29  & 0.27  & \textbf{0.15} \\
				&       & (0.06) & (0.07) & (0.07) & (0.24) & (0.13) & (0.05) & (0.06) & (0.06) & (0.09) & (0.07) & \textbf{(0.05)} \\
				& \multirow[t]{2}[0]{*}{100} & \textbf{0.34} & \textbf{0.33} & 0.43  & 0.48  & 0.64  & 0.44  & 0.40  & 0.42  & 0.42  & 0.42  & 0.37 \\
				&       & \textbf{(0.06)} & \textbf{(0.07)} & (0.08) & (0.07) & (0.1) & (0.2) & (0.15) & (0.19) & (0.12) & (0.07) & (0.1) \\
				& \multirow[t]{2}[0]{*}{200} & 0.62  & 0.64  & 0.75  & \textbf{0.60} & 1.10  & 1.15  & 0.99  & 1.06  & 0.96  & 0.73  & 783.16 \\
				&       & (0.15) & (0.16) & (0.15) & \textbf{(0.12)} & (0.22) & (0.67) & (0.22) & (0.25) & (0.22) & (0.17) & (2029.71) \\
				& \multirow[t]{2}[1]{*}{300} & 0.63  & \textbf{0.60} & 0.71  & 0.62  & 0.83  & 0.90  & 0.84  & 0.87  & 0.79  & 0.71  & 213.14 \\
				&       & (0.1) & \textbf{(0.09)} & (0.11) & (0.1) & (0.13) & (0.27) & (0.17) & (0.18) & (0.14) & (0.13) & (532.68) \\
				\midrule
				\multirow[t]{8}[2]{*}{400} & \multirow[t]{2}[1]{*}{40} & 0.11  & 0.22  & 0.17  & 0.53  & 0.45  & 0.05  & 0.05  & 0.05  & 0.13  & 0.28  & \textbf{0.05} \\
				&       & (0.02) & (0.03) & (0.03) & (0.05) & (0.05) & (0.01) & (0.01) & (0.01) & (0.02) & (0.09) & \textbf{(0.01)} \\
				& \multirow[t]{2}[0]{*}{200} & \textbf{0.30} & 0.36  & 0.35  & 0.55  & 0.71  & 0.43  & 0.37  & 0.37  & 0.36  & 0.45  & 0.42 \\
				&       & \textbf{(0.05)} & (0.05) & (0.06) & (0.06) & (0.08) & (0.14) & (0.08) & (0.08) & (0.08) & (0.06) & (0.08) \\
				& \multirow[t]{2}[0]{*}{400} & \textbf{0.56} & \textbf{0.57} & 0.62  & 0.60  & 0.94  & 0.95  & 0.86  & 0.87  & 0.77  & 0.68  & 19434.24 \\
				&       & \textbf{(0.07)} & \textbf{(0.07)} & (0.09) & (0.06) & (0.12) & (0.34) & (0.15) & (0.16) & (0.13) & (0.08) & (185436.61) \\
				& \multirow[t]{2}[1]{*}{600} & 0.95  & 0.81  & 1.10  & \textbf{0.68} & 1.16  & 1.42  & 1.25  & 1.42  & 1.05  & 0.86  & 5772.27 \\
				&       & (0.15) & (0.13) & (0.2) & \textbf{(0.11)} & (0.18) & (0.74) & (0.3) & (0.31) & (0.2) & (0.18) & (33637.05) \\
				\midrule
				\multirow[t]{8}[2]{*}{800} & \multirow[t]{2}[1]{*}{80} & 0.19  & 0.36  & 0.26  & 0.76  & 0.73  & 0.15  & 0.16  & 0.16  & 0.26  & 0.29  & \textbf{0.14} \\
				&       & (0.02) & (0.04) & (0.04) & (0.06) & (0.06) & (0.03) & (0.03) & (0.04) & (0.03) & (0.04) & \textbf{(0.02)} \\
				& \multirow[t]{2}[0]{*}{400} & \textbf{0.34} & 0.43  & 0.41  & 0.59  & 0.77  & 0.52  & 0.43  & 0.45  & 0.43  & 0.46  & 0.57 \\
				&       & \textbf{(0.04)} & (0.05) & (0.05) & (0.05) & (0.06) & (0.11) & (0.06) & (0.07) & (0.08) & (0.05) & (0.07) \\
				& \multirow[t]{2}[0]{*}{800} & \textbf{0.48} & 0.48  & 0.54  & 0.61  & 0.78  & 0.66  & 0.56  & 0.56  & 0.51  & 0.60  & 871.22 \\
				&       & \textbf{(0.05)} & (0.05) & (0.05) & (0.05) & (0.06) & (0.13) & (0.07) & (0.06) & (0.07) & (0.05) & (2035.43) \\
				& \multirow[t]{2}[1]{*}{1200} & \textbf{0.60} & 0.60  & 0.65  & 0.61  & 0.83  & 1.00  & 0.85  & 0.91  & 0.81  & 0.68  & 3464.36 \\
				&       & \textbf{(0.05)} & (0.06) & (0.06) & (0.05) & (0.07) & (0.23) & (0.09) & (0.11) & (0.08) & (0.06) & (12970.35) \\
				\bottomrule
			\end{tabular}%
			\label{tab:addlabel}%
			\vspace{1ex}
			{\raggedright Note: RSA.o represents the RSA method with CV-determined parameters and RSA.f refers to the RSA method with fixed parameters, specifically $M=L=30$ and $p=0.1$. Values in bold indicate the top performers within the 95\% MCS test while values in parentheses represent the standard deviation of the reported MSFEs.\par}
		\end{threeparttable}
	}
	\label{tab:polyrhoRA}%
\end{table}%

Table \ref{tab:expRAcov} presents the MSFE results under an exponentially decaying coefficient structure. The findings closely mirror those in Table \ref{tab:polyrhoRA}, with RSA exhibiting strong predictive performance in most settings.

\begin{table}[htbp]
	\centering
	\caption{MSFE comparison for random covariance matrix (exponential decay).}
	\scalebox{0.80}{
		\begin{threeparttable}
			\begin{tabular}{ccccccccccccc}
				\toprule
				$N$     & $K$     & RSA.o & RSA.f & RSA.1 & RSR   & RF    & Lasso & SCAD  & MCP   & PMA   & MCV   & MMA \\
				\midrule
				\multirow[t]{8}[2]{*}{200} & \multirow[t]{2}[1]{*}{20} & \textbf{0.19} & 0.25  & 0.23  & 1.26  & 0.60  & 0.21  & 0.23  & 0.23  & 0.34  & 0.31  & \textbf{0.19} \\
				&       & \textbf{(0.06)} & (0.07) & (0.08) & (0.23) & (0.12) & (0.06) & (0.08) & (0.08) & (0.11) & (0.09) & \textbf{(0.07)} \\
				& \multirow[t]{2}[0]{*}{100} & \textbf{0.29} & 0.32  & 0.33  & 0.53  & 0.57  & 0.45  & 0.43  & 0.44  & 0.37  & 0.43  & 0.52 \\
				&       & \textbf{(0.07)} & (0.06) & (0.07) & (0.07) & (0.1) & (0.17) & (0.15) & (0.16) & (0.08) & (0.09) & (0.12) \\
				& \multirow[t]{2}[0]{*}{200} & 0.36  & \textbf{0.34} & 0.43  & 0.37  & 0.55  & 0.58  & 0.54  & 0.57  & 0.53  & 0.43  & 107.61 \\
				&       & (0.06) & \textbf{(0.05)} & (0.07) & (0.05) & (0.08) & (0.2) & (0.12) & (0.11) & (0.1) & (0.07) & (253.29) \\
				& \multirow[t]{2}[1]{*}{300} & \textbf{0.42} & 0.43  & 0.50  & 0.60  & 0.77  & 0.59  & 0.52  & 0.54  & 0.50  & 0.56  & 608.35 \\
				&       & \textbf{(0.08)} & (0.09) & (0.1) & (0.09) & (0.13) & (0.18) & (0.11) & (0.13) & (0.11) & (0.1) & (3945.3) \\
				\midrule
				\multirow[t]{8}[2]{*}{400} & \multirow[t]{2}[1]{*}{40} & 0.20  & 0.38  & 0.25  & 1.10  & 0.87  & 0.19  & 0.21  & 0.21  & 0.36  & 0.37  & \textbf{0.17} \\
				&       & (0.04) & (0.07) & (0.06) & (0.12) & (0.09) & (0.05) & (0.06) & (0.06) & (0.07) & (0.07) & \textbf{(0.05)} \\
				& \multirow[t]{2}[0]{*}{200} & 0.45  & \textbf{0.42} & 0.50  & 0.65  & 0.92  & 0.66  & 0.55  & 0.57  & 0.55  & 0.55  & 0.79 \\
				&       & (0.08) & \textbf{(0.08)} & (0.09) & (0.08) & (0.13) & (0.26) & (0.13) & (0.13) & (0.12) & (0.08) & (0.15) \\
				& \multirow[t]{2}[0]{*}{400} & \textbf{0.72} & \textbf{0.71} & 0.82  & 0.77  & 1.23  & 1.24  & 1.09  & 1.18  & 1.10  & 0.90  & 820.55 \\
				&       & \textbf{(0.11)} & \textbf{(0.11)} & (0.13) & (0.09) & (0.17) & (0.51) & (0.22) & (0.25) & (0.21) & (0.12) & (2538.56) \\
				& \multirow[t]{2}[1]{*}{600} & 0.46  & \textbf{0.44} & 0.54  & 0.53  & 0.69  & 0.66  & 0.56  & 0.58  & 0.53  & 0.59  & 564.69 \\
				&       & (0.05) & \textbf{(0.05)} & (0.07) & (0.05) & (0.07) & (0.19) & (0.09) & (0.1) & (0.08) & (0.06) & (2659.09) \\
				\midrule
				\multirow[t]{8}[2]{*}{800} & \multirow[t]{2}[1]{*}{80} & 0.16  & 0.28  & 0.19  & 0.46  & 0.54  & 0.15  & 0.16  & 0.16  & 0.28  & 0.33  & \textbf{0.13} \\
				&       & (0.02) & (0.03) & (0.02) & (0.04) & (0.04) & (0.02) & (0.03) & (0.03) & (0.03) & (0.04) & \textbf{(0.02)} \\
				& \multirow[t]{2}[0]{*}{400} & 0.36  & 0.40  & 0.43  & 0.55  & 0.70  & 0.40  & \textbf{0.36} & 0.36  & \textbf{0.35} & 0.46  & 0.46 \\
				&       & (0.03) & (0.03) & (0.04) & (0.04) & (0.05) & (0.08) & \textbf{(0.05)} & (0.05) & \textbf{(0.05)} & (0.05) & (0.05) \\
				& \multirow[t]{2}[0]{*}{800} & \textbf{0.58} & \textbf{0.58} & 0.66  & 0.68  & 0.95  & 1.07  & 0.81  & 0.82  & 0.76  & 0.72  & 1476.47 \\
				&       & \textbf{(0.06)} & \textbf{(0.06)} & (0.06) & (0.05) & (0.09) & (0.33) & (0.12) & (0.12) & (0.13) & (0.06) & (4792.74) \\
				& \multirow[t]{2}[1]{*}{1200} & 1.13  & \textbf{0.75} & 1.31  & 0.79  & 1.10  & 1.16  & 0.91  & 0.93  & 0.86  & 0.84  & 1784.91 \\
				&       & (0.1) & \textbf{(0.08)} & (0.14) & (0.06) & (0.09) & (0.44) & (0.13) & (0.11) & (0.1) & (0.09) & (5095.25) \\
				\bottomrule
			\end{tabular}%
			\vspace{1ex}
			{\raggedright Note: RSA.o represents the RSA method with CV-determined parameters and RSA.f refers to the RSA method with fixed parameters, specifically $M=L=30$ and $p=0.1$. Values in bold indicate the top performers within the 95\% MCS test while values in parentheses represent the standard deviation of the reported MSFEs.\par}
		\end{threeparttable}
	}
	\label{tab:expRAcov}%
\end{table}%

Table \ref{tab:msfemorecovRA} presents the MSFE result in settings with many relevant covariates. RSA continue to deliver the highest out-of-sample prediction accuracy regardless of signal strength, while RSA with fixed parameters is selected most frequently by the MCS test at 95\% significance level. 

\begin{table}[htbp]
	\centering
	\caption{MSFE comparison under many relevant covariates.}
	\scalebox{0.80}{
		\begin{threeparttable}
			\begin{tabular}{cccccccccccccc}
				\toprule
				DGP &   $N$     & $K$     & RSA.o & RSA.f & RSA.1 & RSR   & RF    & Lasso & SCAD  & MCP   & PMA   & MCV   & MMA \\
				\midrule
				\multirow[t]{12}[2]{*}{poly} & \multirow[t]{6}[1]{*}{100} & \multirow[t]{2}[1]{*}{100} & 0.53  & 0.53  & 0.58  & \textbf{0.51} & 0.79  & 0.92  & 0.82  & 0.85  & 0.80  & 0.62  & 209.55 \\
				&       &       & (0.14) & (0.15) & (0.15) & \textbf{(0.13)} & (0.22) & (0.39) & (0.27) & (0.31) & (0.2) & (0.17) & (999.23) \\
				&       & \multirow[t]{2}[0]{*}{125} & \textbf{0.72} & 0.74  & 0.80  & 0.83  & 1.18  & 1.36  & 1.18  & 1.20  & 1.10  & 1.02  & 126.11 \\
				&       &       & \textbf{(0.18)} & (0.18) & (0.22) & (0.17) & (0.26) & (0.69) & (0.38) & (0.35) & (0.28) & (0.23) & (457.22) \\
				&       & \multirow[t]{2}[0]{*}{150} & 0.74  & 0.72  & 0.85  & \textbf{0.68} & 1.00  & 1.19  & 1.05  & 1.07  & 1.00  & 0.89  & 182.80 \\
				&       &       & (0.2) & (0.19) & (0.22) & \textbf{(0.15)} & (0.22) & (0.64) & (0.41) & (0.33) & (0.22) & (0.24) & (720.33) \\
				& \multirow[t]{6}[1]{*}{300} & \multirow[t]{2}[0]{*}{300} & 0.73  & \textbf{0.70} & 0.87  & \textbf{0.69} & 1.14  & 1.34  & 1.13  & 1.23  & 1.07  & 0.81  & 252055.87 \\
				&       &       & (0.11) & \textbf{(0.12)} & (0.14) & \textbf{(0.1)} & (0.18) & (0.65) & (0.23) & (0.25) & (0.17) & (0.13) & (2502293.78) \\
				&       & \multirow[t]{2}[0]{*}{375} & 0.67  & \textbf{0.65} & 0.75  & 0.69  & 0.99  & 1.02  & 0.91  & 0.95  & 0.91  & 0.77  & 877.74 \\
				&       &       & (0.1) & \textbf{(0.1)} & (0.11) & (0.09) & (0.14) & (0.32) & (0.15) & (0.17) & (0.15) & (0.11) & (2773.95) \\
				&       & \multirow[t]{2}[1]{*}{450} & \textbf{0.62} & \textbf{0.62} & 0.73  & 0.69  & 0.93  & 1.04  & 0.88  & 0.91  & 0.83  & 0.80  & 201.49 \\
				&       &       & \textbf{(0.09)} & \textbf{(0.09)} & (0.11) & (0.08) & (0.12) & (0.32) & (0.16) & (0.15) & (0.14) & (0.12) & (388.3) \\
				\bottomrule
				\multirow[t]{12}[2]{*}{exp} & \multirow[t]{6}[1]{*}{100} & \multirow[t]{2}[1]{*}{100} & 0.18  & 0.18  & 0.20  & \textbf{0.16} & 0.29  & 0.31  & 0.27  & 0.26  & 0.23  & 0.21  & 206.10 \\
				&       &       & (0.04) & (0.05) & (0.05) & \textbf{(0.04)} & (0.08) & (0.13) & (0.1) & (0.08) & (0.06) & (0.06) & (1552.5) \\
				&       & \multirow[t]{2}[0]{*}{125} & 0.16  & 0.16  & 0.19  & \textbf{0.15} & 0.23  & 0.25  & 0.25  & 0.25  & 0.21  & 0.19  & 43.31 \\
				&       &       & (0.04) & (0.04) & (0.05) & \textbf{(0.04)} & (0.05) & (0.09) & (0.08) & (0.07) & (0.06) & (0.05) & (286.79) \\
				&       & \multirow[t]{2}[0]{*}{150} & 0.24  & 0.22  & 0.26  & \textbf{0.20} & 0.37  & 0.40  & 0.35  & 0.35  & 0.32  & 0.28  & 81.71 \\
				&       &       & (0.08) & (0.06) & (0.07) & \textbf{(0.05)} & (0.1) & (0.2) & (0.14) & (0.15) & (0.11) & (0.09) & (436.55) \\
				& \multirow[t]{6}[1]{*}{300} & \multirow[t]{2}[0]{*}{300} & \textbf{0.10} & 0.11  & 0.11  & 0.14  & 0.18  & 0.15  & 0.13  & 0.14  & 0.12  & 0.15  & 34.22 \\
				&       &       & \textbf{(0.01)} & (0.02) & (0.02) & (0.02) & (0.02) & (0.04) & (0.02) & (0.02) & (0.02) & (0.02) & (98.95) \\
				&       & \multirow[t]{2}[0]{*}{375} & 0.11  & \textbf{0.11} & 0.14  & 0.13  & 0.16  & 0.14  & 0.12  & 0.13  & 0.11  & 0.12  & 33.82 \\
				&       &       & (0.01) & \textbf{(0.01)} & (0.02) & (0.02) & (0.02) & (0.03) & (0.02) & (0.02) & (0.02) & (0.02) & (90.97) \\
				&       & \multirow[t]{2}[1]{*}{450} & 0.17  & \textbf{0.16} & 0.18  & 0.17  & 0.24  & 0.28  & 0.23  & 0.24  & 0.21  & 0.20  & 275.30 \\
				&       &       & (0.02) & \textbf{(0.02)} & (0.03) & (0.02) & (0.03) & (0.1) & (0.04) & (0.05) & (0.03) & (0.03) & (1214.28) \\
				\bottomrule
			\end{tabular}%
			\vspace{1ex}
			{\raggedright Note: ``poly'' refers to polynomially decaying coefficients, while ``exp'' denotes exponentially decaying coefficient. RSA.o represents the RSA method with CV-determined parameters and RSA.f refers to the RSA method with fixed parameters, specifically $M=L=30$ and $p=0.1$. Values in bold indicate the top performers within the 95\% MCS test while values in parentheses represent the standard deviation of the reported MSFEs.\par}
		\end{threeparttable}
	}
	\label{tab:msfemorecovRA}%
\end{table}%

\subsection{CV selection for the selection probability and the number of candidate models} \label{app:cv}

\subsubsection{CV results for Section \ref{sec3.2}}
Figures~\ref{fig:case0.1}, \ref{fig:case0.9}, and \ref{fig:caseepolyRA} report the cross-validation results under three different scenarios: when the covariates are weakly correlated, highly correlated, and randomly correlated, respectively, with polynomially decaying regression coefficients. The results indicate that the selection probability plays a significant role in the performance of the RSA estimator.

\begin{figure*}[!h]
	\centering
	
	\begin{minipage}{0.95\textwidth}
		\centering
		\includegraphics[width=0.24\linewidth]{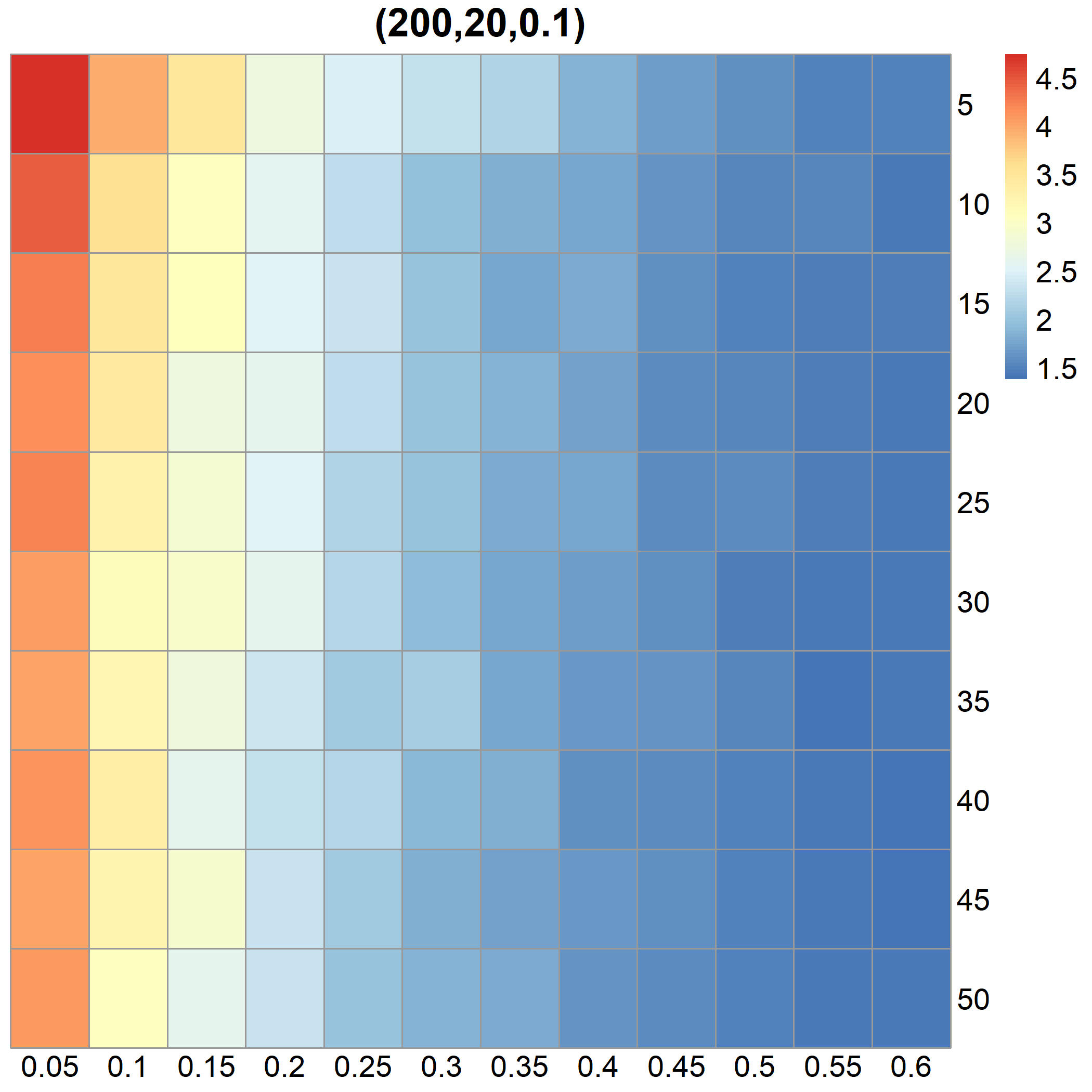}
		\includegraphics[width=0.24\linewidth]{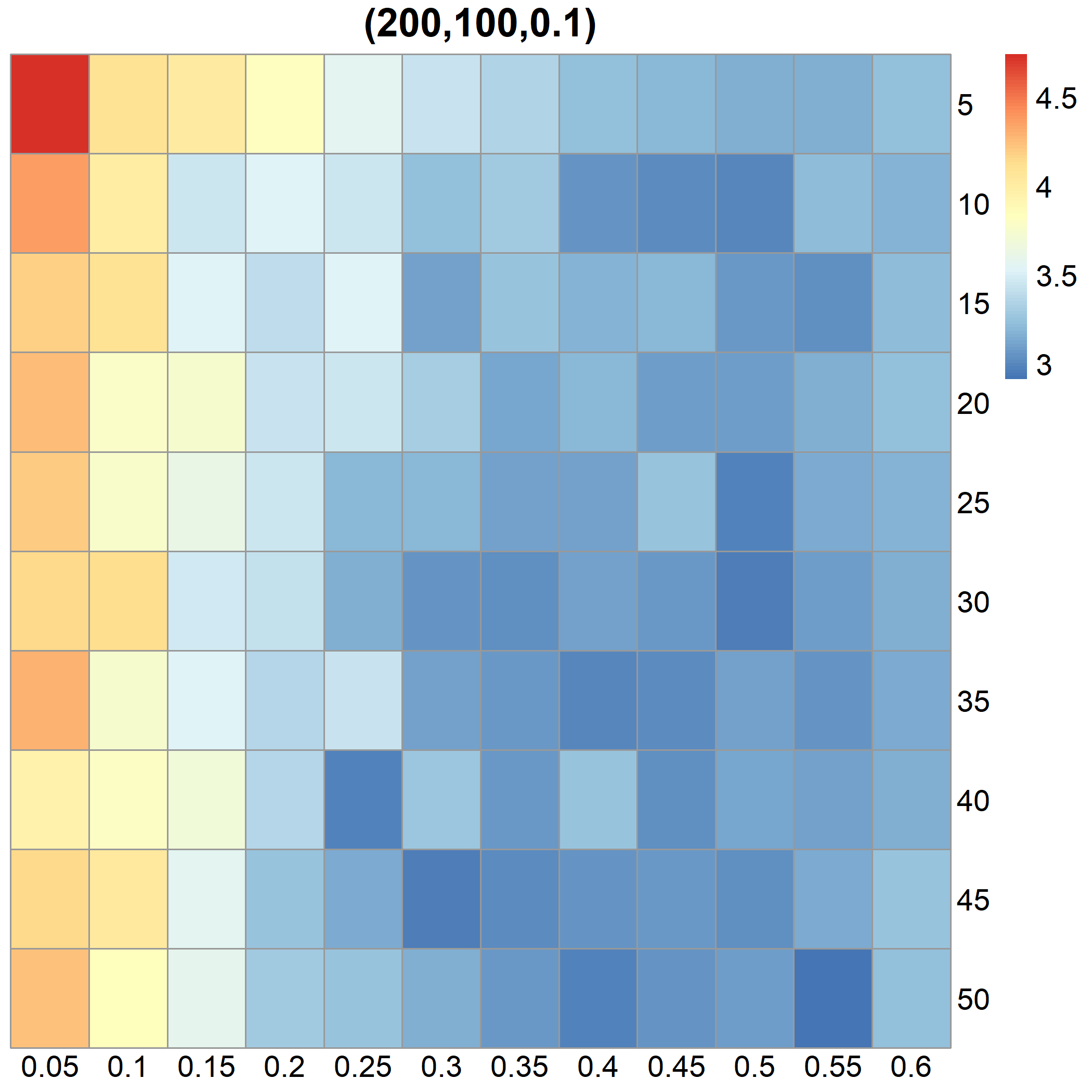}
		\includegraphics[width=0.24\linewidth]{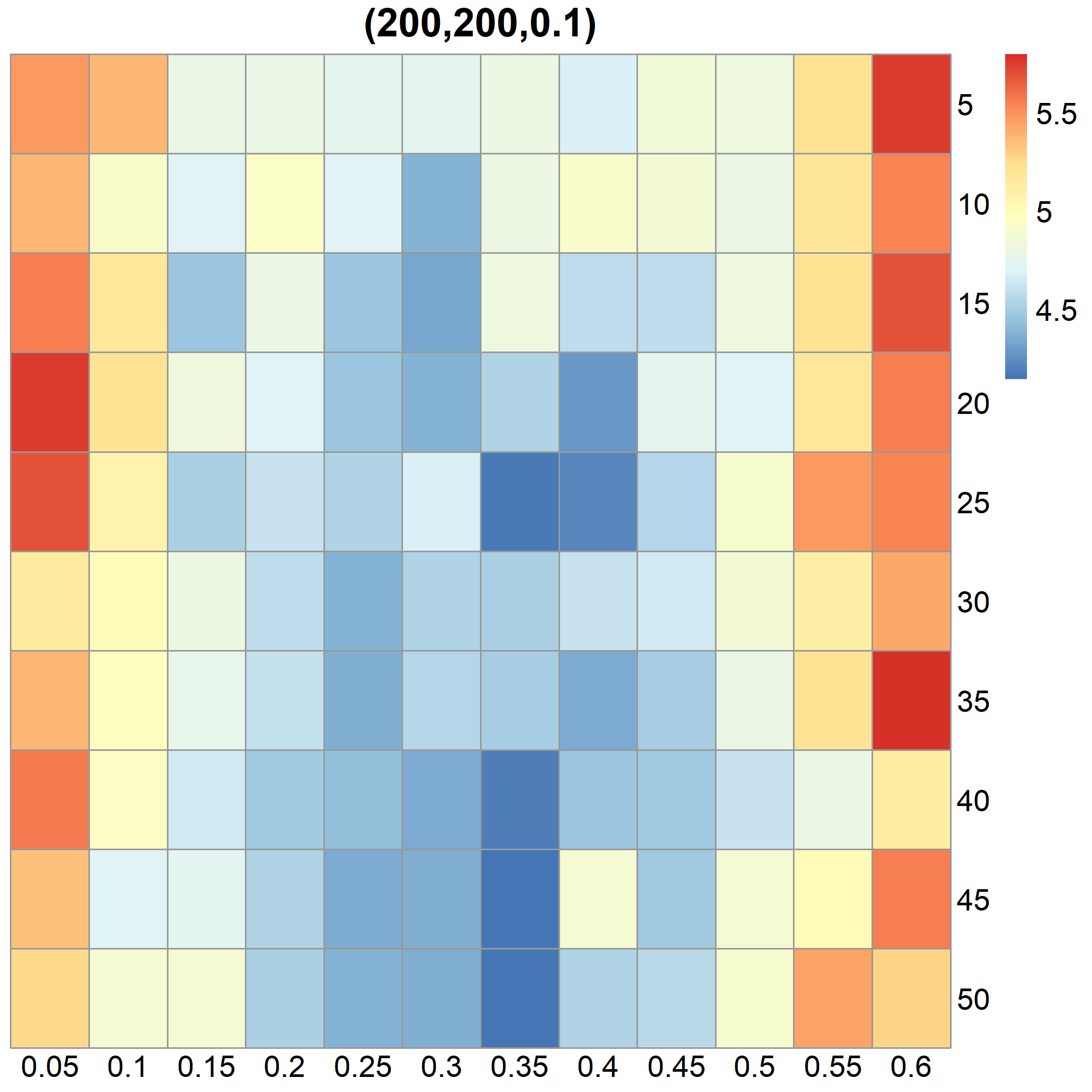}
		\includegraphics[width=0.24\linewidth]{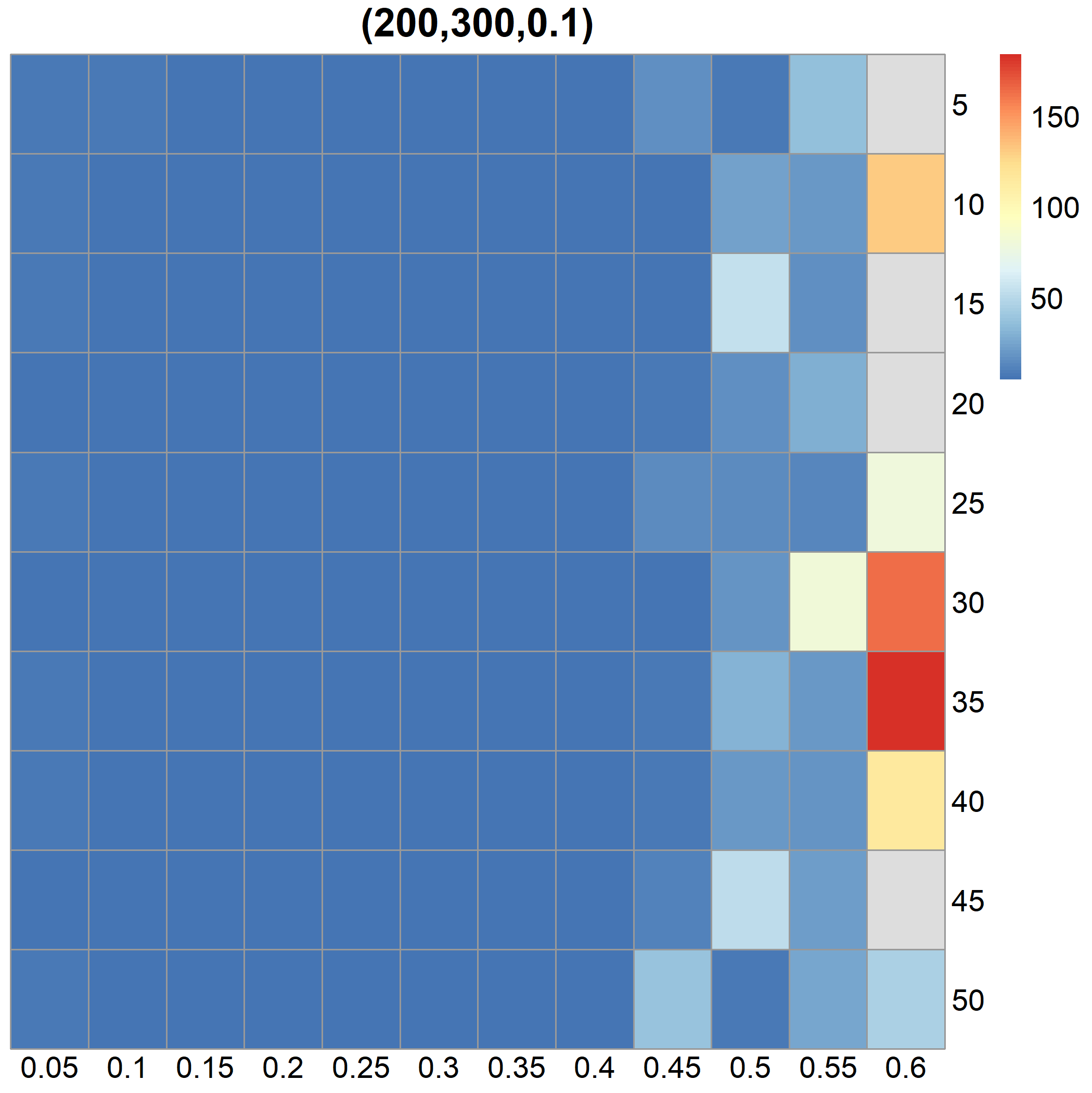}
		
		\vspace{3pt}
		\small (a) $N = 200$
	\end{minipage}
	
	\vspace{6pt}
	
	\begin{minipage}{0.95\textwidth}
		\centering
		\includegraphics[width=0.24\linewidth]{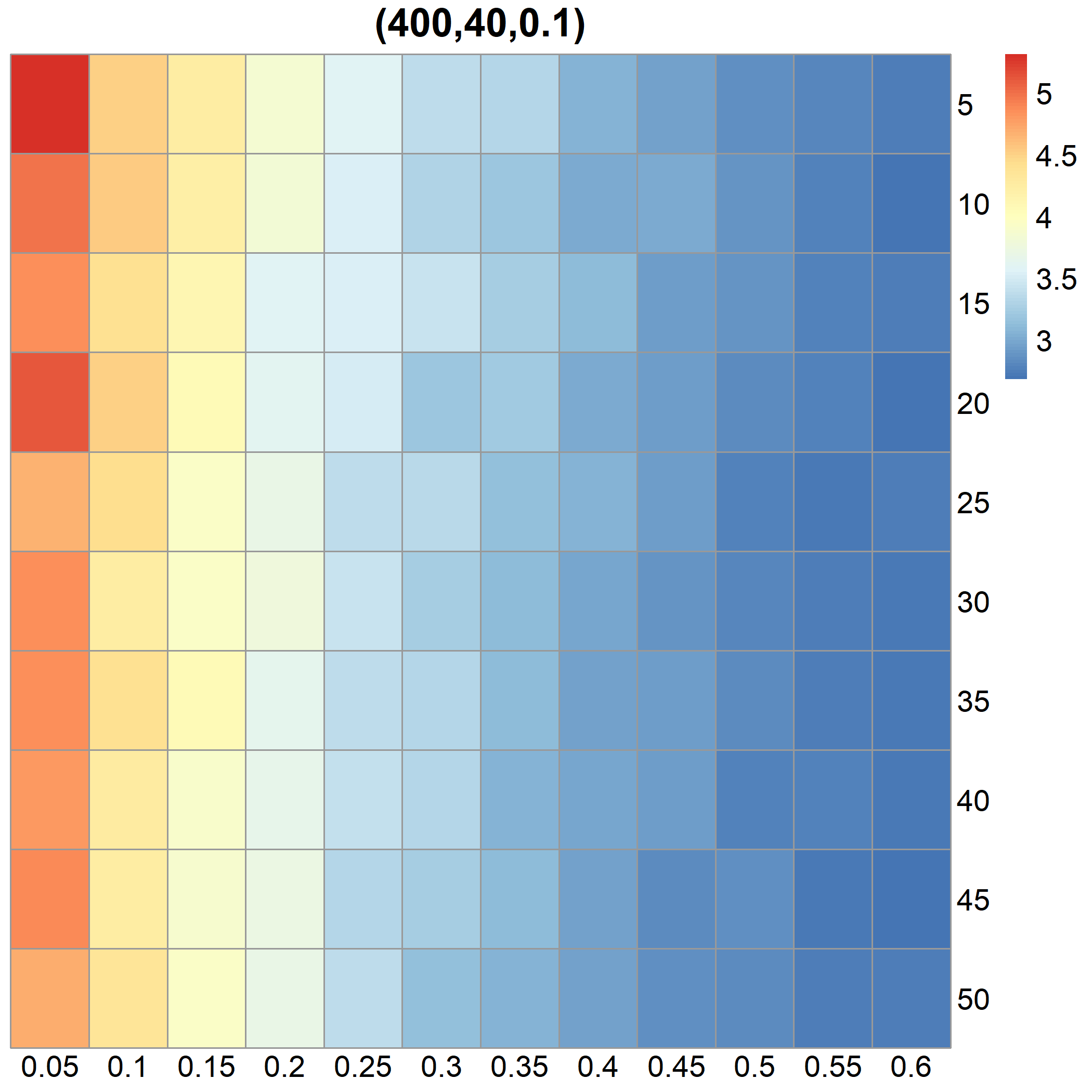}
		\includegraphics[width=0.24\linewidth]{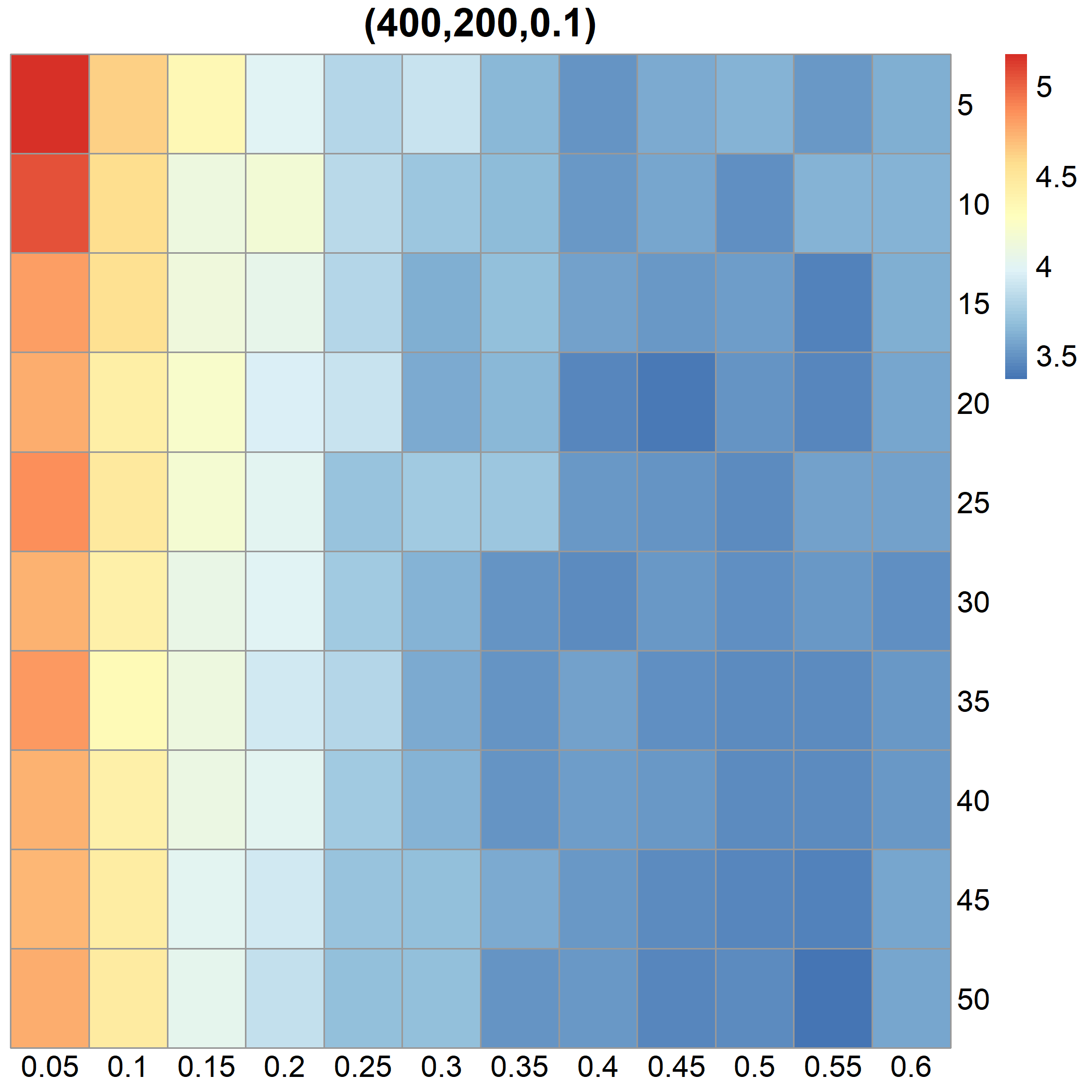}
		\includegraphics[width=0.24\linewidth]{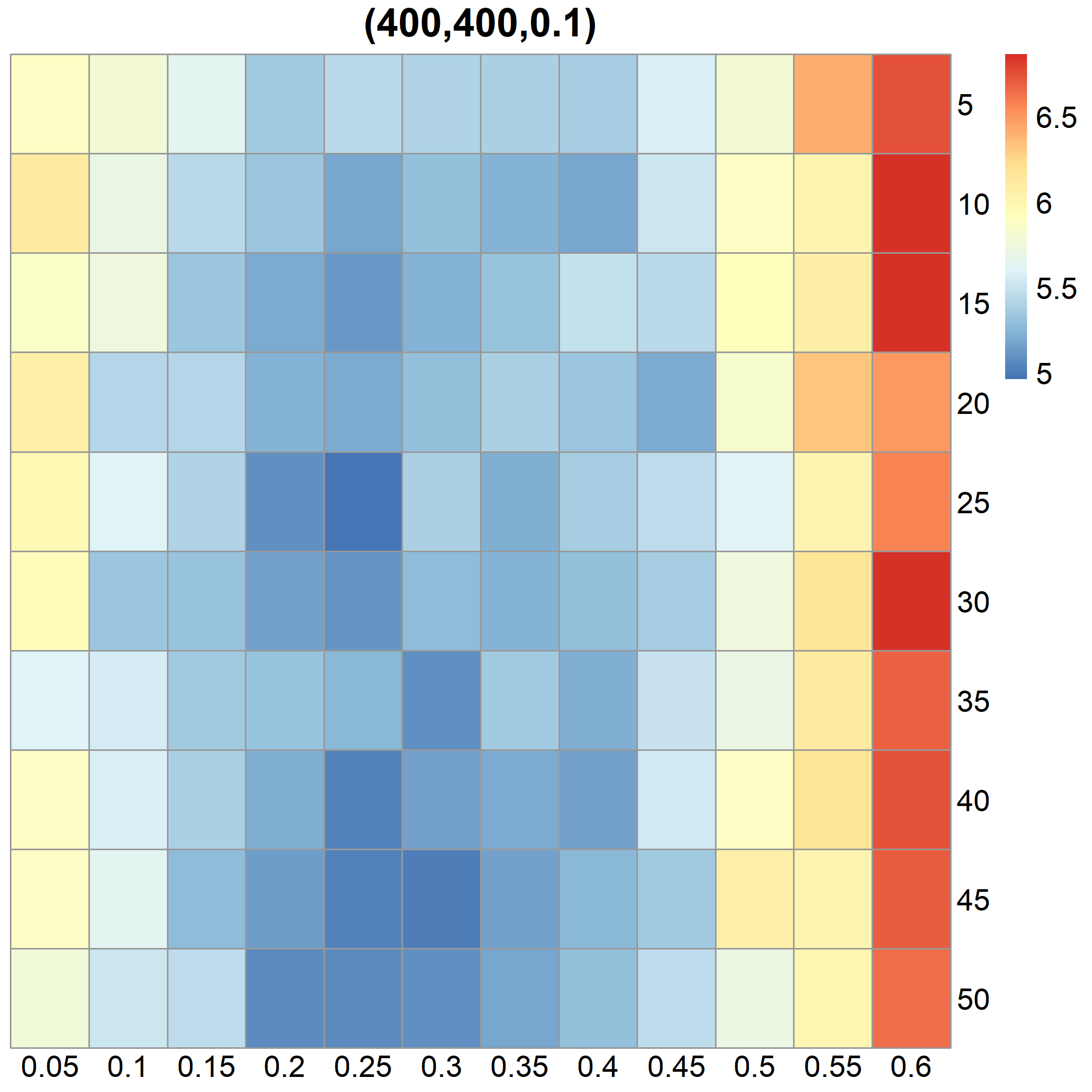}
		\includegraphics[width=0.24\linewidth]{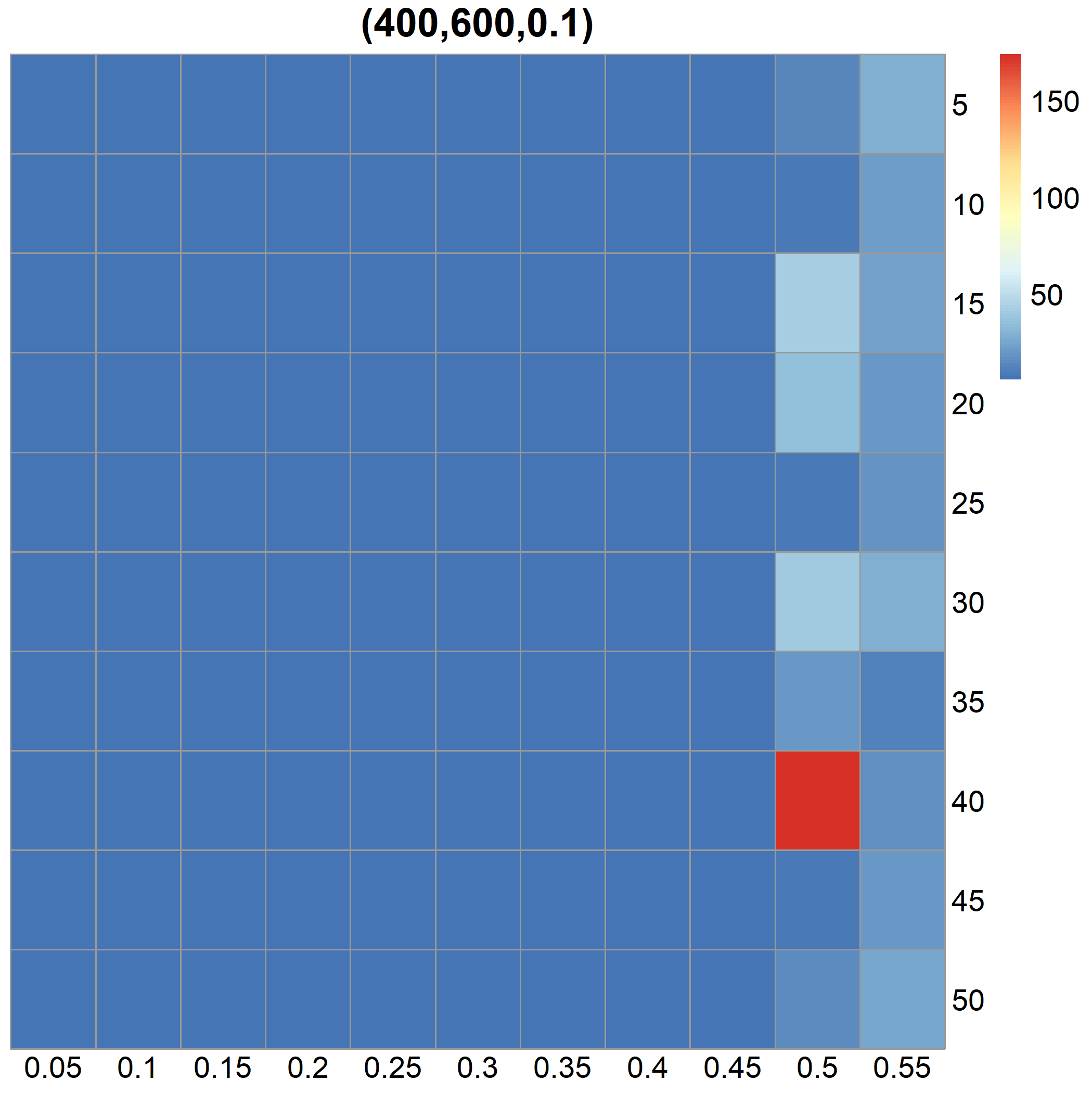}
		
		\vspace{3pt}
		\small (b) $N = 400$
	\end{minipage}
	
	\vspace{6pt}
	
	\begin{minipage}{0.95\textwidth}
		\centering
		\includegraphics[width=0.24\linewidth]{figs/3-FCOV-800-80-0.1.png}
		\includegraphics[width=0.24\linewidth]{figs/3-FCOV-800-400-0.1.png}
		\includegraphics[width=0.24\linewidth]{figs/3-FCOV-800-800-0.1.png}
		\includegraphics[width=0.24\linewidth]{figs/3-FCOV-800-1200-0.1.png}
		
		\vspace{3pt}
		\small (c) $N = 800$
	\end{minipage}
	
	\caption{Cross-validation results for Section \ref{sec3.2} under polynomial decay with $\rho = 0.1$. Values in parentheses denote $(N, K, \rho)$. The horizontal axis represents the selection probability $p$, and the vertical axis indicates the number of candidate models $M$. Darker regions correspond to $(p, M)$ combinations yielding lower cross-validation errors.}
	\label{fig:case0.1}
\end{figure*}

\newpage
\begin{figure*}[!h]
	\centering
	
	\begin{minipage}{0.95\textwidth}
		\centering
		\includegraphics[width=0.24\linewidth]{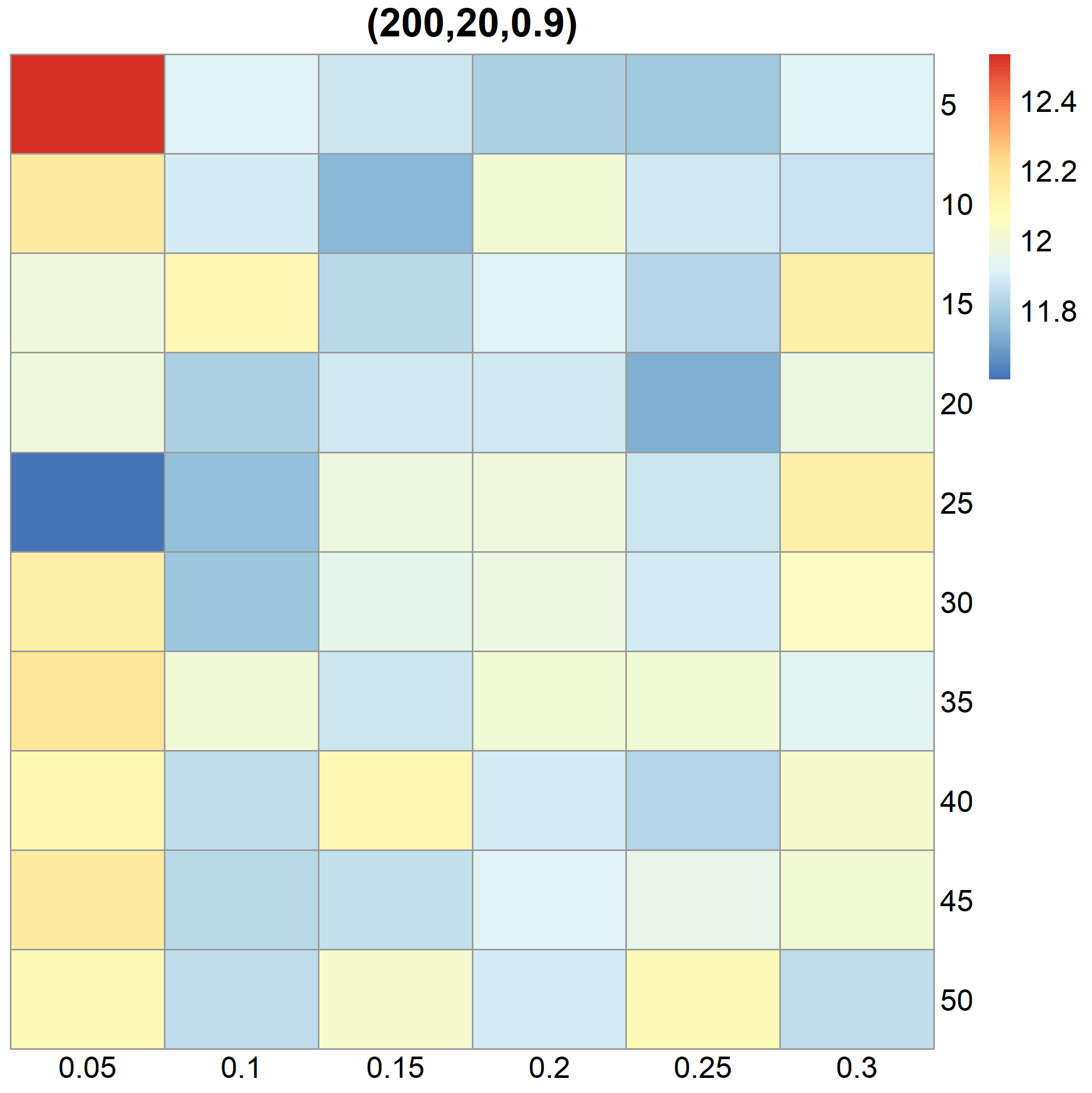}
		\includegraphics[width=0.24\linewidth]{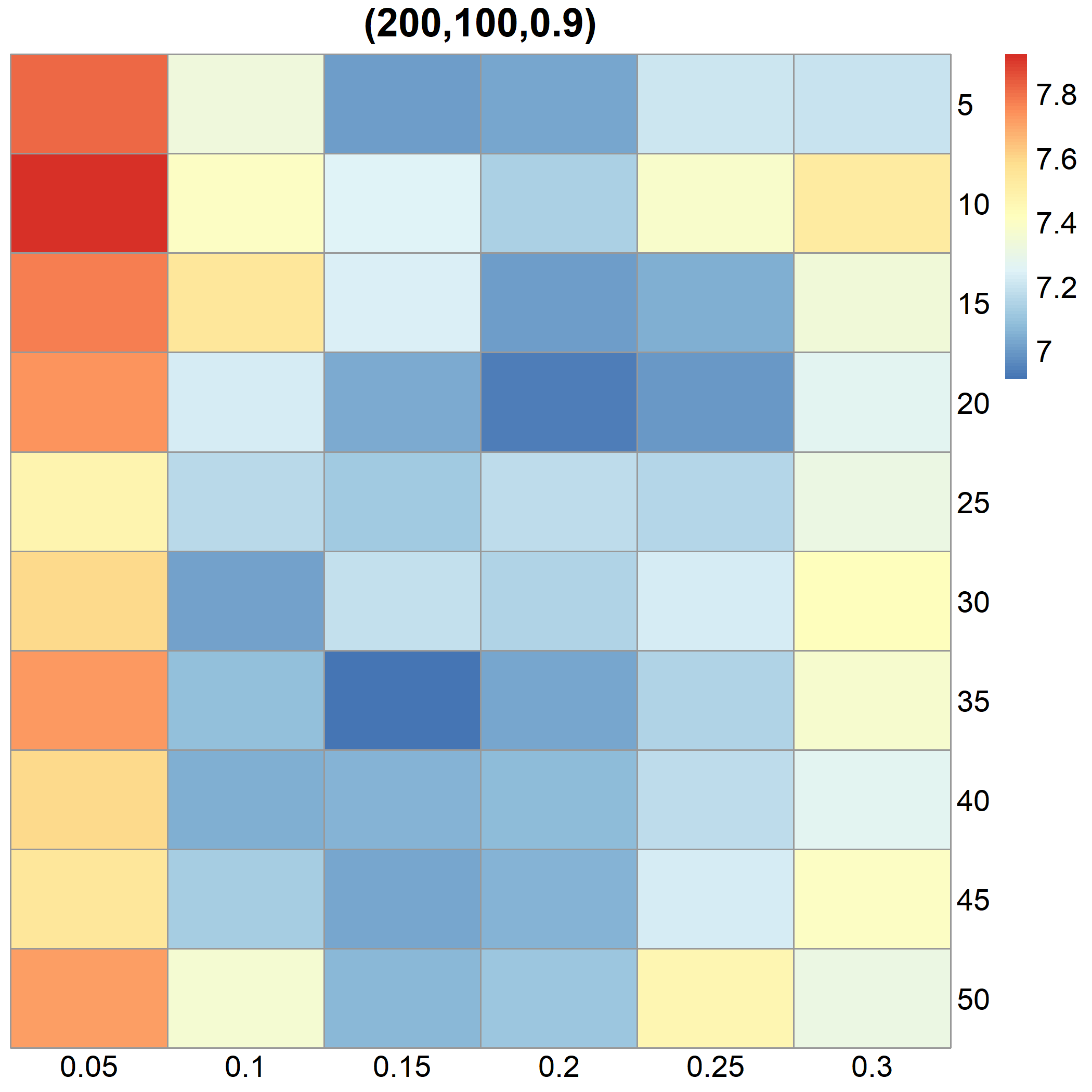}
		\includegraphics[width=0.24\linewidth]{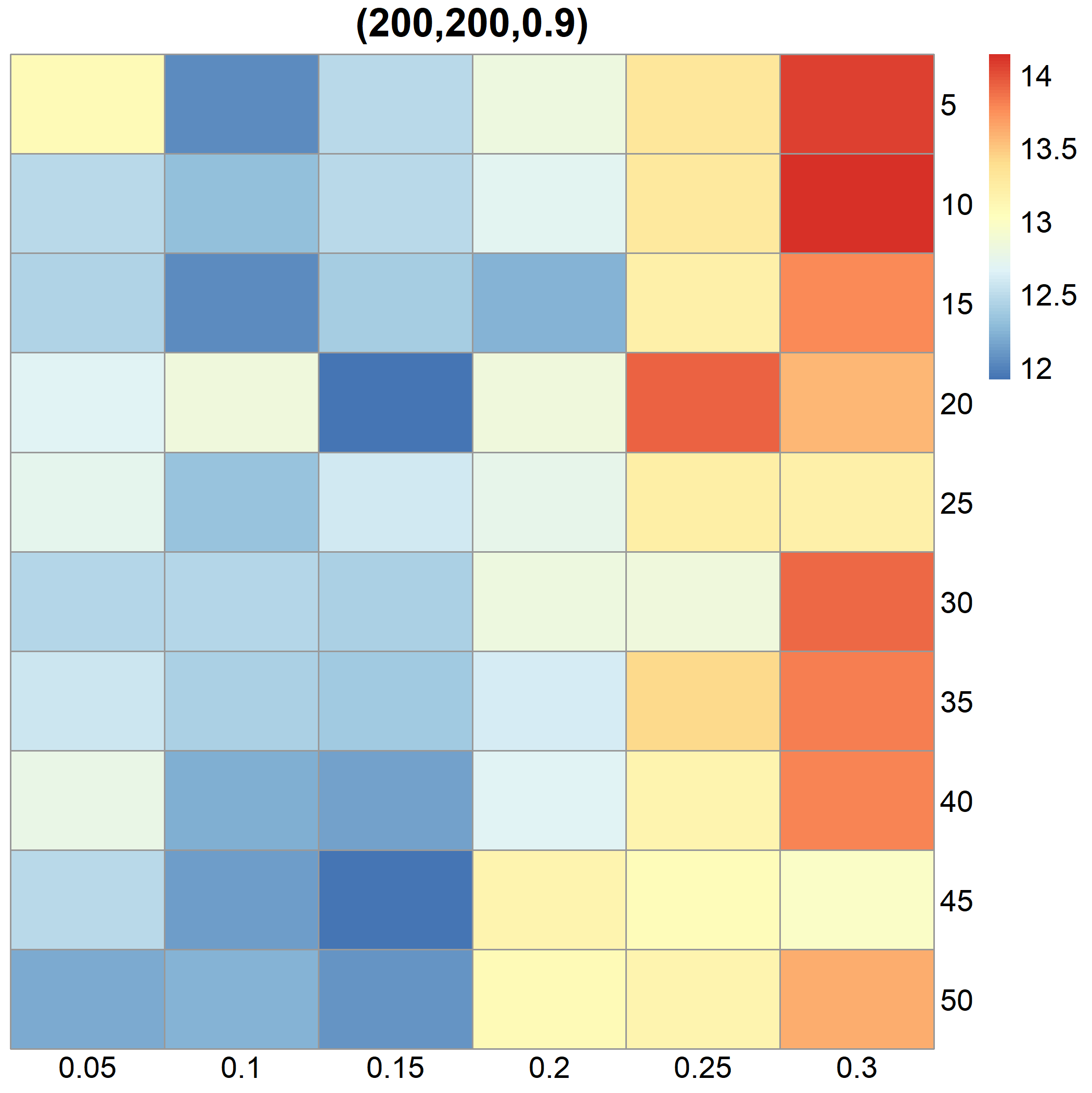}
		\includegraphics[width=0.24\linewidth]{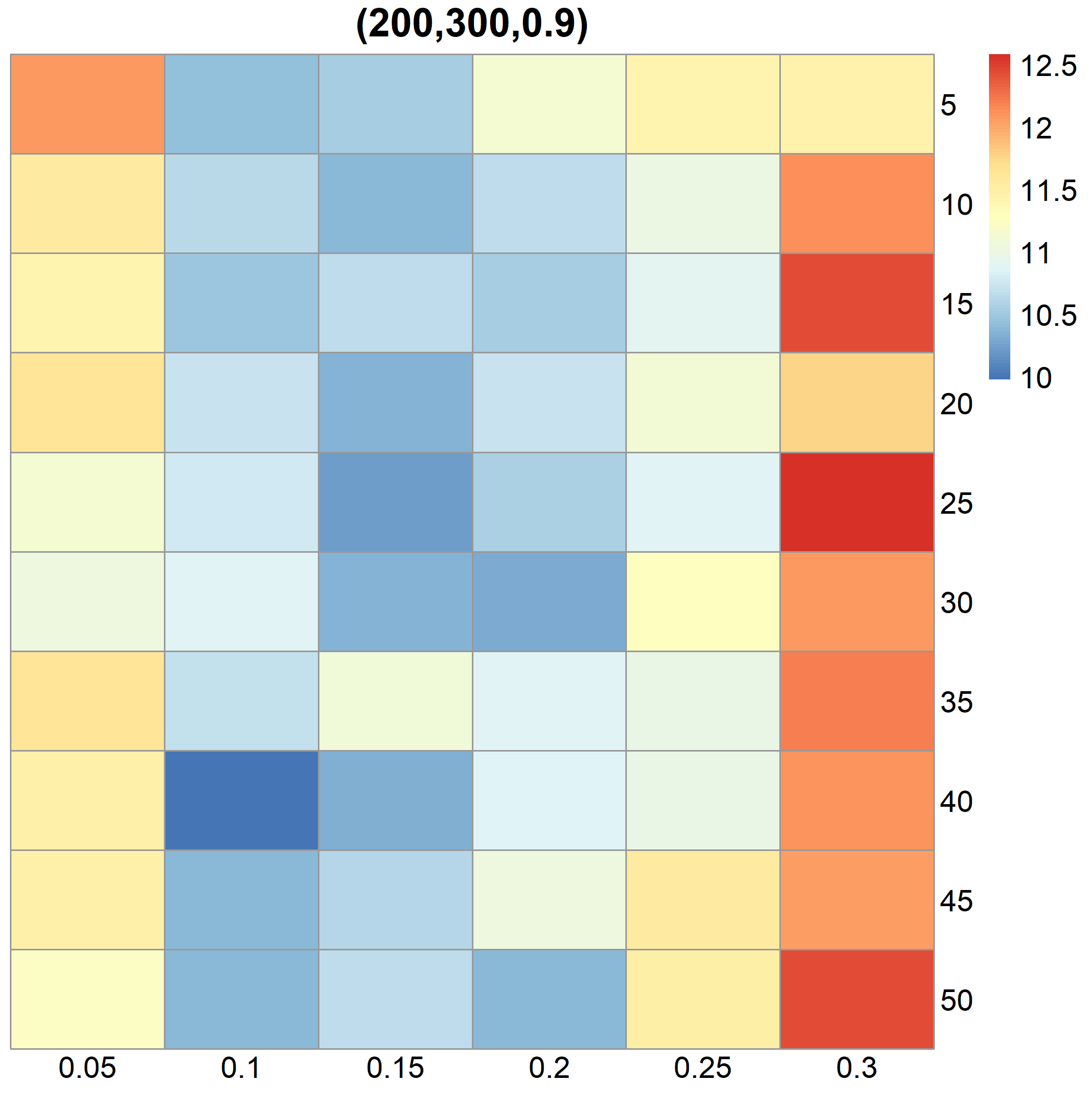}
		
		\vspace{3pt}
		\small (a) $N = 200$
	\end{minipage}
	
	\vspace{6pt}
	
	\begin{minipage}{0.95\textwidth}
		\centering
		\includegraphics[width=0.24\linewidth]{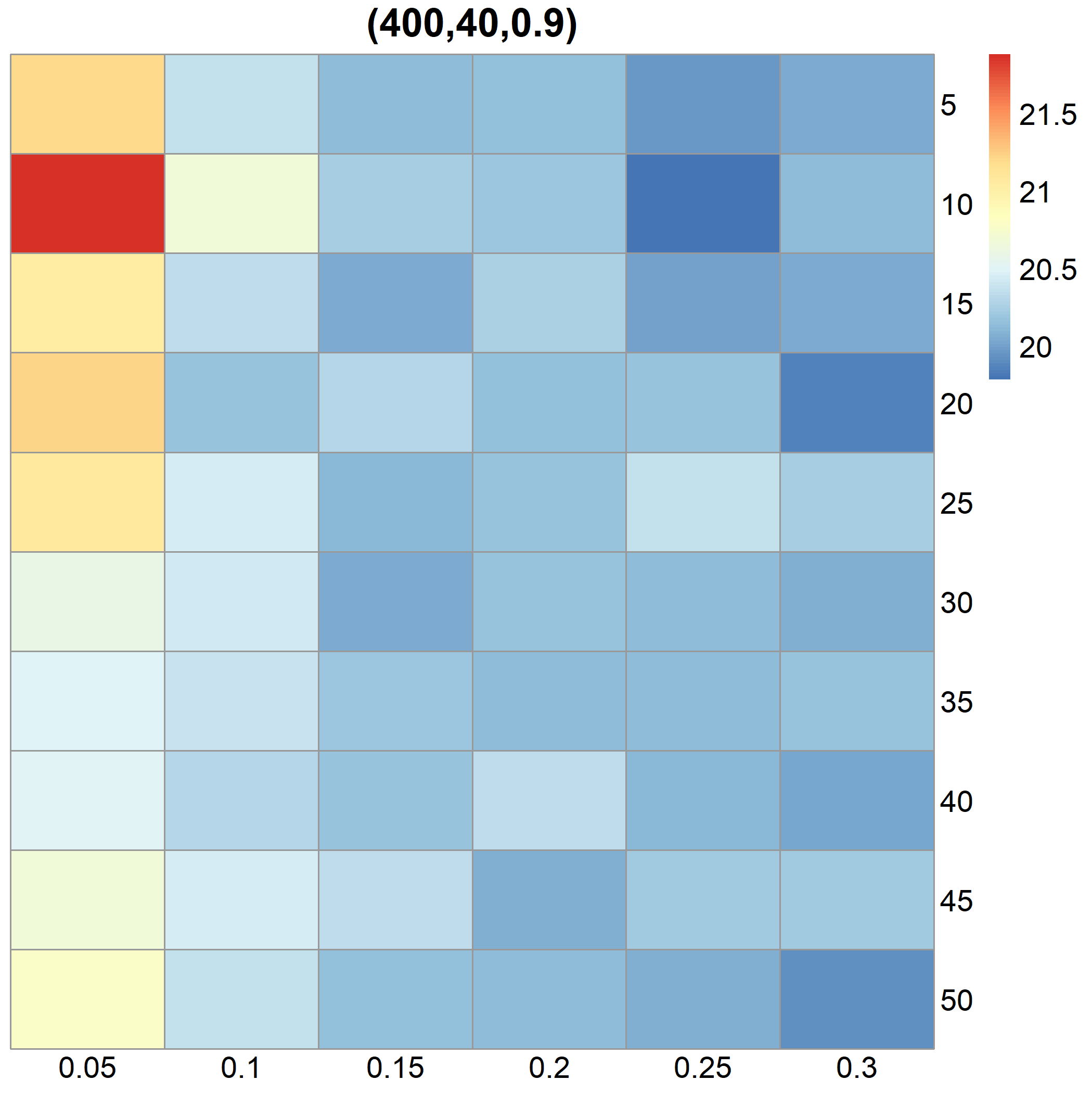}
		\includegraphics[width=0.24\linewidth]{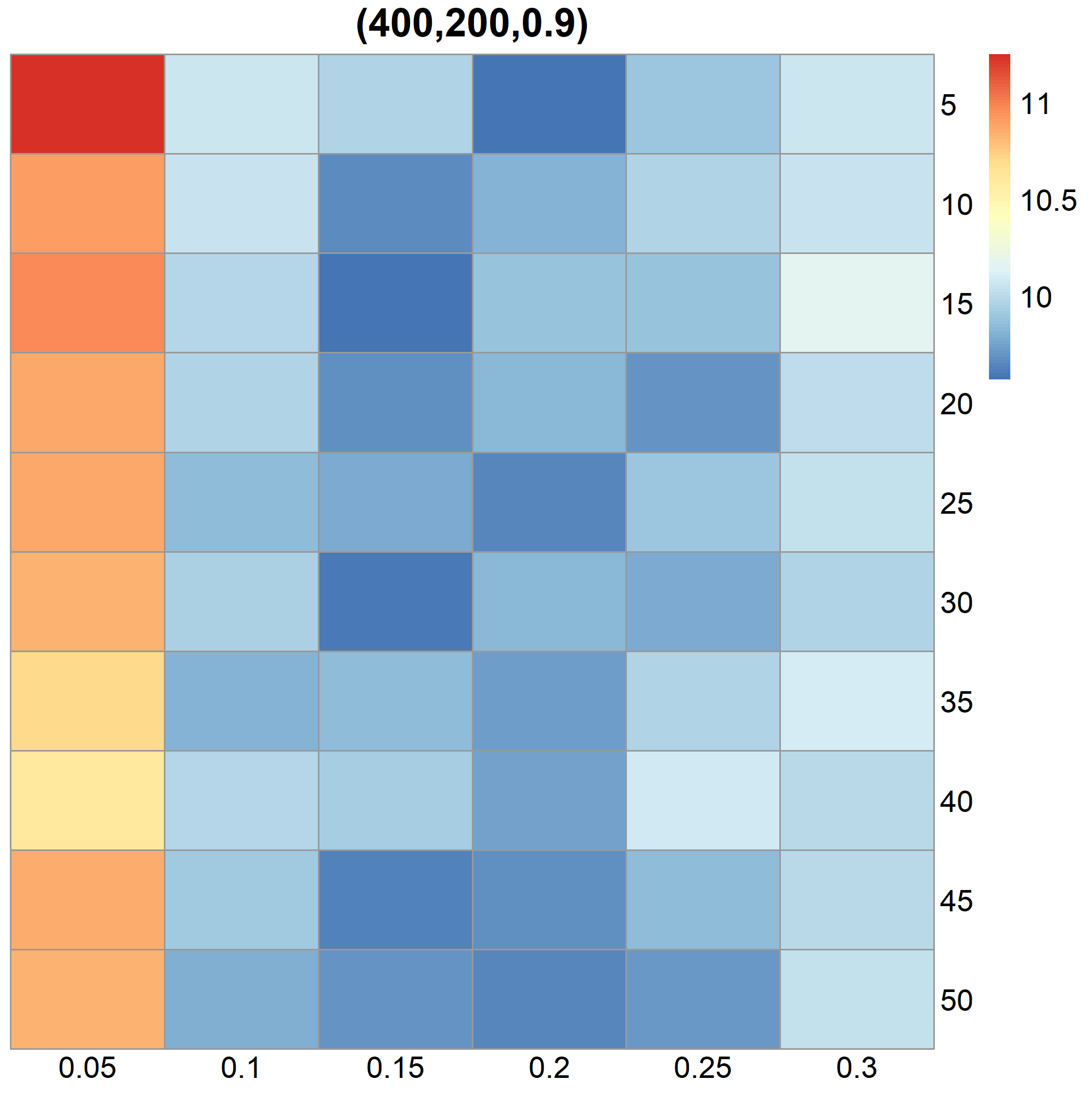}
		\includegraphics[width=0.24\linewidth]{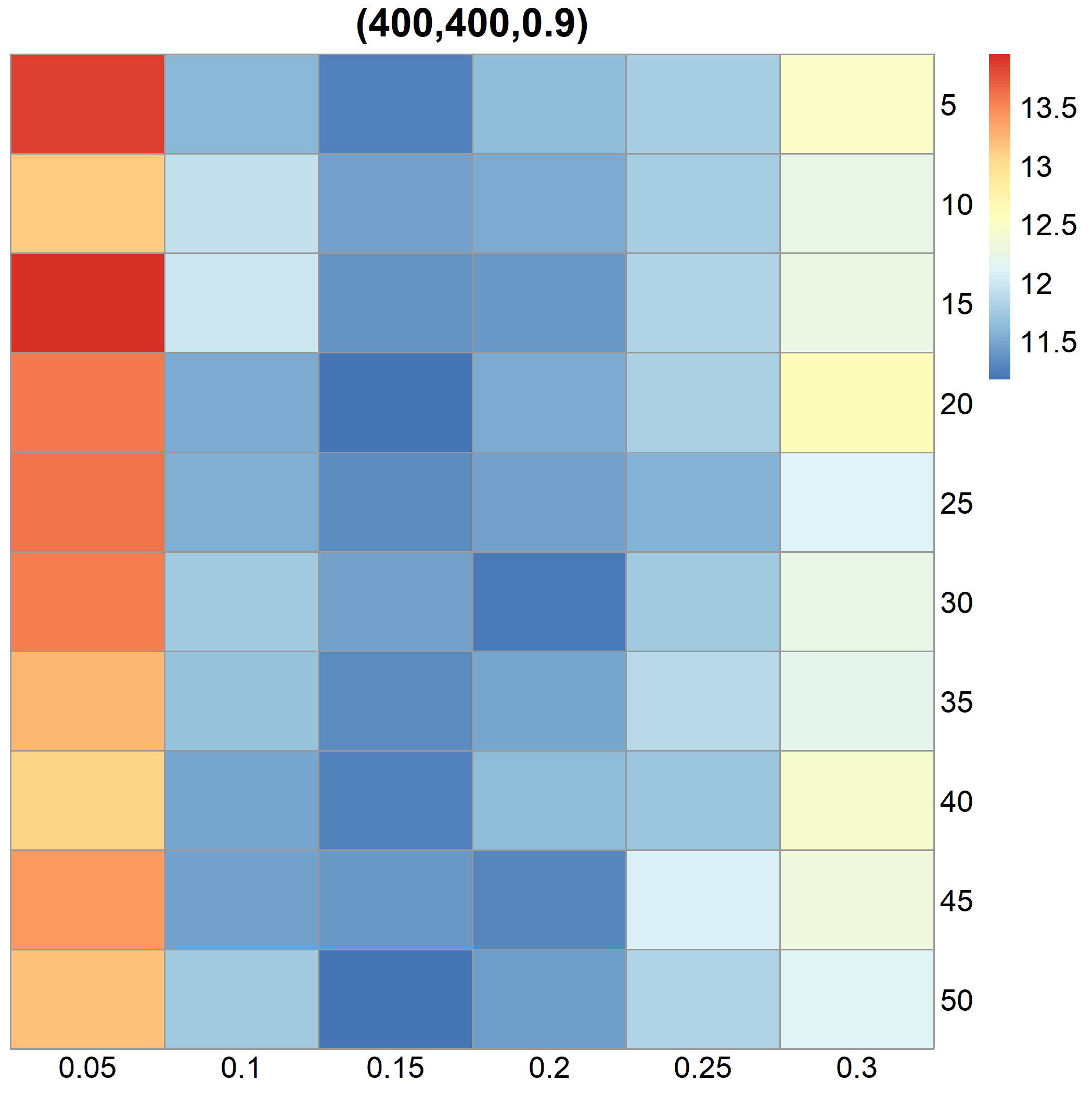}
		\includegraphics[width=0.24\linewidth]{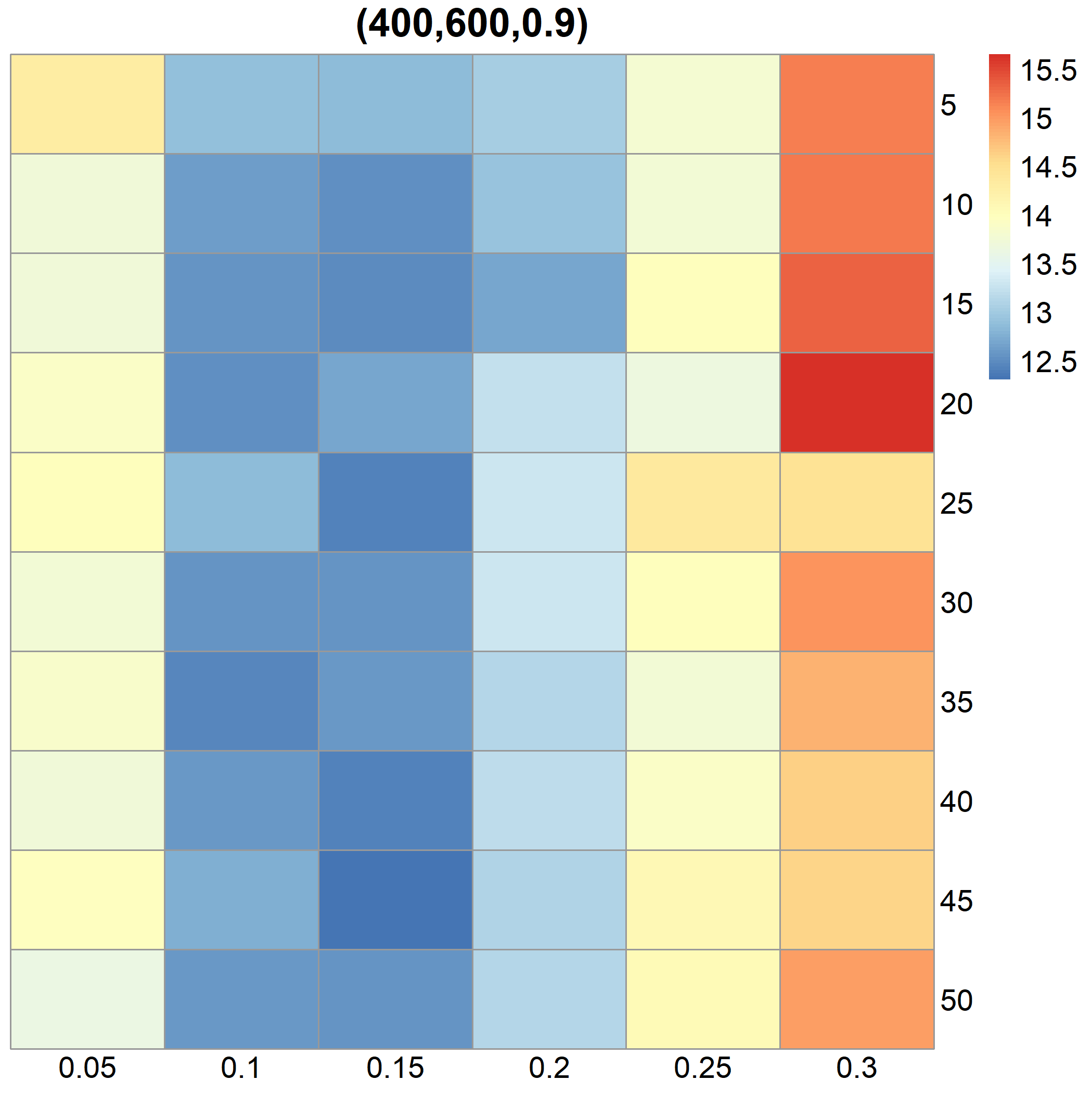}
		
		\vspace{3pt}
		\small (b) $N = 400$
	\end{minipage}
	
	\vspace{6pt}
	
	\begin{minipage}{0.95\textwidth}
		\centering
		\includegraphics[width=0.24\linewidth]{figs/3-FCOV-800-80-0.9.png}
		\includegraphics[width=0.24\linewidth]{figs/3-FCOV-800-400-0.9.png}
		\includegraphics[width=0.24\linewidth]{figs/3-FCOV-800-800-0.9.png}
		\includegraphics[width=0.24\linewidth]{figs/3-FCOV-800-1200-0.9.png}
		
		\vspace{3pt}
		\small (c) $N = 800$
	\end{minipage}
	
	\caption{Cross-validation results for Section \ref{sec3.2} under polynomial decay with $\rho = 0.9$. Values in parentheses denote $(N, K, \rho)$. The horizontal axis represents the selection probability $p$, and the vertical axis indicates the number of candidate models $M$. Darker regions correspond to $(p, M)$ combinations yielding lower cross-validation errors.}
	\label{fig:case0.9}
\end{figure*}

\newpage
\begin{figure*}[!h]
	\centering
	
	\begin{minipage}{0.95\textwidth}
		\centering
		\includegraphics[width=0.24\linewidth]{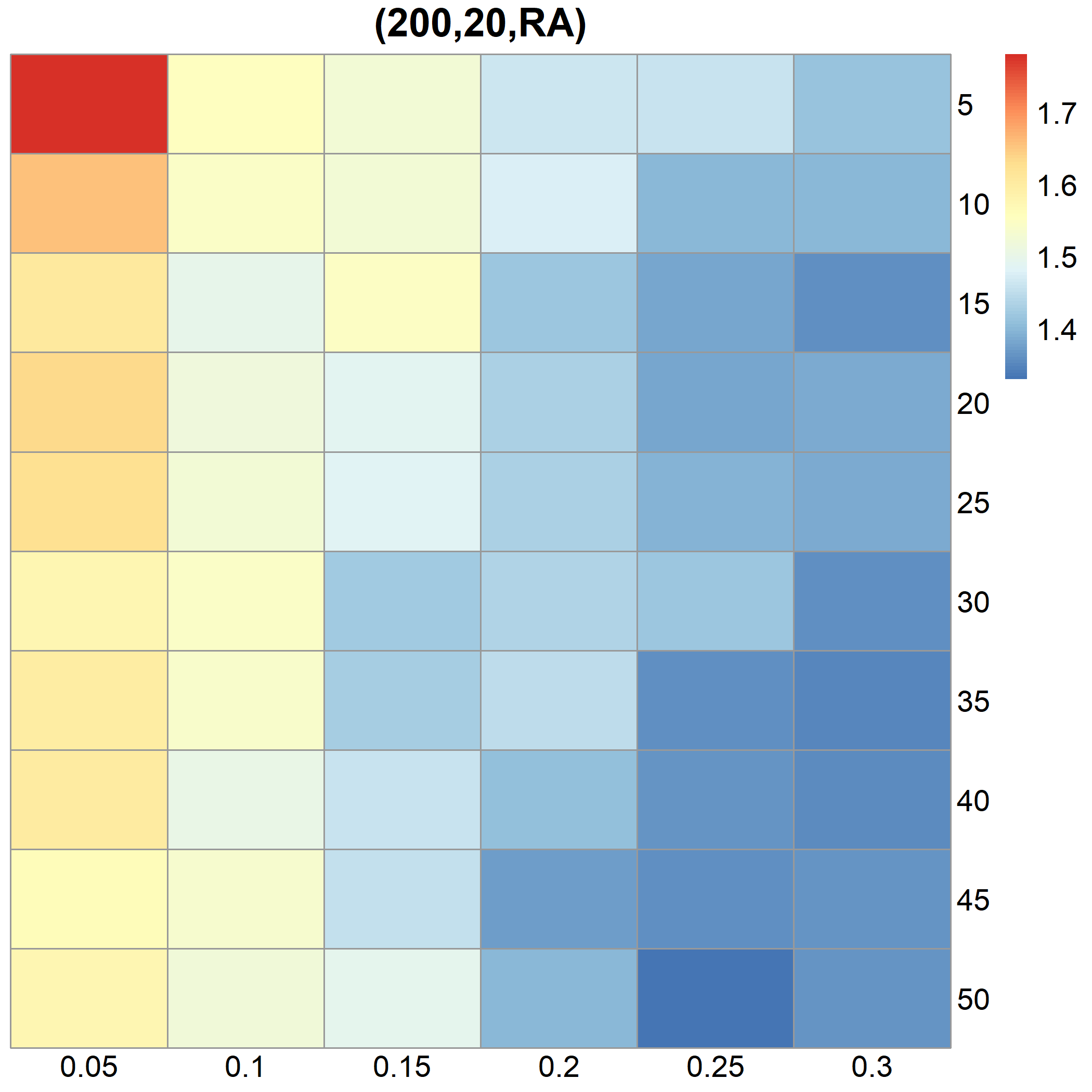}
		\includegraphics[width=0.24\linewidth]{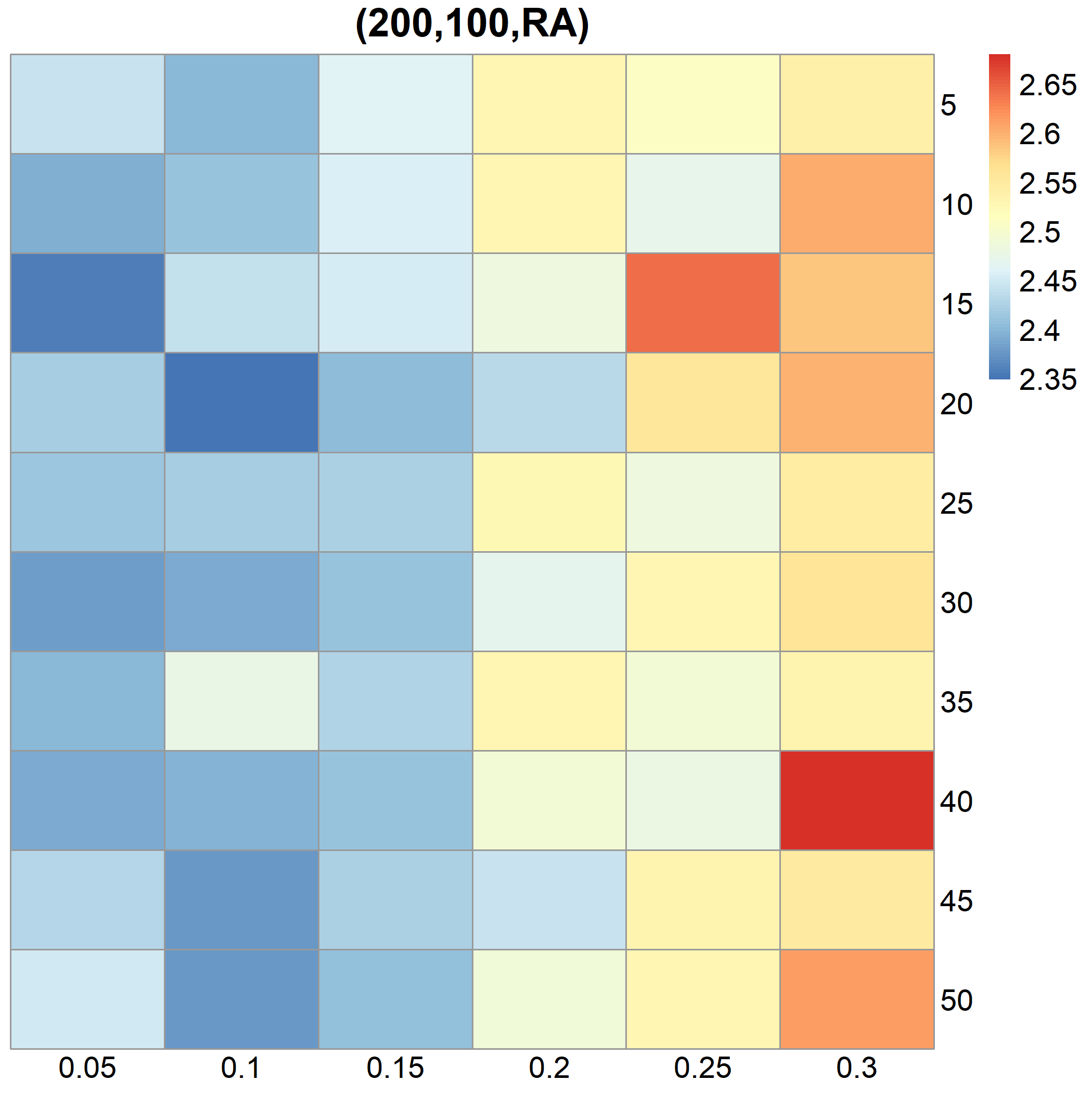}
		\includegraphics[width=0.24\linewidth]{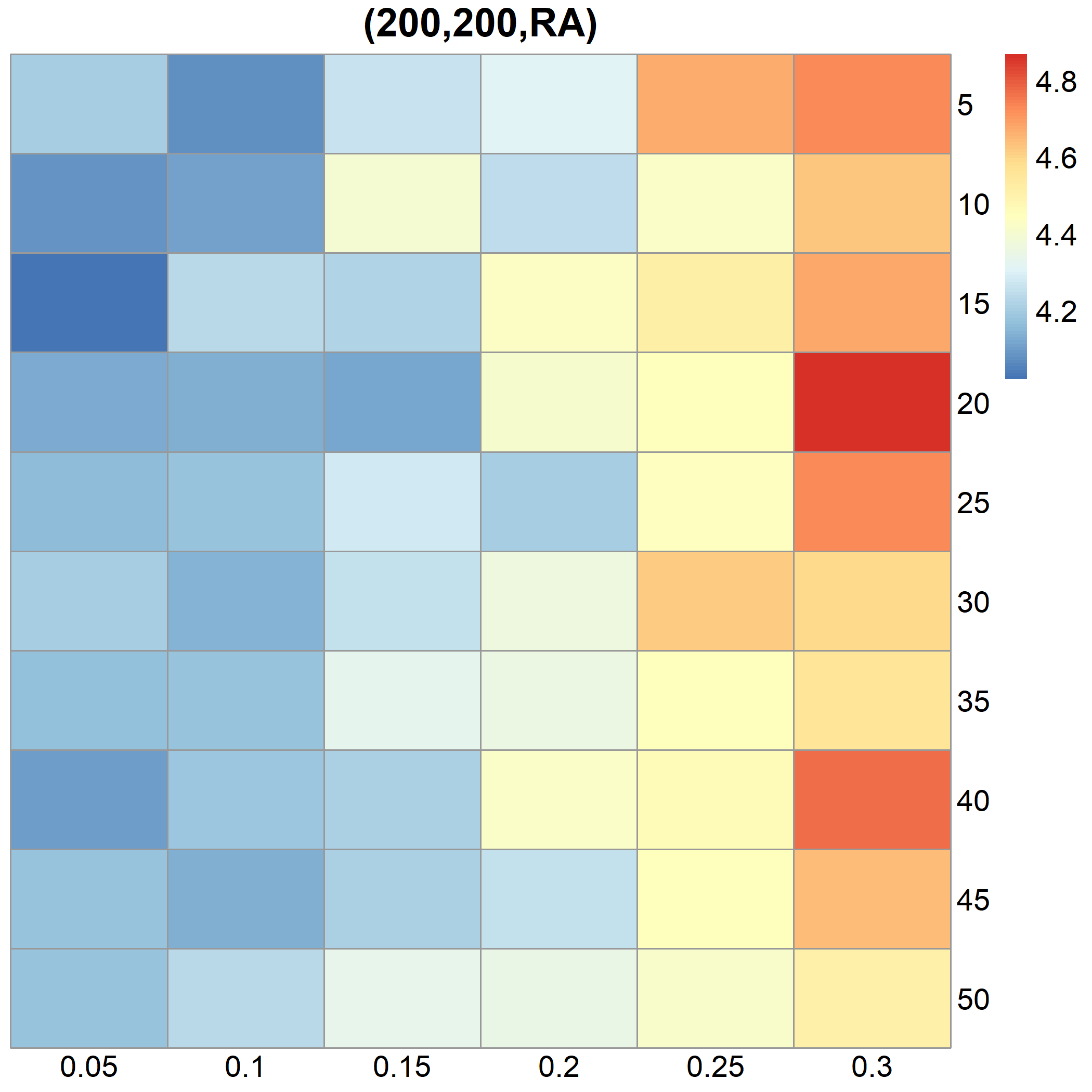}
		\includegraphics[width=0.24\linewidth]{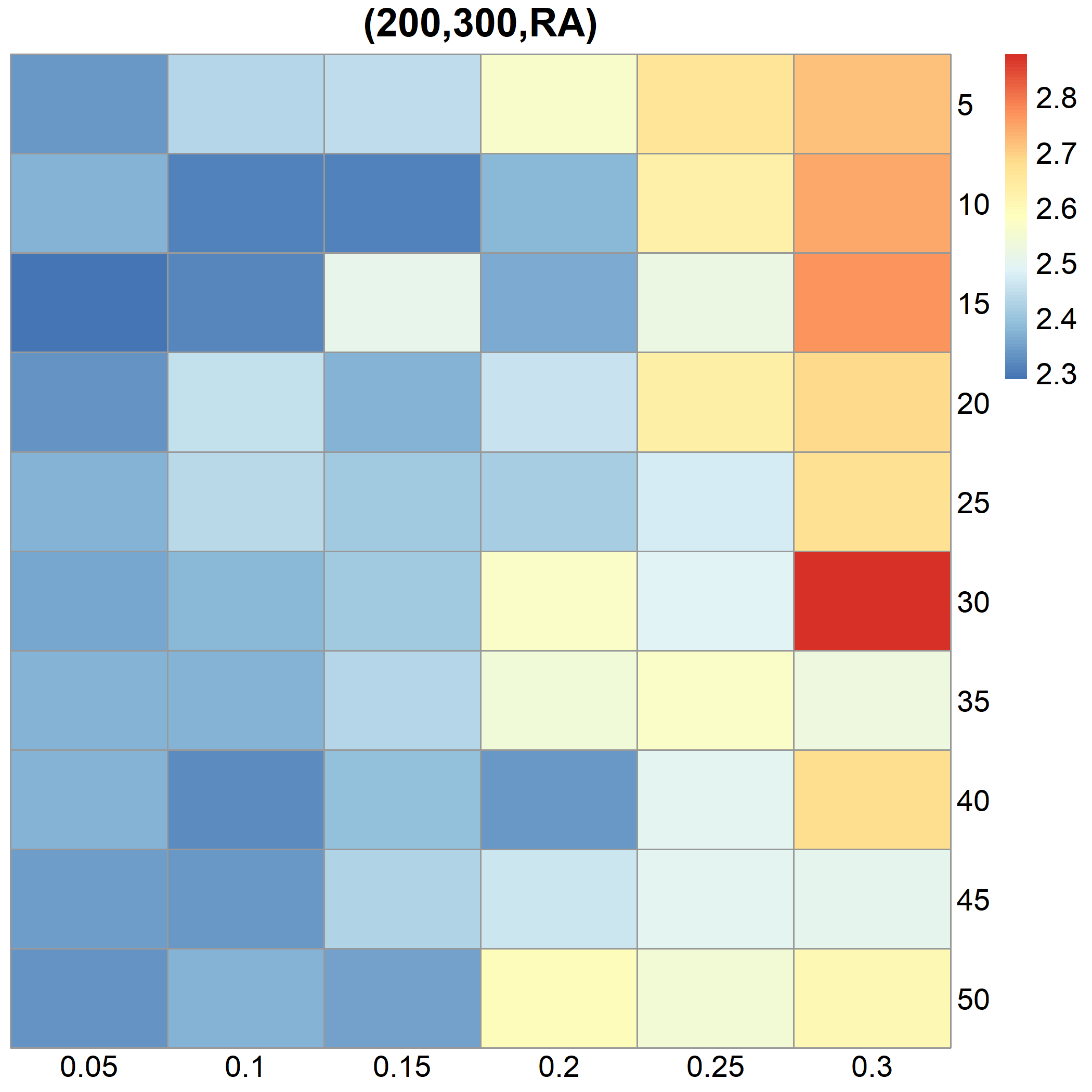}
		
		\vspace{3pt}
		\small (a) $N = 200$
	\end{minipage}
	
	\vspace{6pt}
	
	\begin{minipage}{0.95\textwidth}
		\centering
		\includegraphics[width=0.24\linewidth]{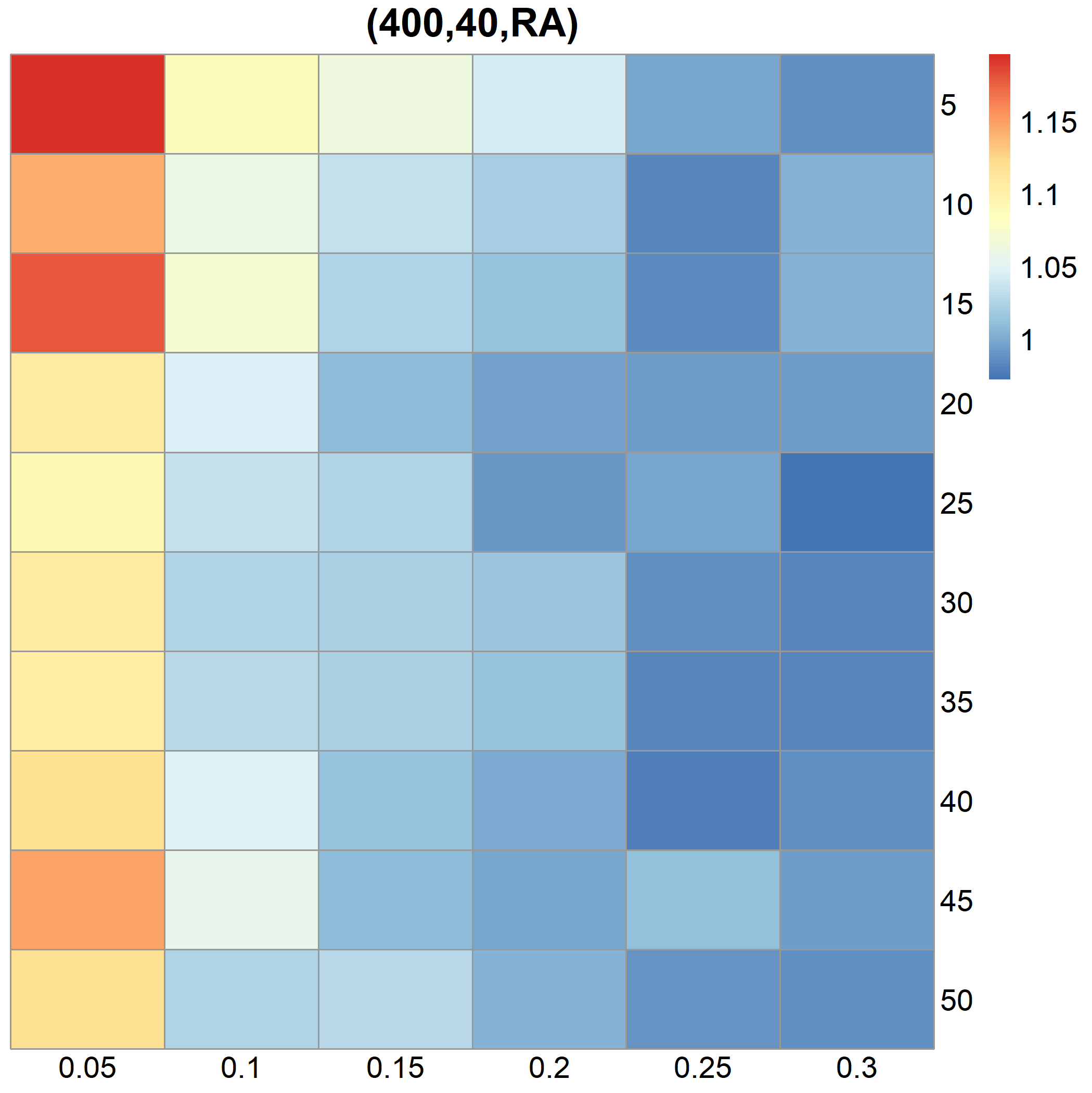}
		\includegraphics[width=0.24\linewidth]{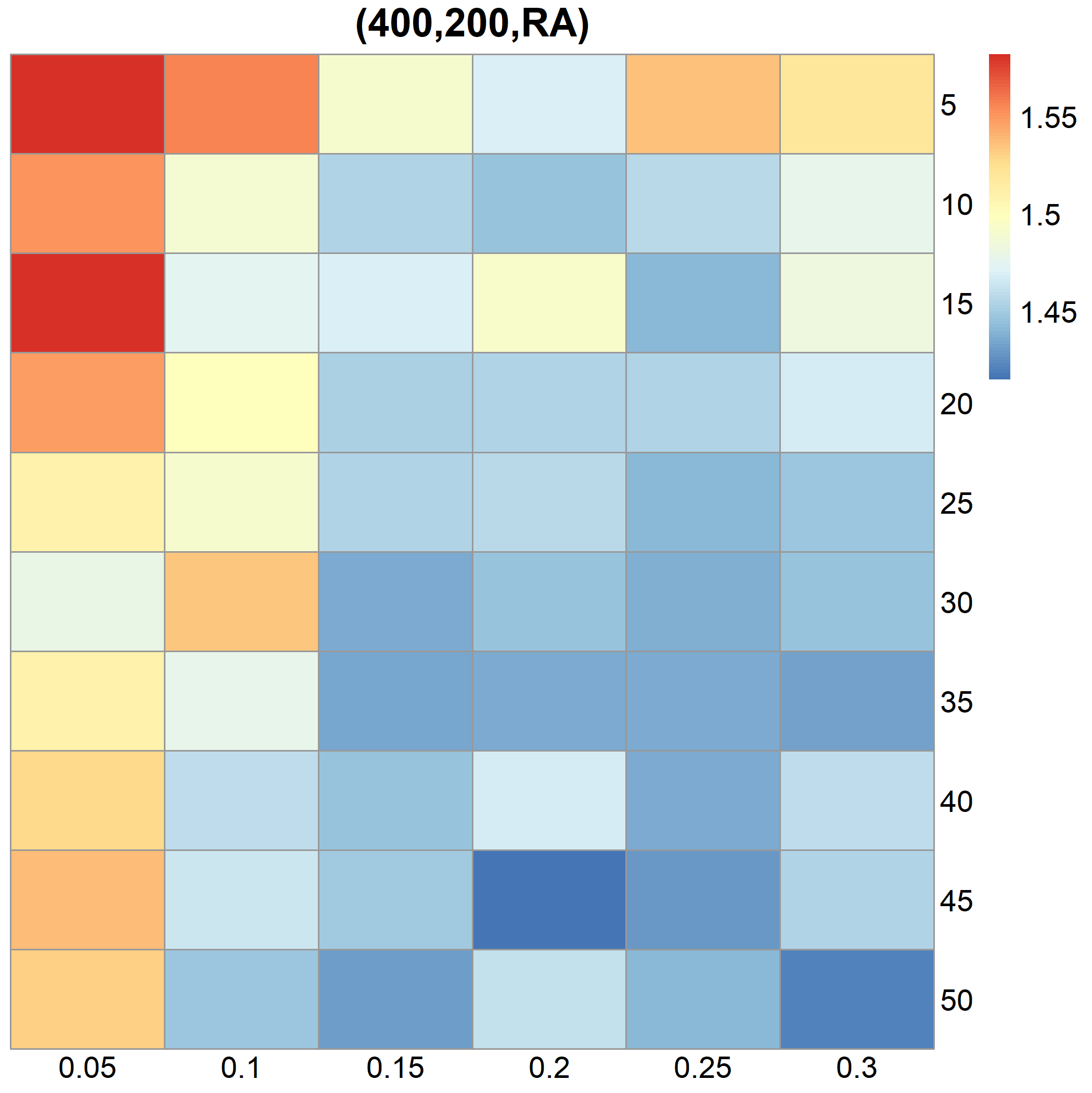}
		\includegraphics[width=0.24\linewidth]{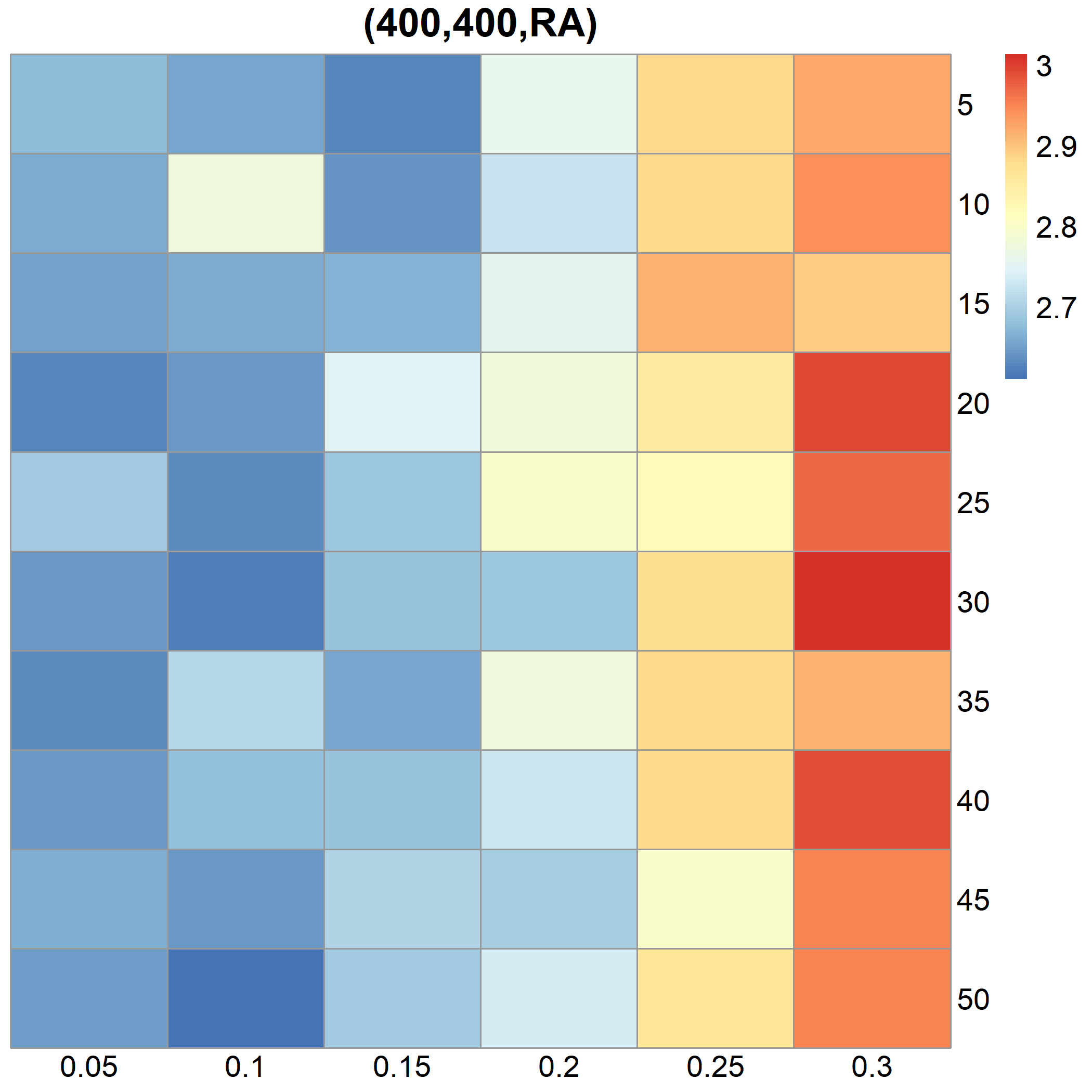}
		\includegraphics[width=0.24\linewidth]{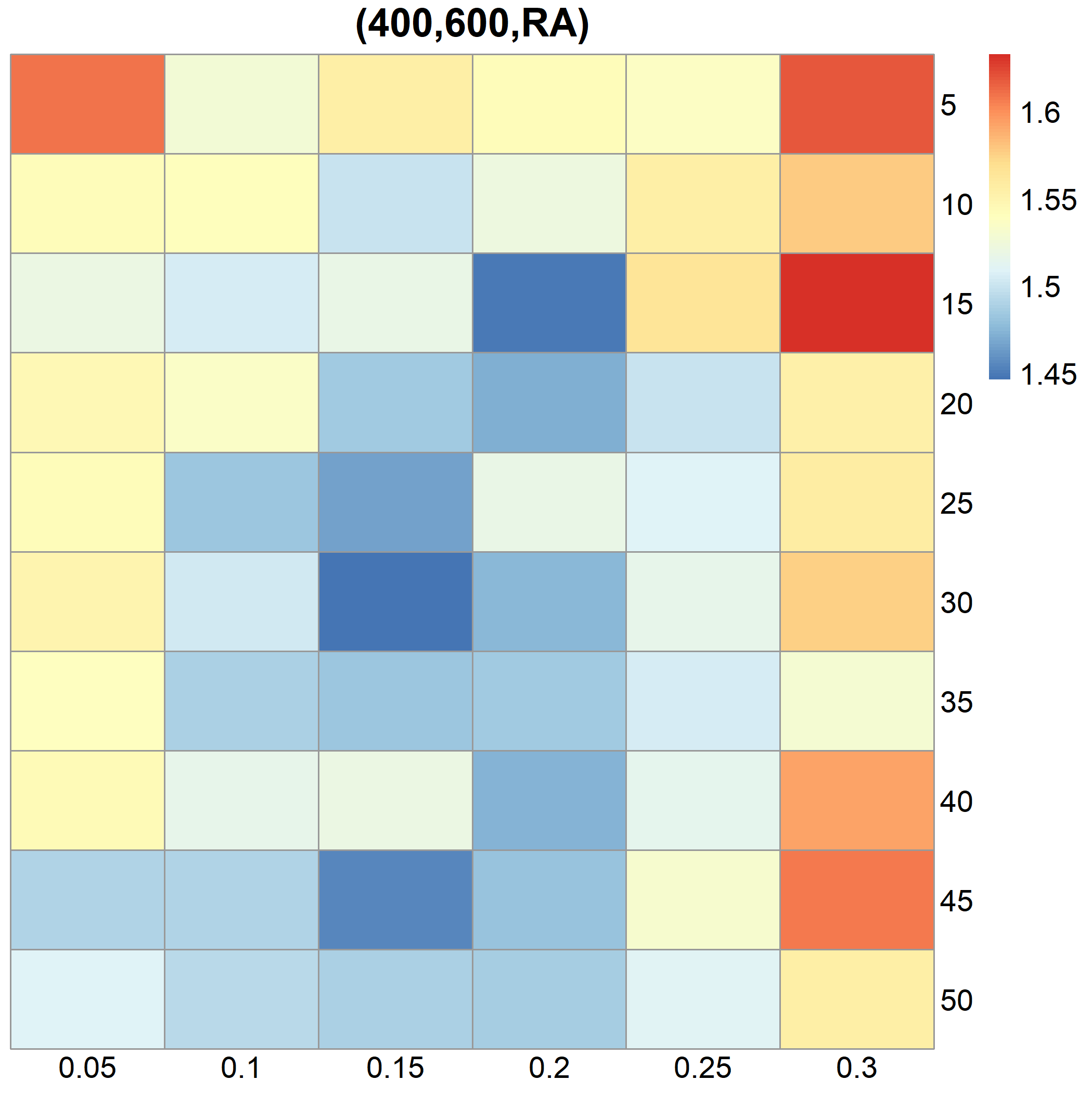}
		
		\vspace{3pt}
		\small (b) $N = 400$
	\end{minipage}
	
	\vspace{6pt}
	
	\begin{minipage}{0.95\textwidth}
		\centering
		\includegraphics[width=0.24\linewidth]{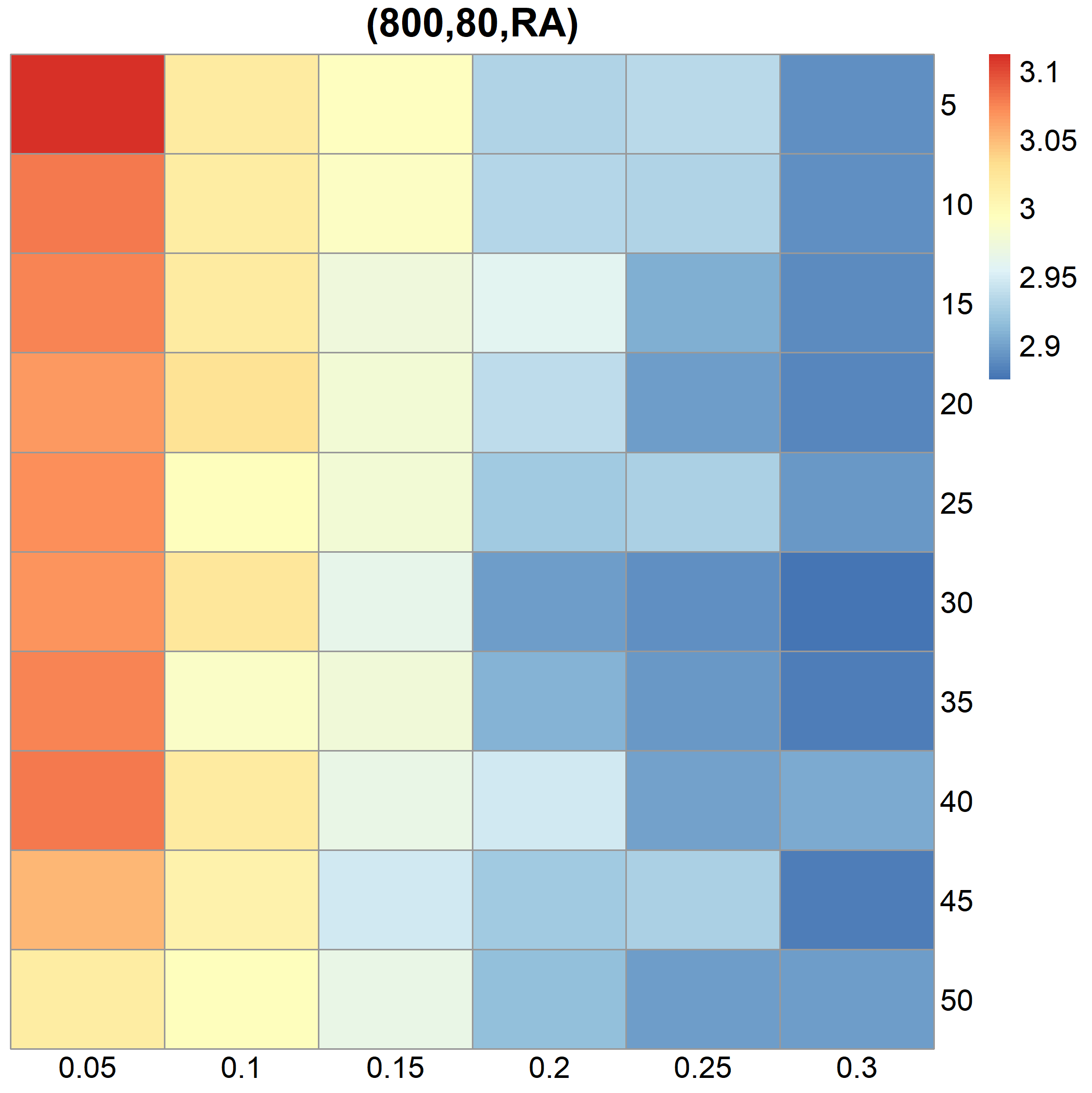}
		\includegraphics[width=0.24\linewidth]{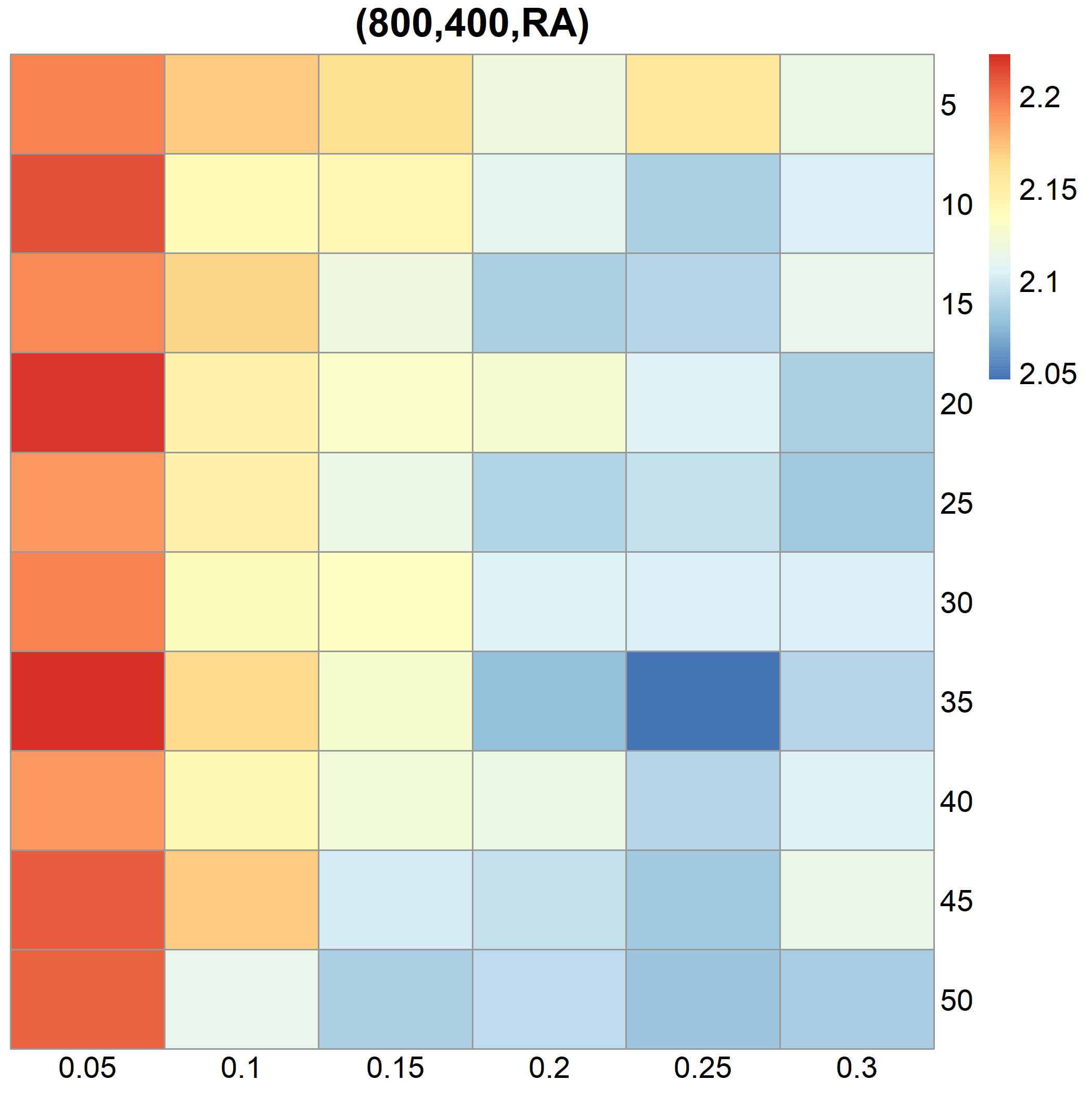}
		\includegraphics[width=0.24\linewidth]{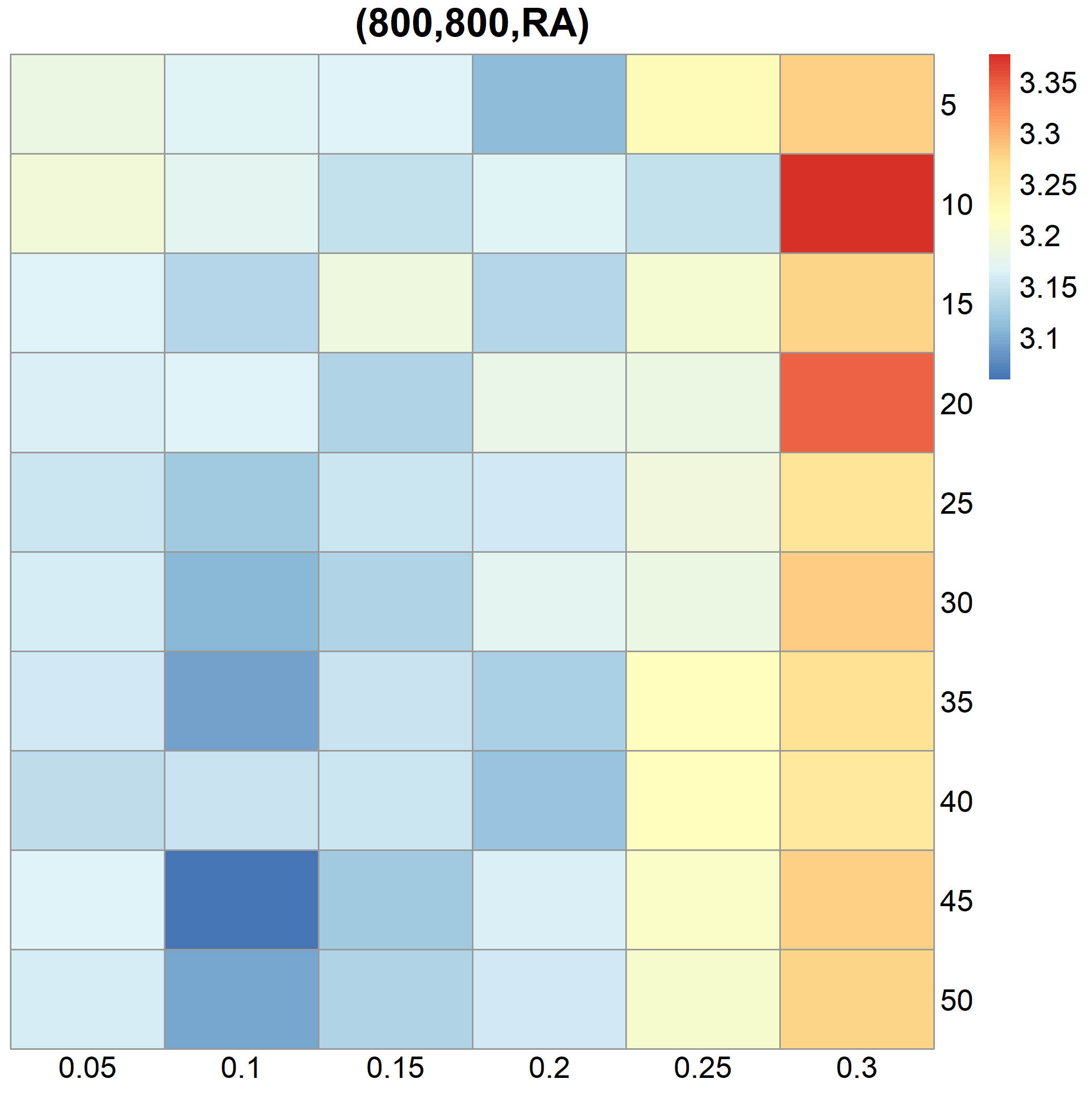}
		\includegraphics[width=0.24\linewidth]{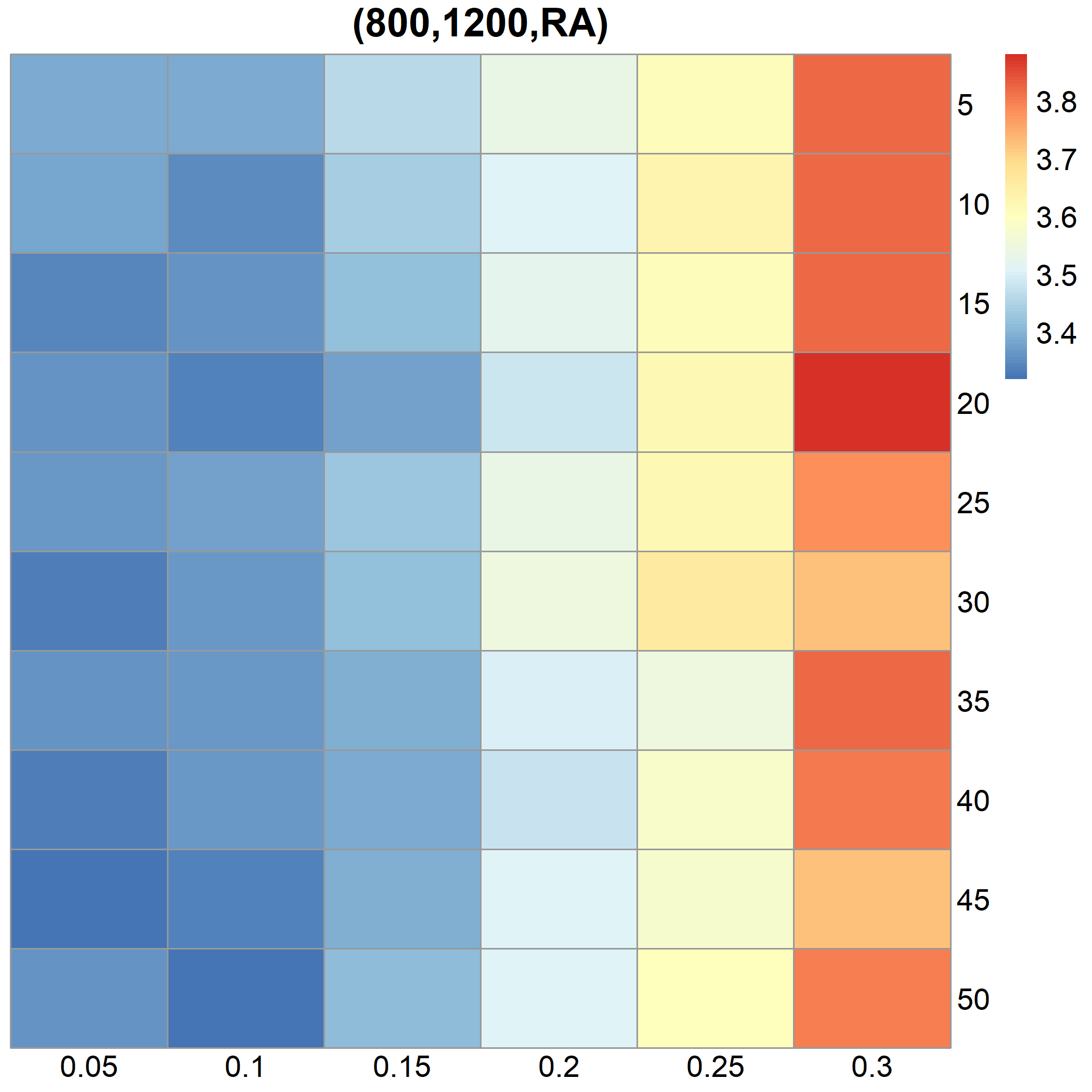}
		
		\vspace{3pt}
		\small (c) $N = 800$
	\end{minipage}
	
	\caption{Cross-validation results for Section \ref{sec3.2} under polynomial decay with a random covariance structure. Values in parentheses denote $(N, K, \rho)$. The horizontal axis represents the selection probability $p$, and the vertical axis indicates the number of candidate models $M$. Darker regions correspond to $(p, M)$ combinations yielding lower cross-validation errors.}
	\label{fig:caseepolyRA}
\end{figure*}

\newpage

Figures~\ref{fig:caseexp0.1}, \ref{fig:caseexp0.9}, and \ref{fig:caseexpRA} report the cross-validation results under three different scenarios: when the covariates are weakly correlated, highly correlated, and randomly correlated, respectively, with exponentially decaying regression coefficients. These findings underscore the sensitivity of the RSA estimator to the choice of selection probability across different dependence structures, while also corroborating the importance and effectiveness of the cross-validation procedure.
\begin{figure*}[!h]
	\centering
	
	\begin{minipage}{0.95\textwidth}
		\centering
		\includegraphics[width=0.24\linewidth]{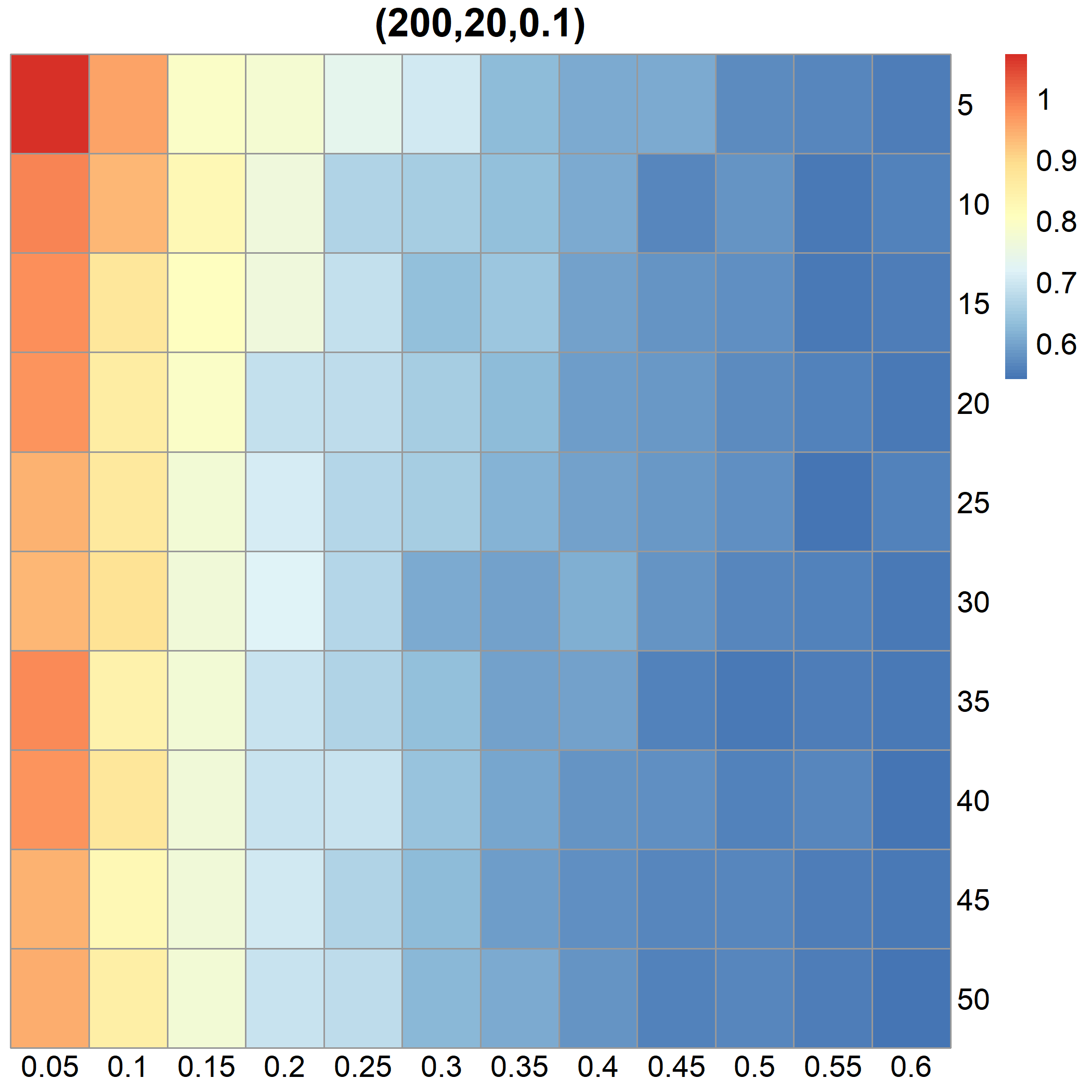}
		\includegraphics[width=0.24\linewidth]{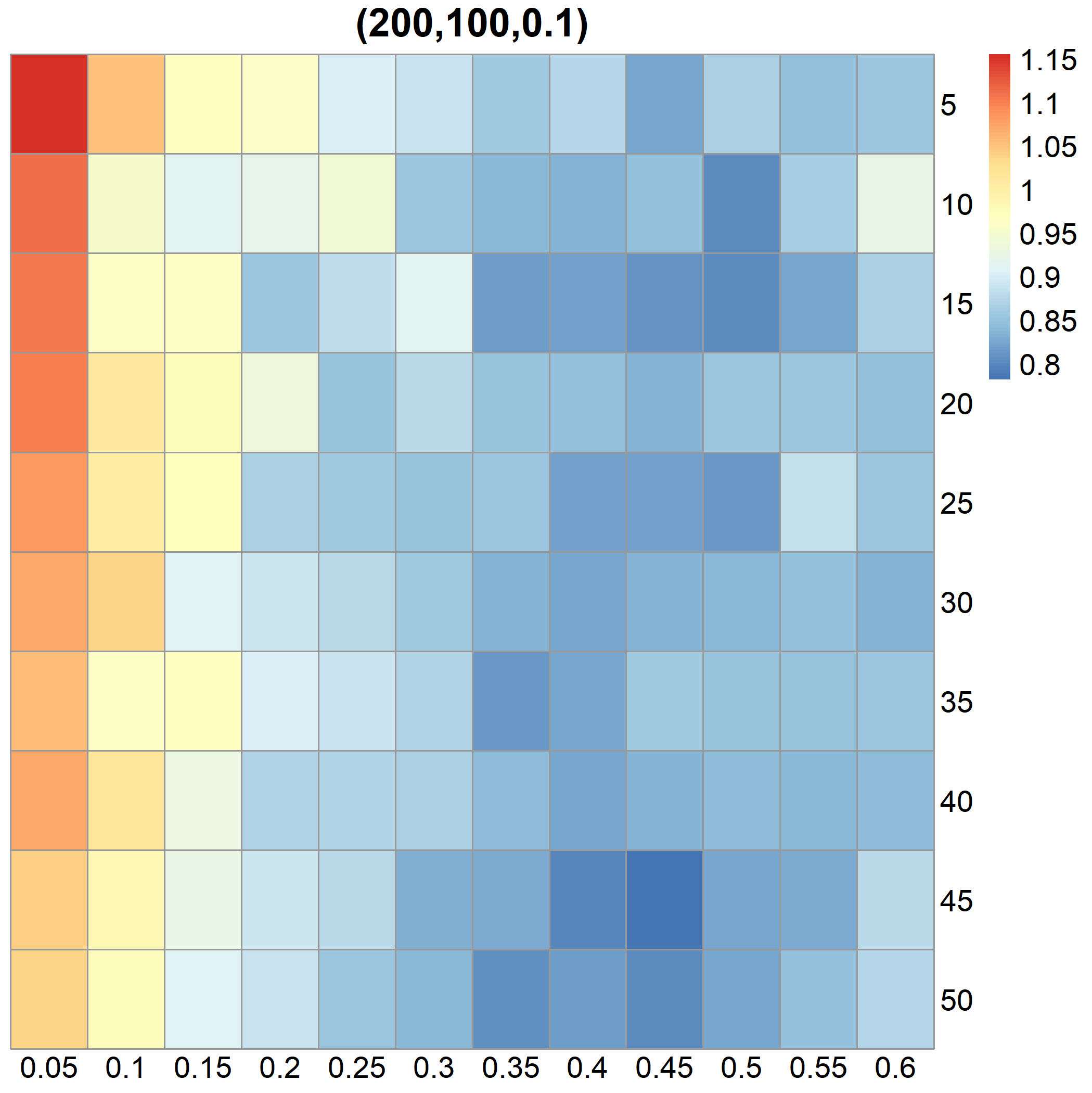}
		\includegraphics[width=0.24\linewidth]{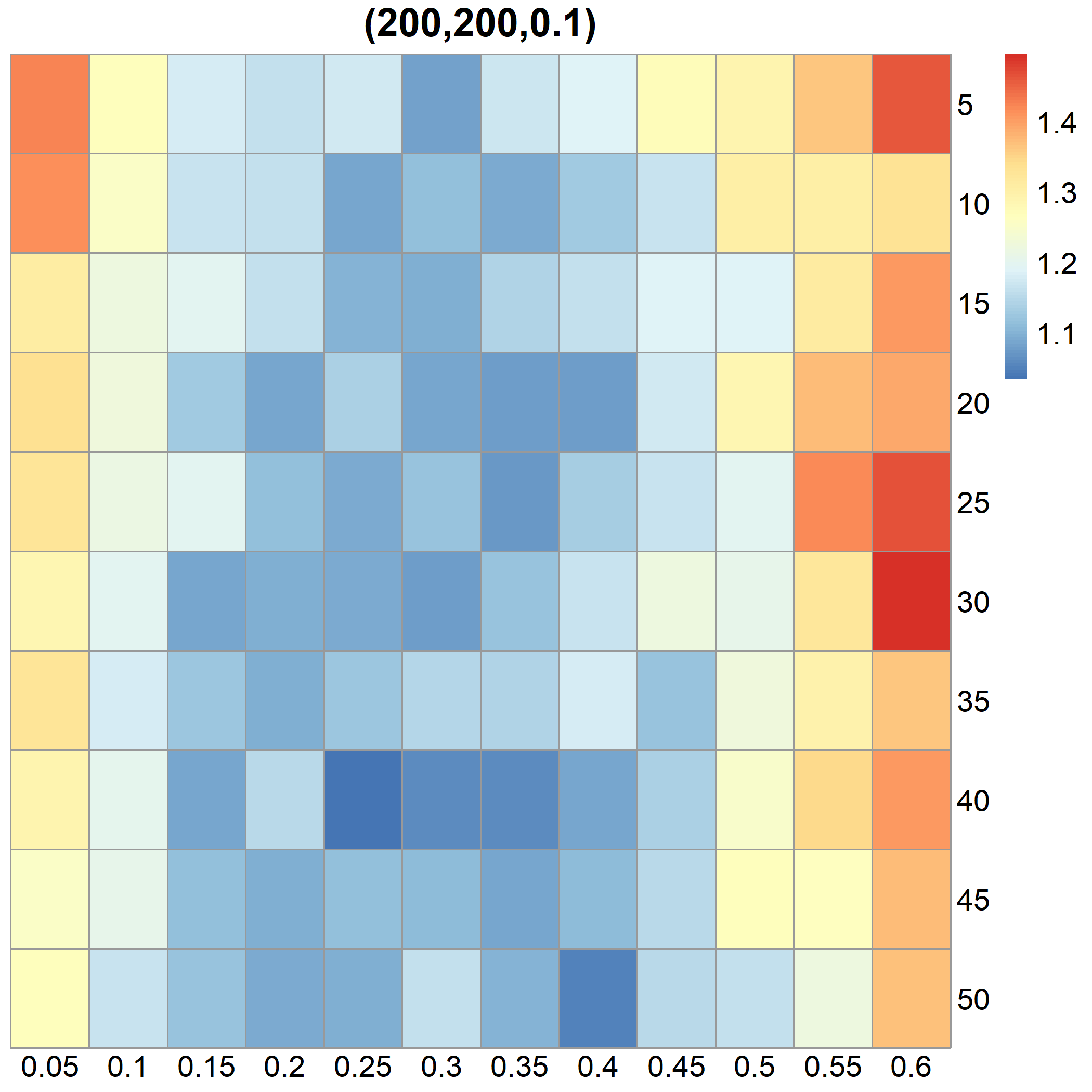}
		\includegraphics[width=0.24\linewidth]{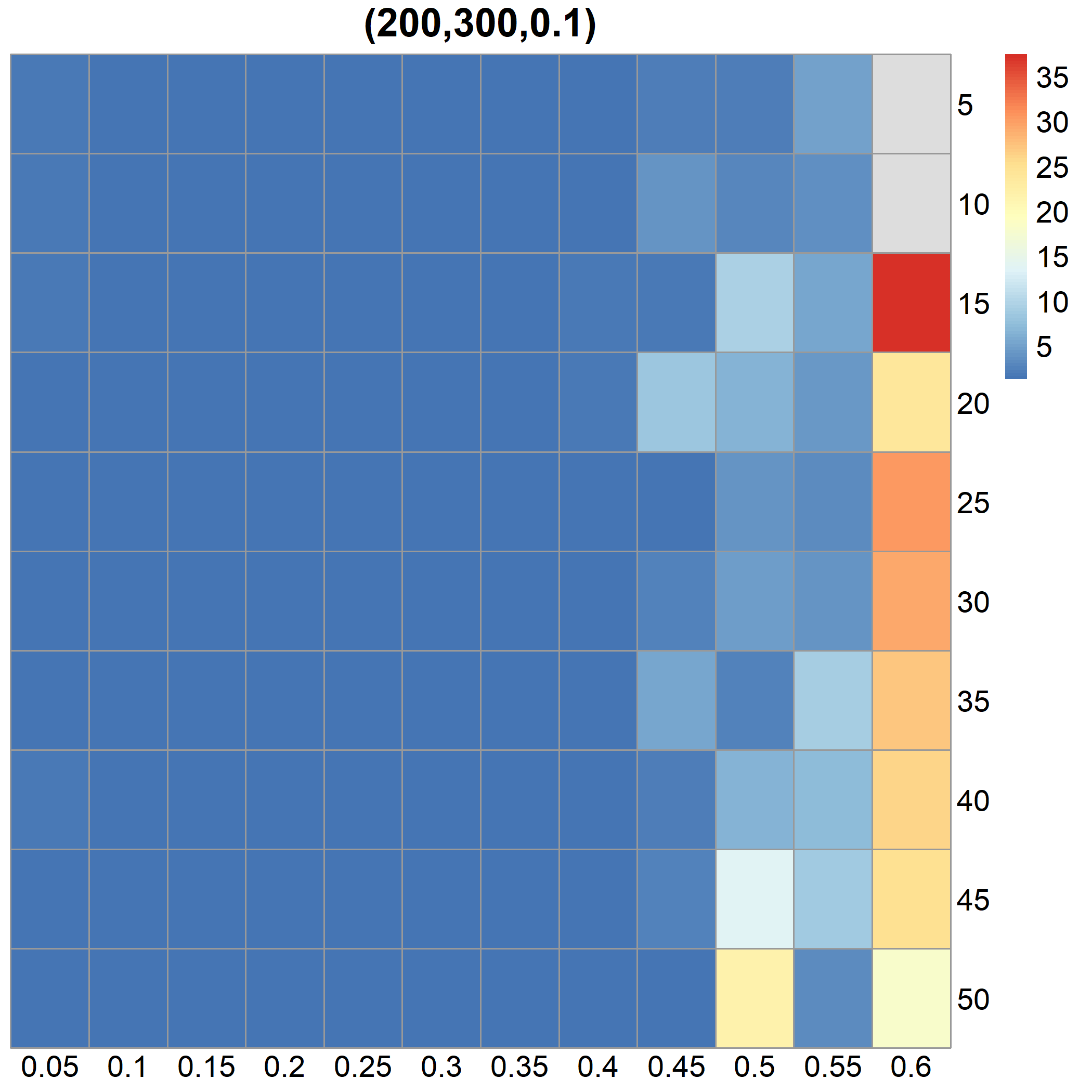}
		
		\vspace{3pt}
		\small (a) $N = 200$
	\end{minipage}
	
	\vspace{6pt}
	
	\begin{minipage}{0.95\textwidth}
		\centering
		\includegraphics[width=0.24\linewidth]{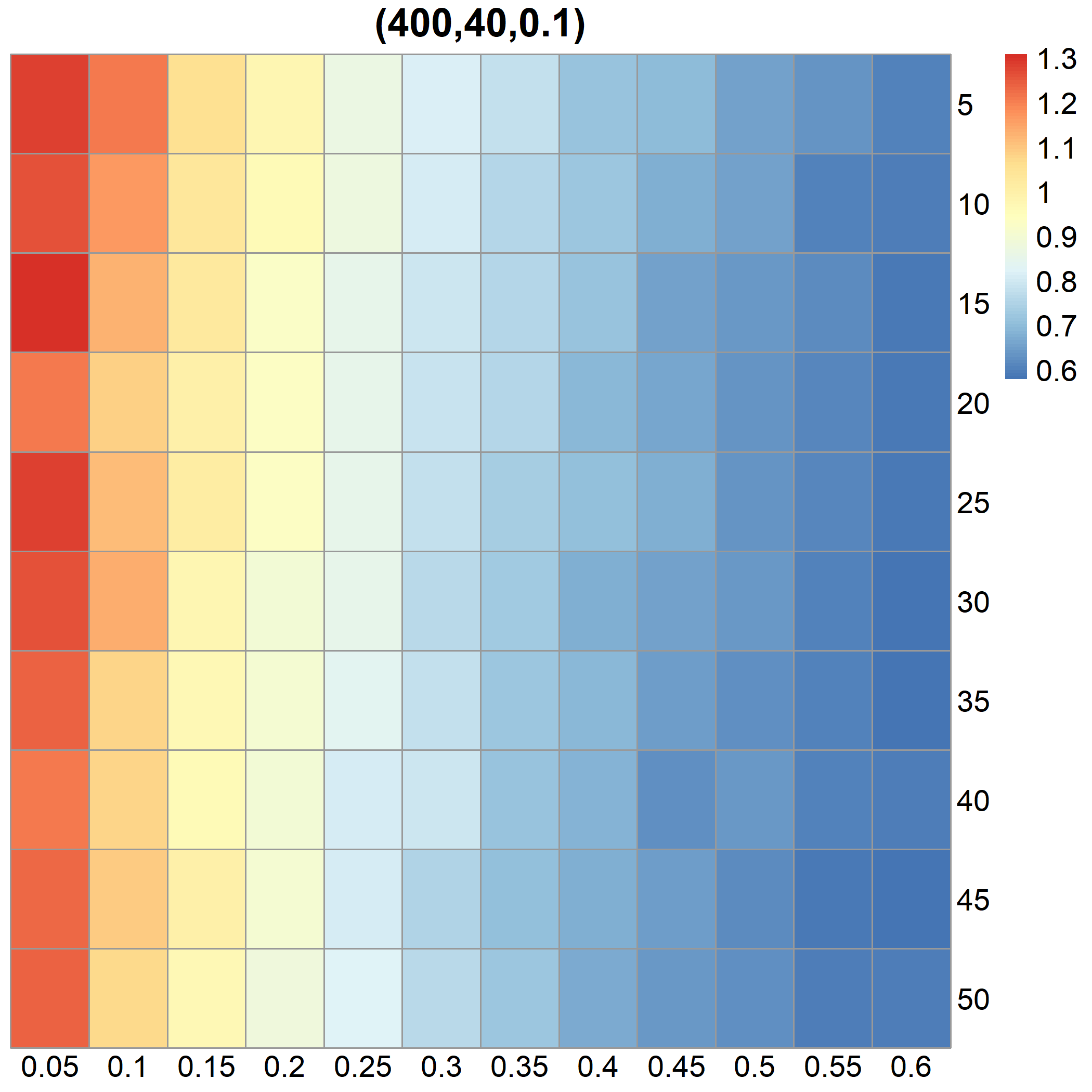}
		\includegraphics[width=0.24\linewidth]{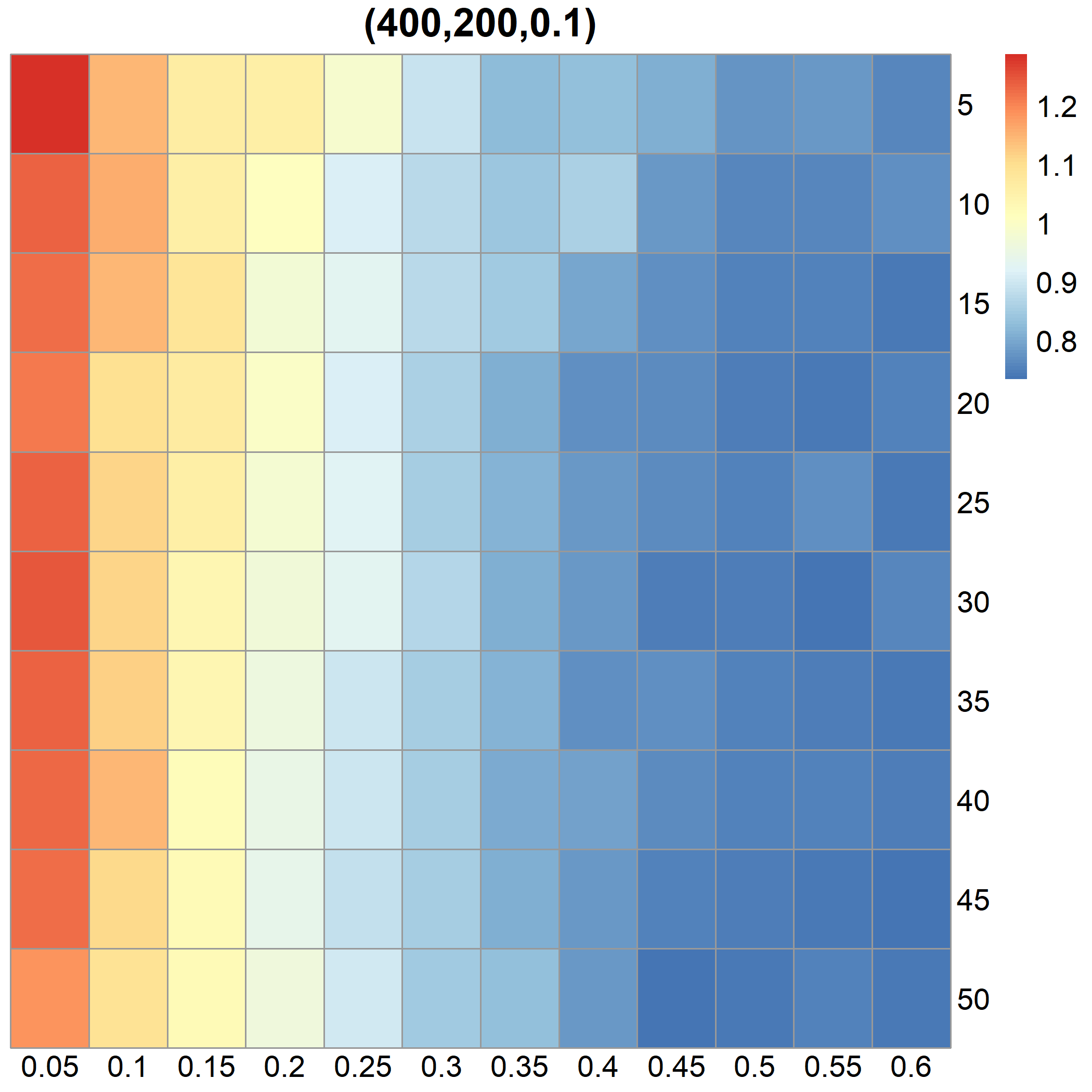}
		\includegraphics[width=0.24\linewidth]{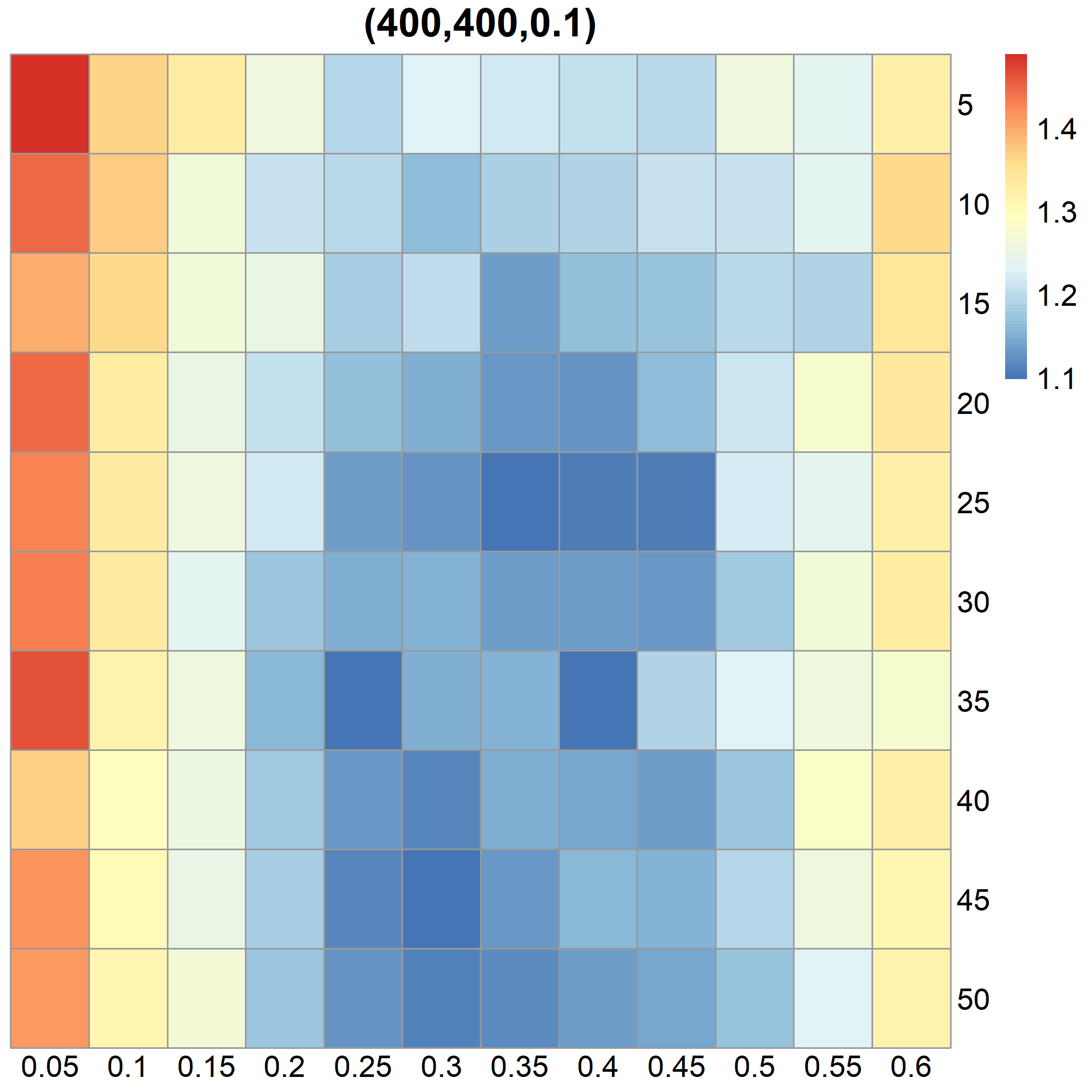}
		\includegraphics[width=0.24\linewidth]{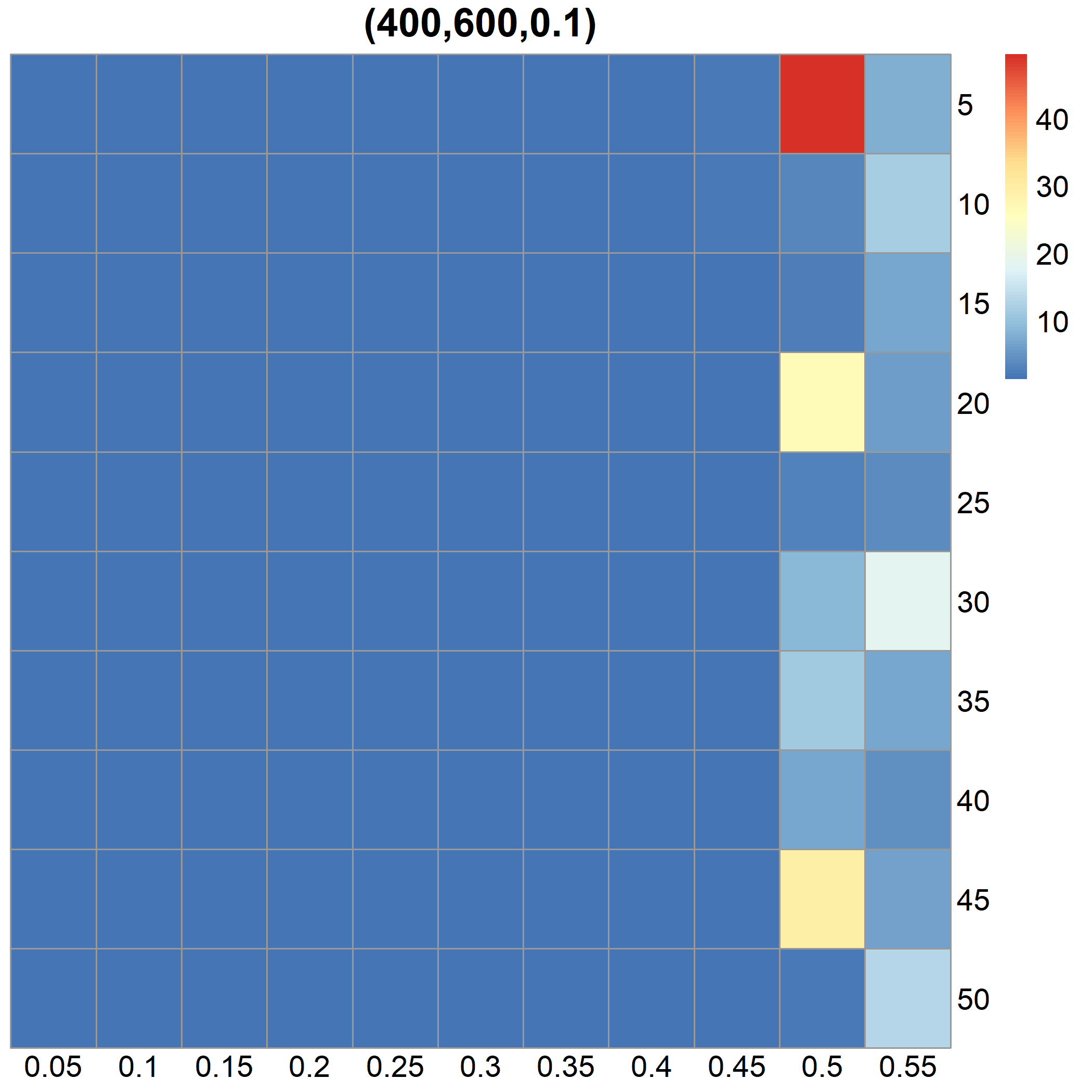}
		
		\vspace{3pt}
		\small (b) $N = 400$
	\end{minipage}
	
	\vspace{6pt}
	
	\begin{minipage}{0.95\textwidth}
		\centering
		\includegraphics[width=0.24\linewidth]{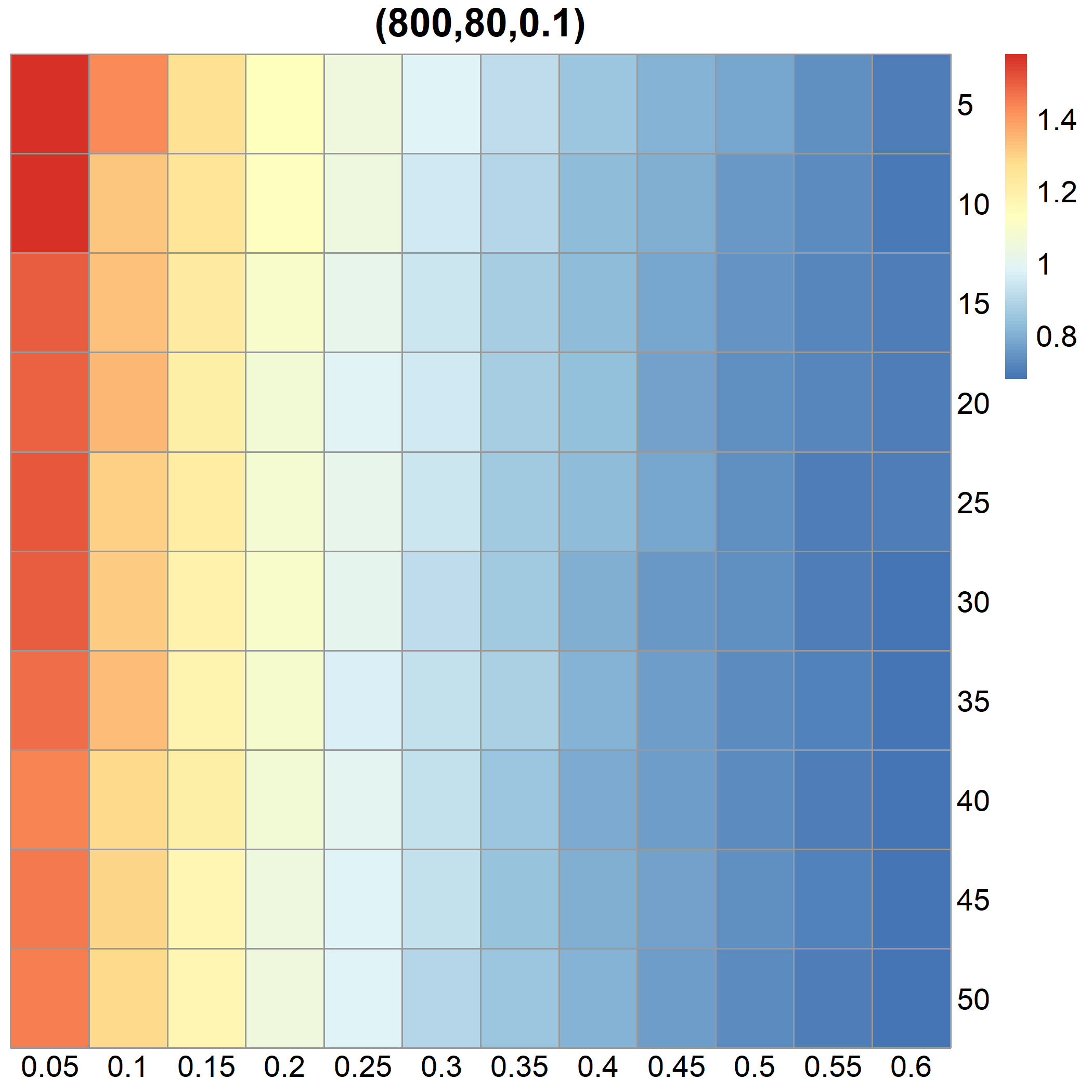}
		\includegraphics[width=0.24\linewidth]{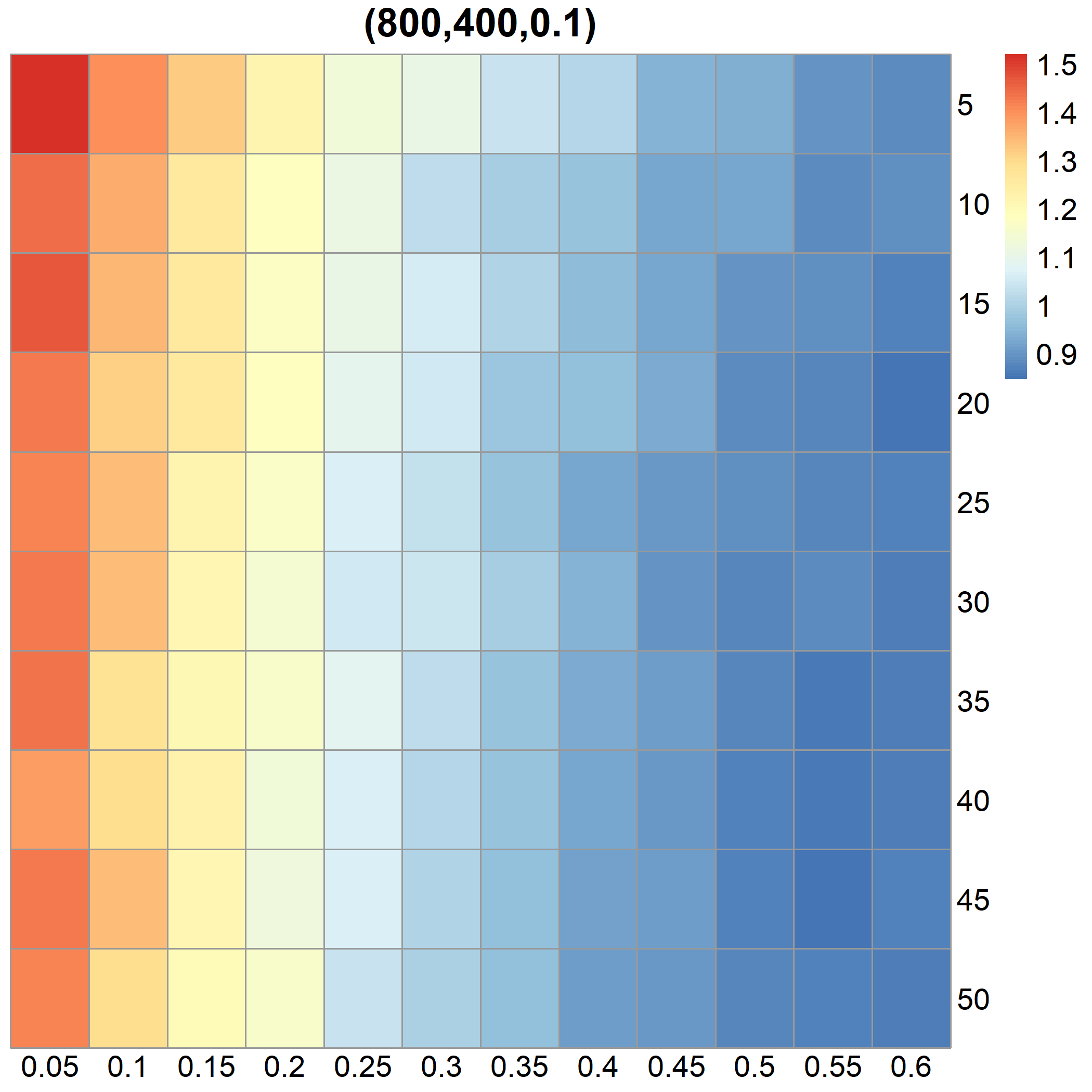}
		\includegraphics[width=0.24\linewidth]{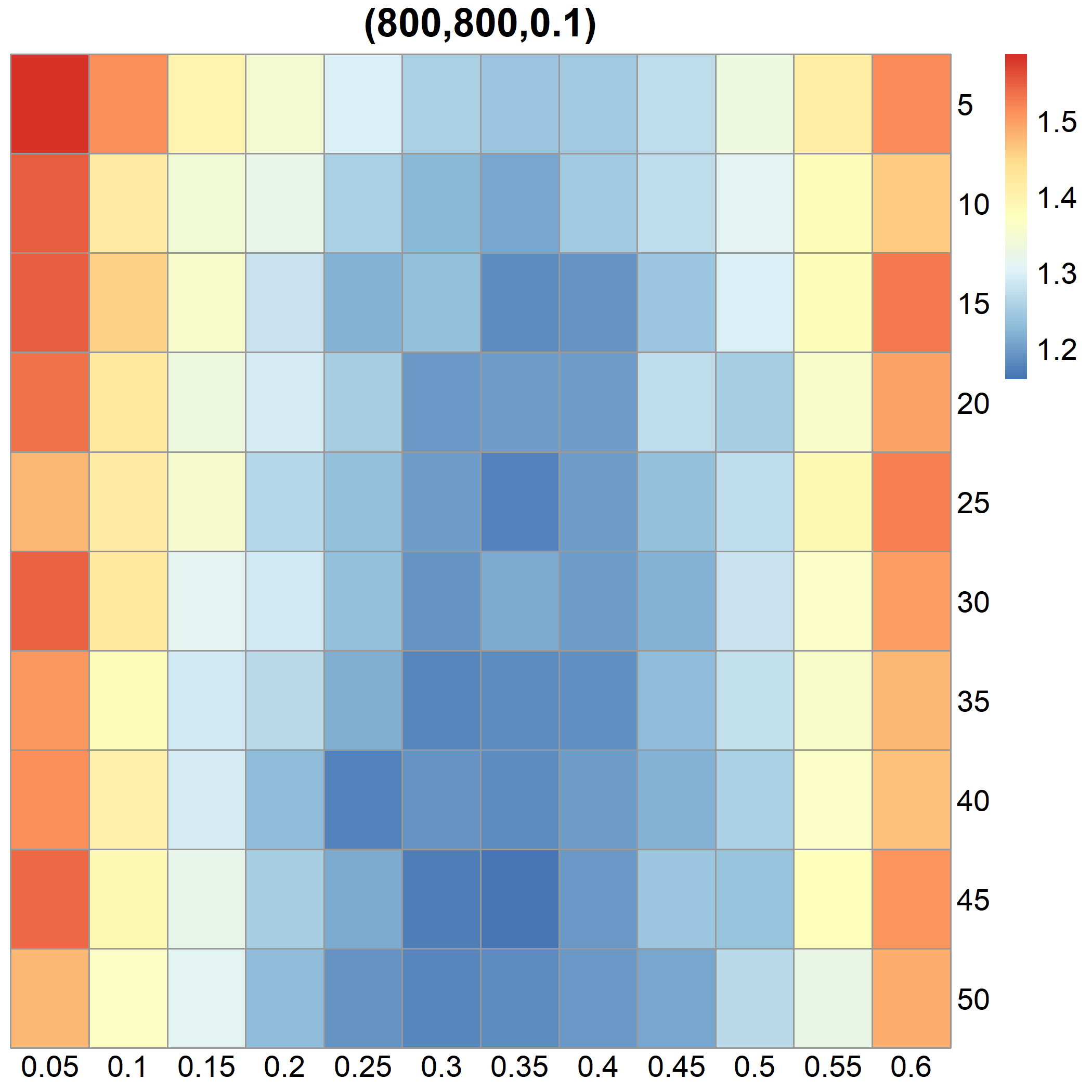}
		\includegraphics[width=0.24\linewidth]{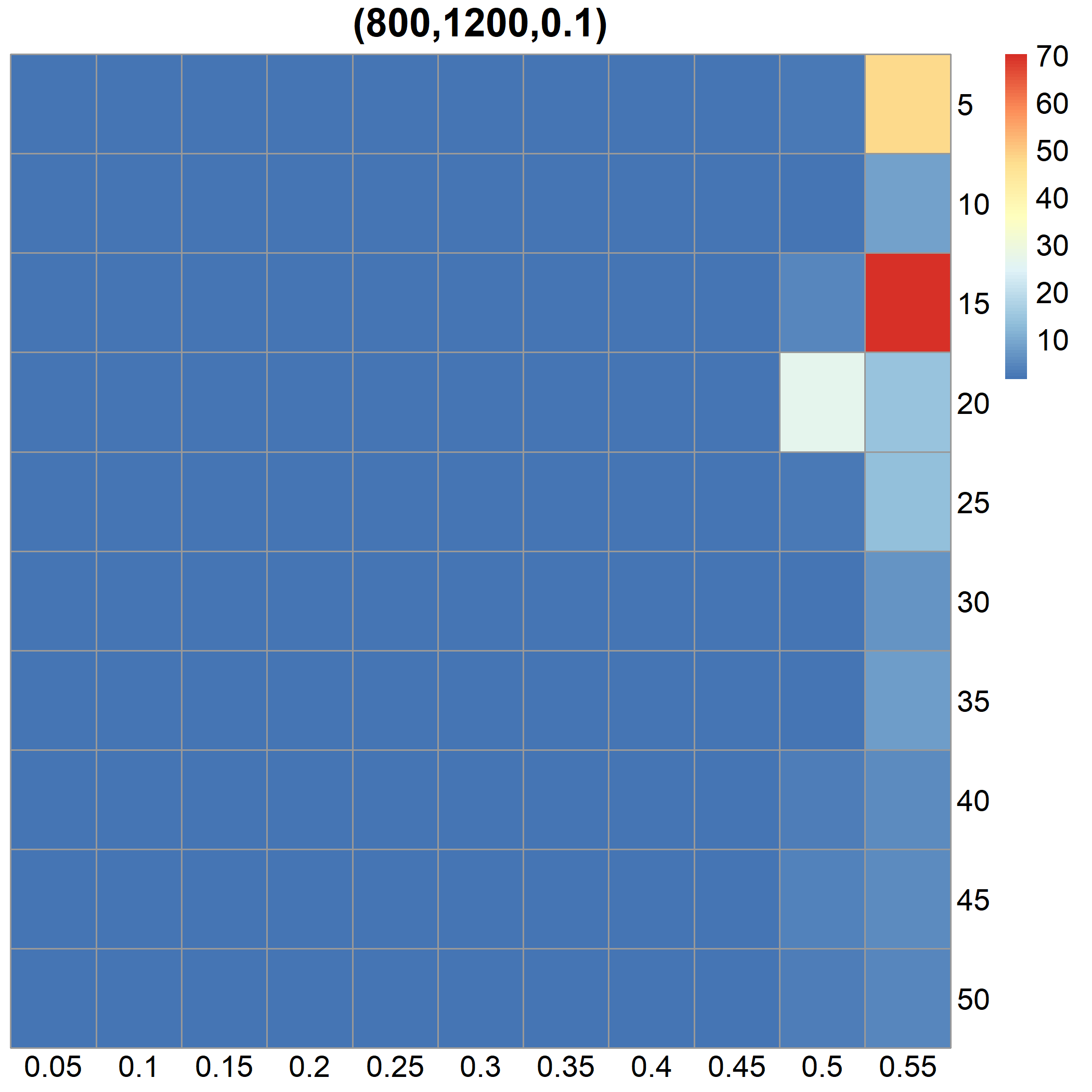}
		
		\vspace{3pt}
		\small (c) $N = 800$
	\end{minipage}
	
	\caption{Cross-validation results for Section \ref{sec3.2} under exponential decay with $\rho = 0.1$. Values in parentheses denote $(N, K, \rho)$. The horizontal axis represents the selection probability $p$, and the vertical axis indicates the number of candidate models $M$. Darker regions correspond to $(p, M)$ combinations yielding lower cross-validation errors.}
	\label{fig:caseexp0.1}
\end{figure*}

\newpage
\begin{figure*}[!h]
	\centering
	
	\begin{minipage}{0.95\textwidth}
		\centering
		\includegraphics[width=0.24\linewidth]{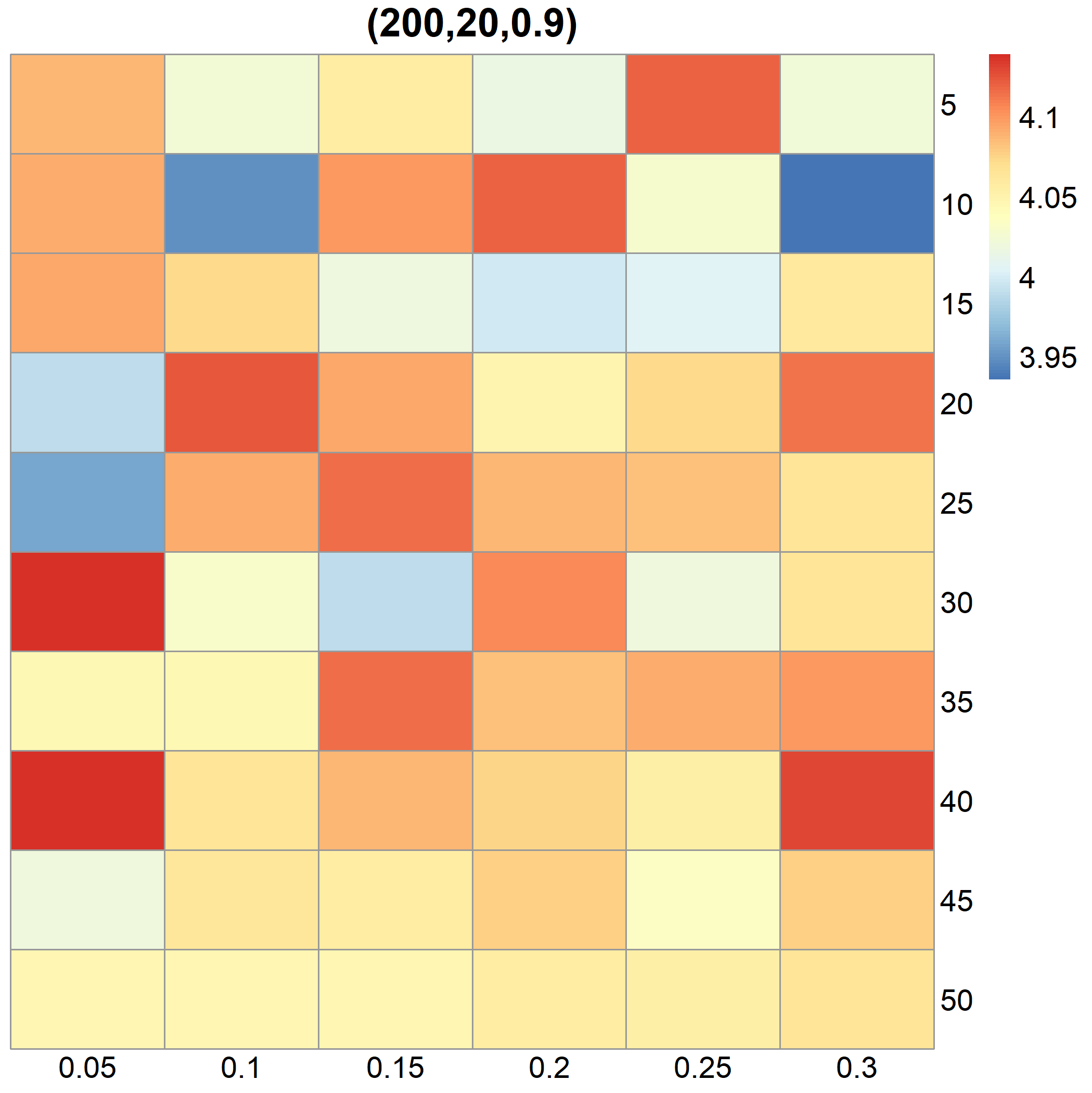}
		\includegraphics[width=0.24\linewidth]{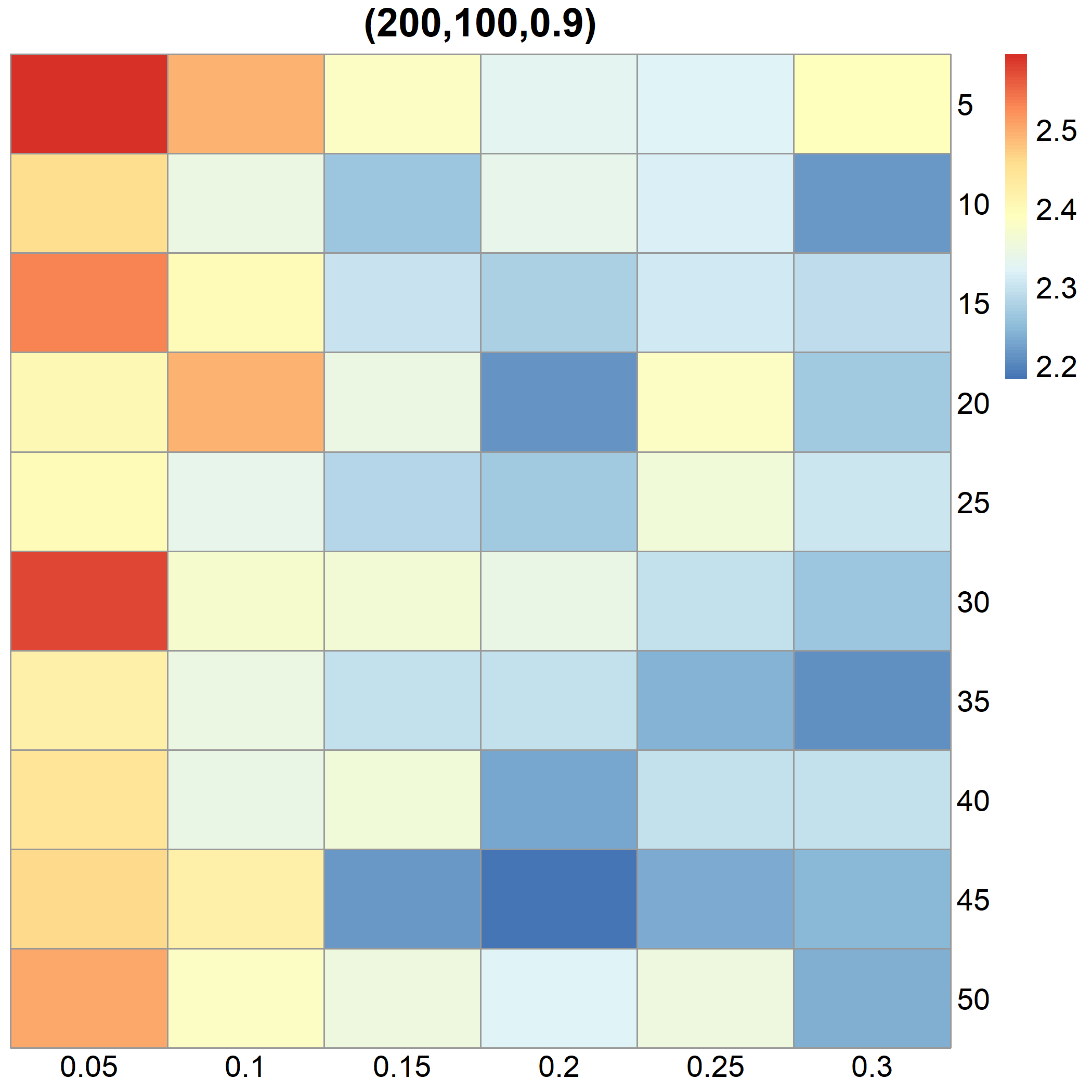}
		\includegraphics[width=0.24\linewidth]{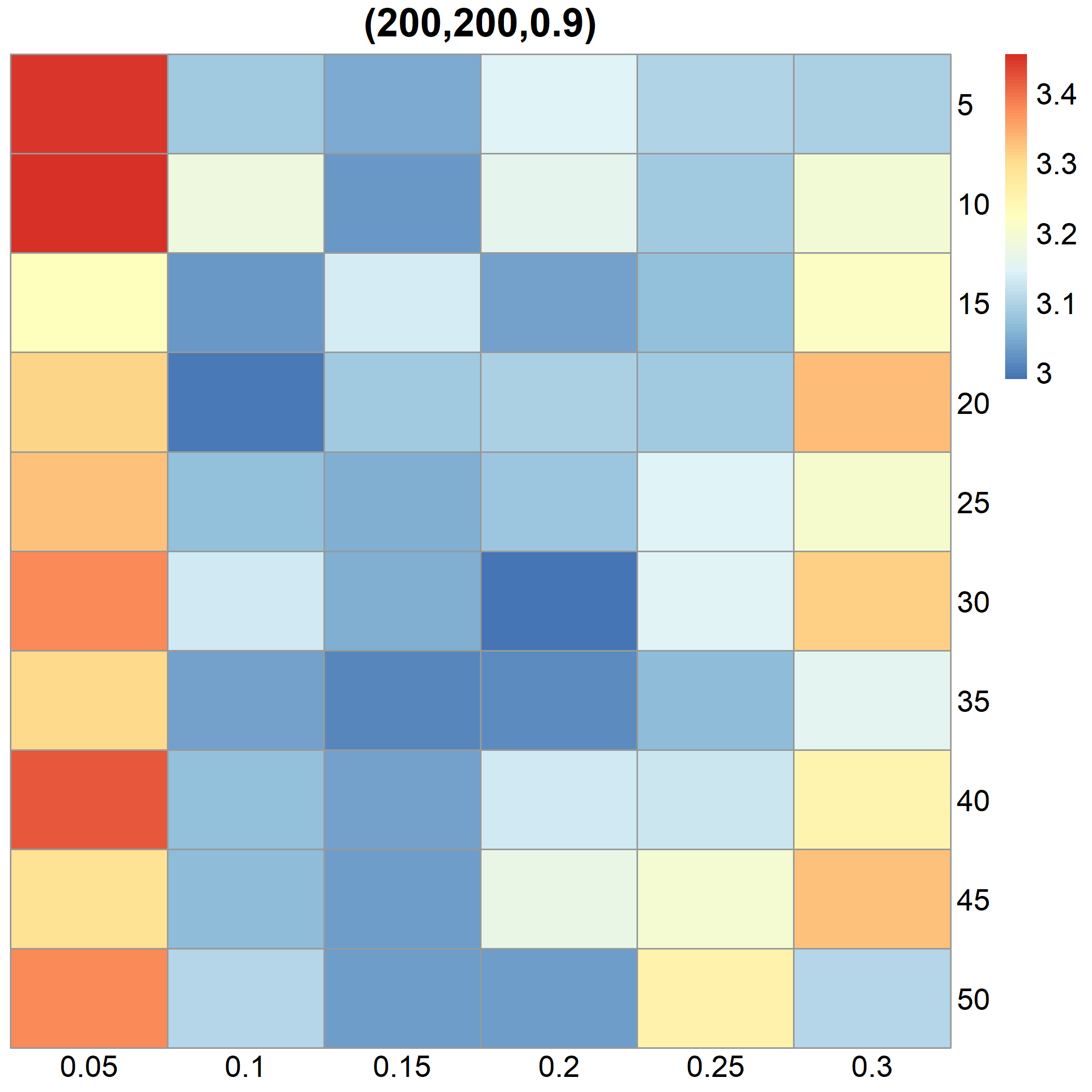}
		\includegraphics[width=0.24\linewidth]{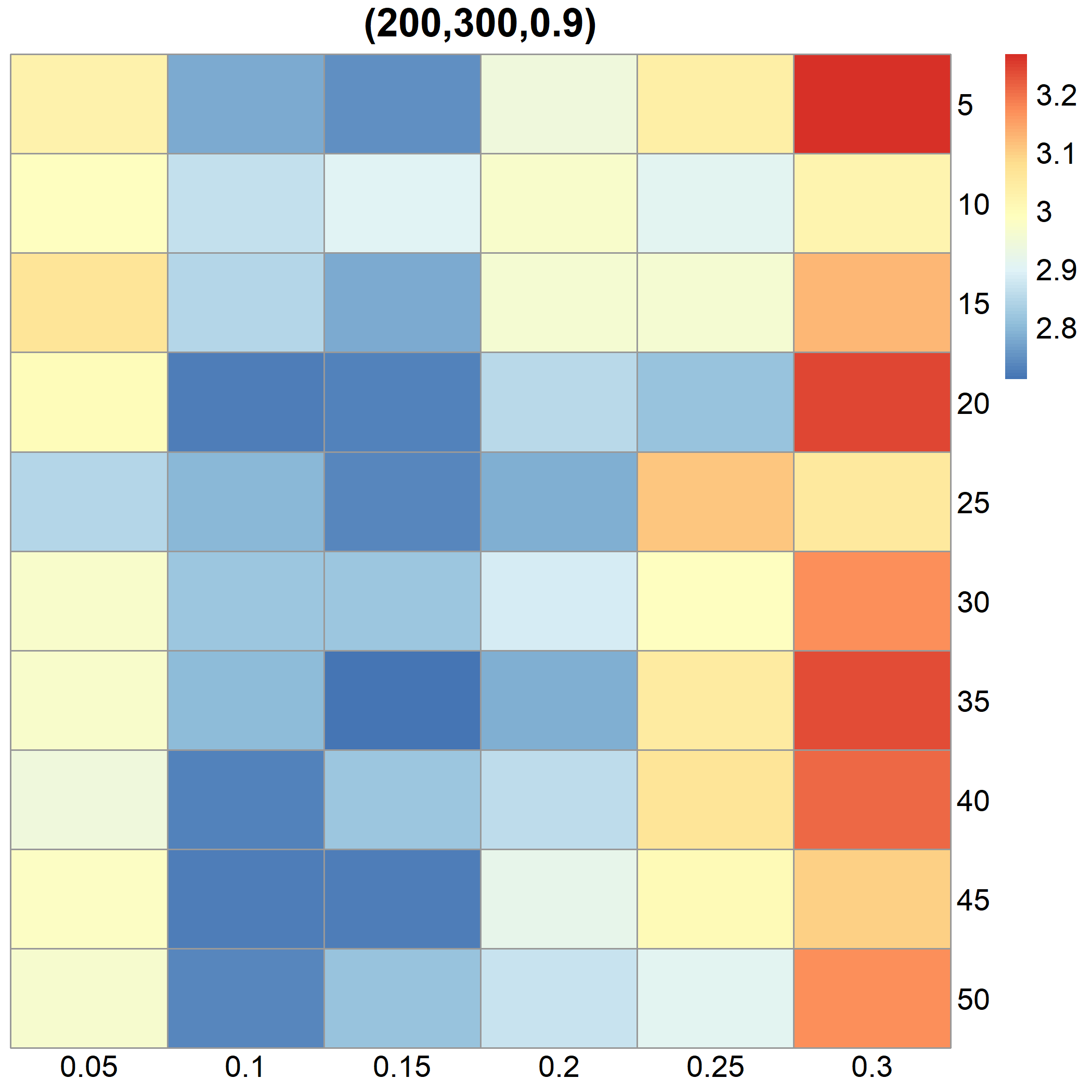}
		
		\vspace{3pt}
		\small (a) $N = 200$
	\end{minipage}
	
	\vspace{6pt}
	
	\begin{minipage}{0.95\textwidth}
		\centering
		\includegraphics[width=0.24\linewidth]{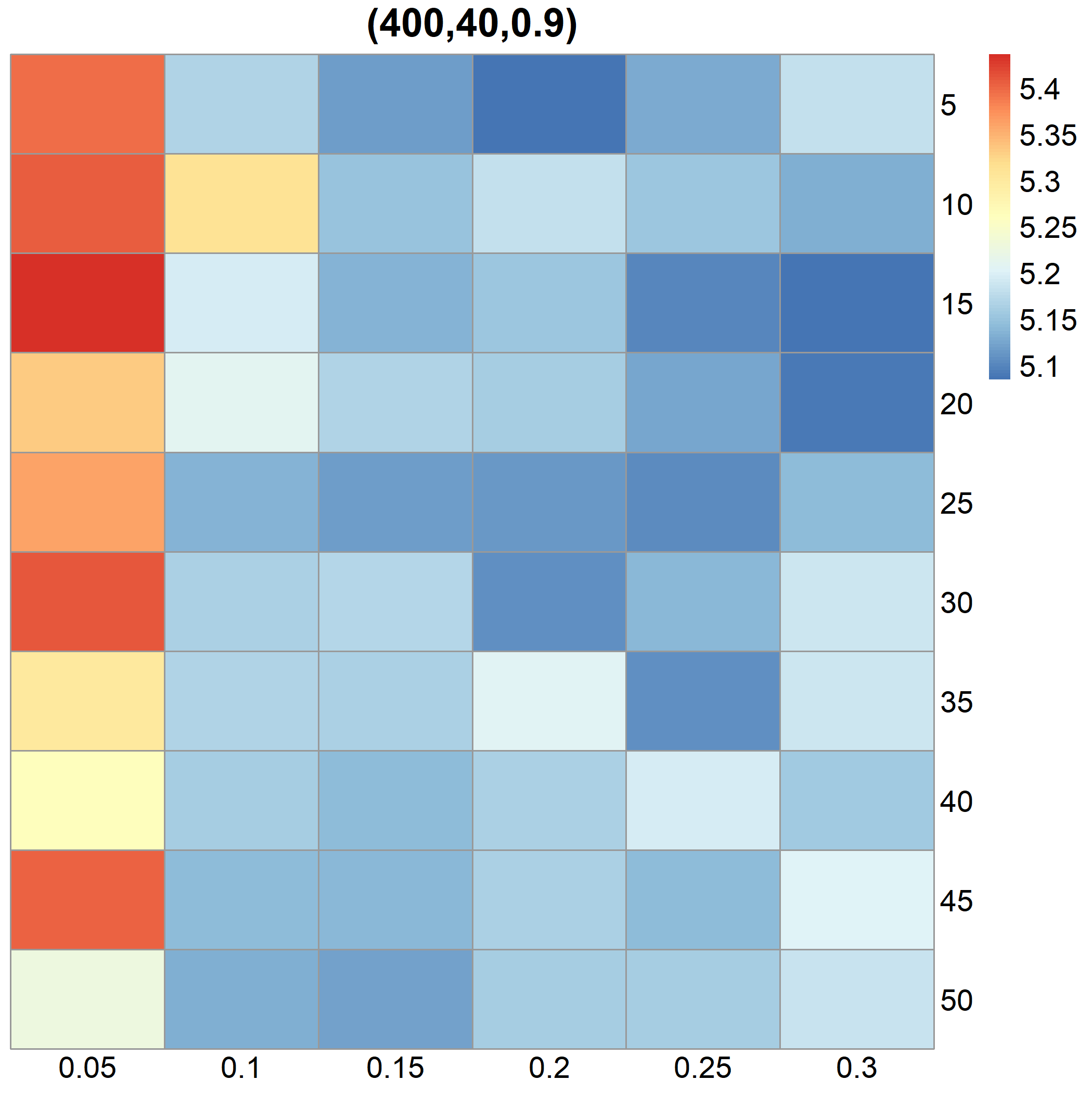}
		\includegraphics[width=0.24\linewidth]{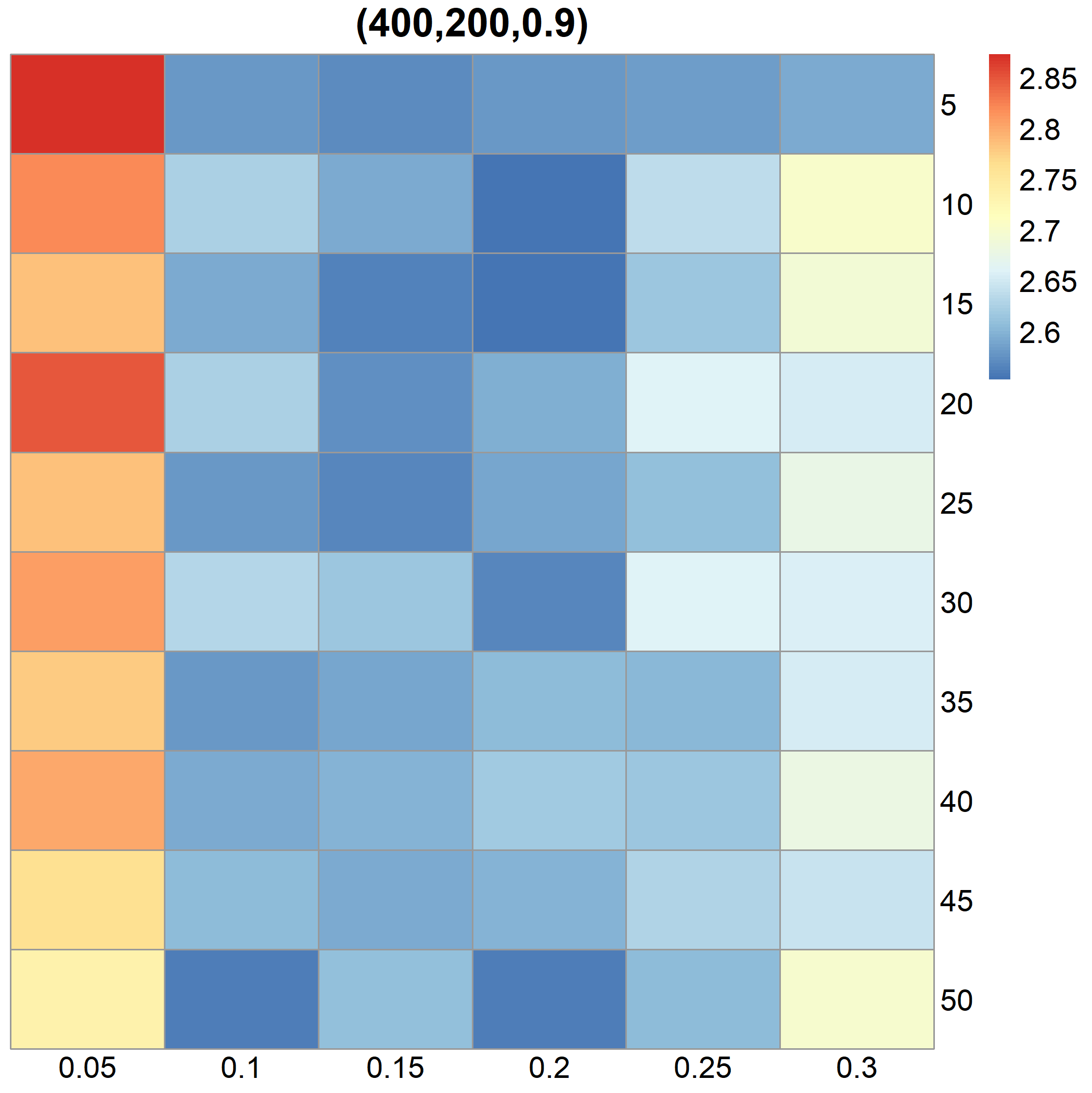}
		\includegraphics[width=0.24\linewidth]{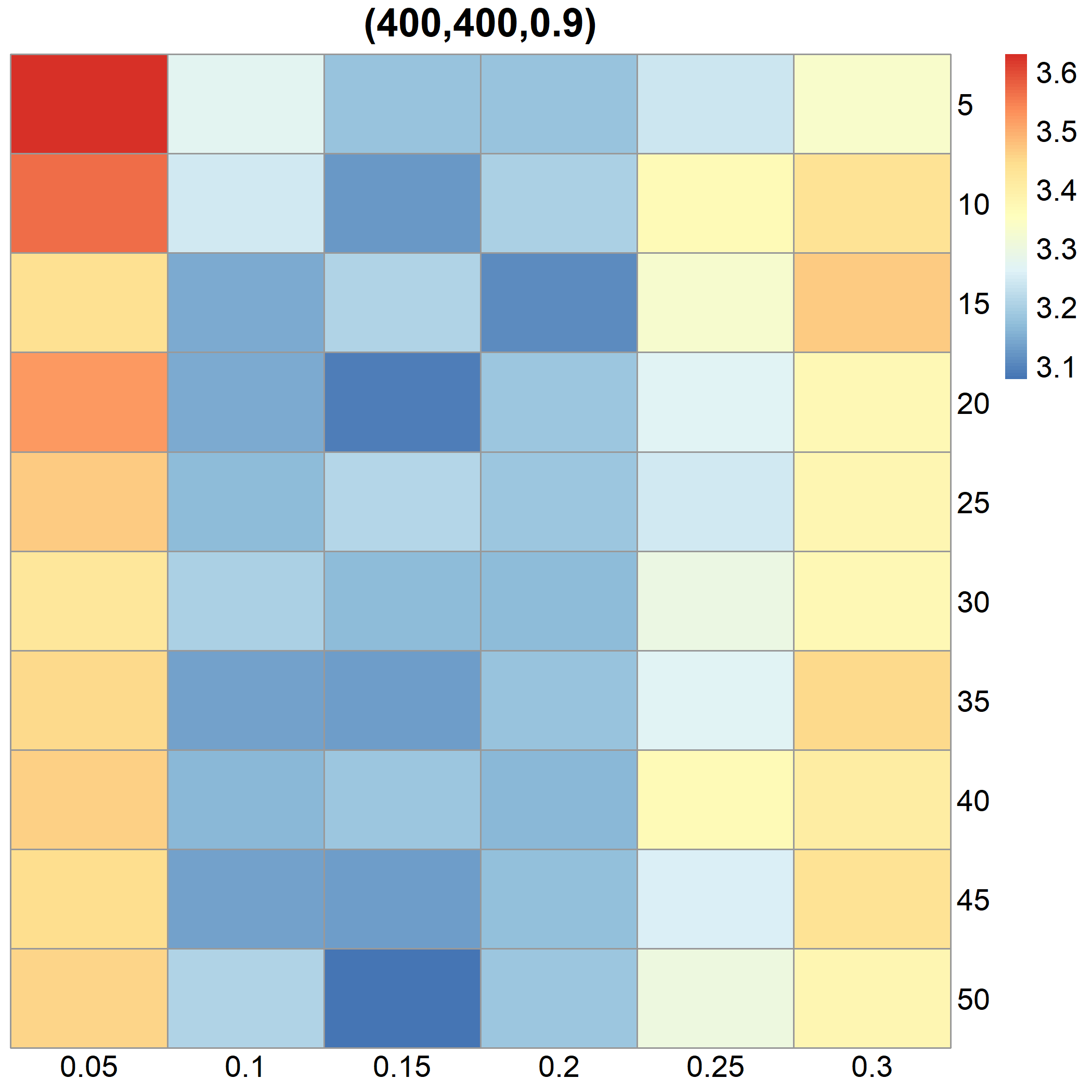}
		\includegraphics[width=0.24\linewidth]{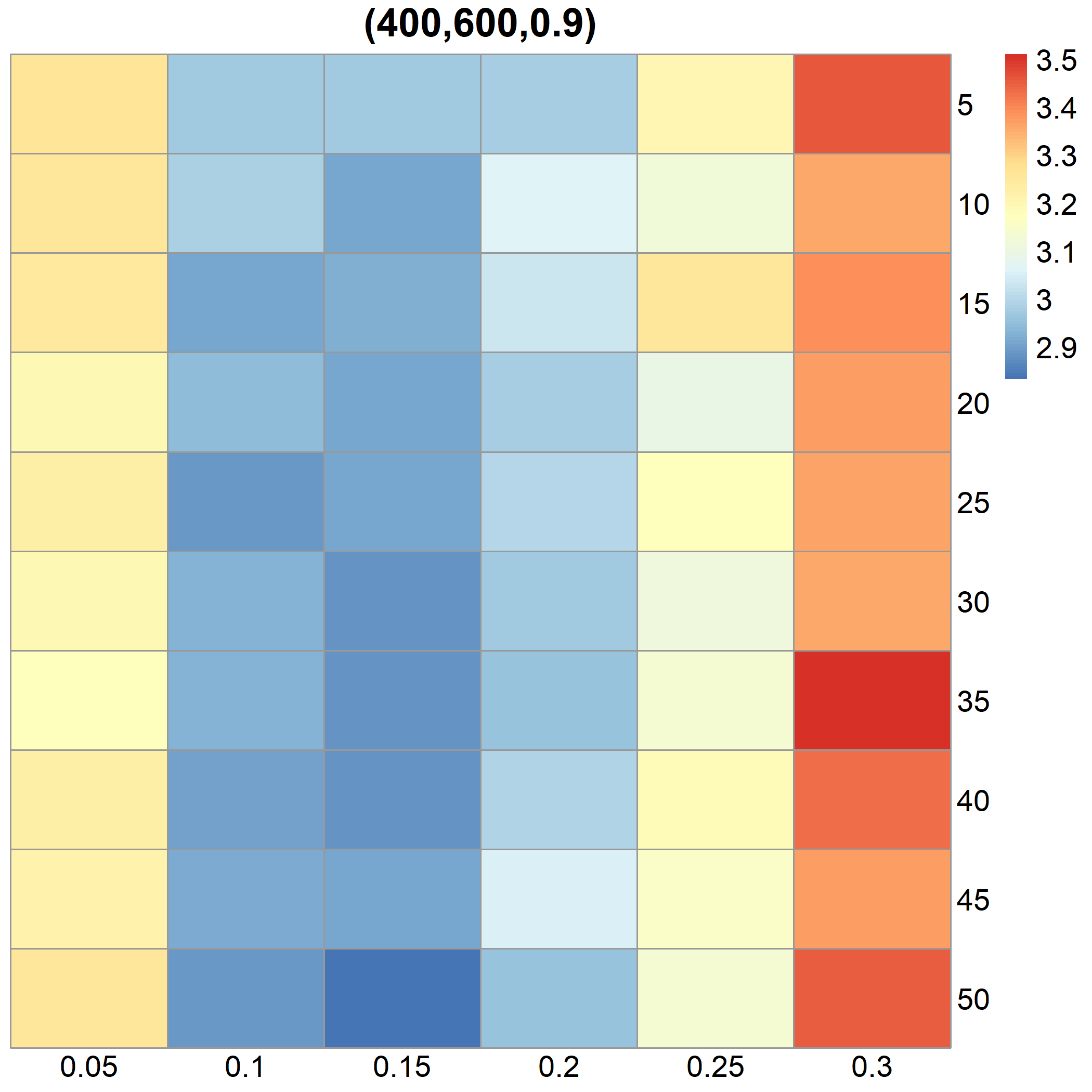}
		
		\vspace{3pt}
		\small (b) $N = 400$
	\end{minipage}
	
	\vspace{6pt}
	
	\begin{minipage}{0.95\textwidth}
		\centering
		\includegraphics[width=0.24\linewidth]{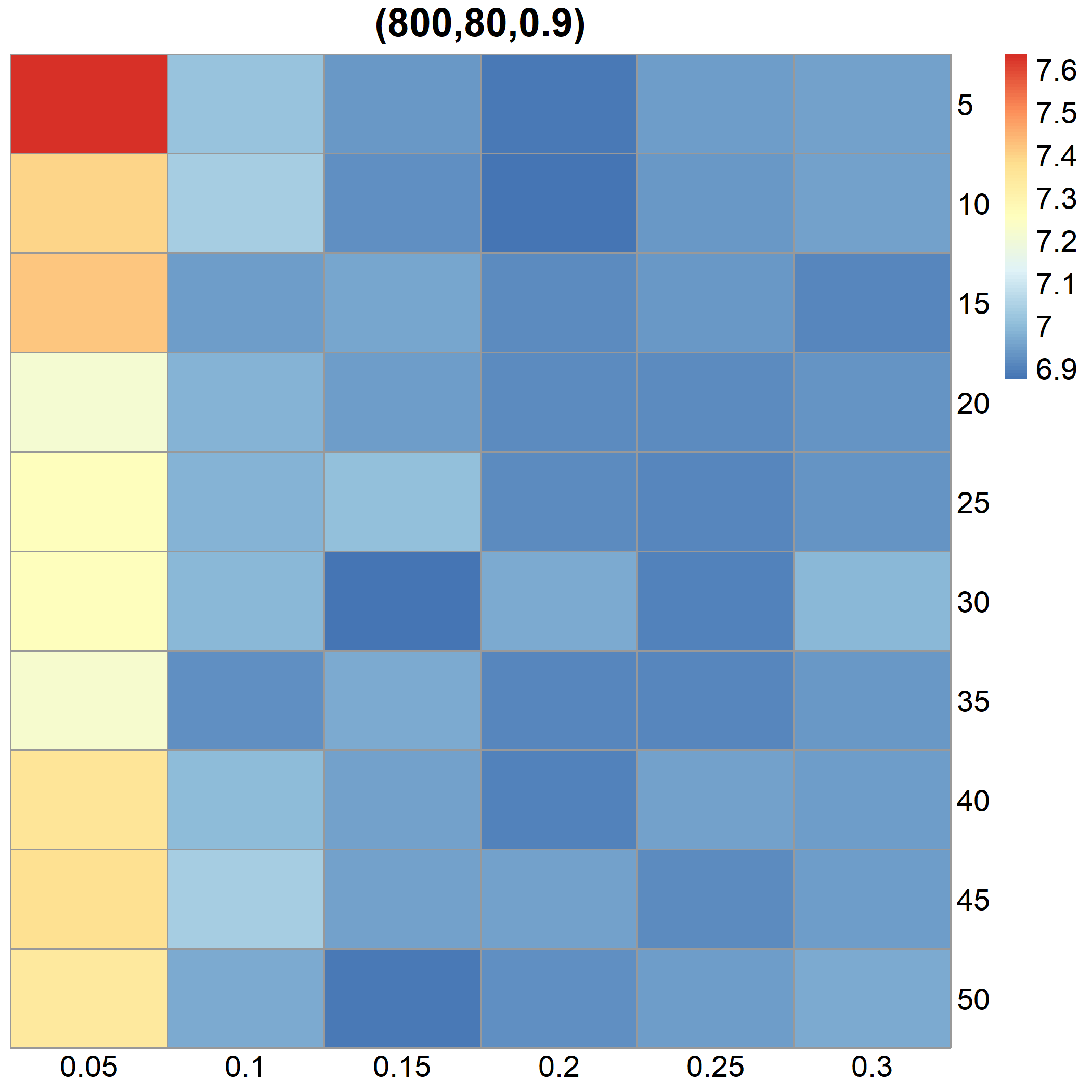}
		\includegraphics[width=0.24\linewidth]{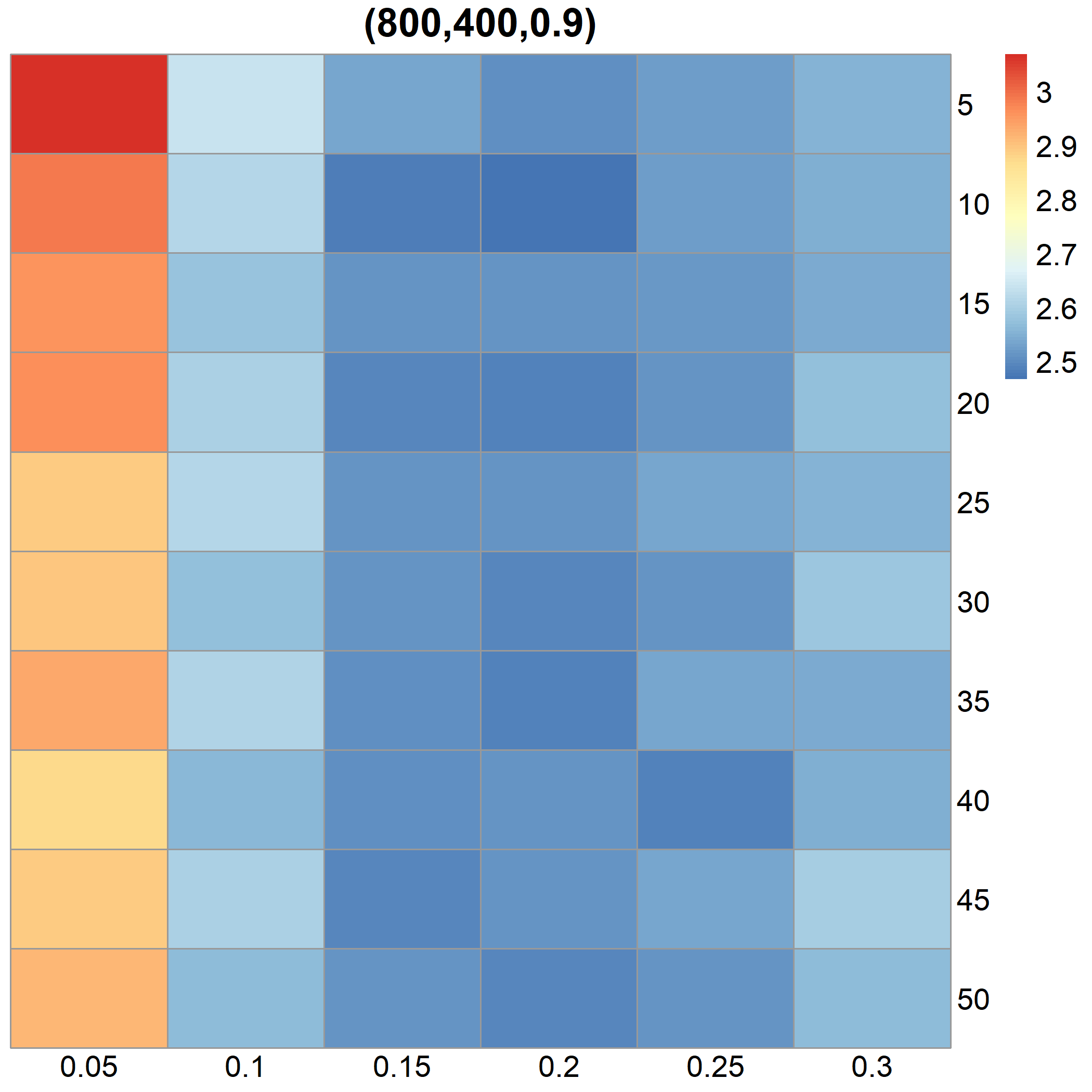}
		\includegraphics[width=0.24\linewidth]{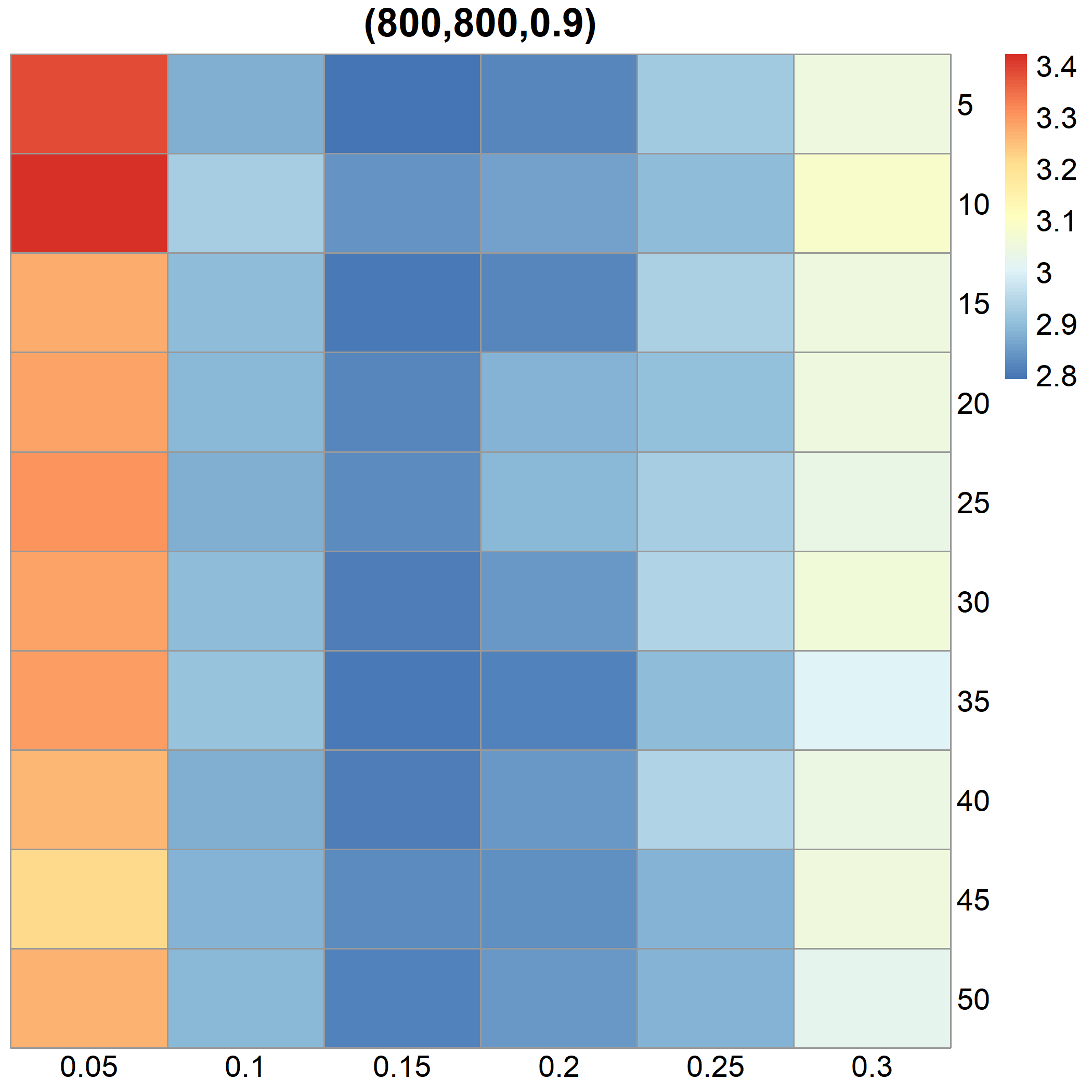}
		\includegraphics[width=0.24\linewidth]{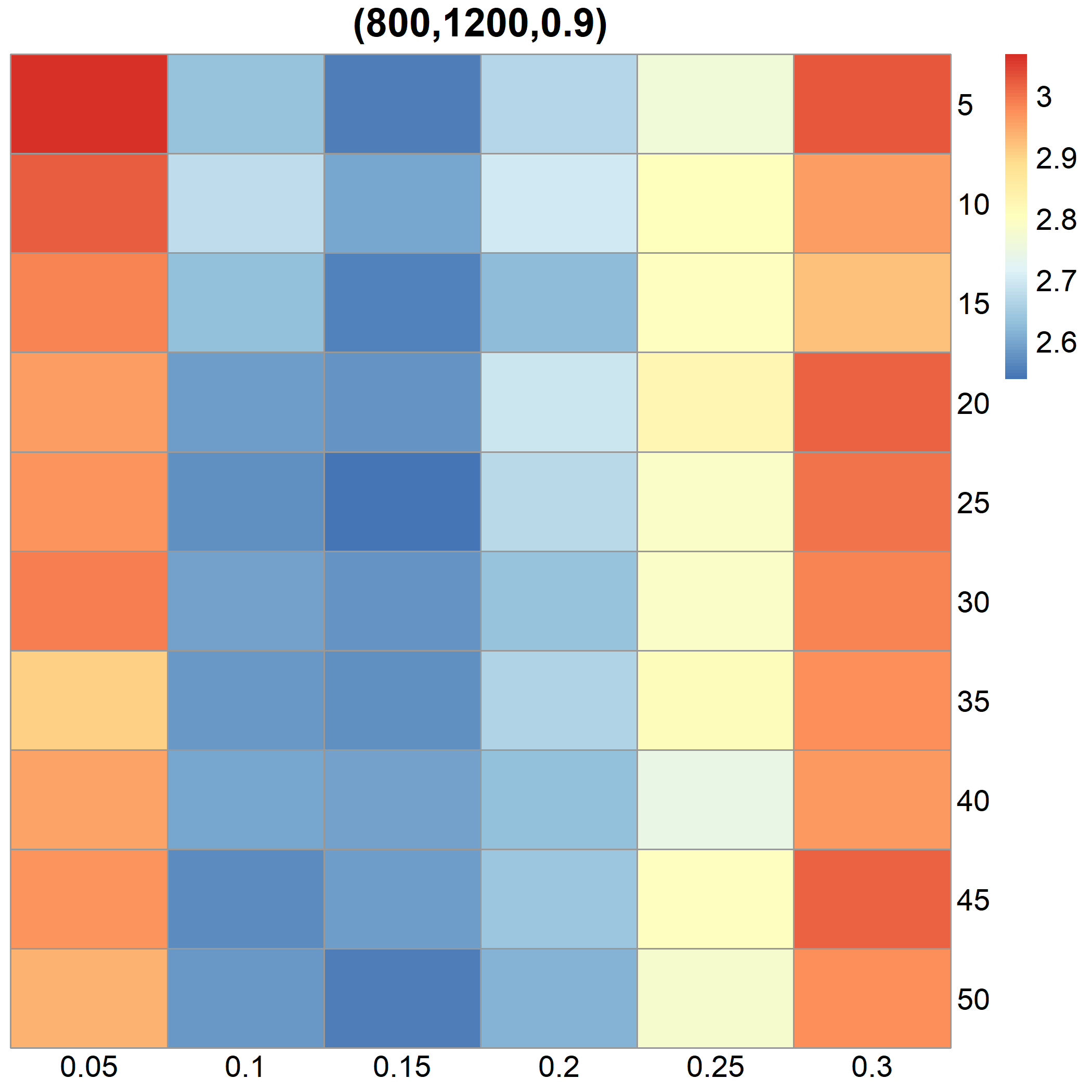}
		
		\vspace{3pt}
		\small (c) $N = 800$
	\end{minipage}
	
	\caption{Cross-validation results for Section \ref{sec3.2} under exponential decay with $\rho = 0.9$. Values in parentheses denote $(N, K, \rho)$. The horizontal axis represents the selection probability $p$, and the vertical axis indicates the number of candidate models $M$. Darker regions correspond to $(p, M)$ combinations yielding lower cross-validation errors.}
	\label{fig:caseexp0.9}
\end{figure*}

\newpage
\begin{figure*}[!h]
	\centering
	
	\begin{minipage}{0.95\textwidth}
		\centering
		\includegraphics[width=0.24\linewidth]{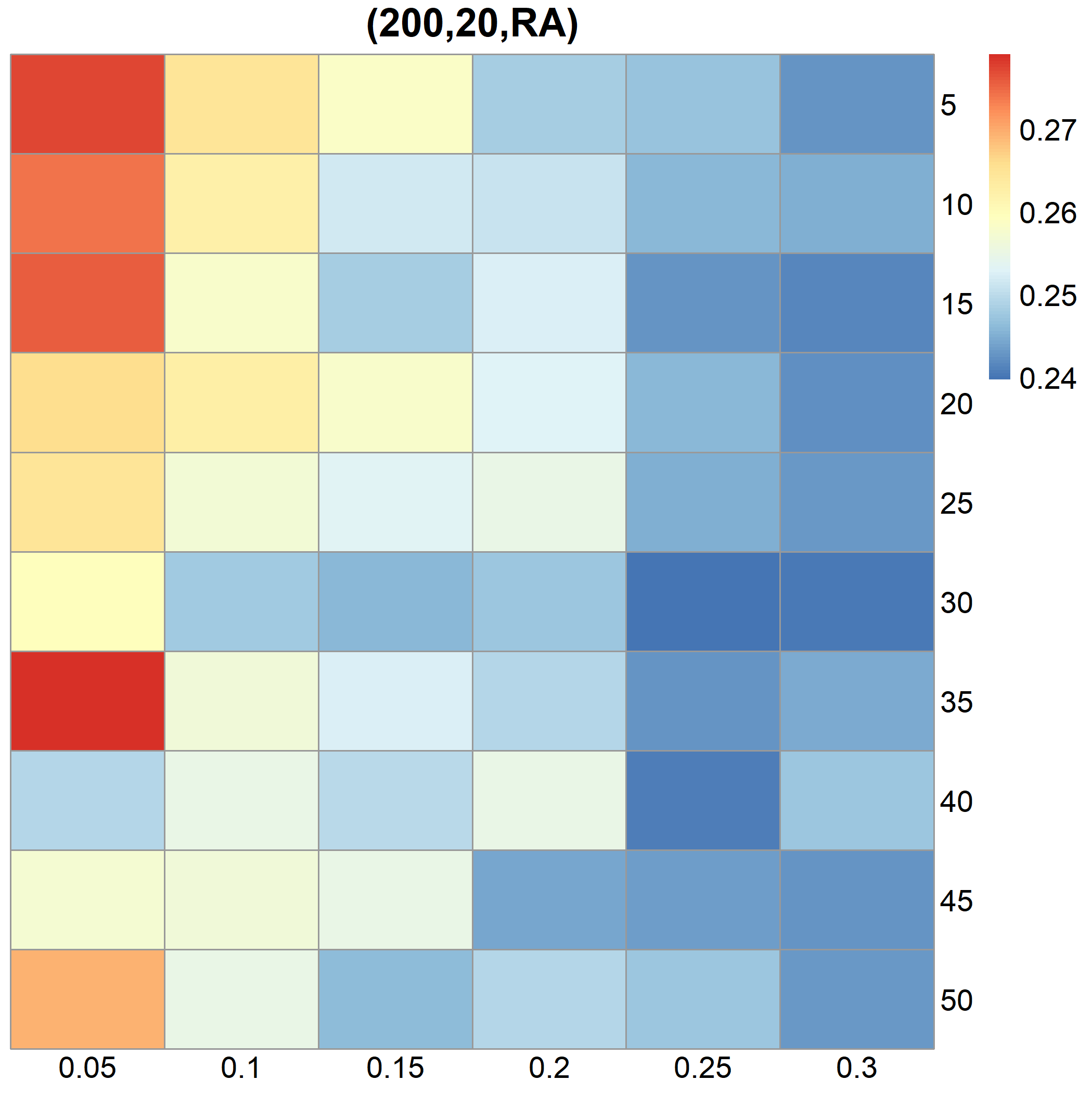}
		\includegraphics[width=0.24\linewidth]{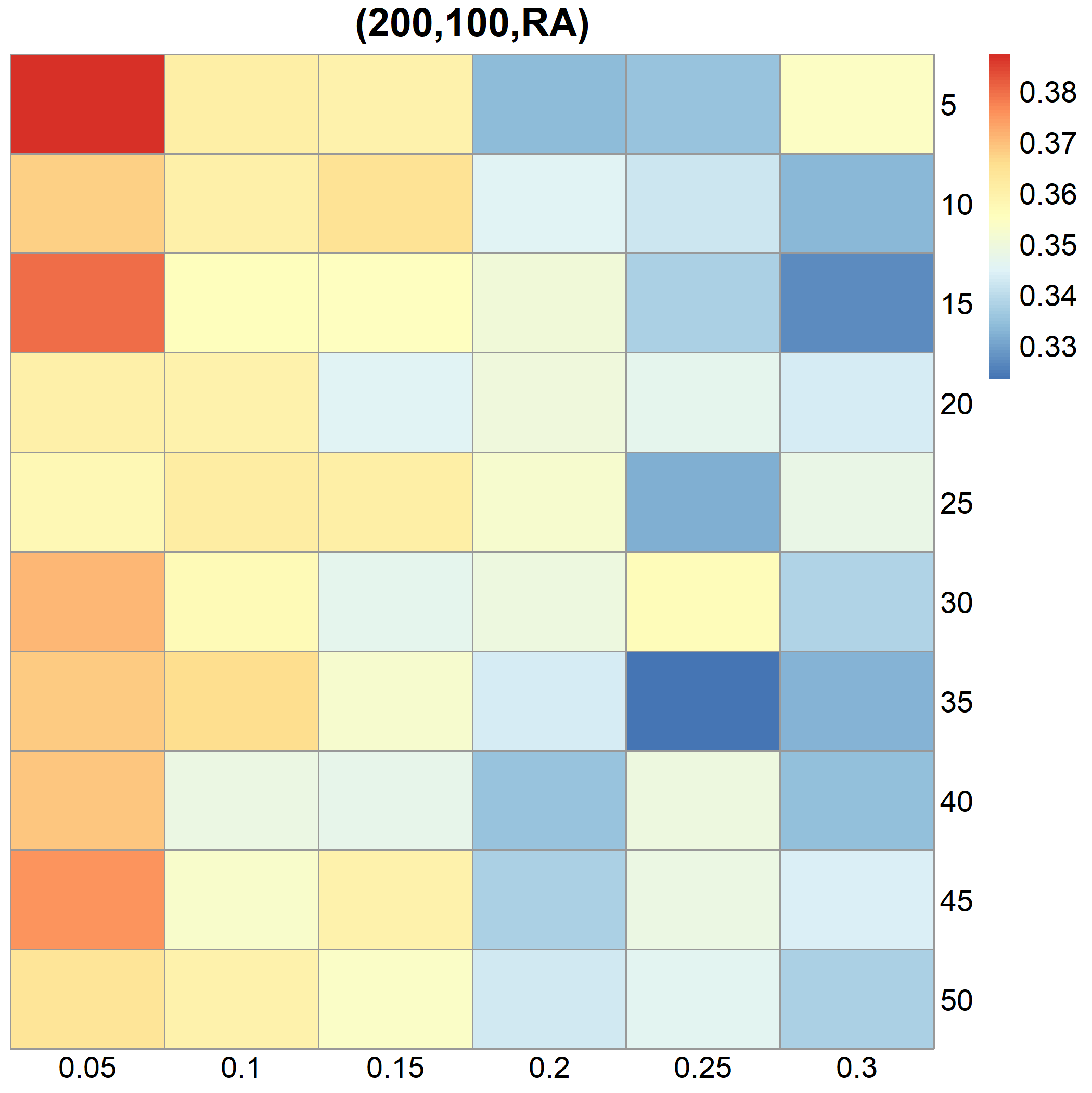}
		\includegraphics[width=0.24\linewidth]{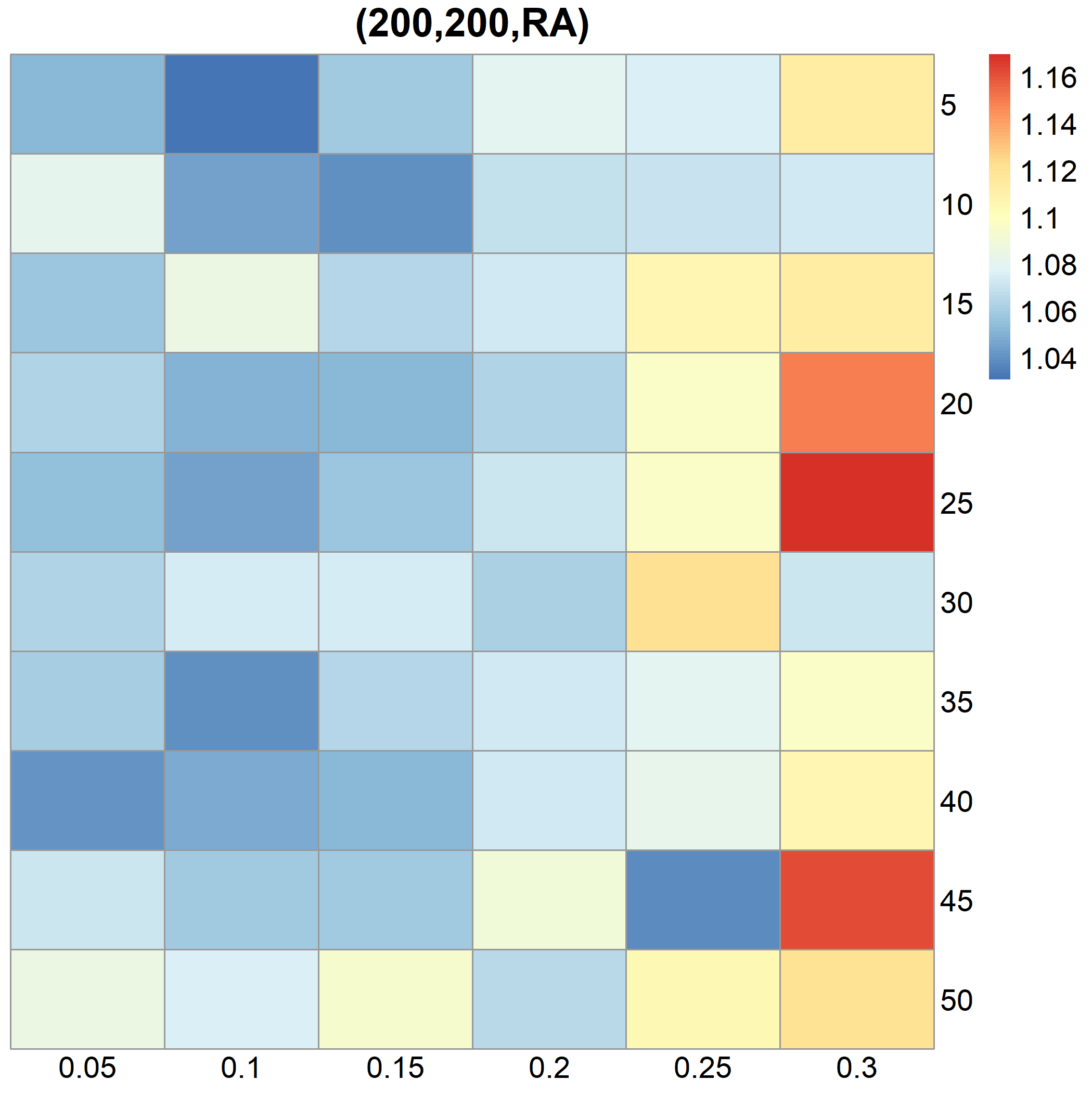}
		\includegraphics[width=0.24\linewidth]{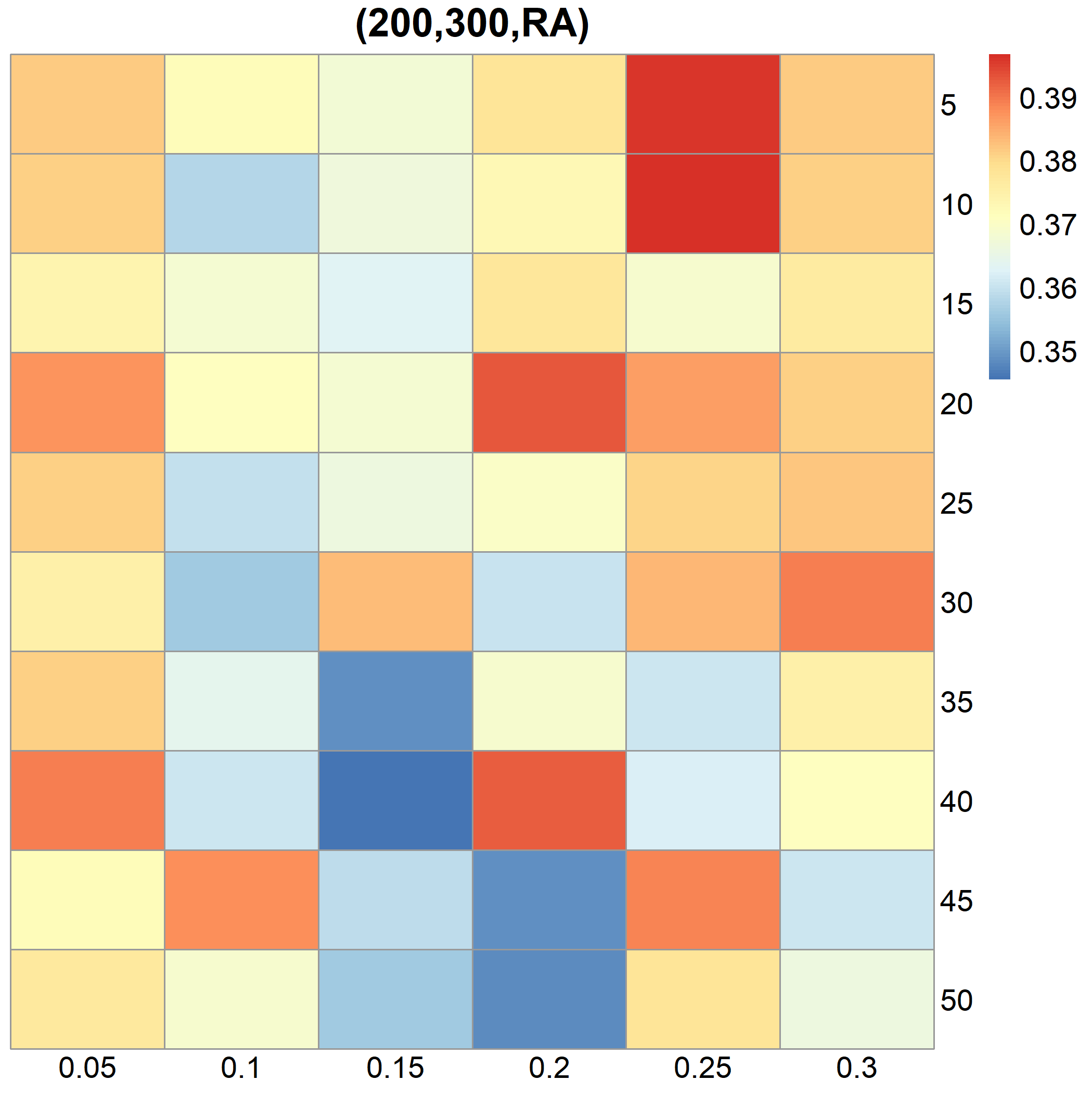}
		
		\vspace{3pt}
		\small (a) $N = 200$
	\end{minipage}
	
	\vspace{6pt}
	
	\begin{minipage}{0.95\textwidth}
		\centering
		\includegraphics[width=0.24\linewidth]{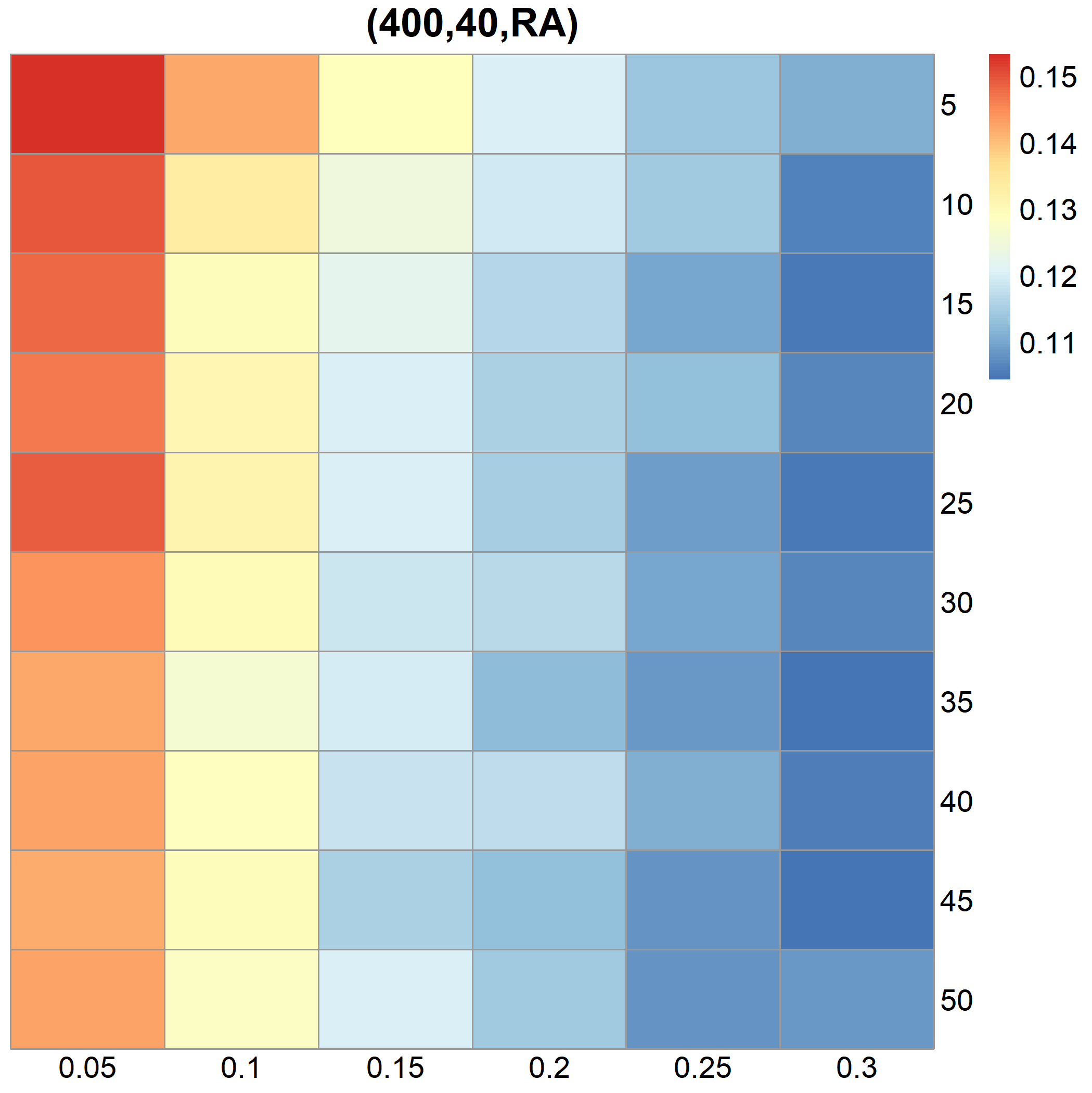}
		\includegraphics[width=0.24\linewidth]{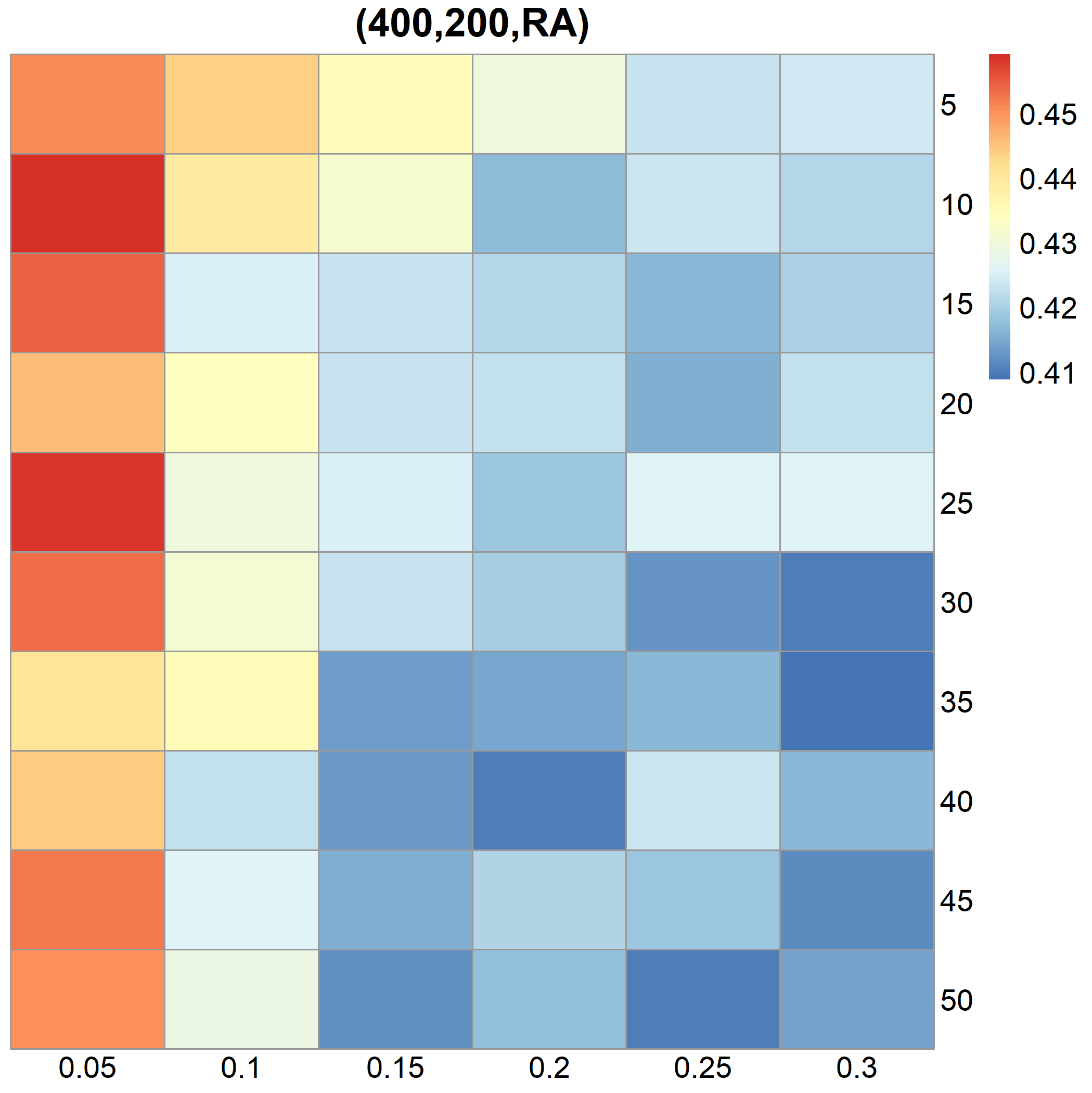}
		\includegraphics[width=0.24\linewidth]{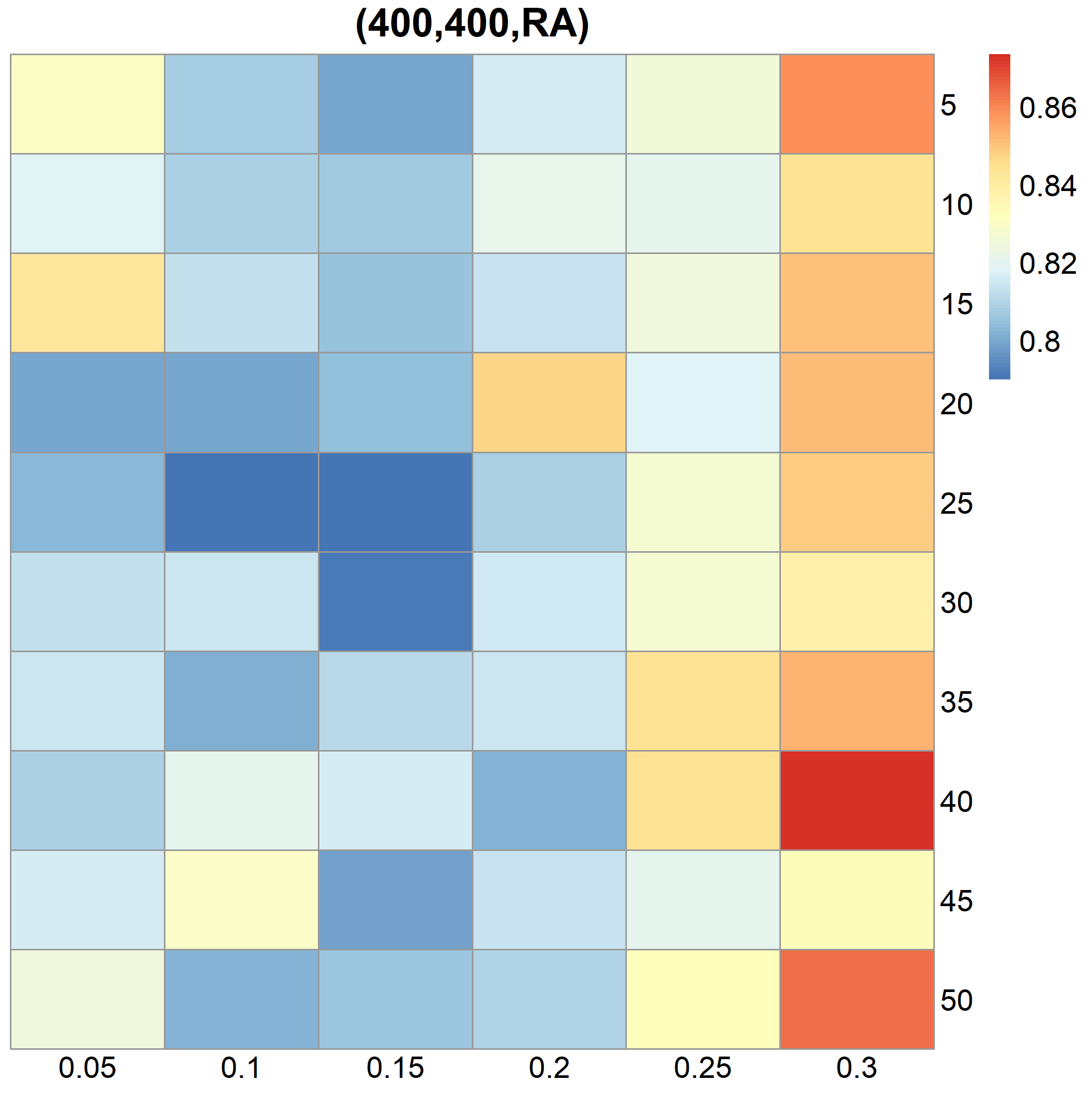}
		\includegraphics[width=0.24\linewidth]{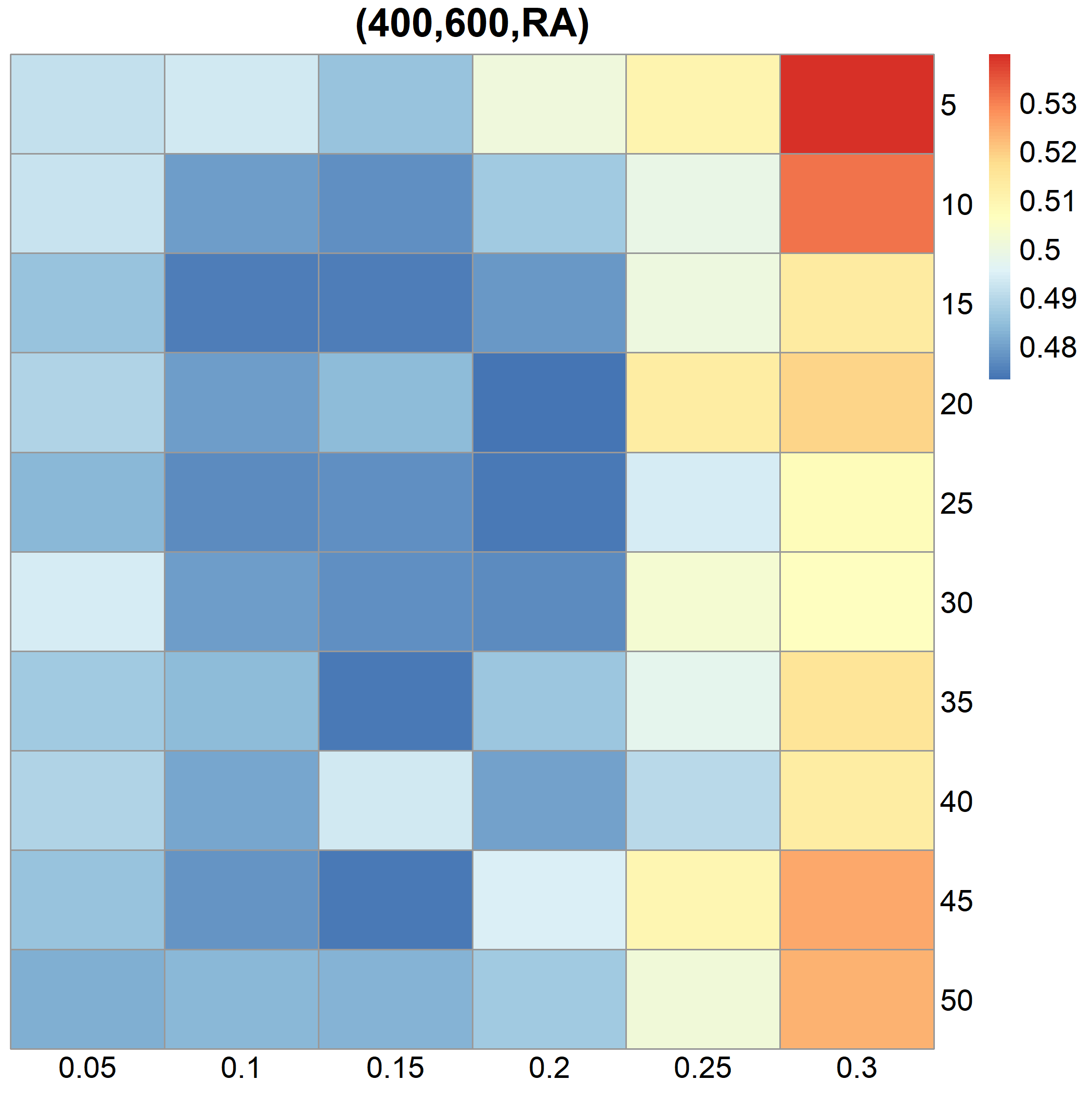}
		
		\vspace{3pt}
		\small (b) $N = 400$
	\end{minipage}
	
	\vspace{6pt}
	
	\begin{minipage}{0.95\textwidth}
		\centering
		\includegraphics[width=0.24\linewidth]{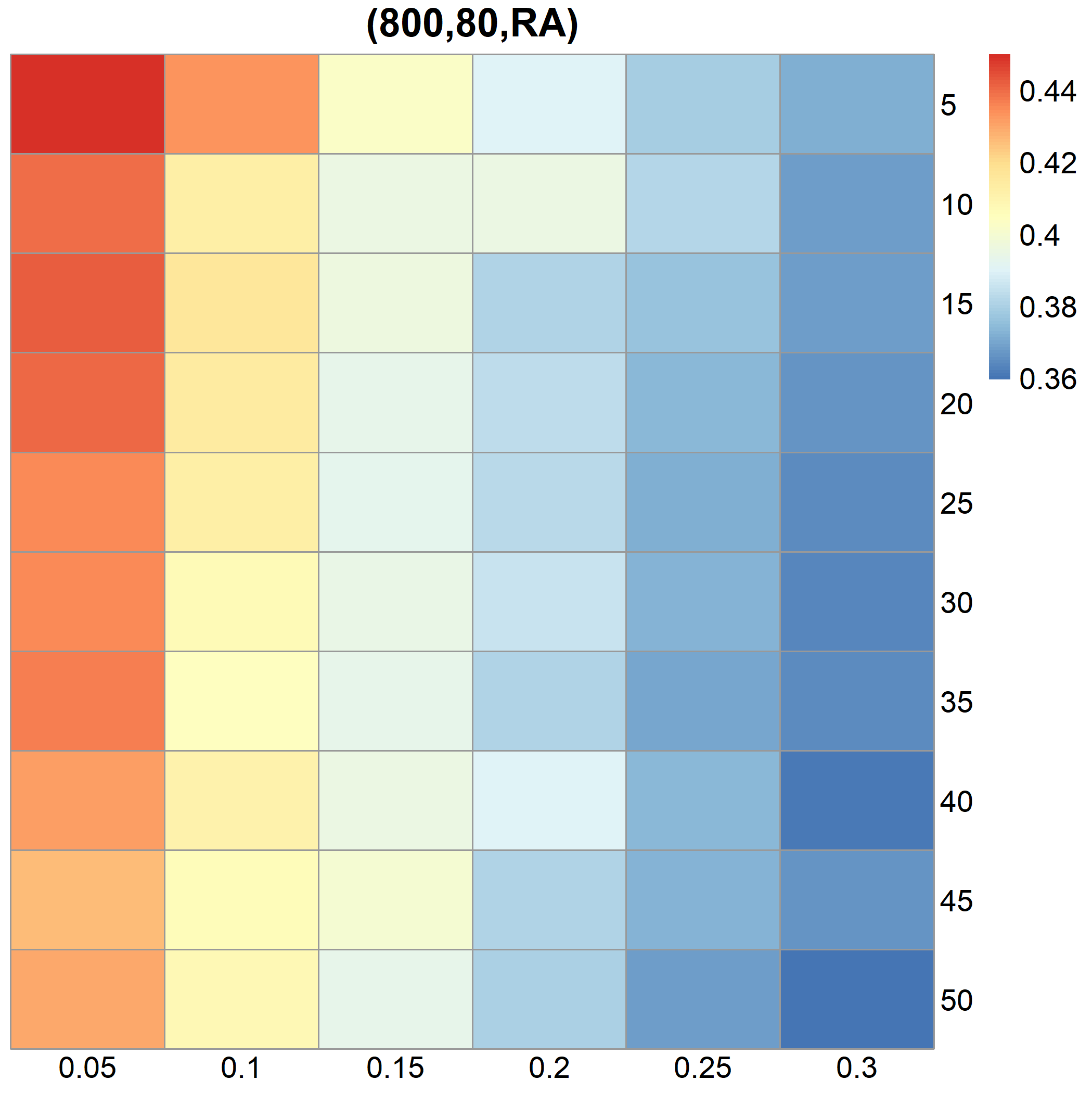}
		\includegraphics[width=0.24\linewidth]{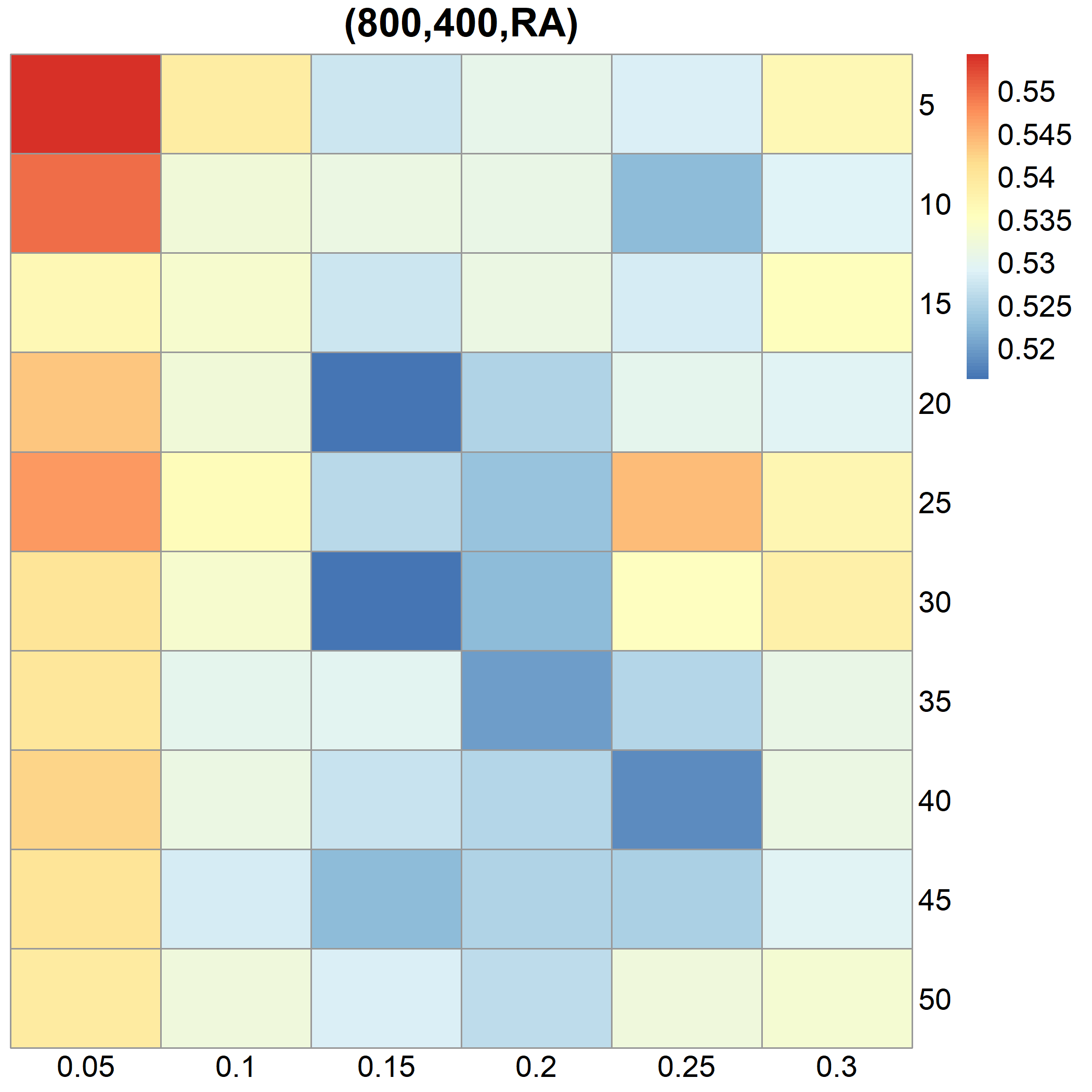}
		\includegraphics[width=0.24\linewidth]{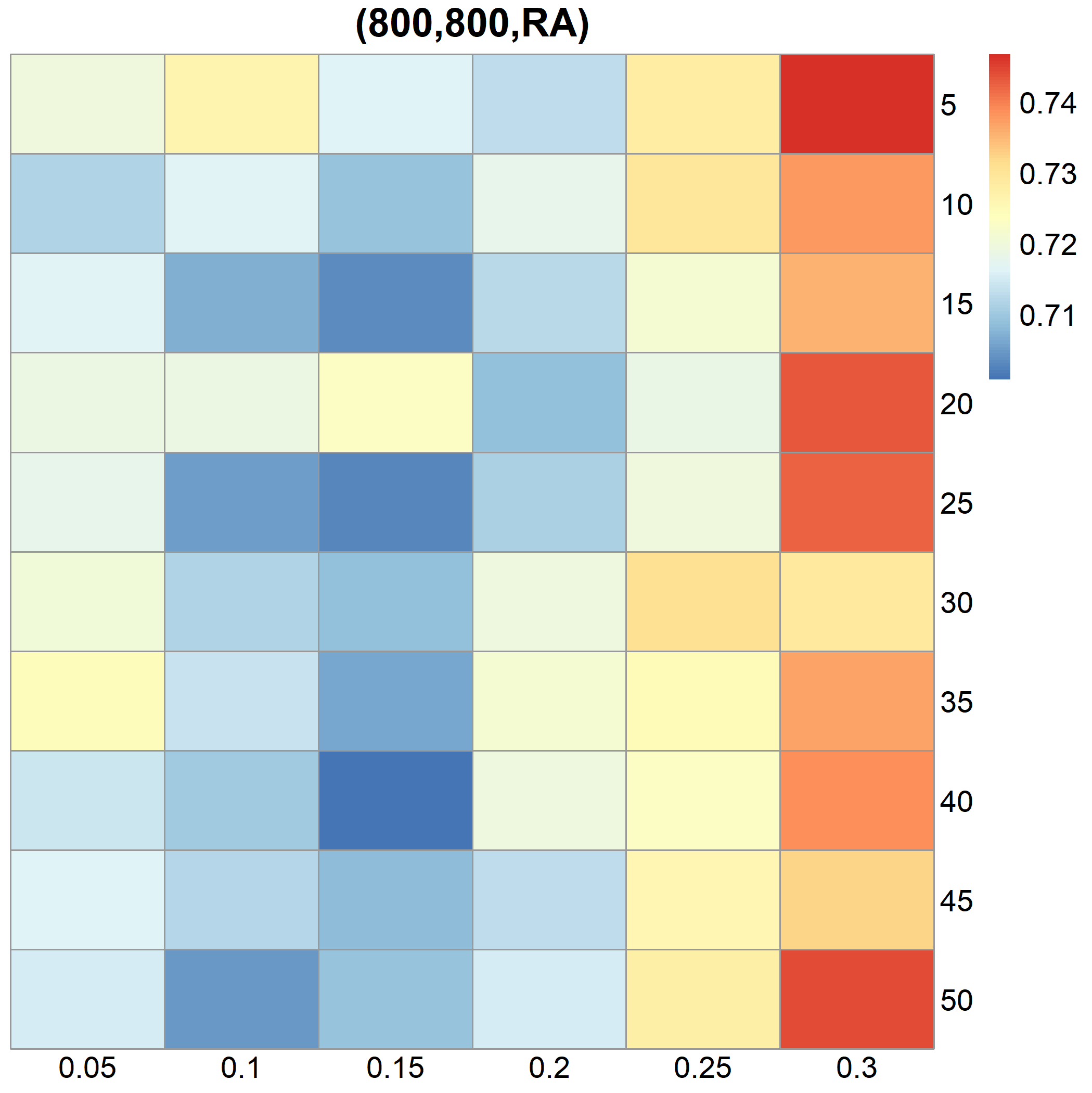}
		\includegraphics[width=0.24\linewidth]{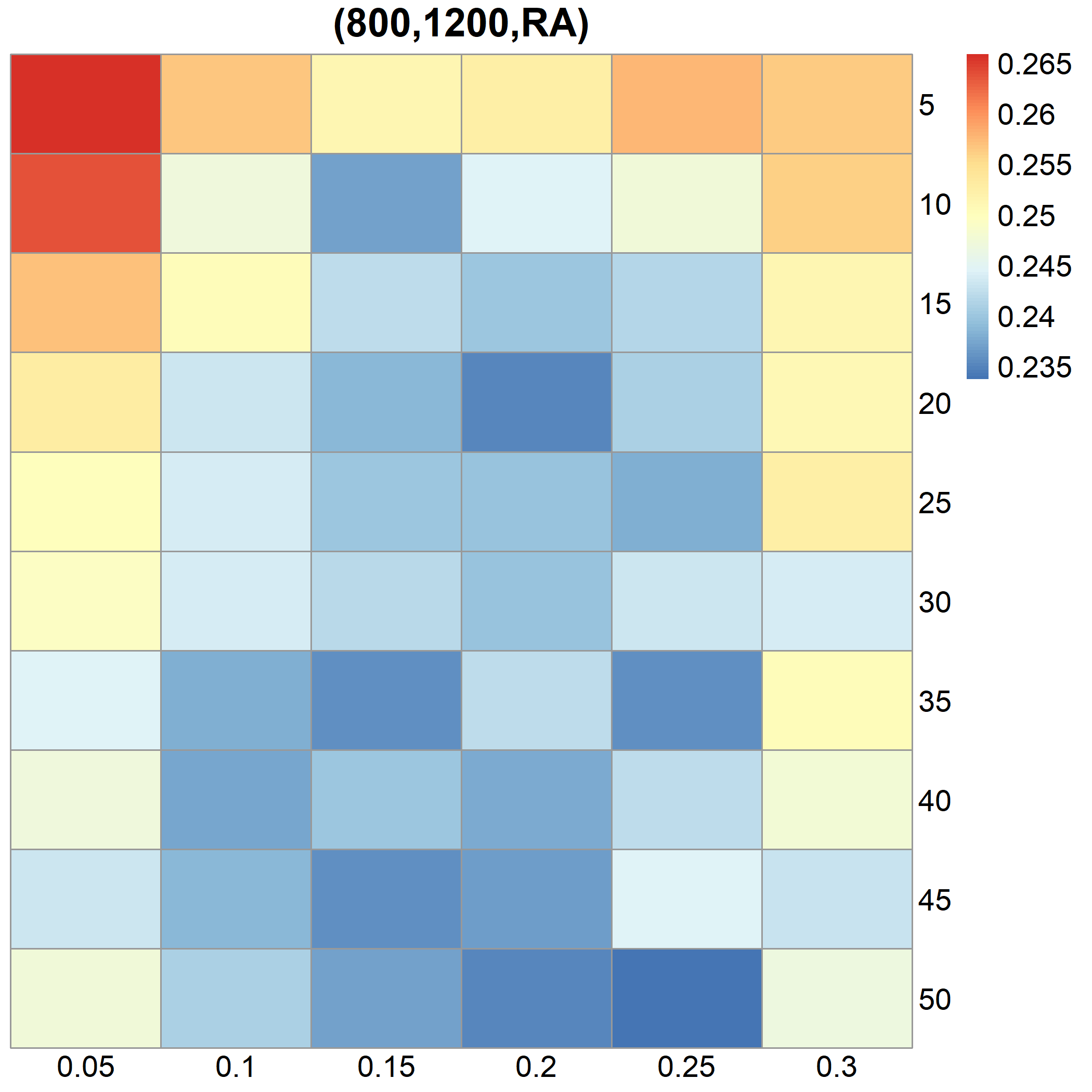}
		
		\vspace{3pt}
		\small (c) $N = 800$
	\end{minipage}
	
	\caption{Cross-validation results for Section \ref{sec3.2} under exponential decay with a random covariance structure. Values in parentheses denote $(N, K, \rho)$. The horizontal axis represents the selection probability $p$, and the vertical axis indicates the number of candidate models $M$. Darker regions correspond to $(p, M)$ combinations yielding lower cross-validation errors.}
	\label{fig:caseexpRA}
\end{figure*}

\newpage

\subsubsection{CV results for Section \ref{app:comparewithpeng}}

Figures~\ref{fig:cv1} and~\ref{fig:cv2} illustrate the cross-validated performance of the RSA estimator under different choices of selection probability. As shown in both figures, the estimator's performance is highly sensitive to this tuning parameter. Moreover, the optimal selection probability depends on the correlation structure among covariates. When the covariates are highly dependent, a smaller selection probability is generally preferred. This finding aligns with the intuition that strong dependence among covariates can be exploited to enhance predictive accuracy.
\begin{figure*}[htbp]
	\centering
	
	\begin{minipage}{0.95\textwidth}
		\centering
		\includegraphics[width=0.32\linewidth]{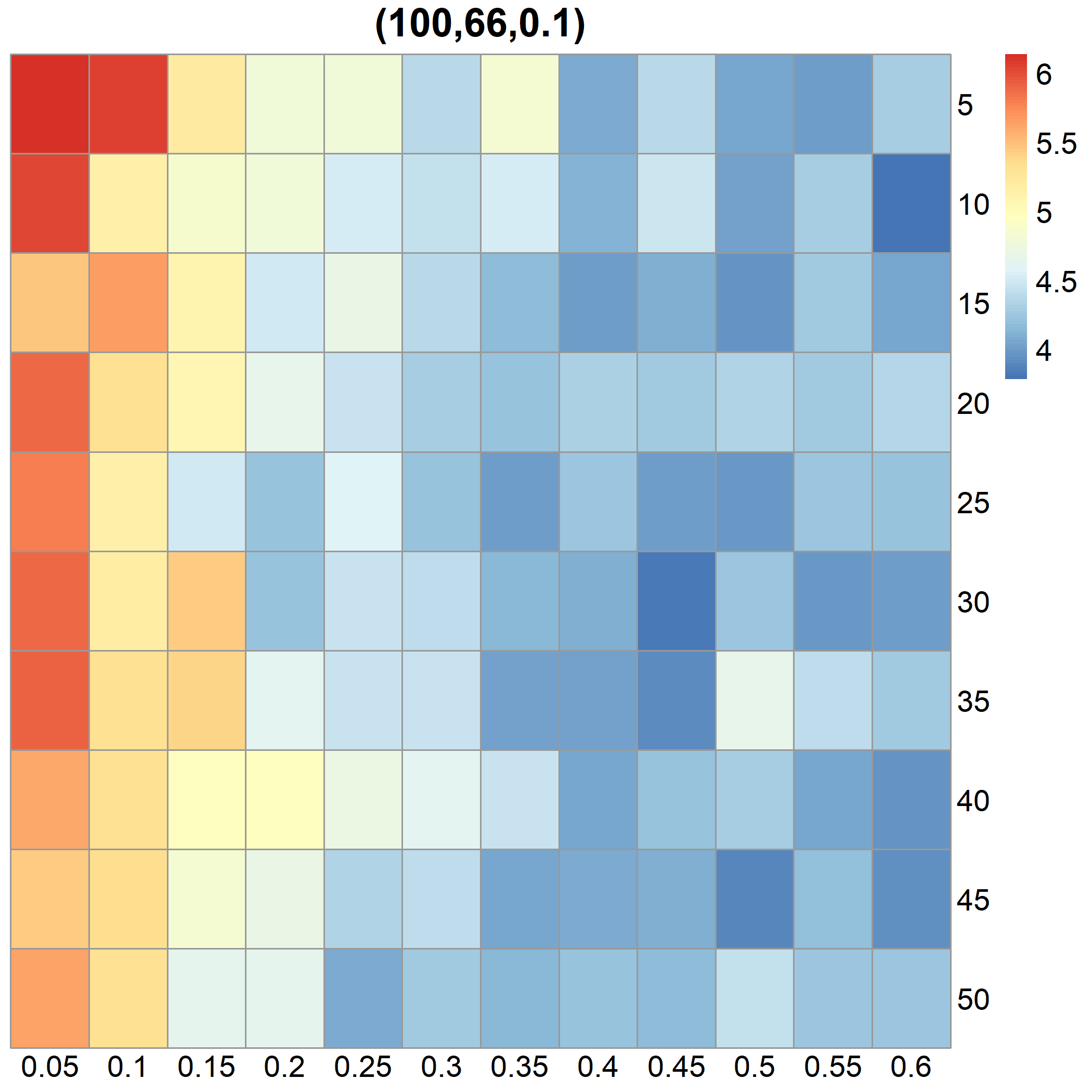}
		\includegraphics[width=0.32\linewidth]{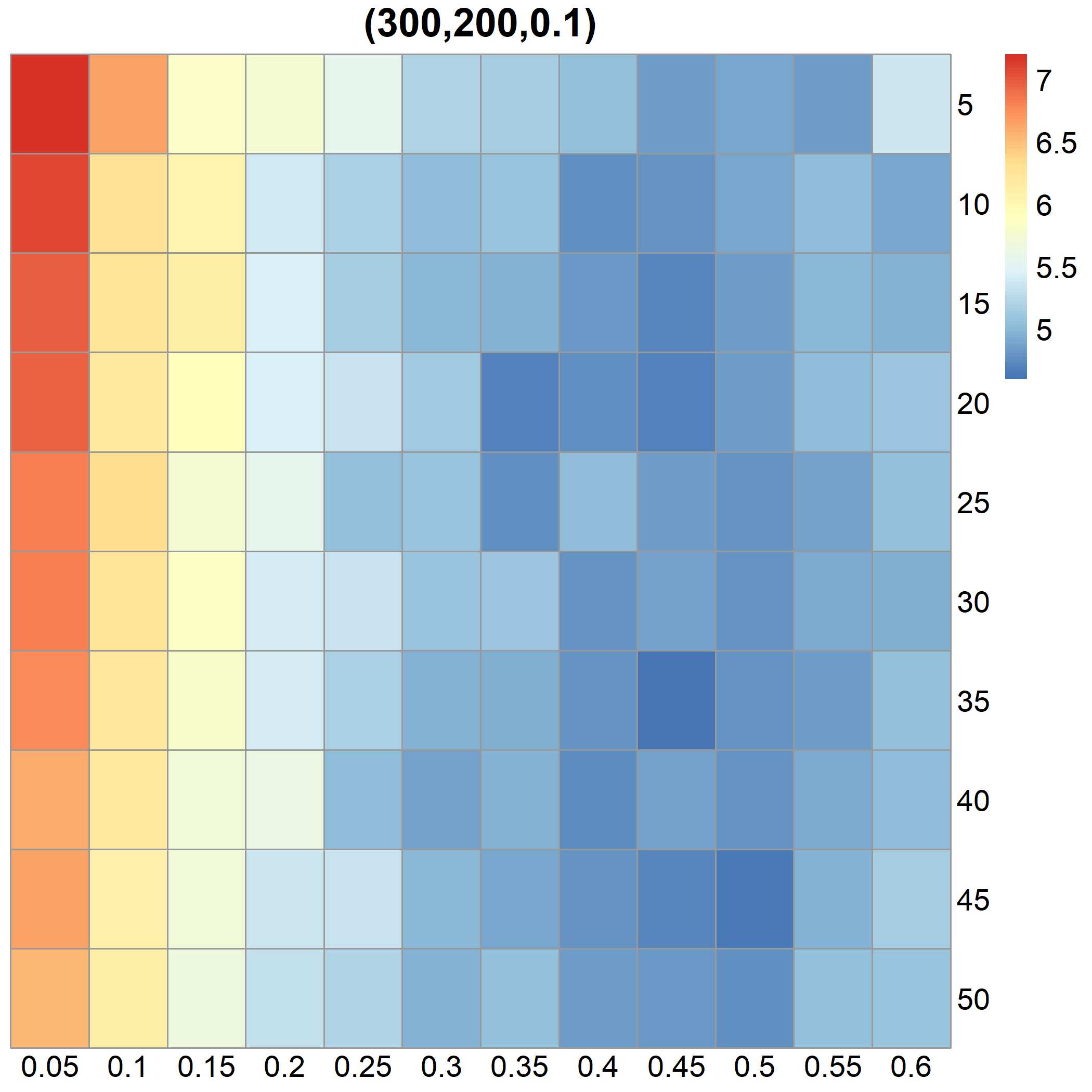}
		\includegraphics[width=0.32\linewidth]{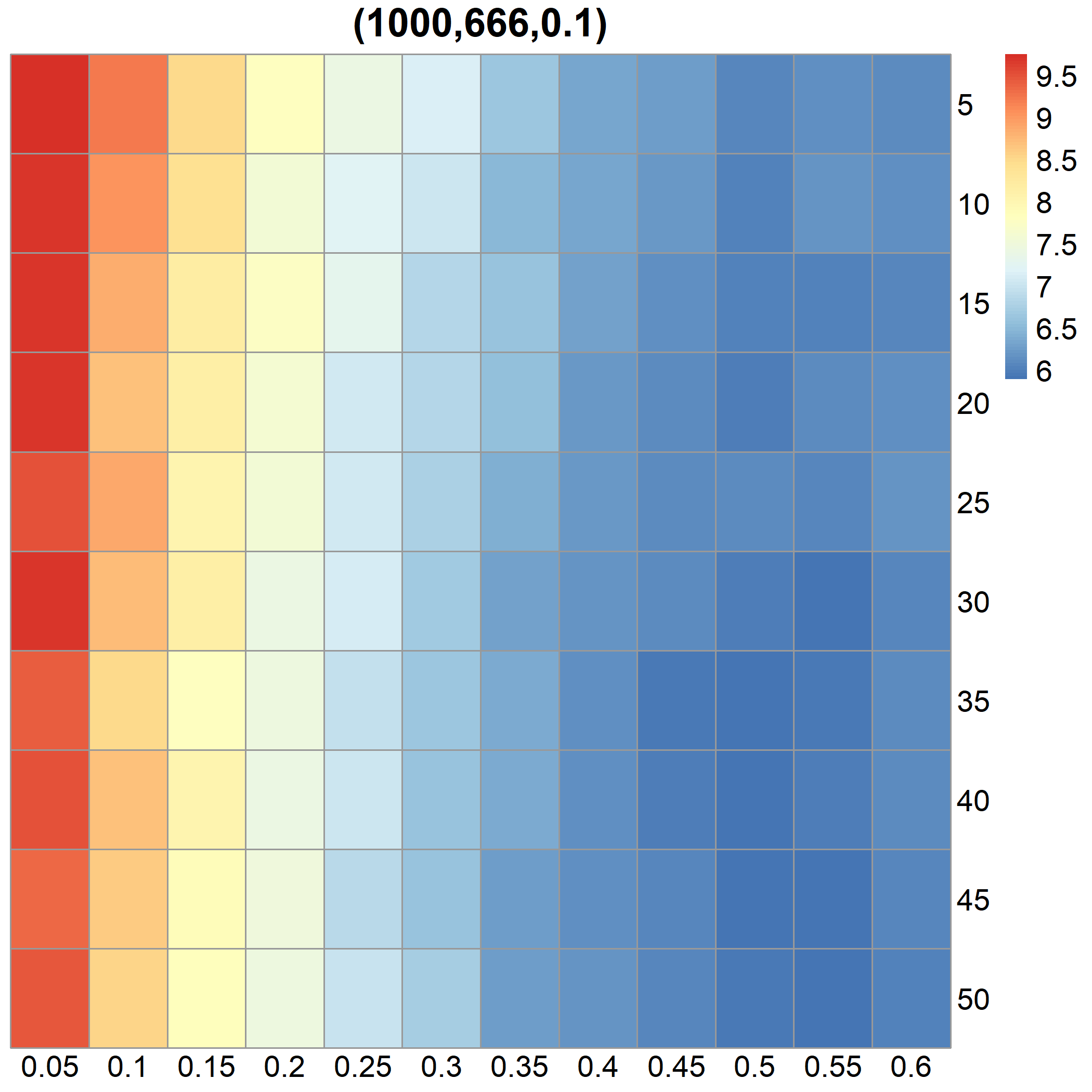}
		
		\vspace{3pt}
		\small (a) $\rho = 0.1$
	\end{minipage}
	
	\vspace{6pt}
	
	\begin{minipage}{0.95\textwidth}
		\centering
		\includegraphics[width=0.32\linewidth]{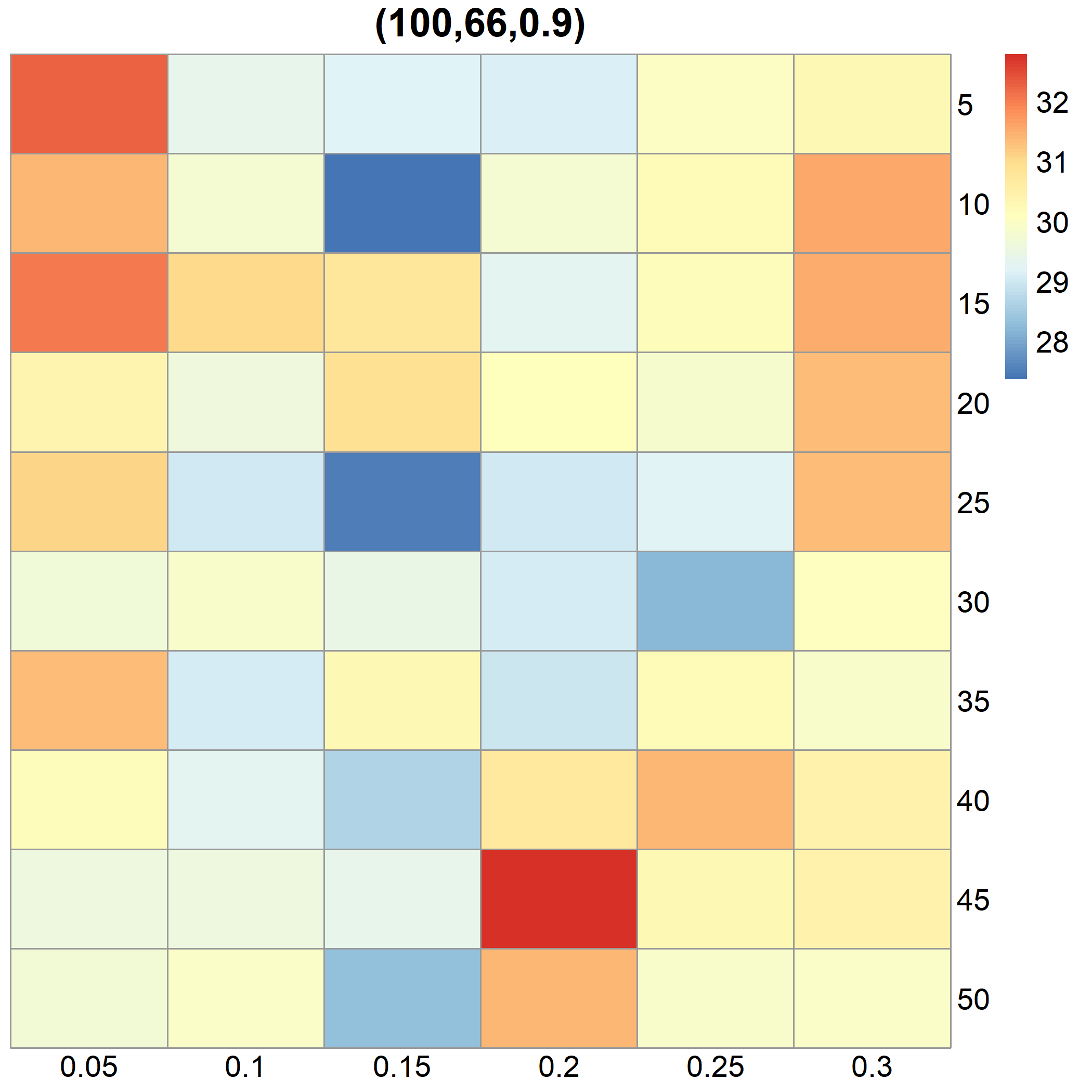}
		\includegraphics[width=0.32\linewidth]{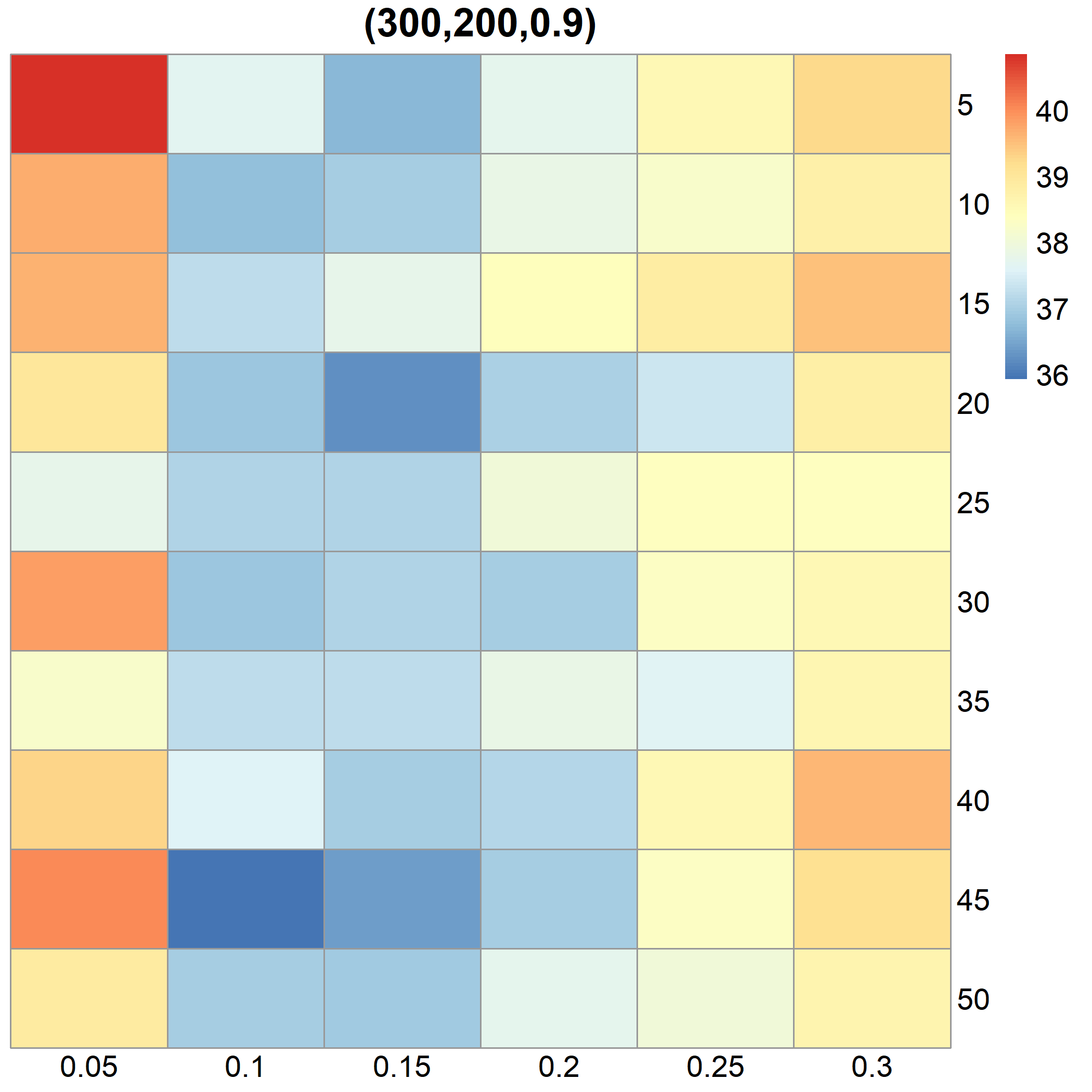}
		\includegraphics[width=0.32\linewidth]{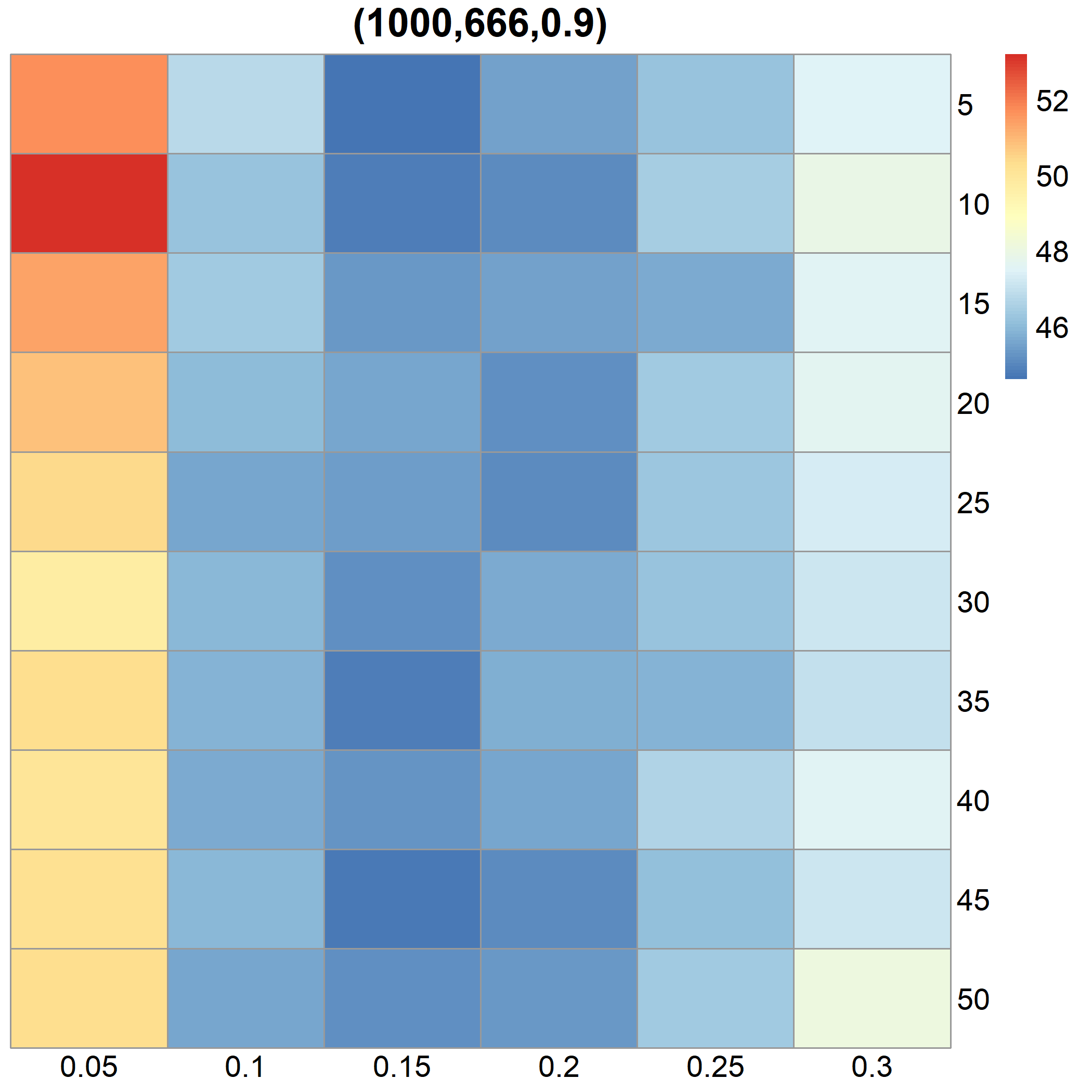}
		
		\vspace{3pt}
		\small (b) $\rho = 0.9$
	\end{minipage}
	
	\caption{Cross-validation results for Section \ref{app:comparewithpeng} under polynomial decay. Values in parentheses denote $(N, K, \rho)$. The horizontal axis represents the selection probability $p$, and the vertical axis indicates the number of candidate models $M$. Darker regions correspond to $(p, M)$ combinations yielding lower cross-validation errors.}
	\label{fig:cv1}
\end{figure*}

\begin{figure*}[htbp]
	\centering
	
	\begin{minipage}{0.95\textwidth}
		\centering
		\includegraphics[width=0.32\linewidth]{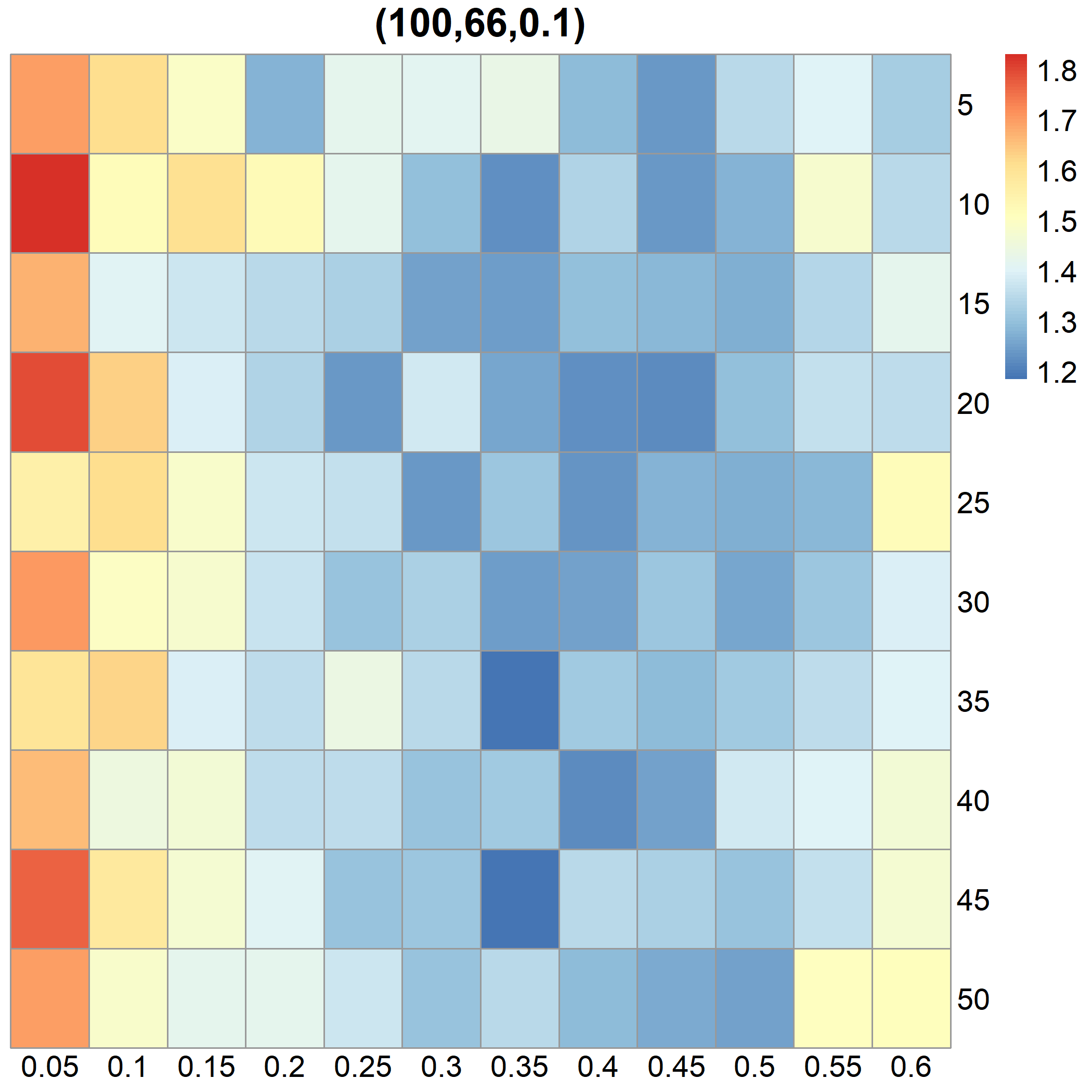}
		\includegraphics[width=0.32\linewidth]{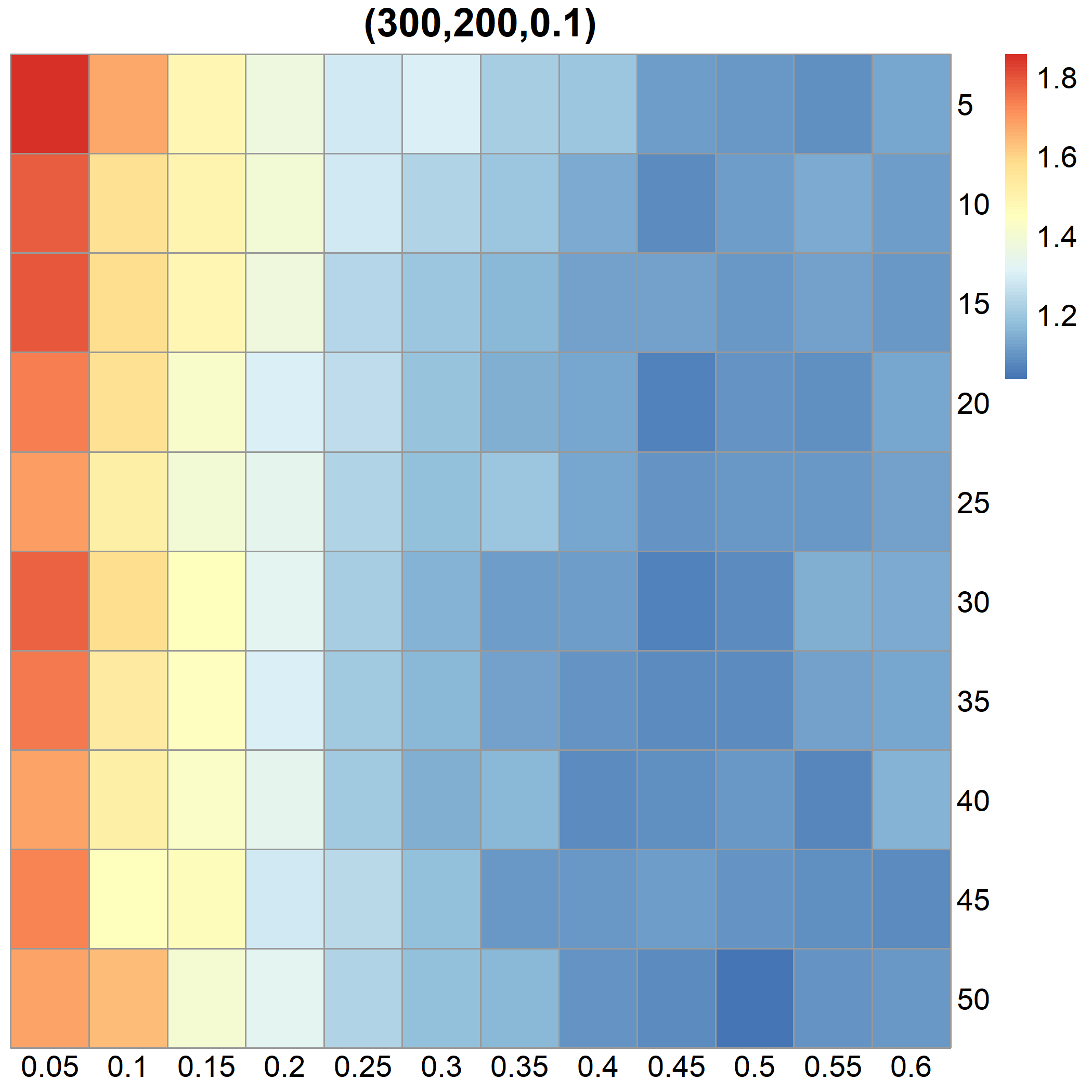}
		\includegraphics[width=0.32\linewidth]{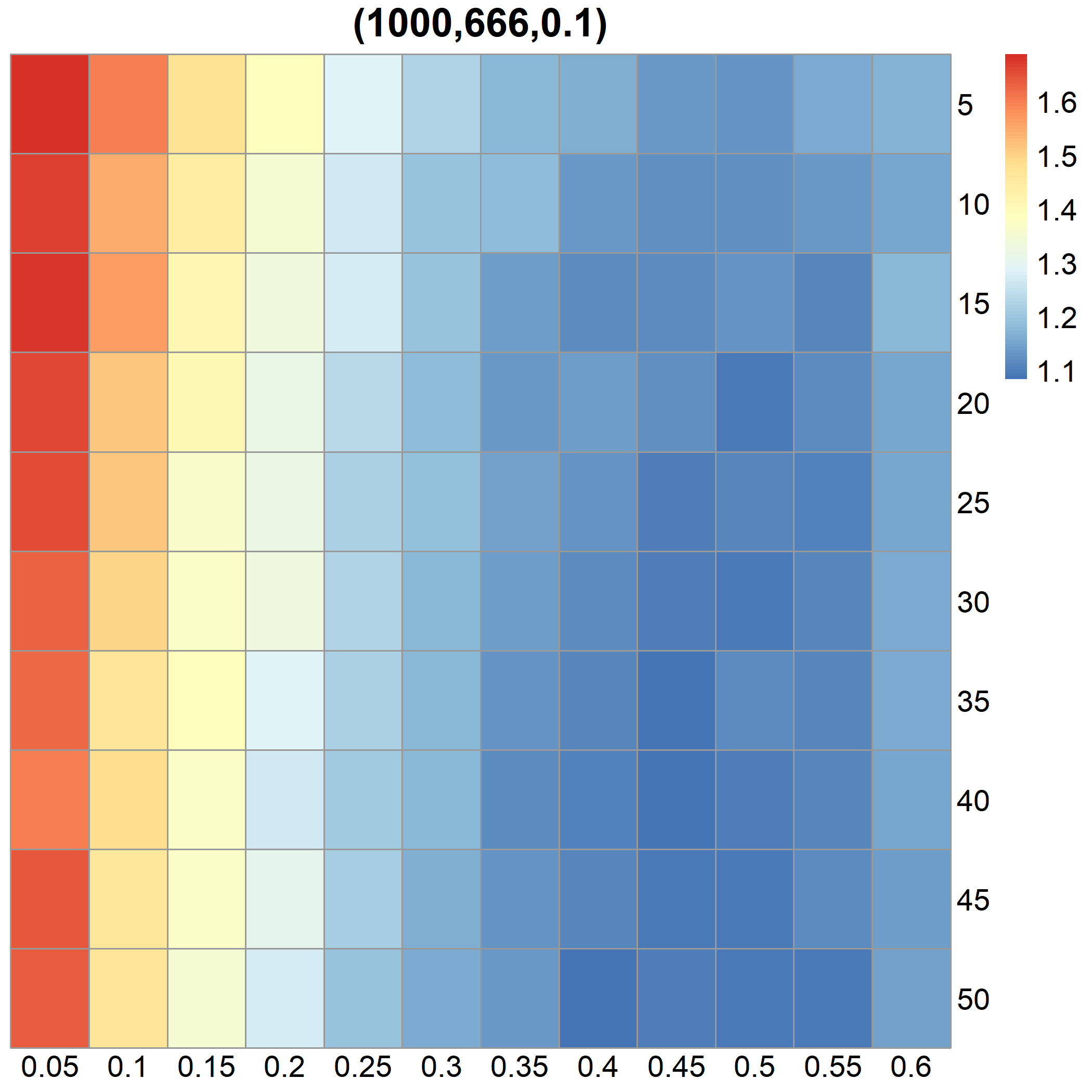}
		
		\vspace{3pt}
		\small (a) $\rho = 0.1$
	\end{minipage}
	
	\vspace{6pt}
	
	\begin{minipage}{0.95\textwidth}
		\centering
		\includegraphics[width=0.32\linewidth]{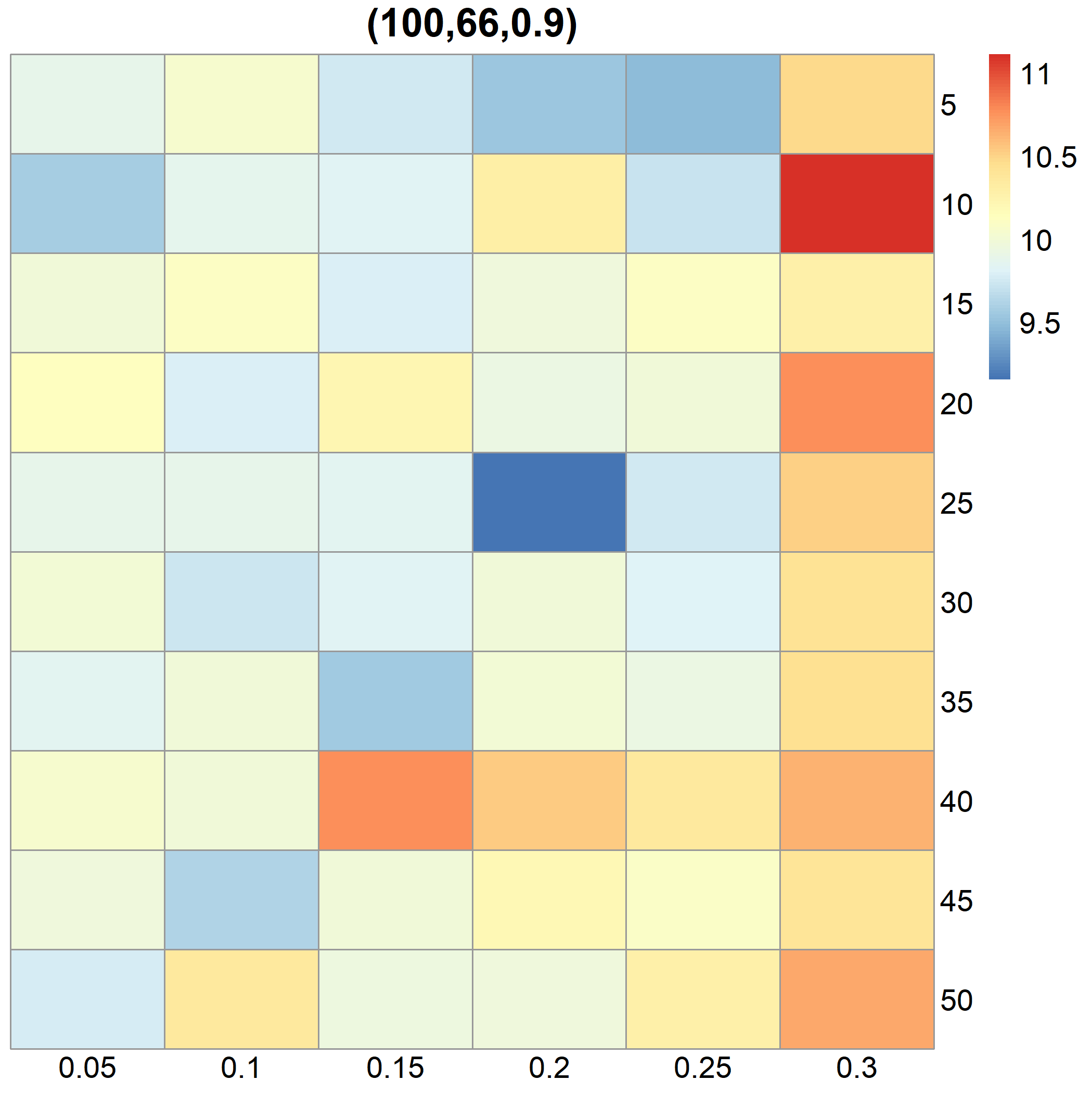}
		\includegraphics[width=0.32\linewidth]{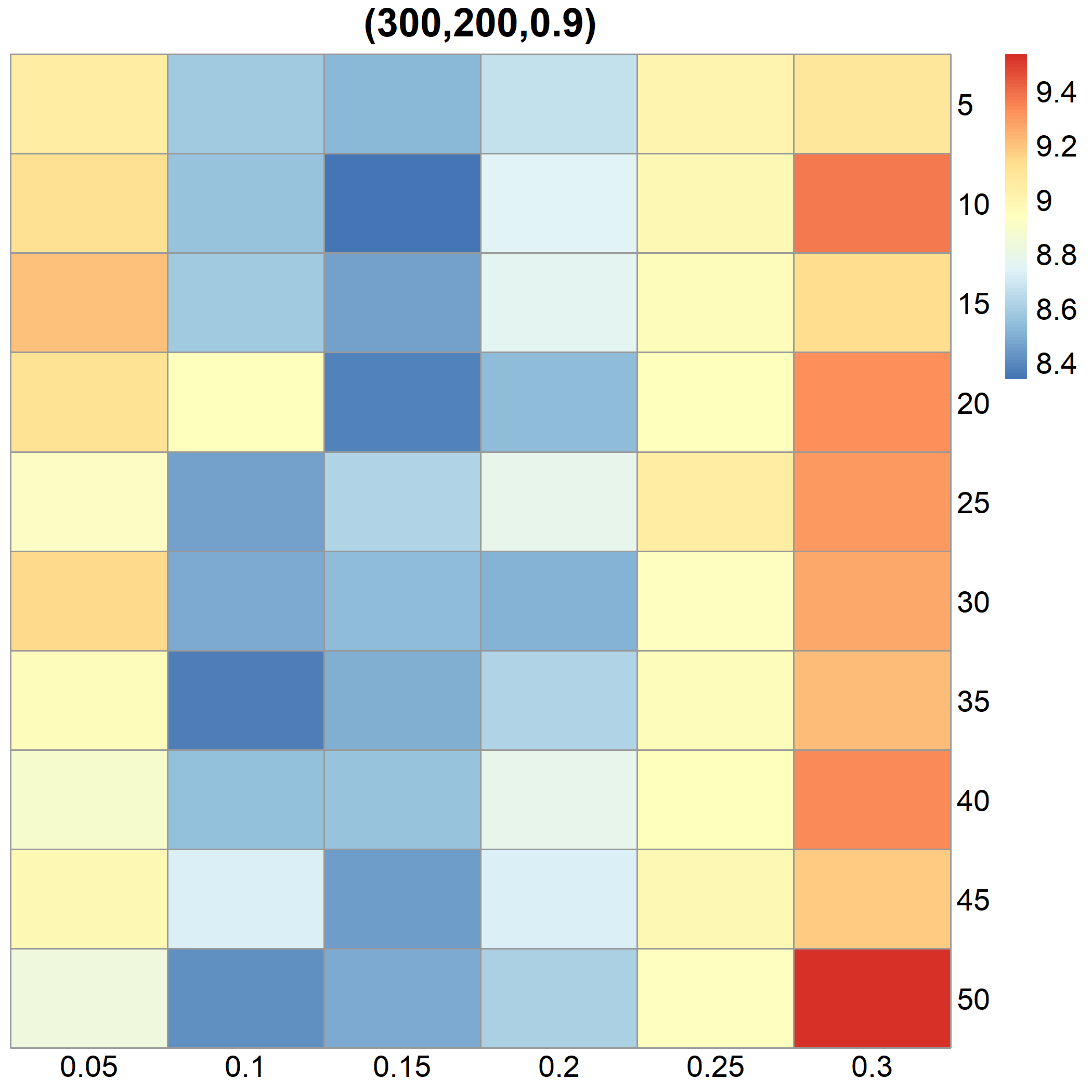}
		\includegraphics[width=0.32\linewidth]{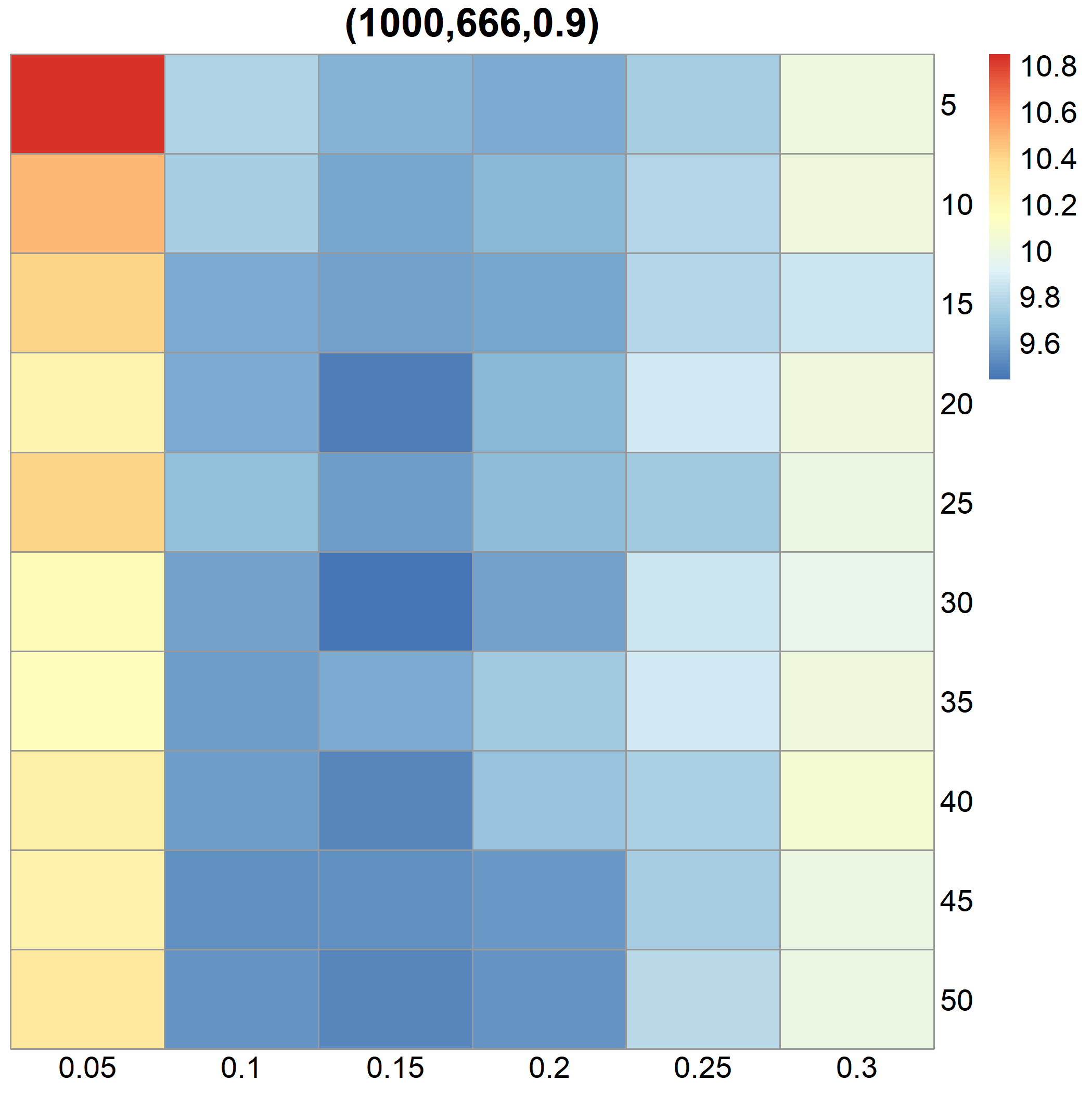}
		
		\vspace{3pt}
		\small (b) $\rho = 0.9$
	\end{minipage}
	
	\caption{Cross-validation results for Section \ref{app:comparewithpeng} under exponential decay. Values in parentheses denote $(N, K, \rho)$. The horizontal axis represents the selection probability $p$, and the vertical axis indicates the number of candidate models $M$. Darker regions correspond to $(p, M)$ combinations yielding lower cross-validation errors.}
	\label{fig:cv2}
\end{figure*}

\newpage
\subsubsection{CV results for Section \ref{app:manyrelevantvariables}}

Figures~\ref{fig:cvMR0.1}--\ref{fig:cvMRexpRA} report the cross-validation results in settings where a large number of covariates are believed to contribute to predictive performance. The findings are consistent with our earlier conclusions and further support the effectiveness of the cross-validation procedure.

\begin{figure*}[htbp]
	\centering
	
	\begin{minipage}{0.95\textwidth}
		\centering
		\includegraphics[width=0.32\linewidth]{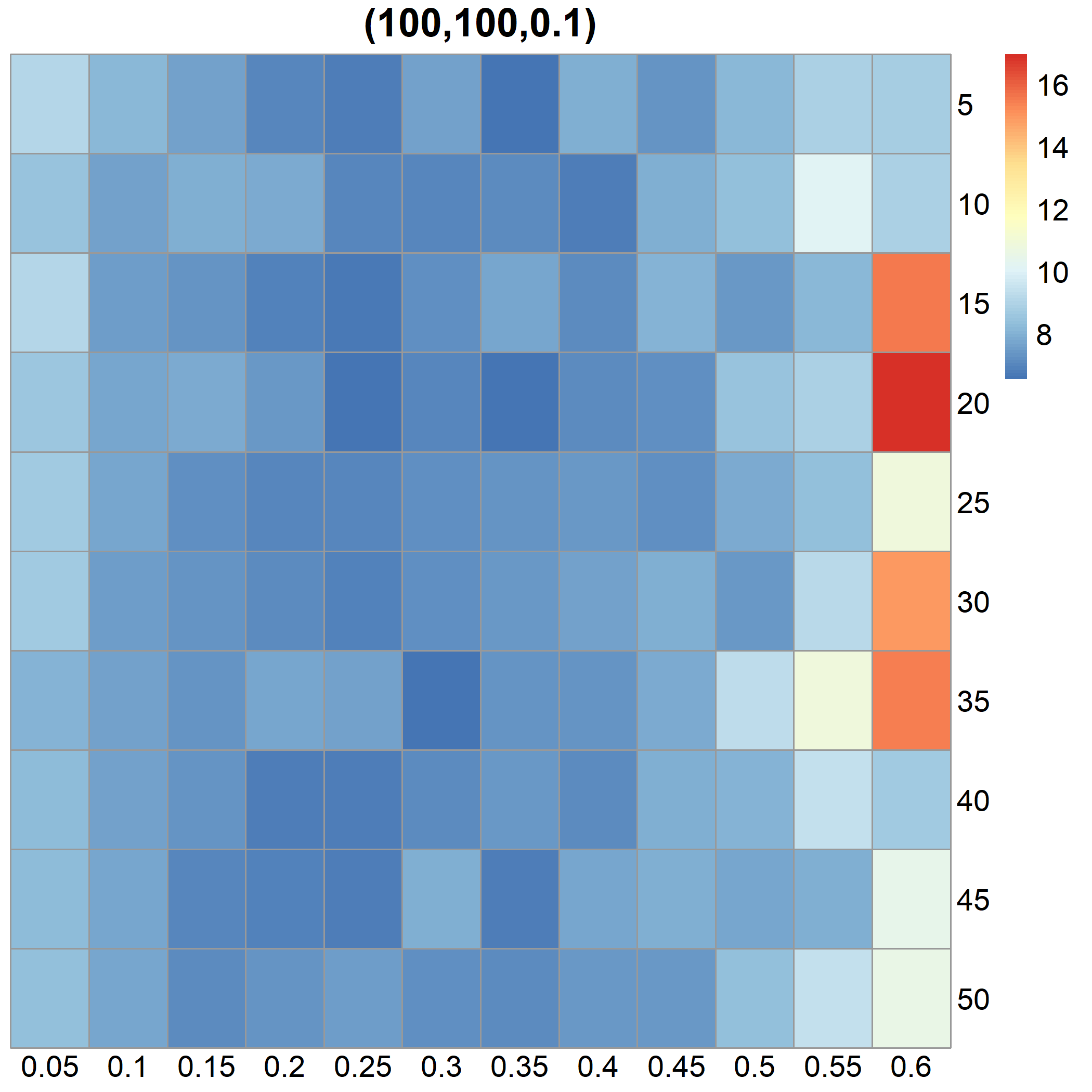}
		\includegraphics[width=0.32\linewidth]{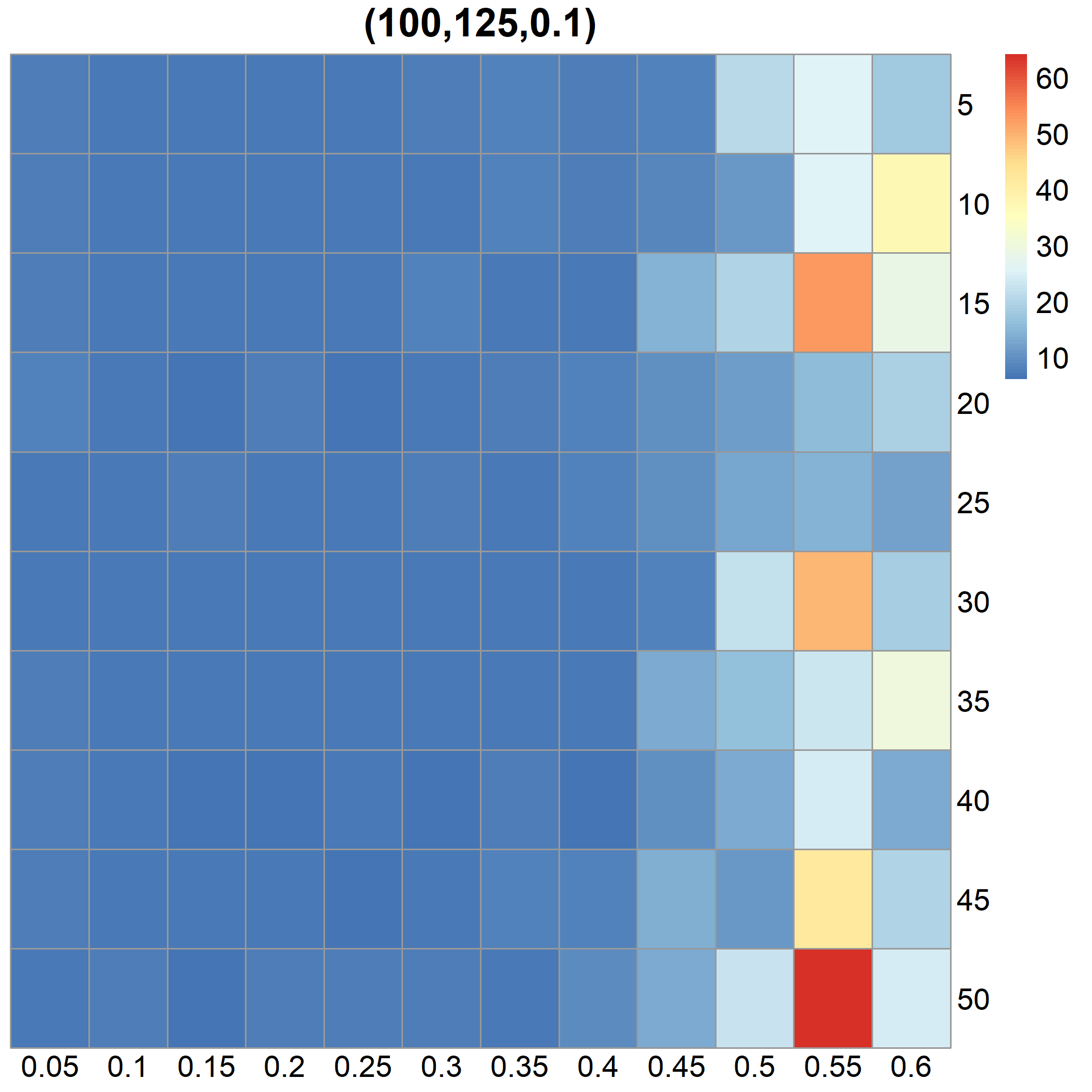}
		\includegraphics[width=0.32\linewidth]{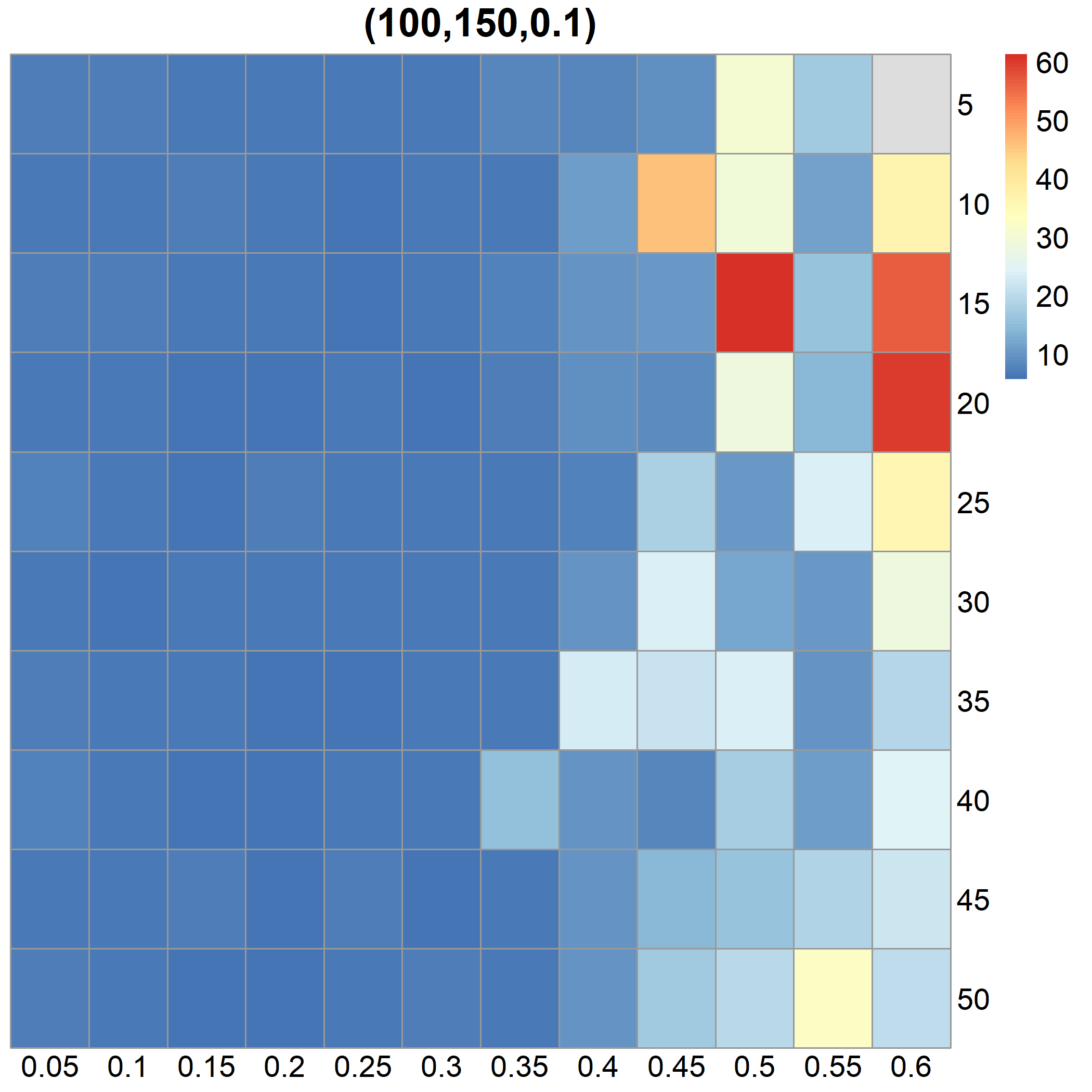}
		
		\vspace{3pt}
		\small (a) $N = 100$
	\end{minipage}
	
	\vspace{6pt}
	
	\begin{minipage}{0.95\textwidth}
		\centering
		\includegraphics[width=0.32\linewidth]{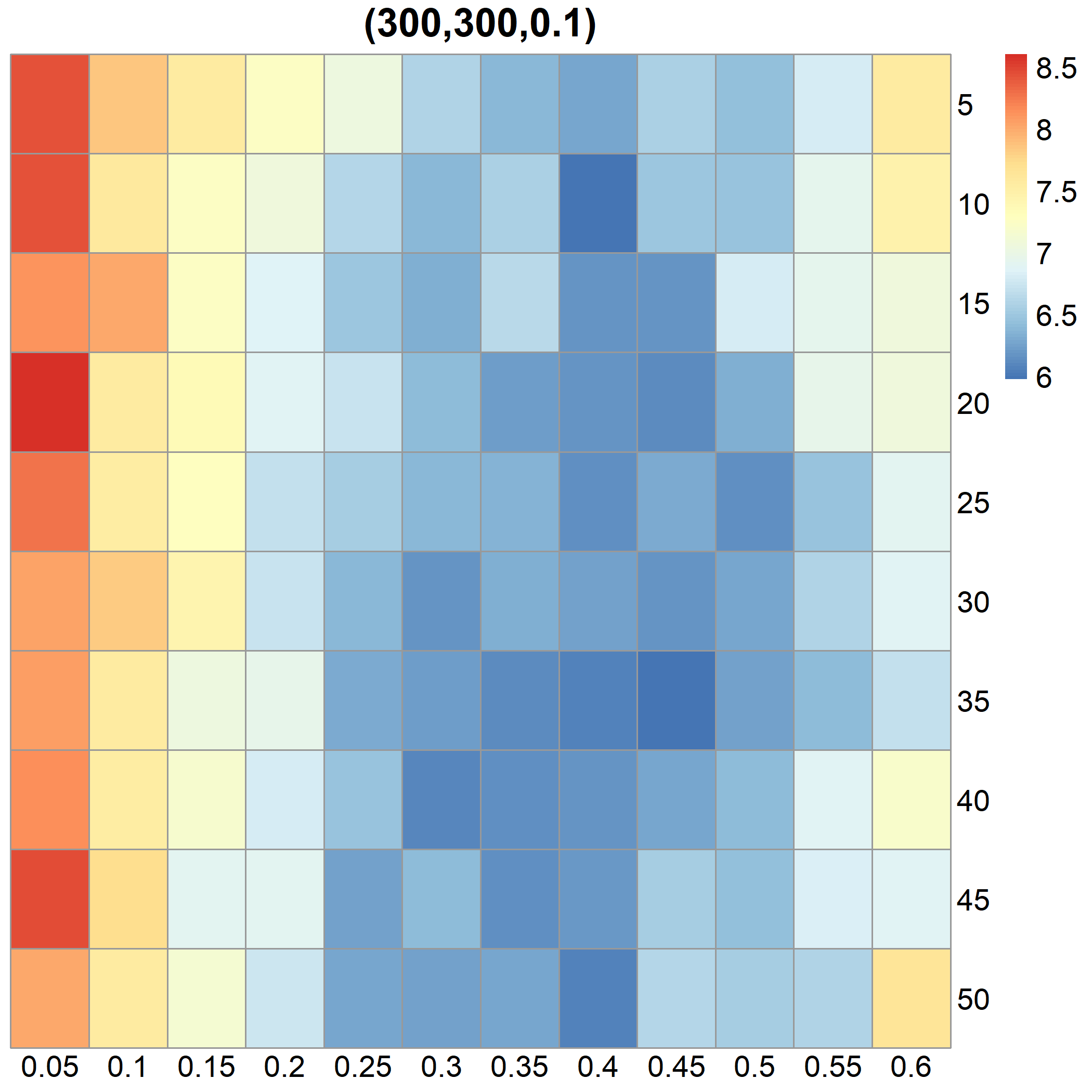}
		\includegraphics[width=0.32\linewidth]{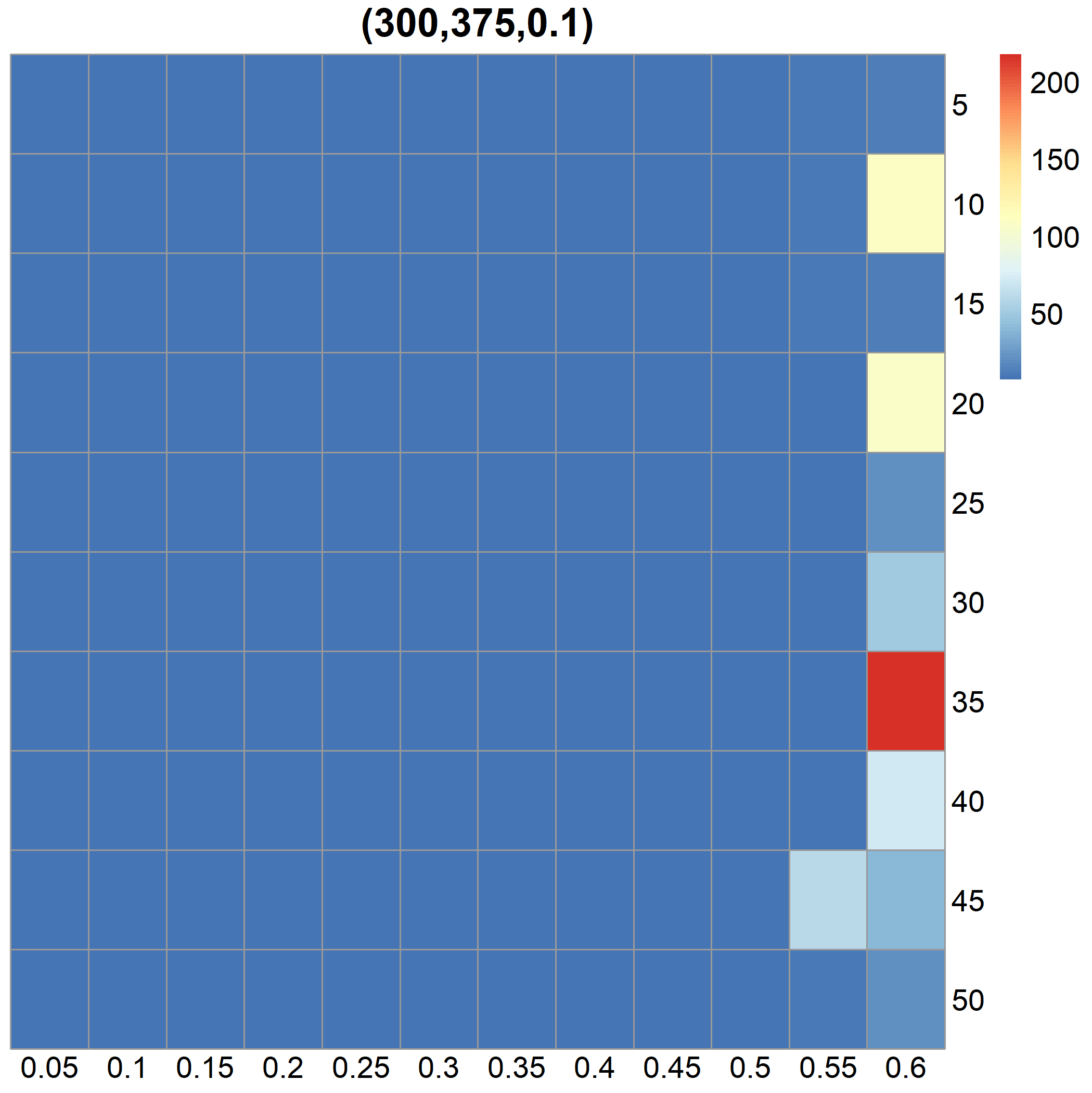}
		\includegraphics[width=0.32\linewidth]{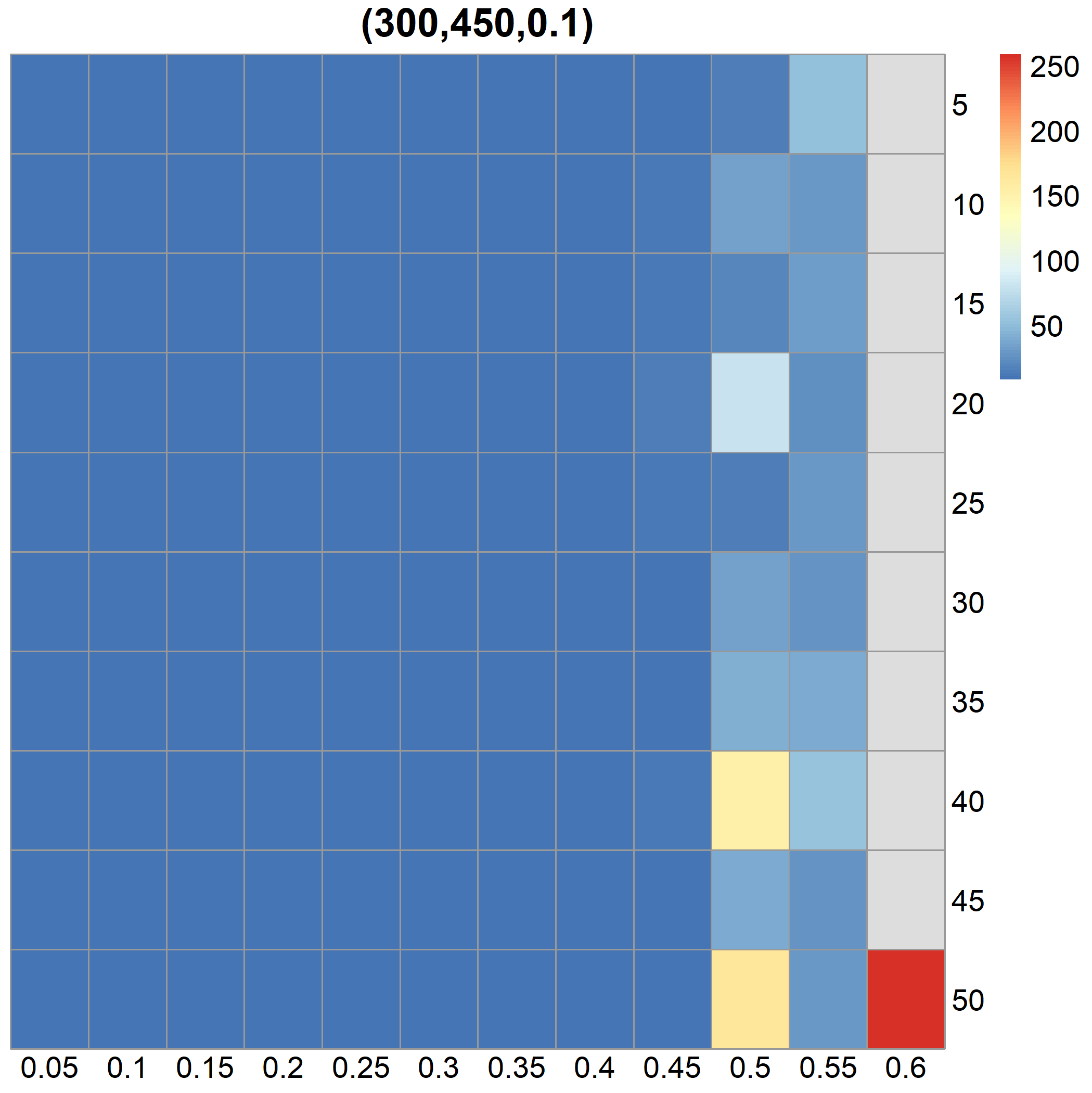}
		
		\vspace{3pt}
		\small (b) $N = 300$
	\end{minipage}
	
	\caption{Cross-validation results for Section \ref{app:manyrelevantvariables} under polynomial decay with $\rho = 0.1$. Values in parentheses denote $(N, K, \rho)$. The horizontal axis represents the selection probability $p$, and the vertical axis indicates the number of candidate models $M$. Darker regions correspond to $(p, M)$ combinations yielding lower cross-validation errors.}
	\label{fig:cvMR0.1}
\end{figure*}

\begin{figure*}[htbp]
	\centering
	
	\begin{minipage}{0.95\textwidth}
		\centering
		\includegraphics[width=0.32\linewidth]{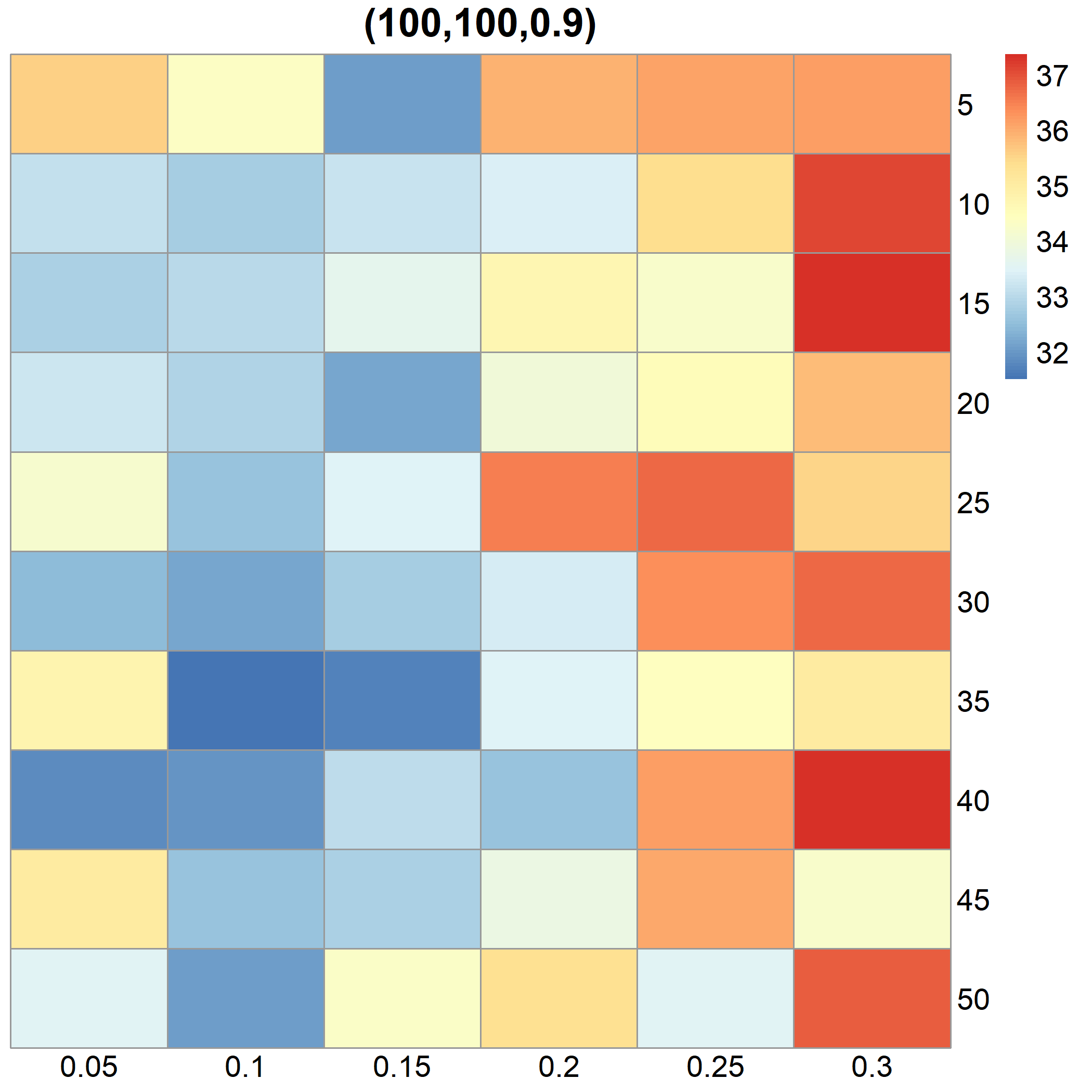}
		\includegraphics[width=0.32\linewidth]{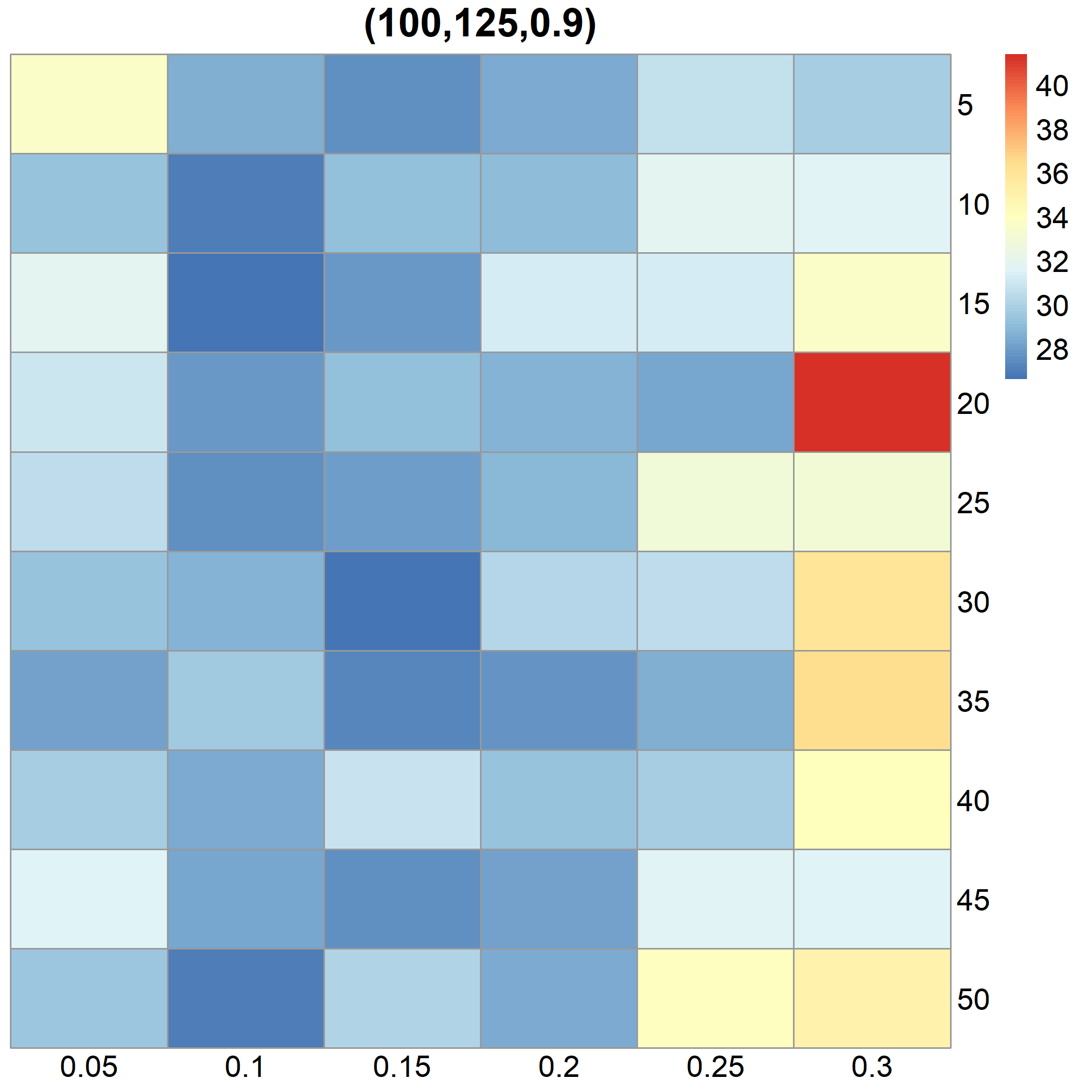}
		\includegraphics[width=0.32\linewidth]{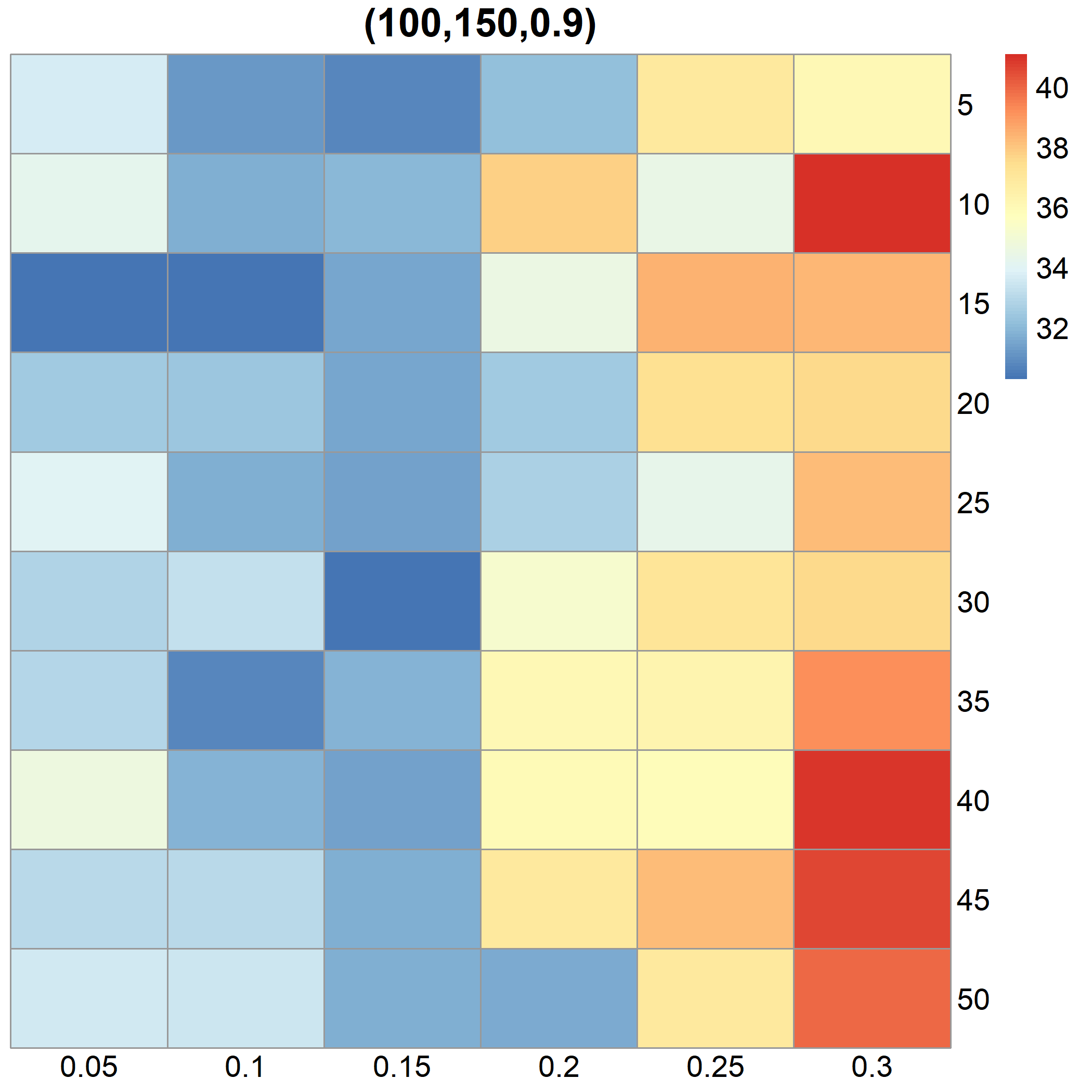}
		
		\vspace{3pt}
		\small (a) $N = 100$
	\end{minipage}
	
	\vspace{6pt}
	
	\begin{minipage}{0.95\textwidth}
		\centering
		\includegraphics[width=0.32\linewidth]{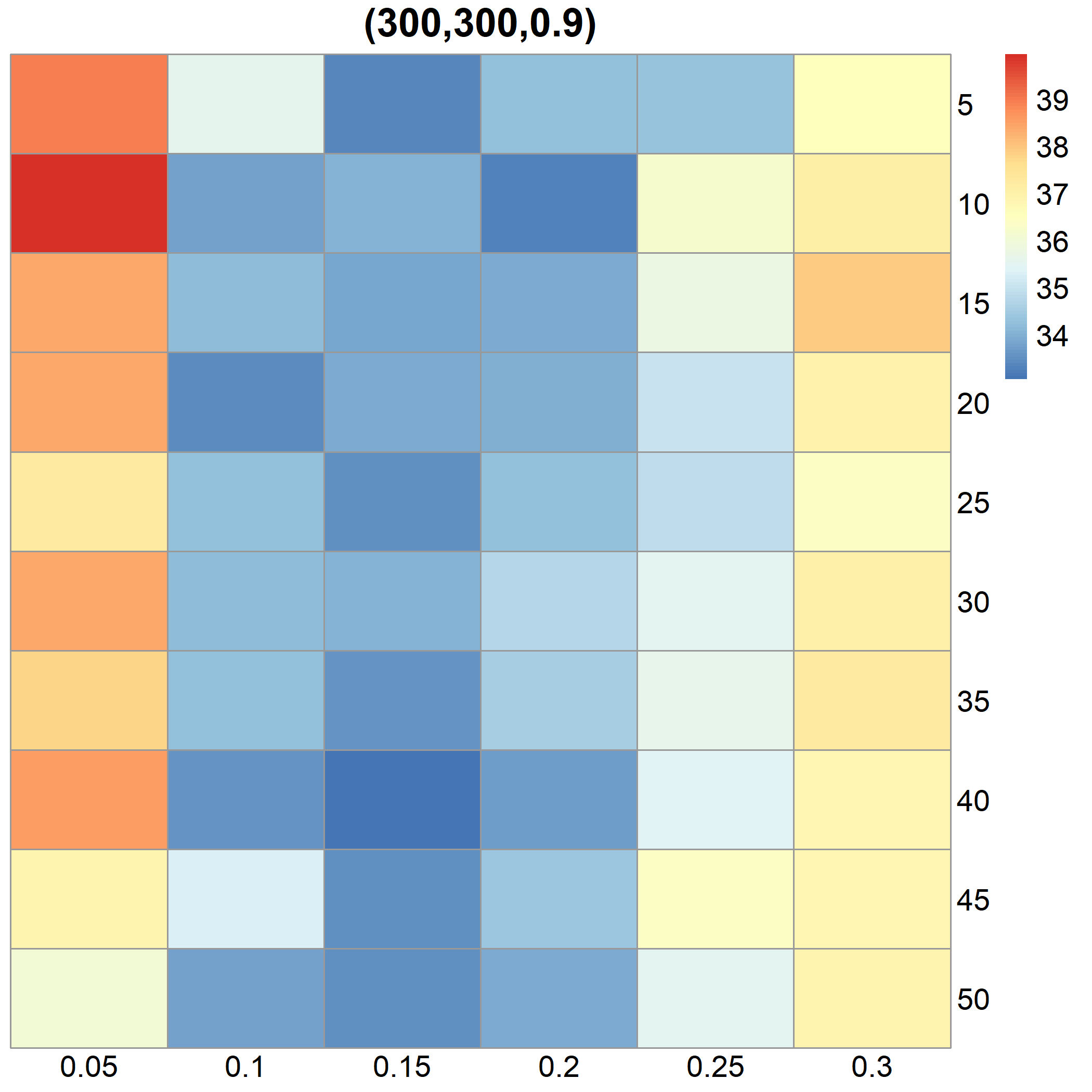}
		\includegraphics[width=0.32\linewidth]{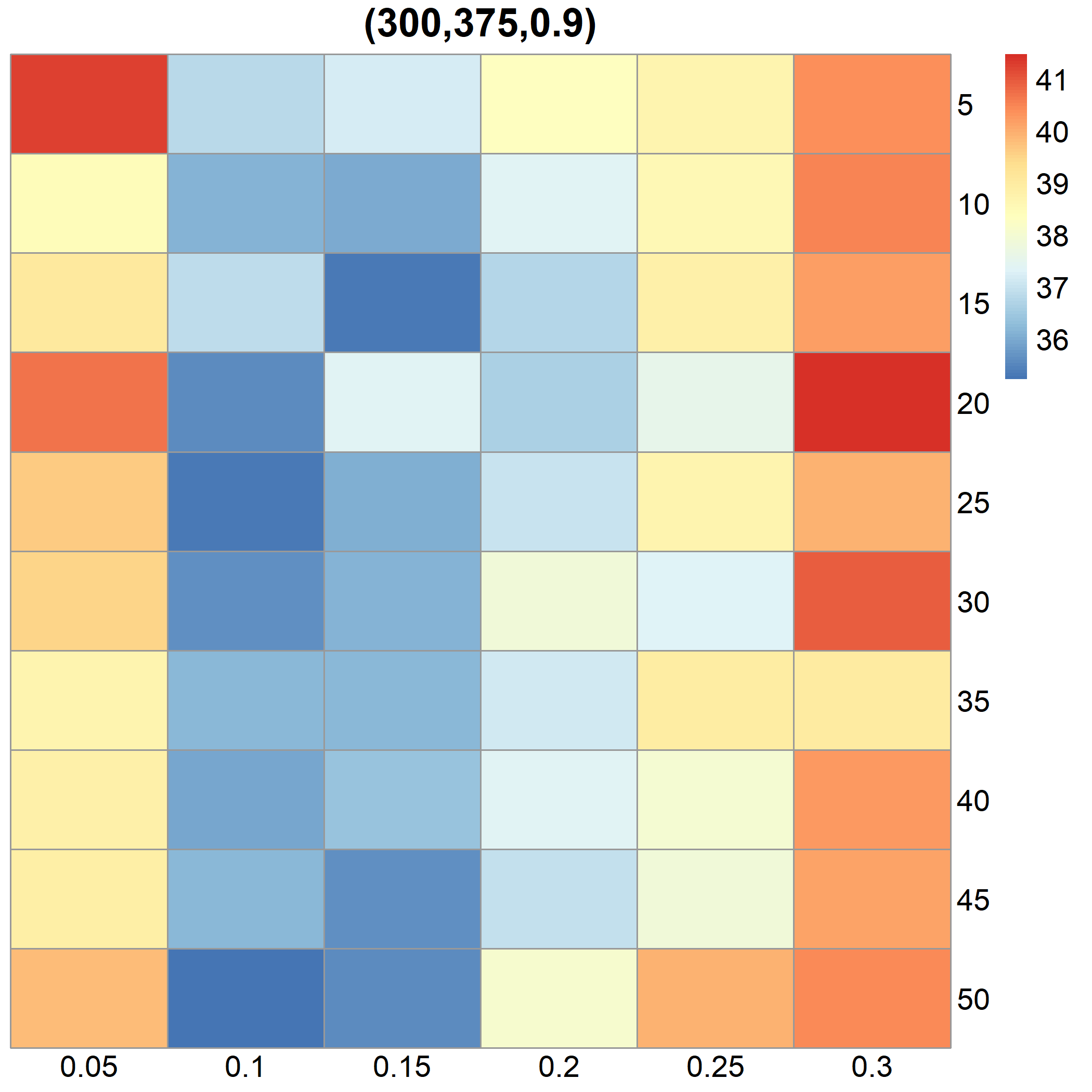}
		\includegraphics[width=0.32\linewidth]{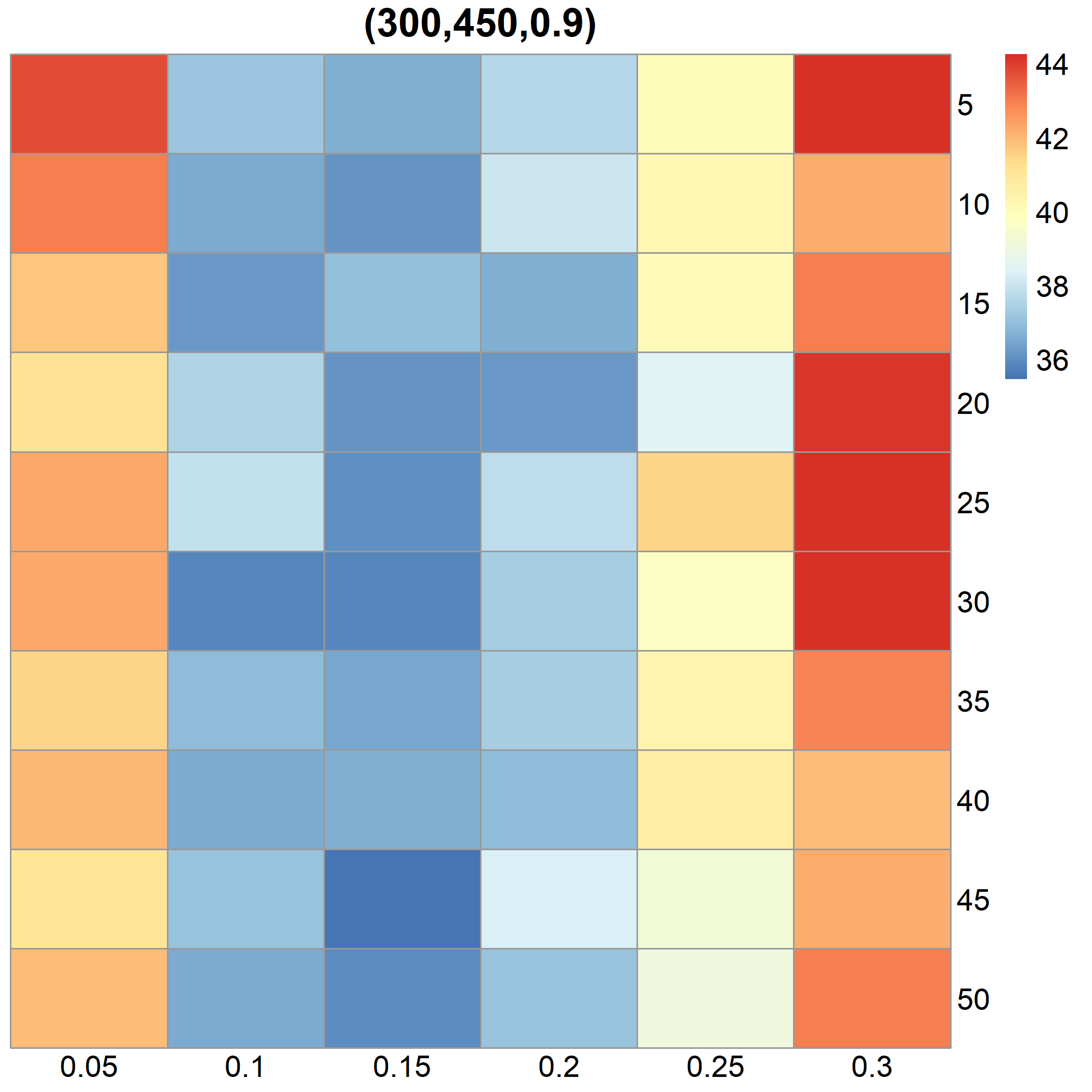}
		
		\vspace{3pt}
		\small (b) $N = 300$
	\end{minipage}
	
	\caption{Cross-validation results for Section \ref{app:manyrelevantvariables} under polynomial decay with $\rho = 0.9$. Values in parentheses denote $(N, K, \rho)$. The horizontal axis represents the selection probability $p$, and the vertical axis indicates the number of candidate models $M$. Darker regions correspond to $(p, M)$ combinations yielding lower cross-validation errors.}
	\label{fig:cvMR0.9}
\end{figure*}

\begin{figure*}[htbp]
	\centering
	
	\begin{minipage}{0.95\textwidth}
		\centering
		\includegraphics[width=0.32\linewidth]{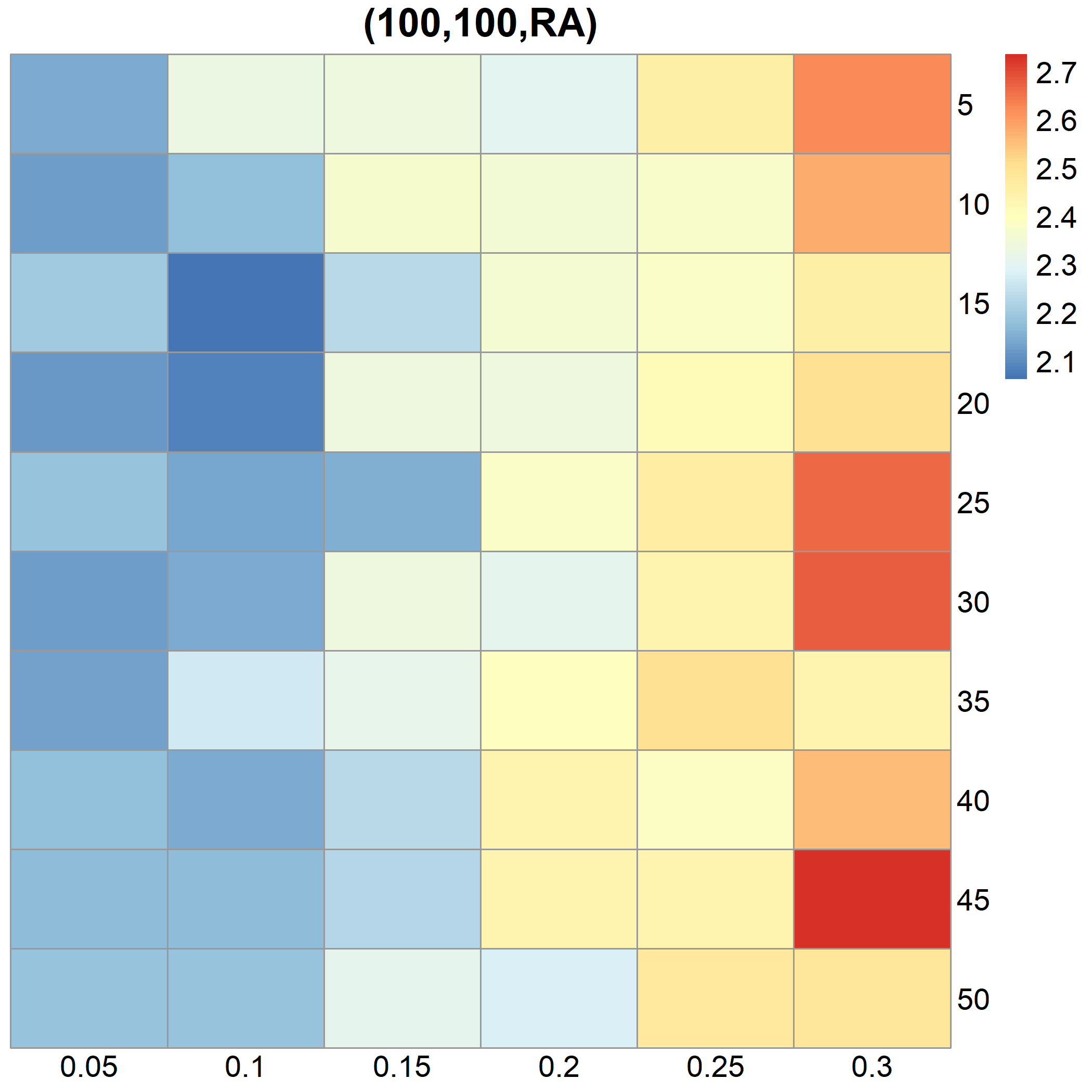}
		\includegraphics[width=0.32\linewidth]{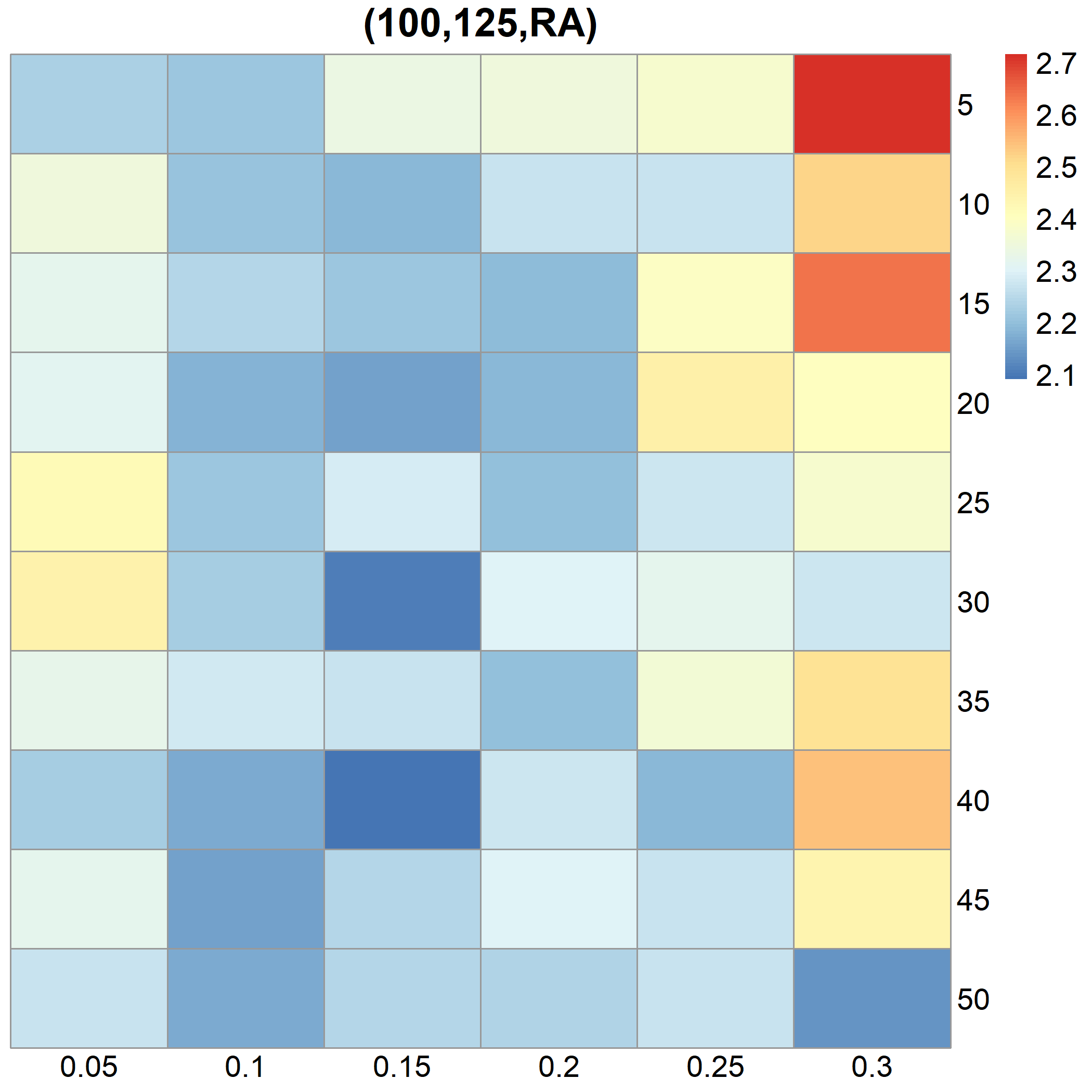}
		\includegraphics[width=0.32\linewidth]{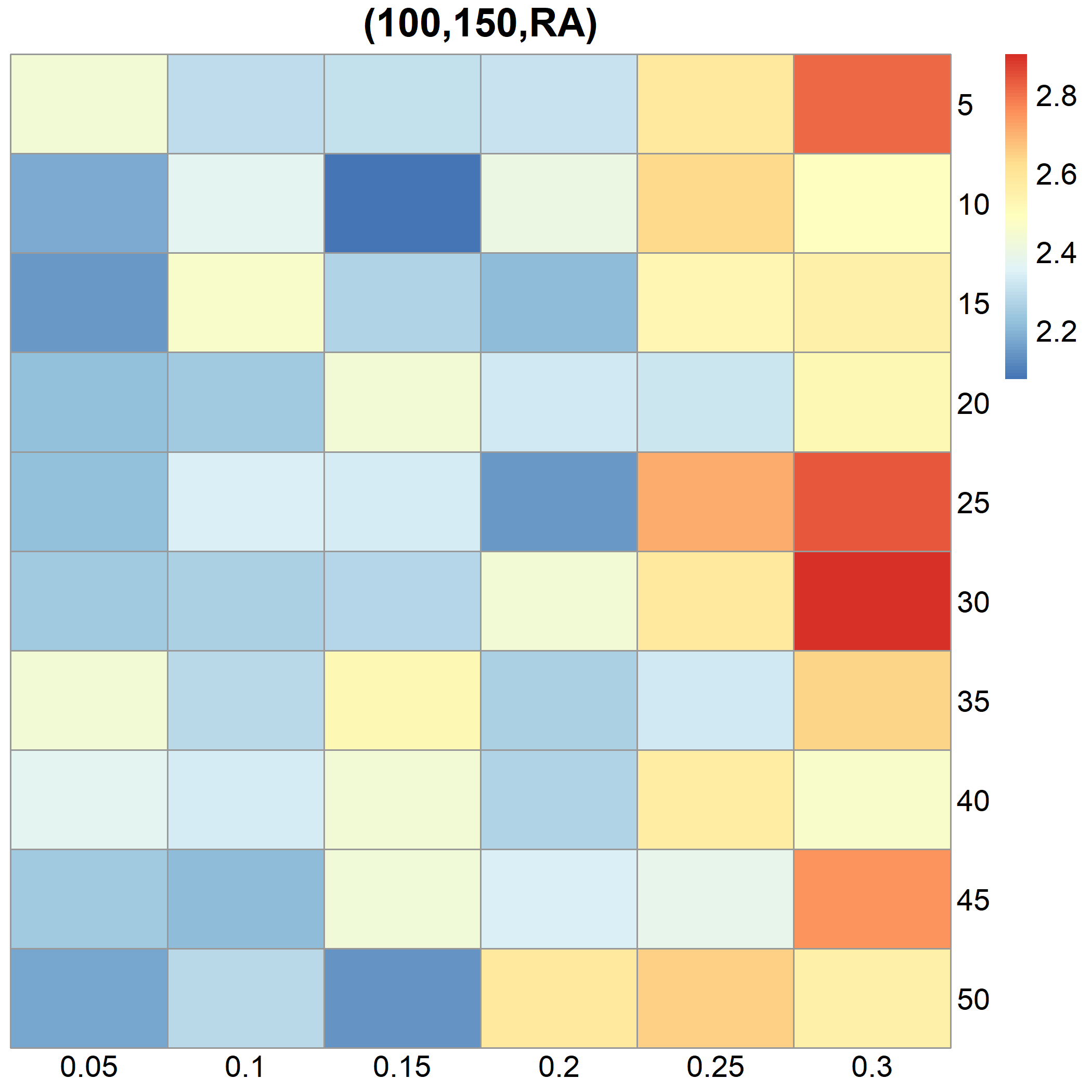}
		
		\vspace{3pt}
		\small (a) $N = 100$
	\end{minipage}
	
	\vspace{6pt}
	
	\begin{minipage}{0.95\textwidth}
		\centering
		\includegraphics[width=0.32\linewidth]{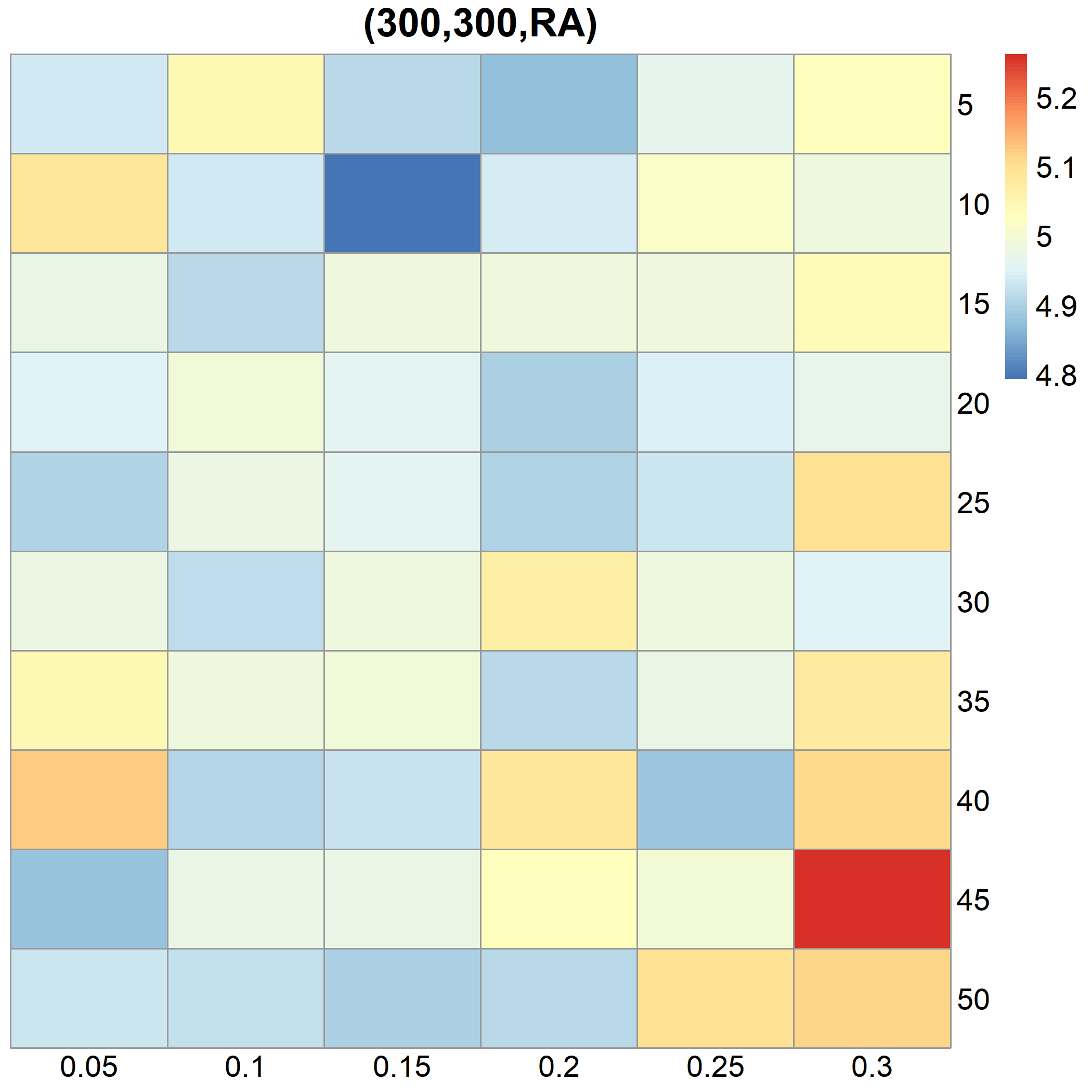}
		\includegraphics[width=0.32\linewidth]{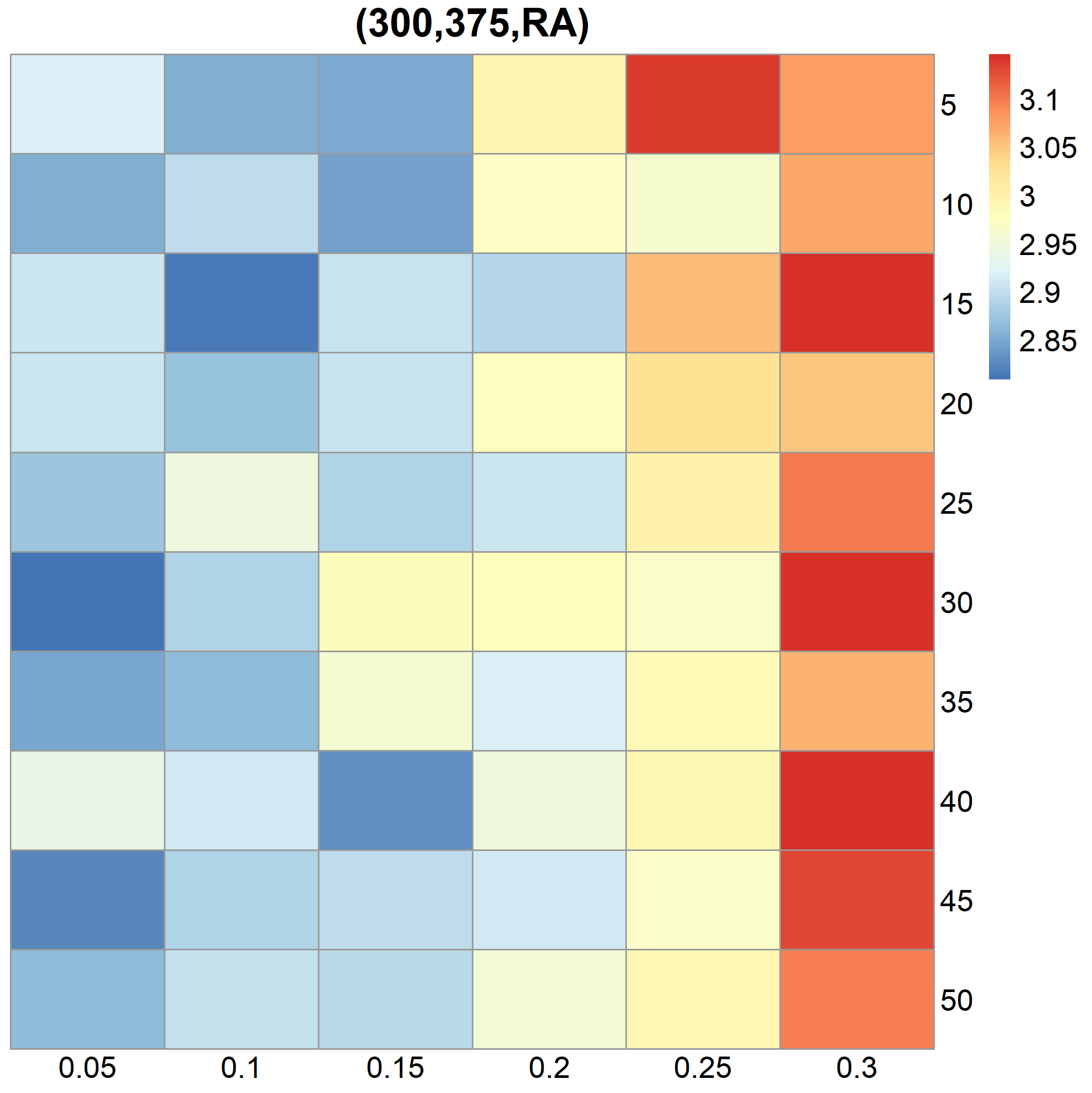}
		\includegraphics[width=0.32\linewidth]{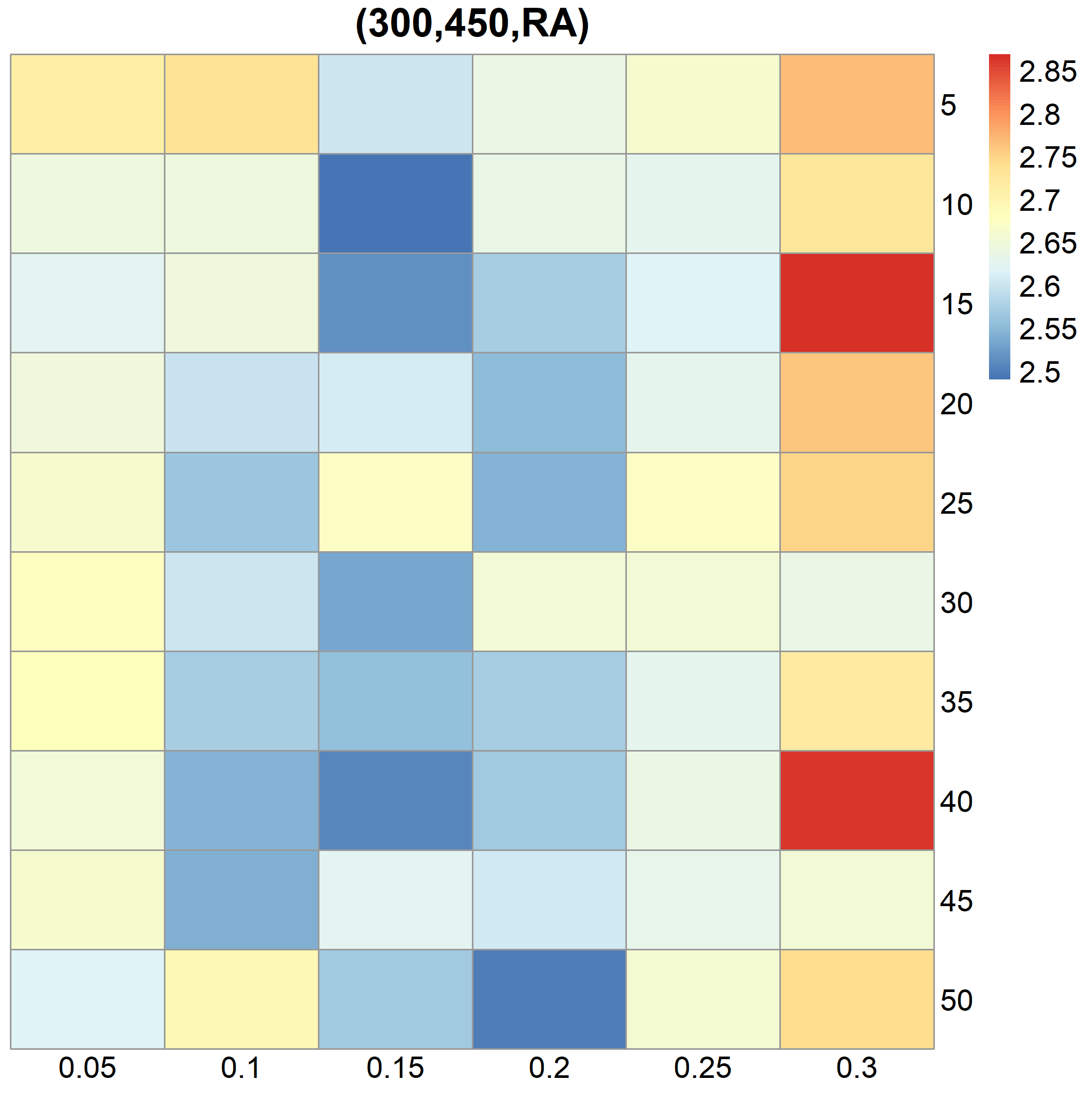}
		
		\vspace{3pt}
		\small (b) $N = 300$
	\end{minipage}
	
	\caption{Cross-validation results for Section \ref{app:manyrelevantvariables} under polynomial decay with a random covariance structure. Values in parentheses denote $(N, K, \rho)$. The horizontal axis represents the selection probability $p$, and the vertical axis indicates the number of candidate models $M$. Darker regions correspond to $(p, M)$ combinations yielding lower cross-validation errors.}
	\label{fig:cvMRRA}
\end{figure*}

\begin{figure*}[htbp]
	\centering
	
	\begin{minipage}{0.95\textwidth}
		\centering
		\includegraphics[width=0.32\linewidth]{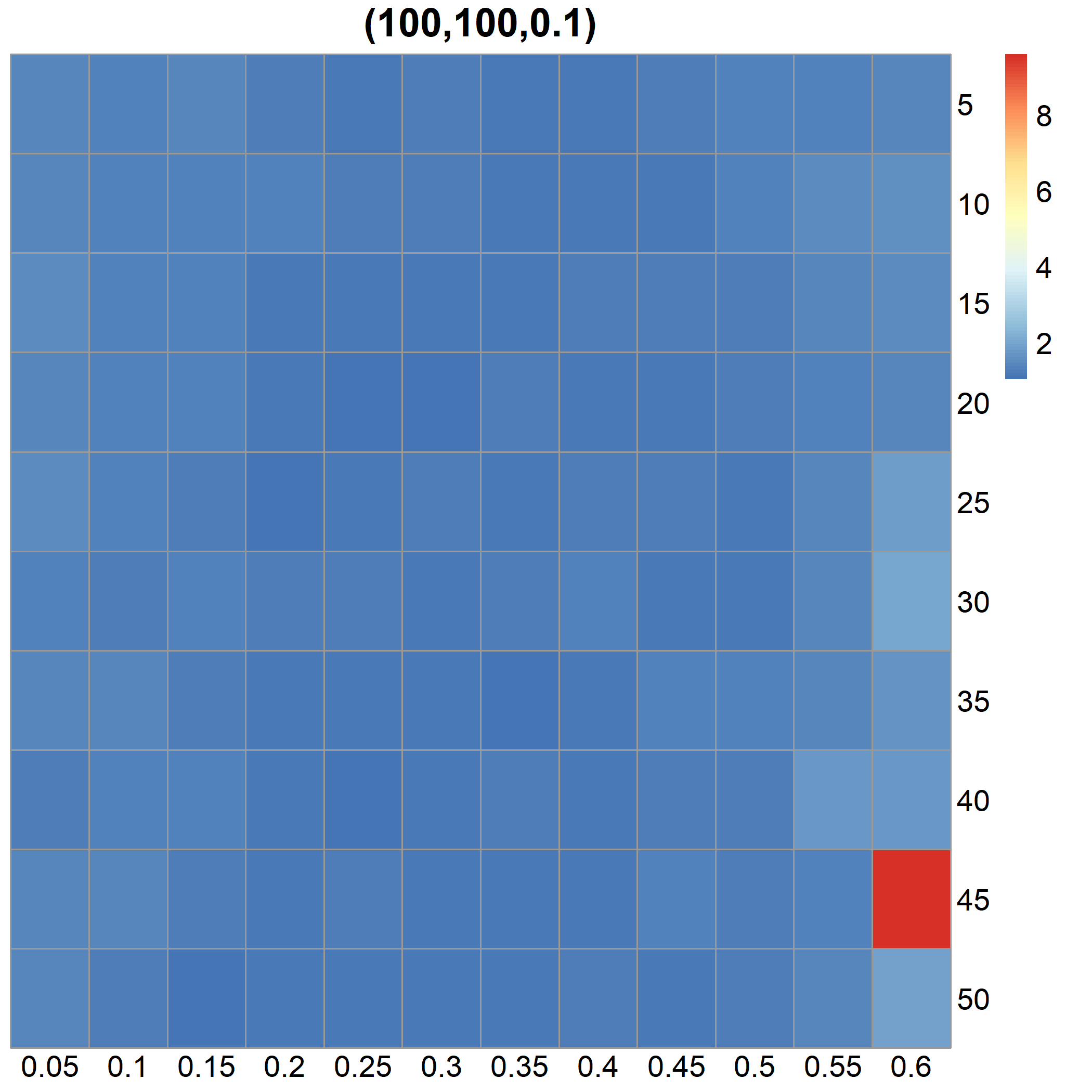}
		\includegraphics[width=0.32\linewidth]{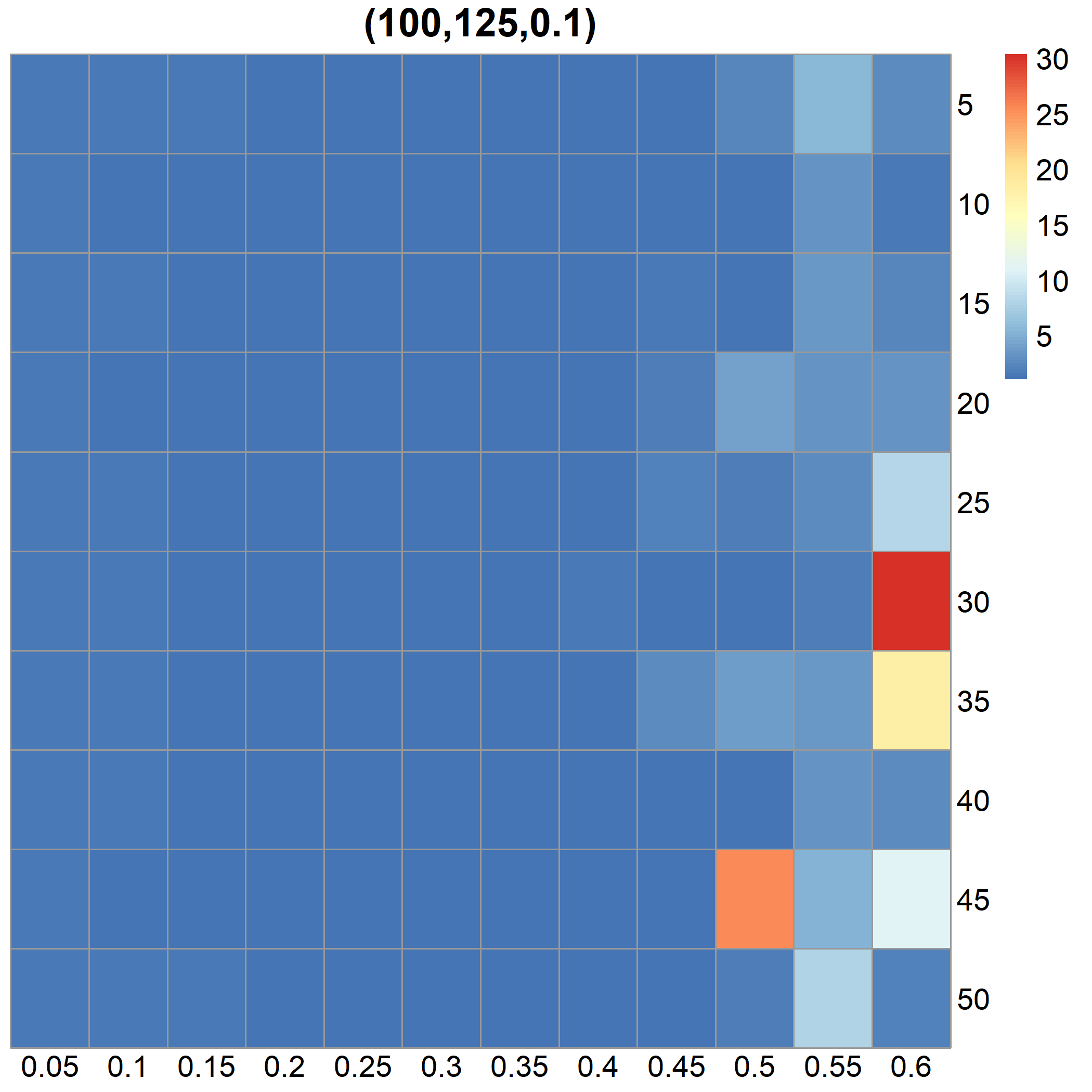}
		\includegraphics[width=0.32\linewidth]{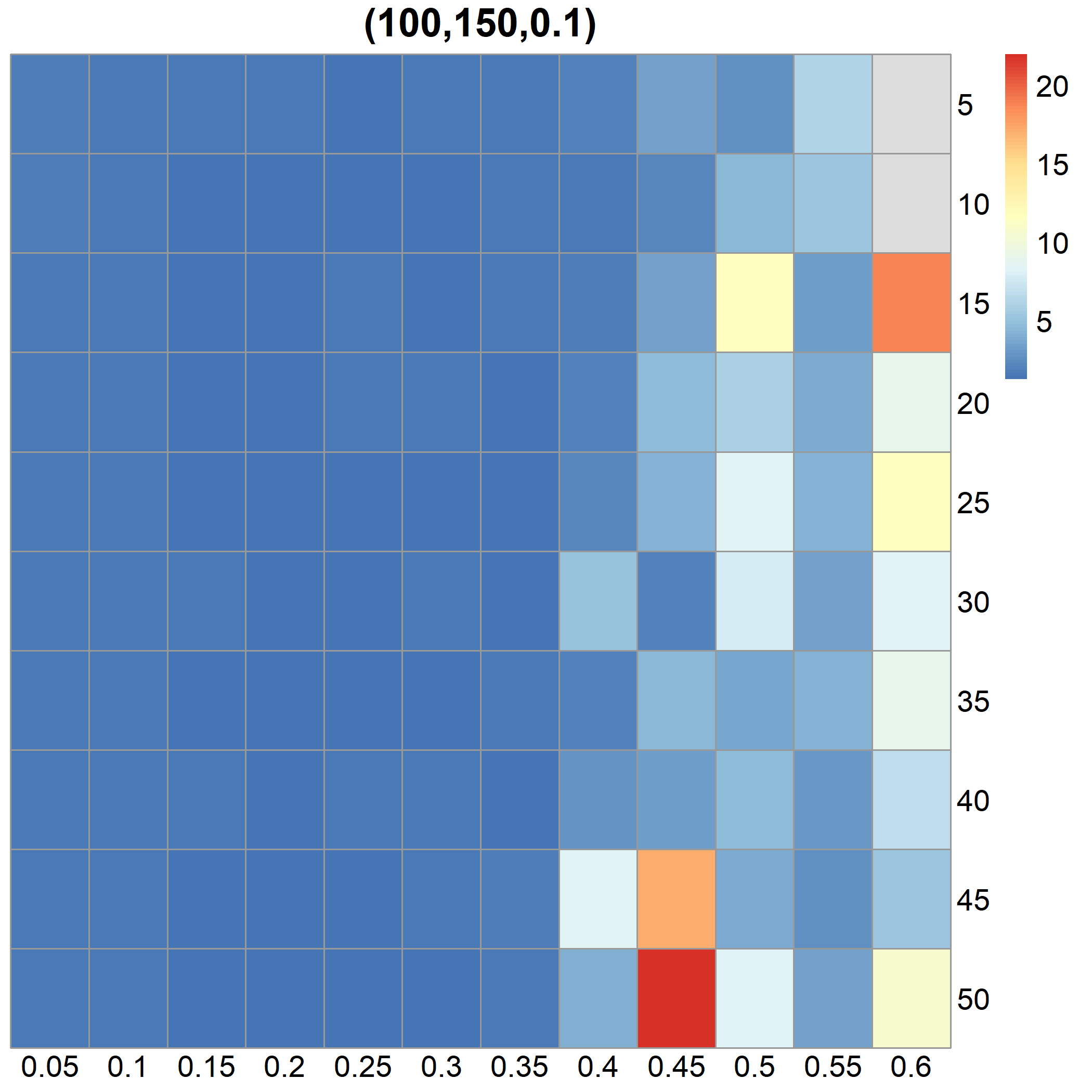}
		
		\vspace{3pt}
		\small (a) $N = 100$
	\end{minipage}
	
	\vspace{6pt}
	
	\begin{minipage}{0.95\textwidth}
		\centering
		\includegraphics[width=0.32\linewidth]{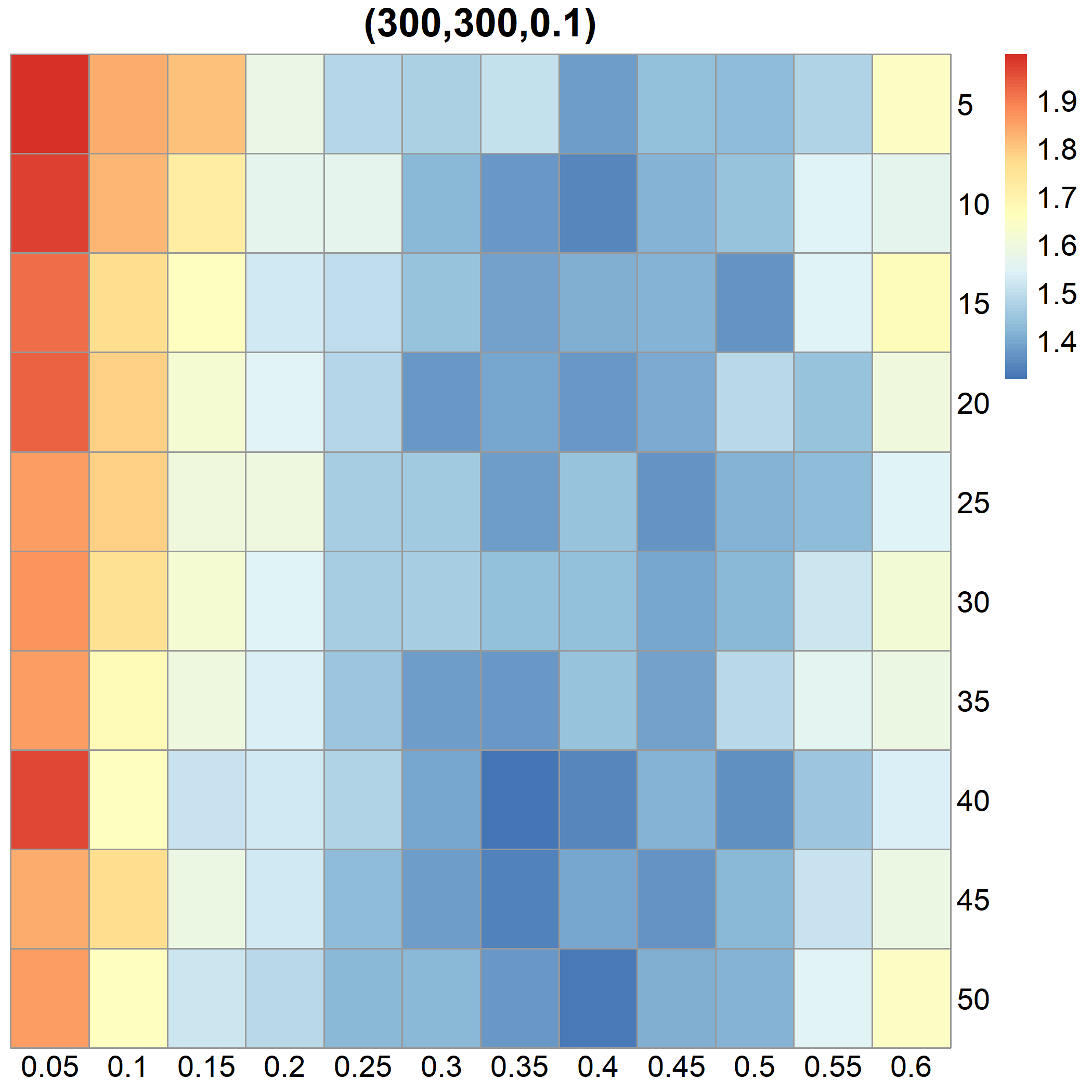}
		\includegraphics[width=0.32\linewidth]{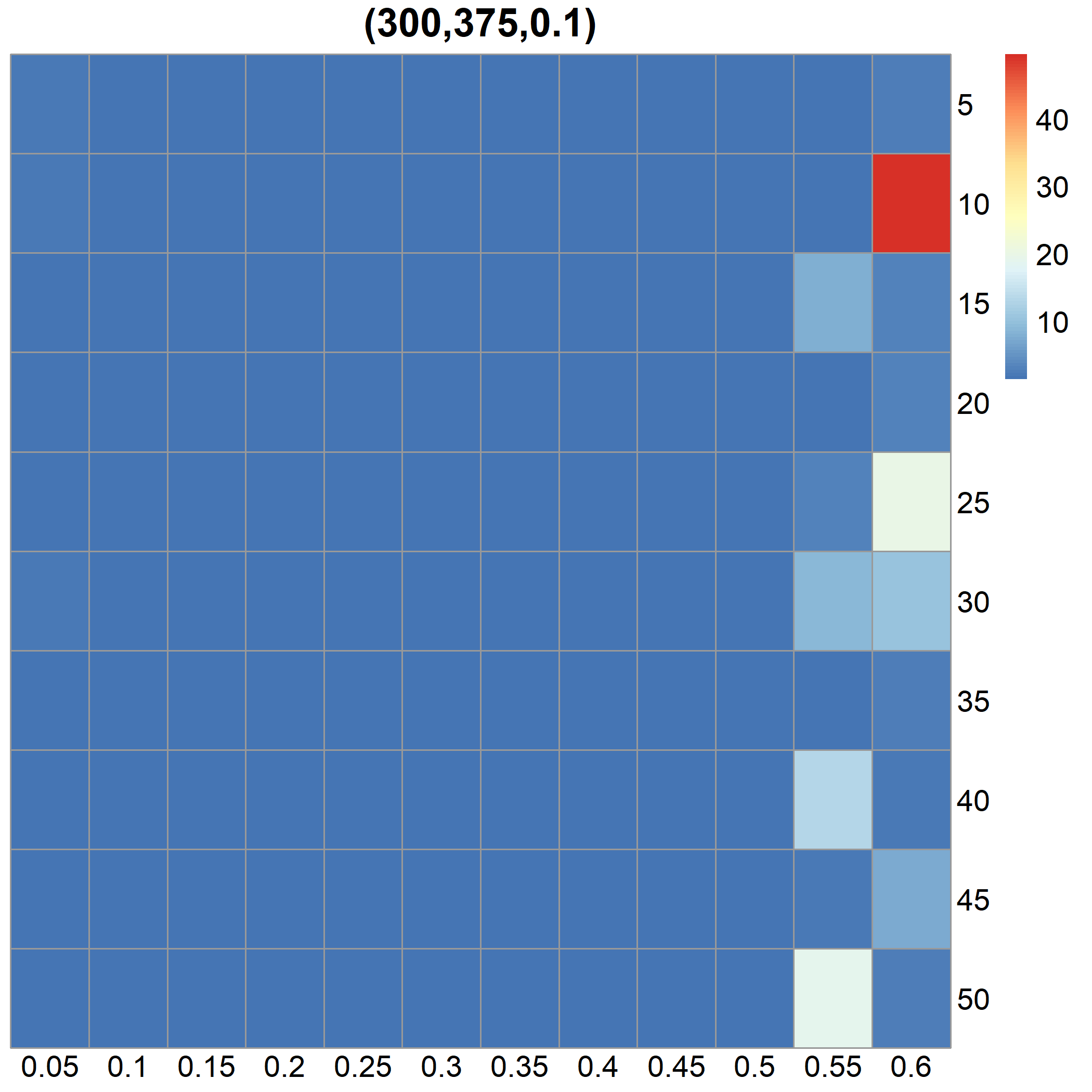}
		\includegraphics[width=0.32\linewidth]{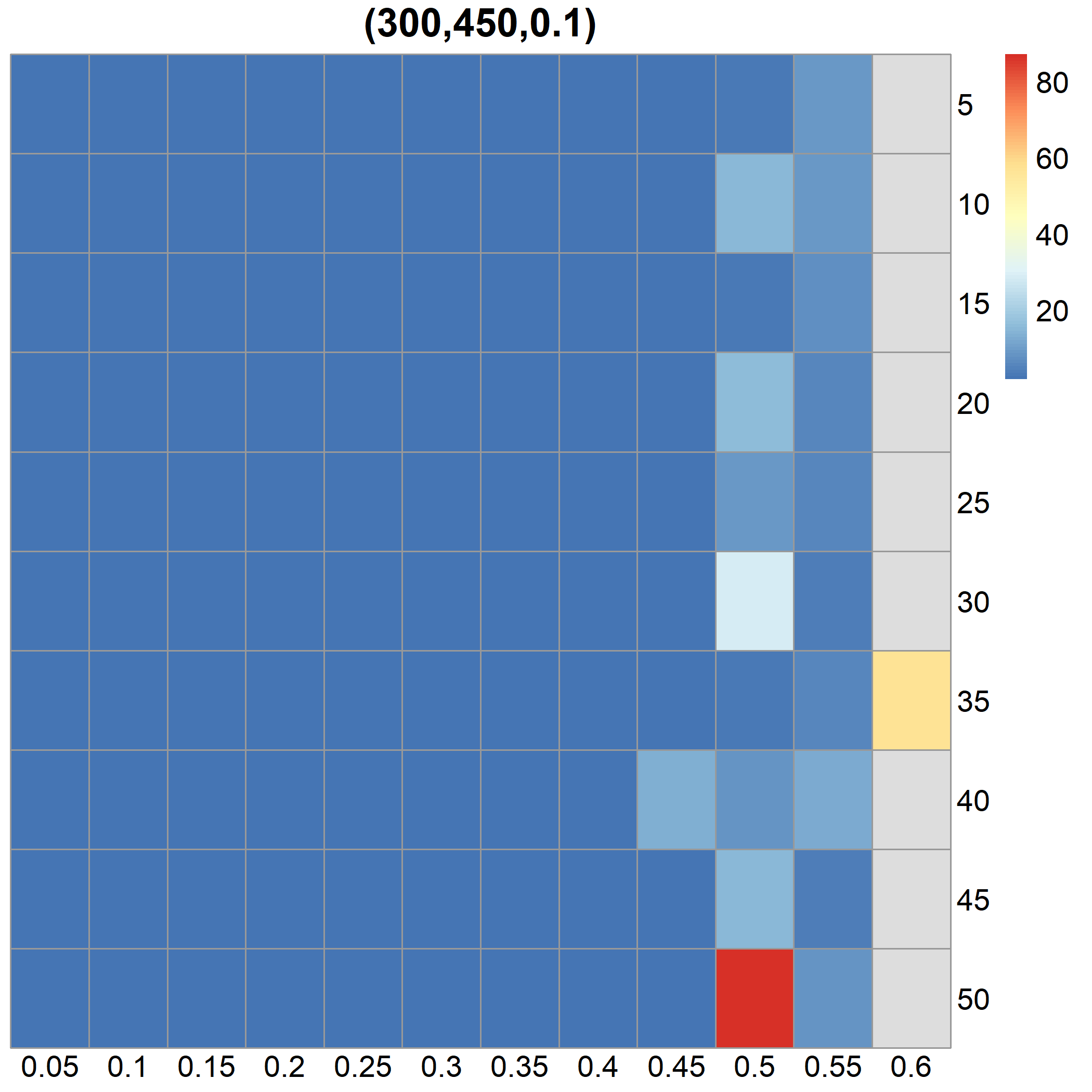}
		
		\vspace{3pt}
		\small (b) $N = 300$
	\end{minipage}
	
	\caption{Cross-validation results for Section \ref{app:manyrelevantvariables} under exponential decay with $\rho = 0.1$. Values in parentheses denote $(N, K, \rho)$. The horizontal axis represents the selection probability $p$, and the vertical axis indicates the number of candidate models $M$. Darker regions correspond to $(p, M)$ combinations yielding lower cross-validation errors.}
	\label{fig:cvMRexp0.1}
\end{figure*}

\begin{figure*}[htbp]
	\centering
	
	\begin{minipage}{0.95\textwidth}
		\centering
		\includegraphics[width=0.32\linewidth]{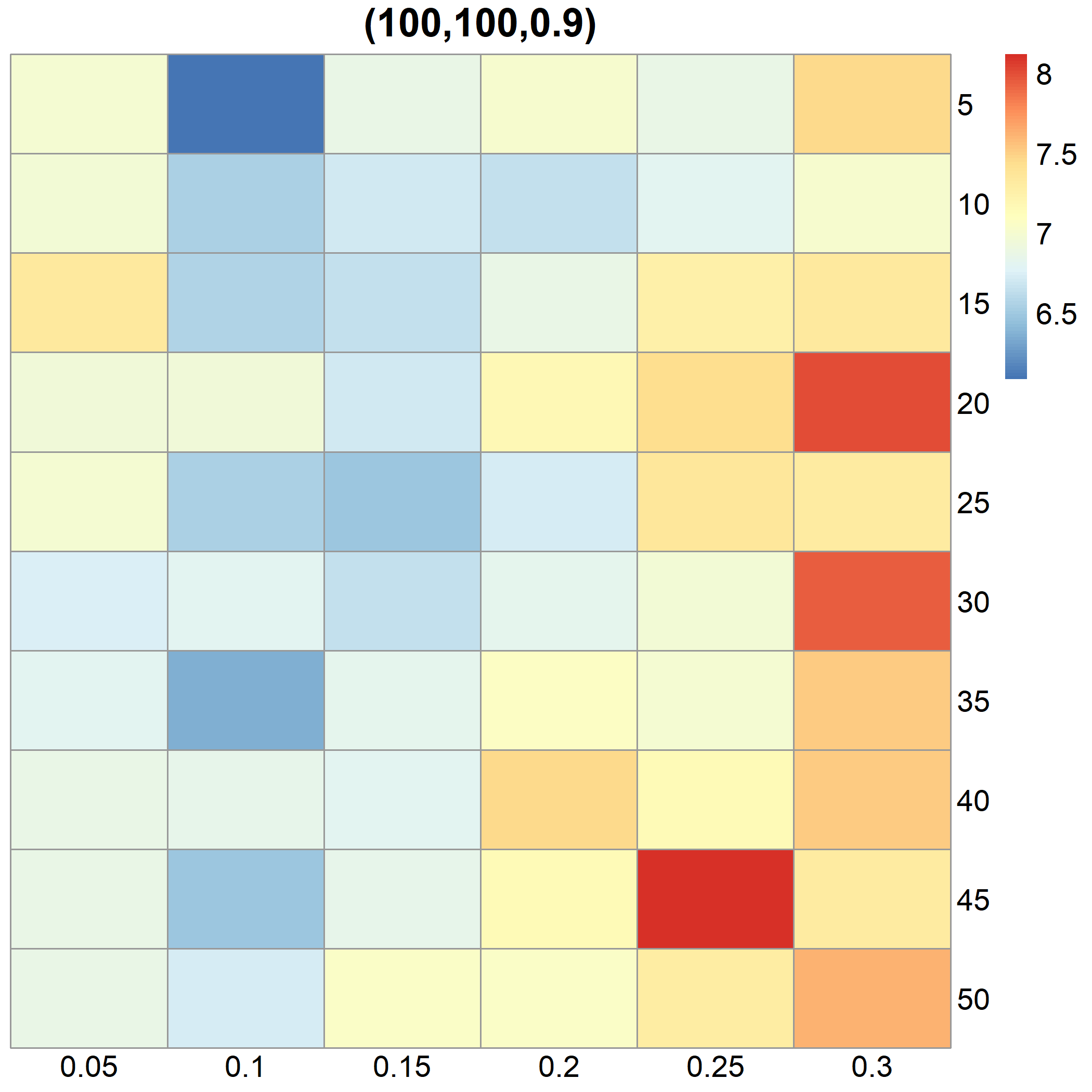}
		\includegraphics[width=0.32\linewidth]{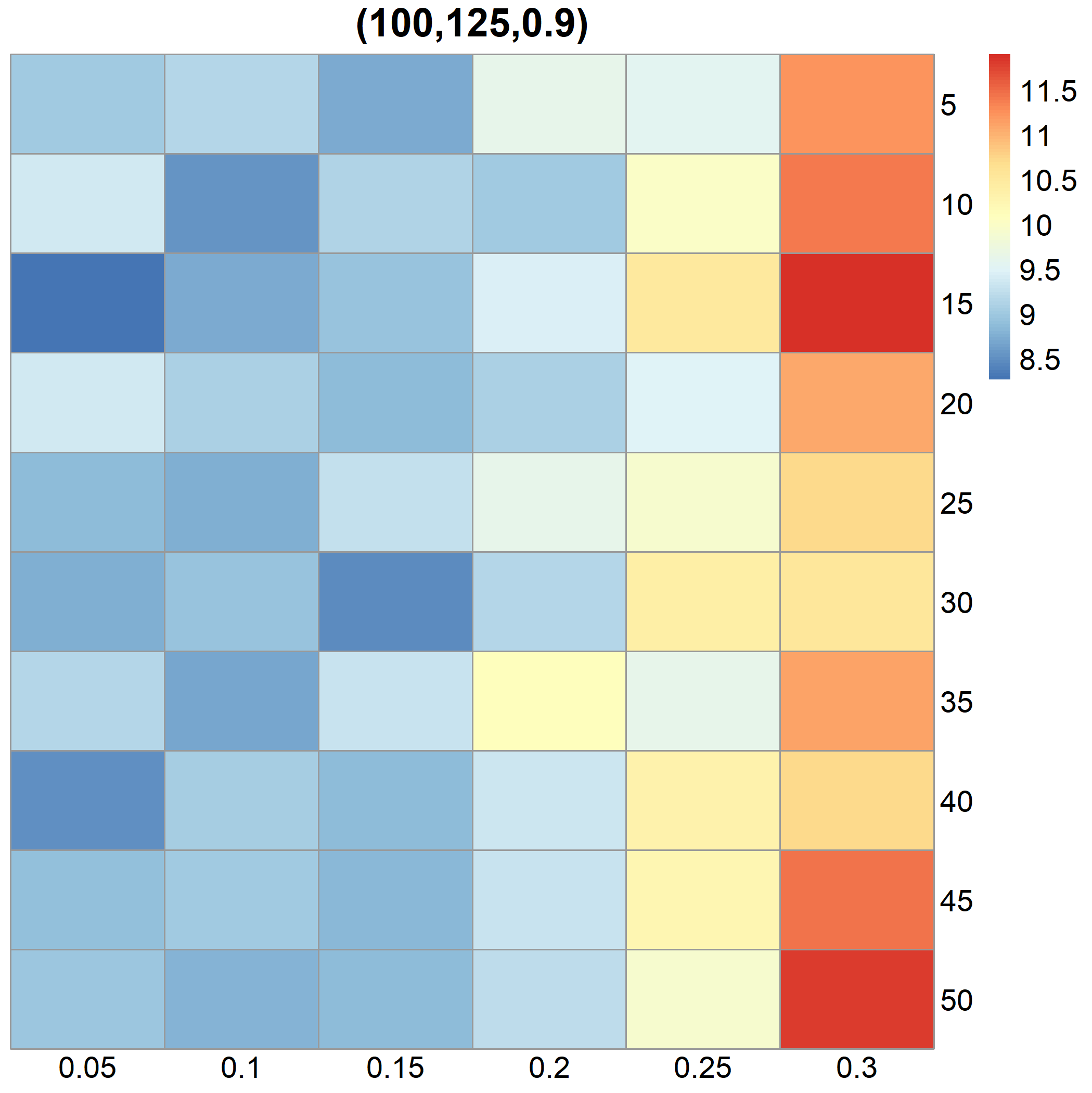}
		\includegraphics[width=0.32\linewidth]{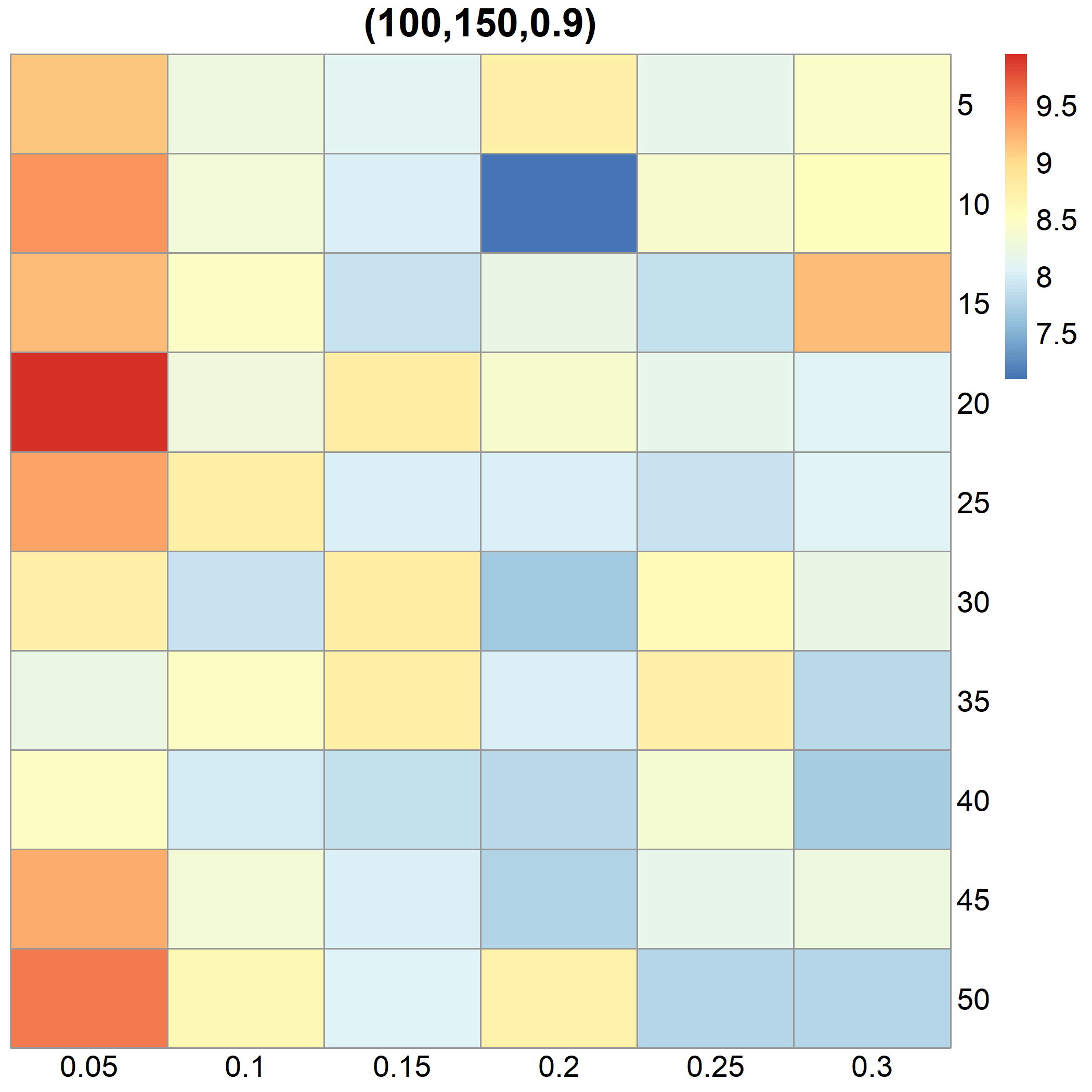}
		
		\vspace{3pt}
		\small (a) $N = 100$
	\end{minipage}
	
	\vspace{6pt}
	
	\begin{minipage}{0.95\textwidth}
		\centering
		\includegraphics[width=0.32\linewidth]{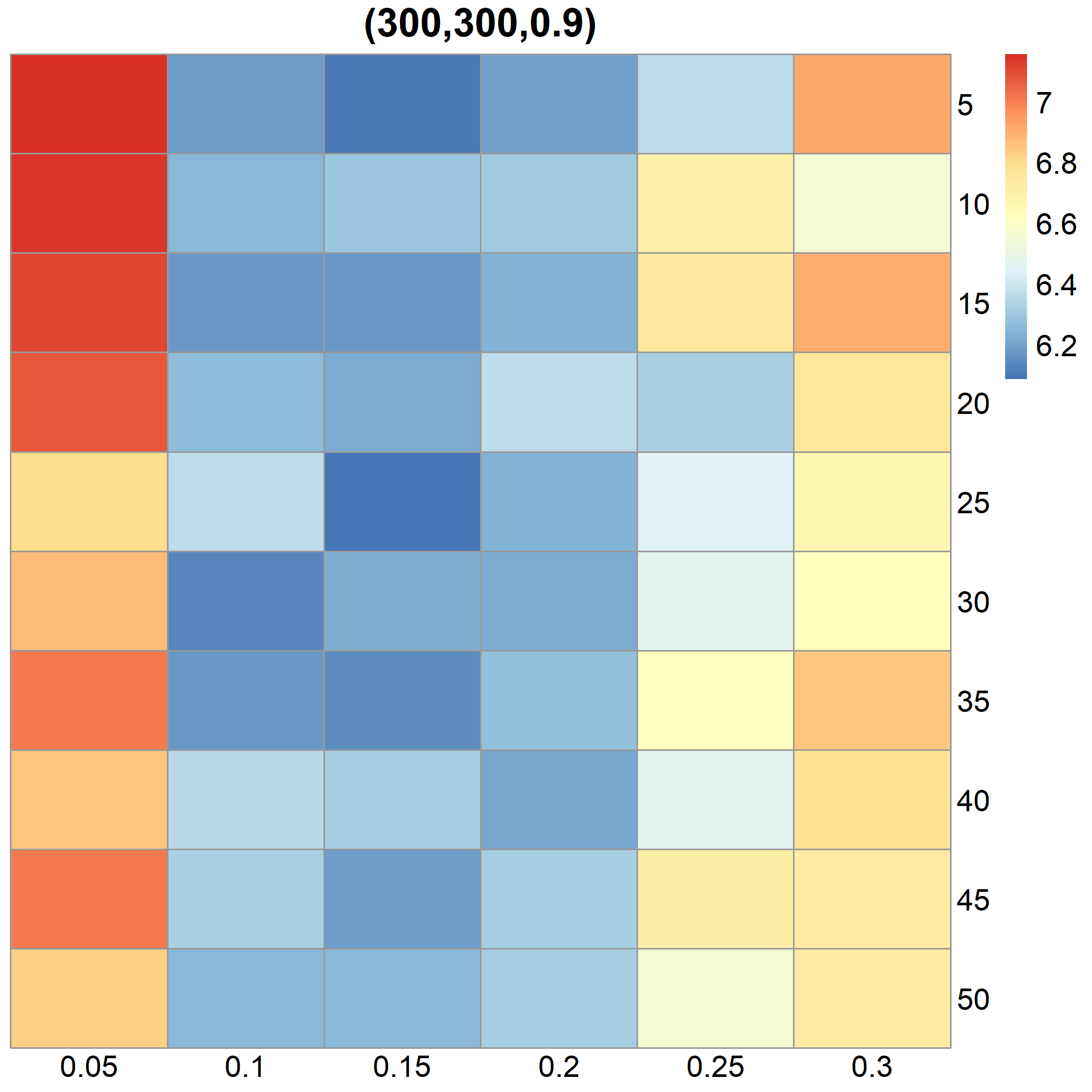}
		\includegraphics[width=0.32\linewidth]{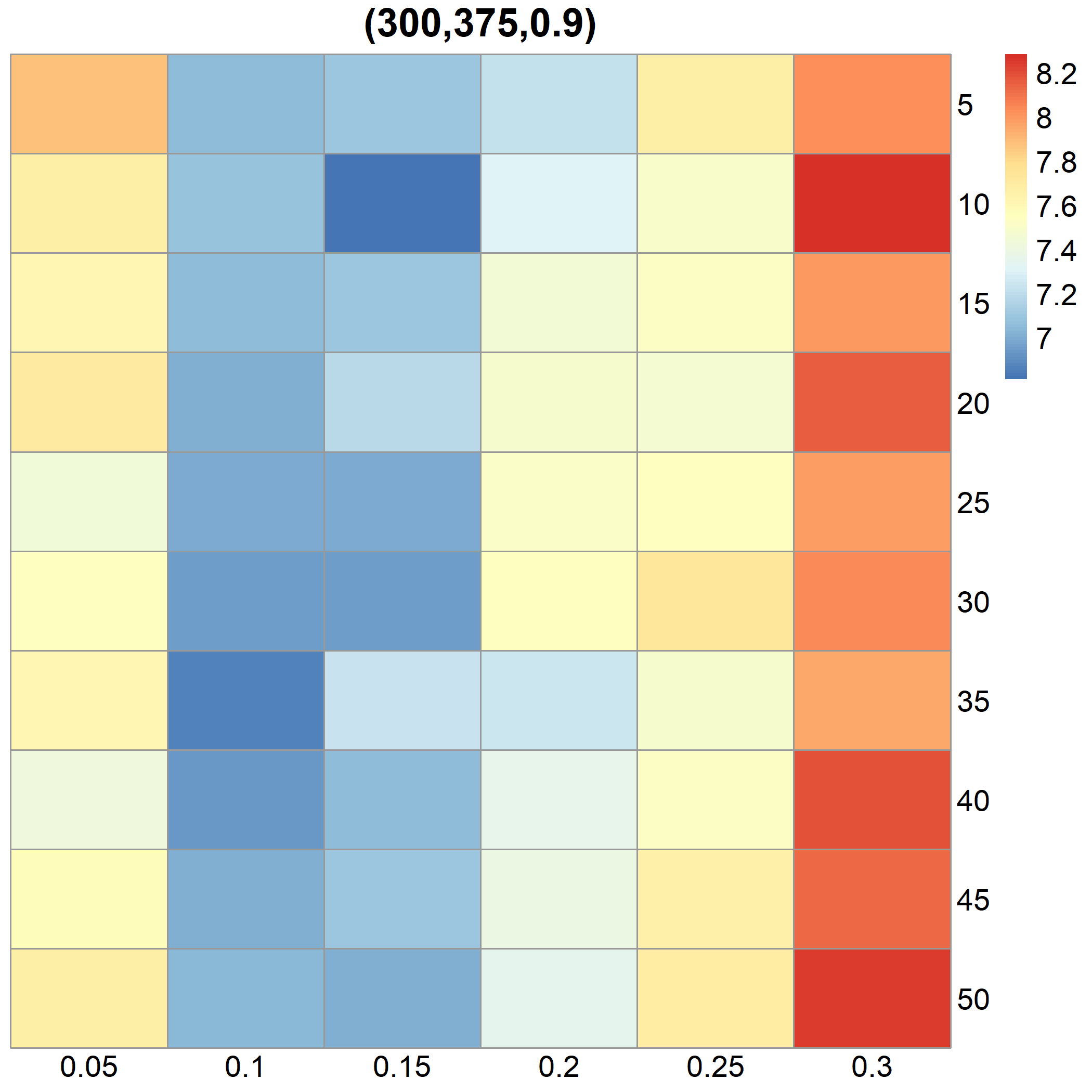}
		\includegraphics[width=0.32\linewidth]{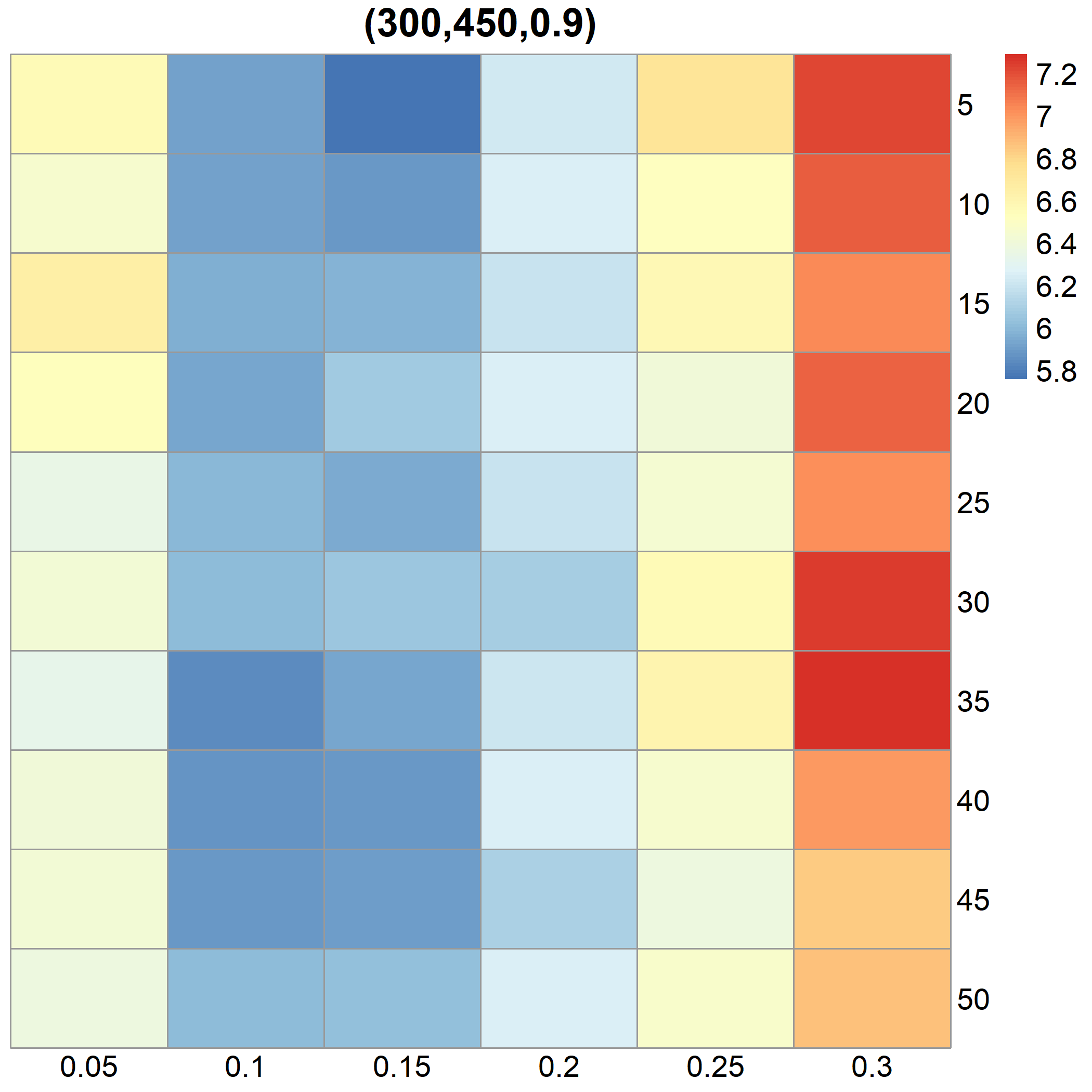}
		
		\vspace{3pt}
		\small (b) $N = 300$
	\end{minipage}
	
	\caption{Cross-validation results for Section \ref{app:manyrelevantvariables} under exponential decay with $\rho = 0.9$. Values in parentheses denote $(N, K, \rho)$. The horizontal axis represents the selection probability $p$, and the vertical axis indicates the number of candidate models $M$. Darker regions correspond to $(p, M)$ combinations yielding lower cross-validation errors.}
	\label{fig:cvMRexp0.9}
\end{figure*}

\begin{figure*}[htbp]
	\centering
	
	\begin{minipage}{0.95\textwidth}
		\centering
		\includegraphics[width=0.32\linewidth]{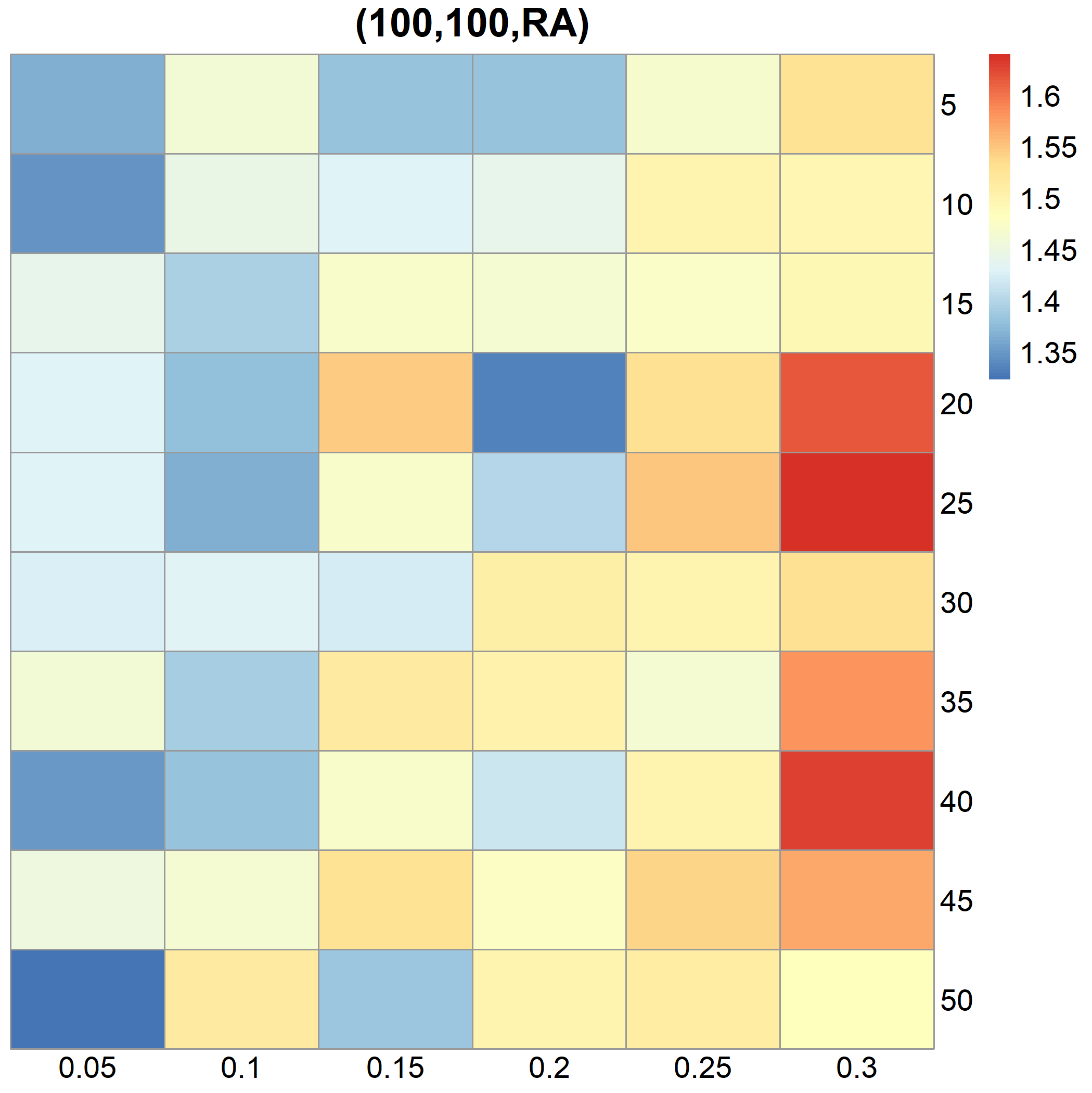}
		\includegraphics[width=0.32\linewidth]{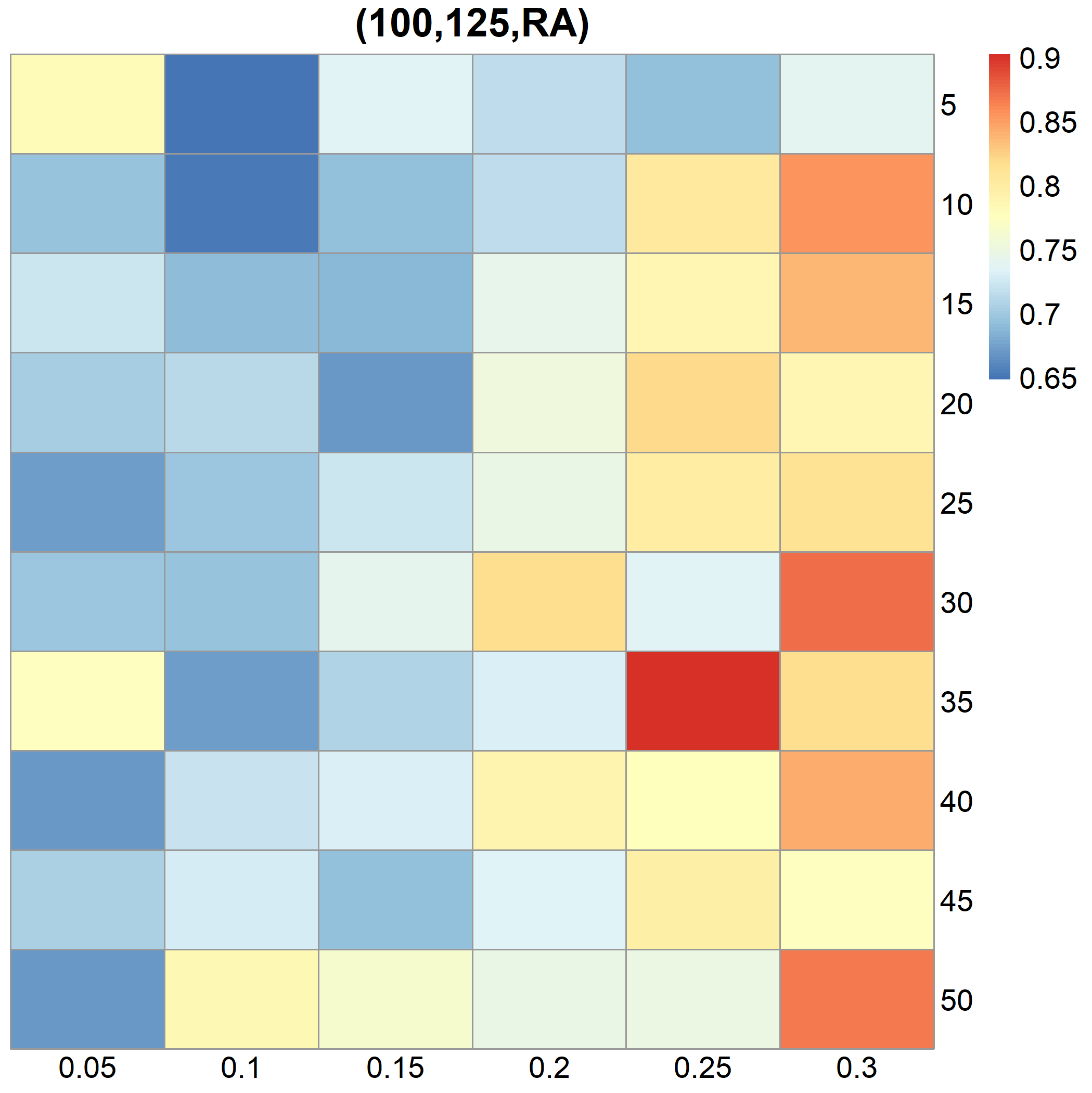}
		\includegraphics[width=0.32\linewidth]{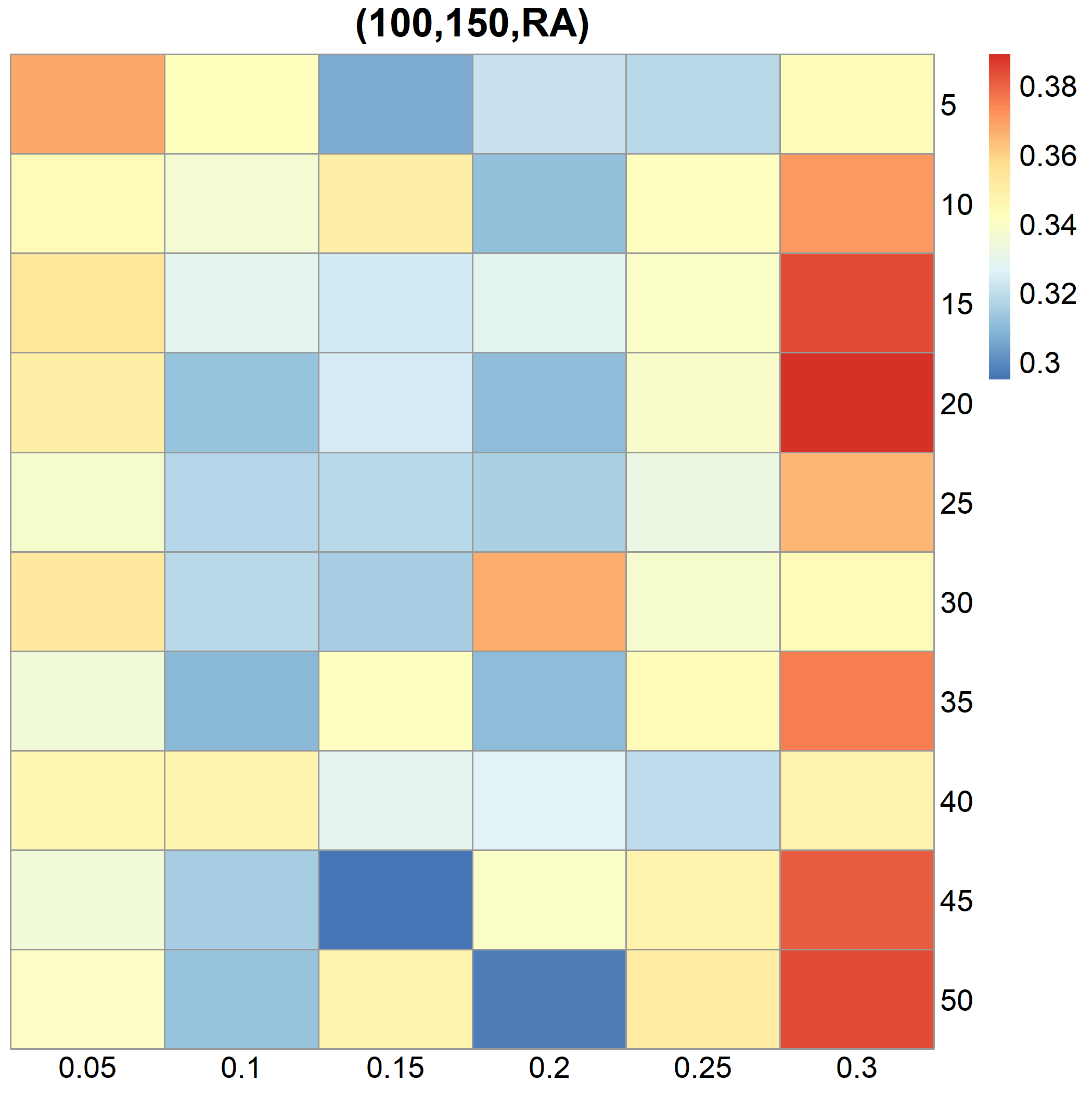}
		
		\vspace{3pt}
		\small (a) $N = 100$
	\end{minipage}
	
	\vspace{6pt}
	
	\begin{minipage}{0.95\textwidth}
		\centering
		\includegraphics[width=0.32\linewidth]{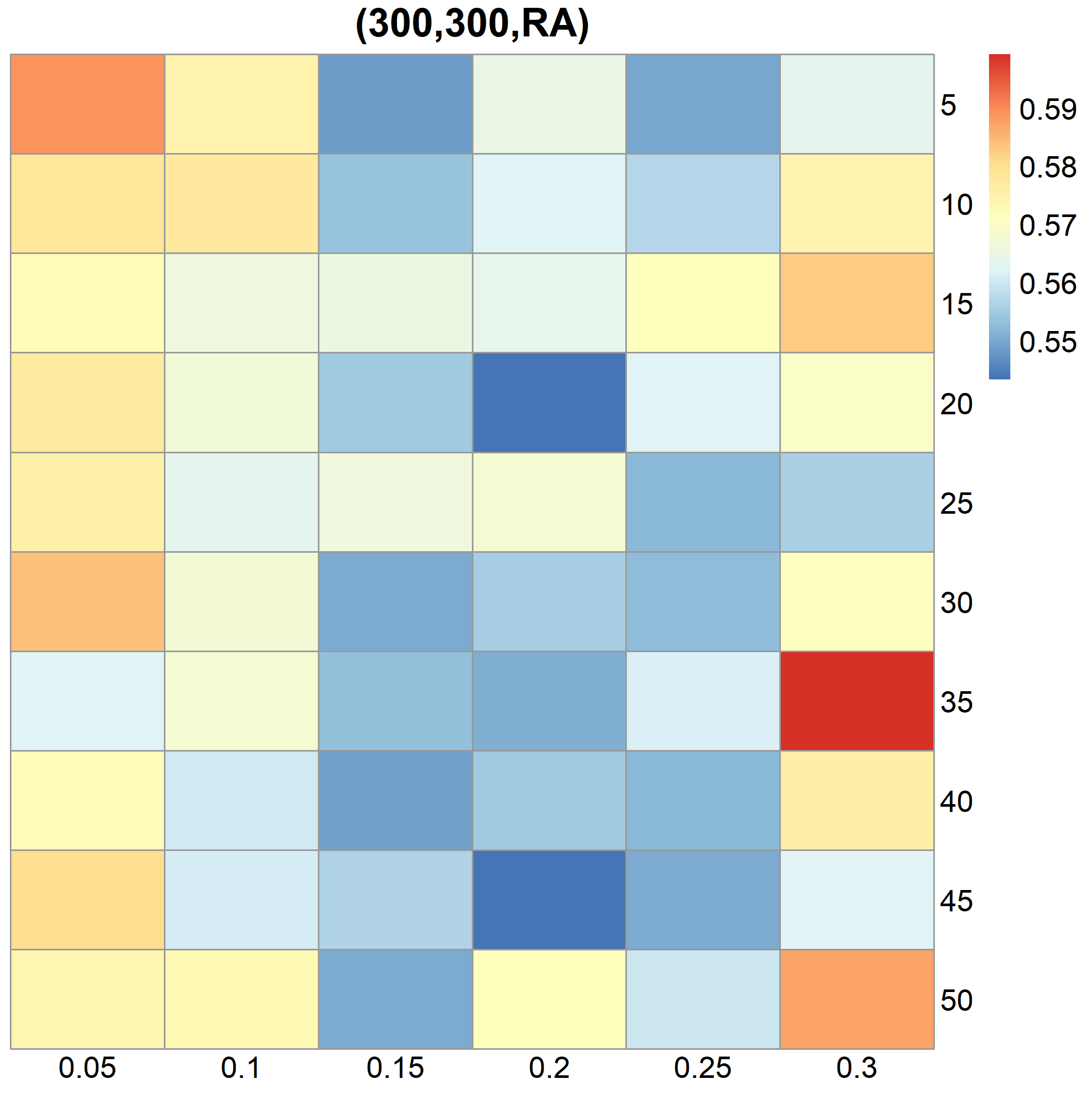}
		\includegraphics[width=0.32\linewidth]{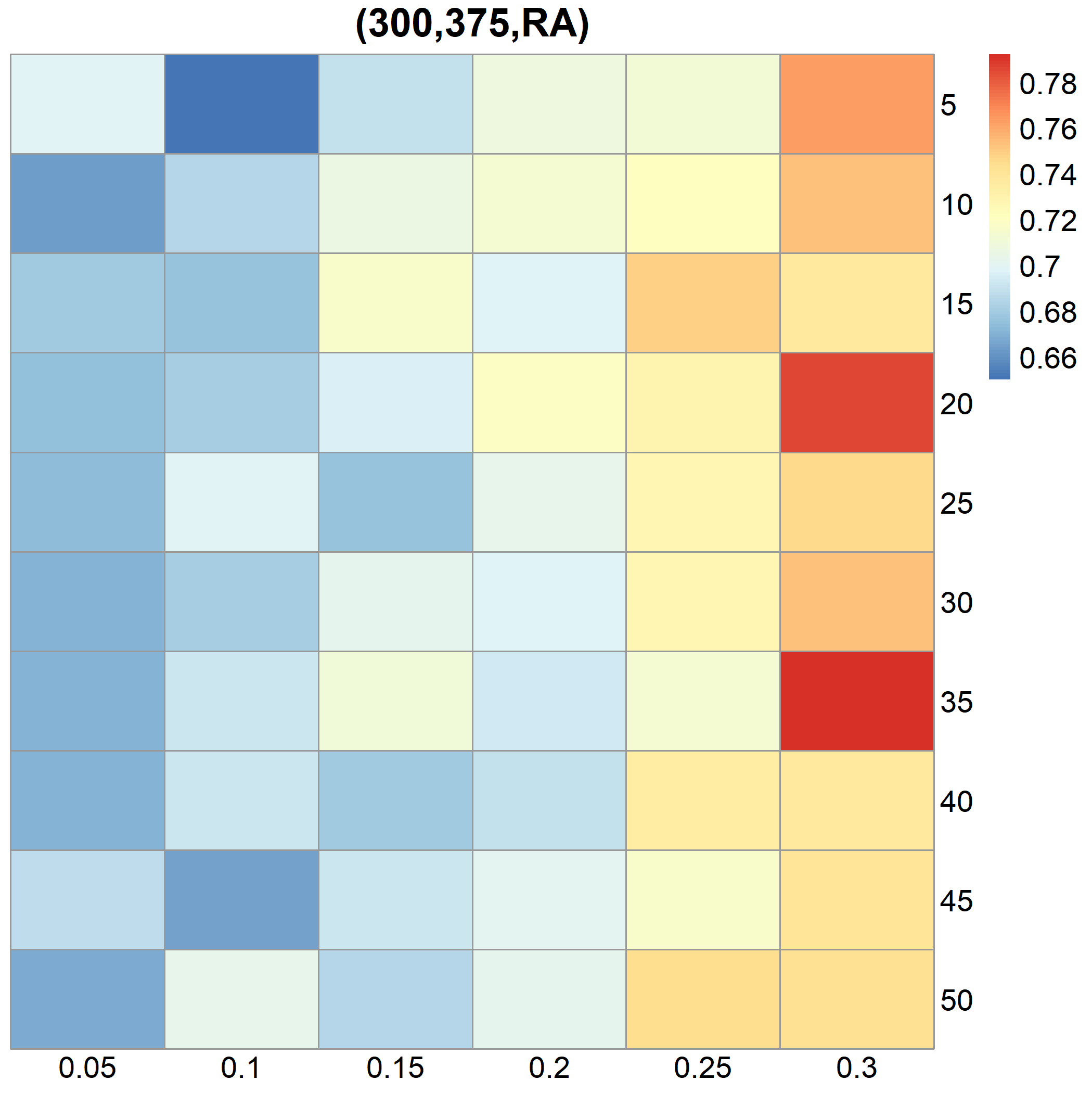}
		\includegraphics[width=0.32\linewidth]{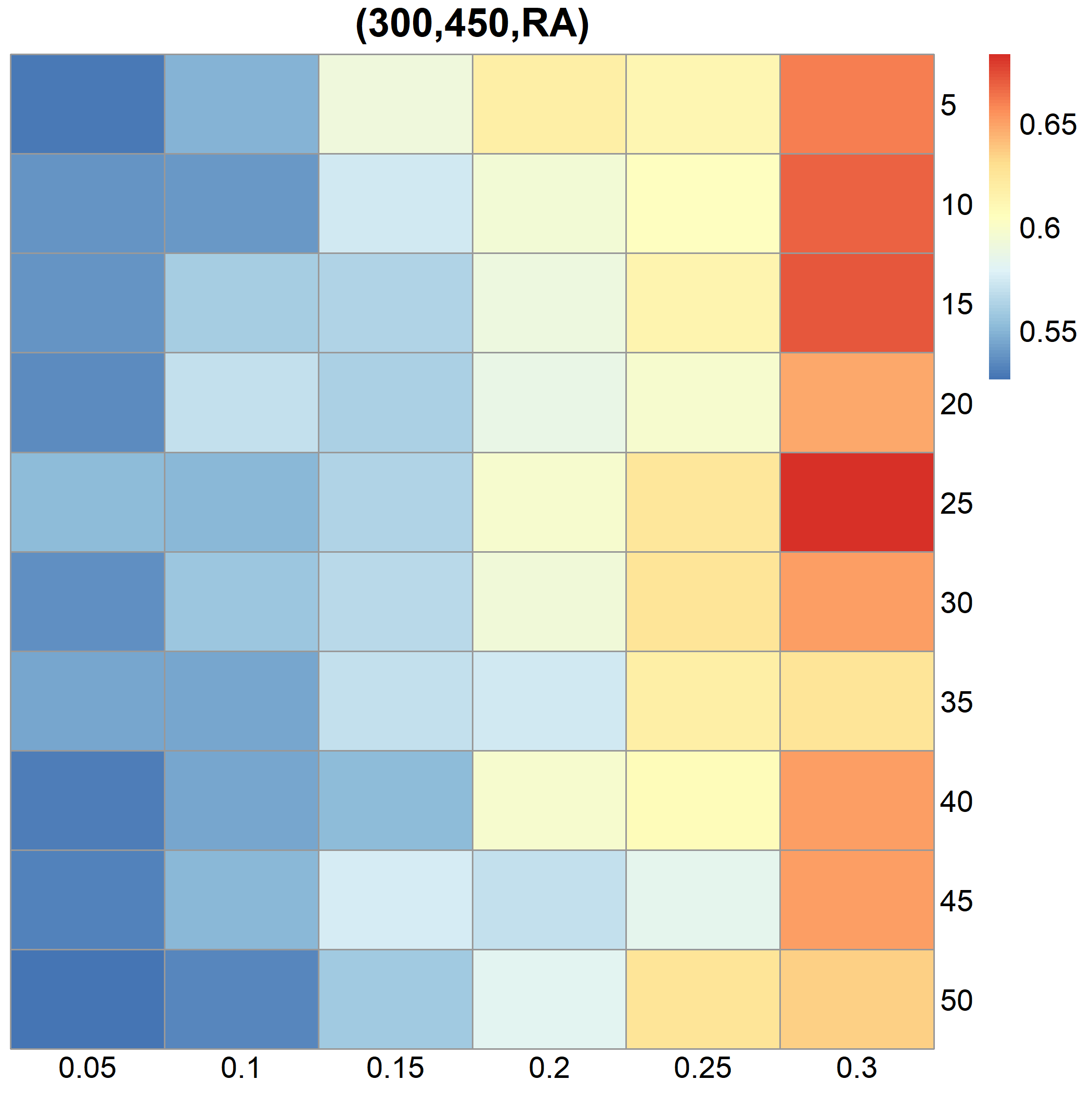}
		
		\vspace{3pt}
		\small (b) $N = 300$
	\end{minipage}
	
	\caption{Cross-validation results for Section \ref{app:manyrelevantvariables} under exponential decay with a random covariance structure. Values in parentheses denote $(N, K, \rho)$. The horizontal axis represents the selection probability $p$, and the vertical axis indicates the number of candidate models $M$. Darker regions correspond to $(p, M)$ combinations yielding lower cross-validation errors.}
	\label{fig:cvMRexpRA}
\end{figure*}

\newpage
\subsection{MSE results in simulations} \label{app:mse}
Tables \ref{tab:mseFXRApoly} to \ref{tab:mse910} reports the MSE comparisons for Sections \ref{sec3.2} and \ref{app:manyrelevantvariables} , with the smallest value in each row highlighted in bold. Notably RSA.o achieves the smallest training error when it has the smallest out-of-sample prediction error. 
\begin{table}[!h]
	\centering
	\caption{MSE comparison in Section \ref{sec3.2}: polynomial decay.}
	\scalebox{0.80}{
		\begin{threeparttable}
			\begin{tabular}{cccccccccccccc}
				\toprule
				$\rho$   & $N$     & $K$     & RSA.o & RSA.f & RSA.1 & RSR   & RF    & Lasso & SCAD  & MCP   & PMA   & MCV   & MMA \\
				\midrule
				\multirow[t]{12}[2]{*}{0.1} & \multirow[t]{4}[1]{*}{200} & 20    & 0.24  & 1.37  & 0.29  & 3.15  & 0.89  & 0.19  & 0.19  & 0.19  & 0.59  & 0.63  & \textbf{0.18} \\
				&       & 100   & \textbf{0.57} & 1.52  & 0.63  & 2.99  & 1.10  & 0.73  & 0.69  & 0.71  & 0.83  & 1.07  & 0.71 \\
				&       & 200   & \textbf{0.98} & 1.90  & 1.14  & 3.13  & 1.37  & 1.38  & 1.28  & 1.38  & 1.35  & 1.59  & 1.71 \\
				&       & 300   & \textbf{1.21} & 2.03  & 1.39  & 3.06  & 1.54  & 1.71  & 1.68  & 1.86  & 1.96  & 1.91  & 2.40 \\
				& \multirow[t]{4}[0]{*}{400} & 40    & 0.31  & 1.94  & 0.43  & 3.66  & 1.10  & \textbf{0.20} & \textbf{0.20} & \textbf{0.20} & 0.74  & 0.71  & \textbf{0.20} \\
				&       & 200   & \textbf{0.65} & 2.01  & 0.72  & 3.44  & 1.28  & 0.85  & 0.79  & 0.80  & 0.95  & 1.19  & 0.80 \\
				&       & 400   & \textbf{1.38} & 2.34  & 1.70  & 3.57  & 1.54  & 1.52  & 1.41  & 1.49  & 1.50  & 1.79  & 2.07 \\
				&       & 600   & \textbf{1.37} & 2.45  & 1.56  & 3.47  & 1.70  & 1.89  & 1.80  & 1.93  & 2.14  & 2.09  & 2.85 \\
				& \multirow[t]{4}[1]{*}{800} & 80    & 0.38  & 2.54  & 0.46  & 4.23  & 1.34  & 0.24  & 0.24  & 0.24  & 0.91  & 0.83  & \textbf{0.23} \\
				&       & 400   & \textbf{0.74} & 2.54  & 0.83  & 3.94  & 1.47  & 0.95  & 0.89  & 0.88  & 1.11  & 1.32  & 0.91 \\
				&       & 800   & \textbf{1.38} & 2.80  & 1.59  & 3.98  & 1.71  & 1.68  & 1.55  & 1.60  & 1.63  & 1.86  & 2.26 \\
				&       & 1200  & \textbf{1.45} & 2.84  & 1.59  & 3.81  & 1.84  & 2.05  & 1.92  & 2.04  & 2.31  & 2.22  & 3.21 \\
				\midrule
				\multirow[t]{12}[2]{*}{0.9} & \multirow[t]{4}[1]{*}{200} & 20    & 1.04  & 0.86  & 1.97  & 2.38  & 3.16  & \textbf{0.84} & 1.26  & 1.22  & 1.98  & 1.27  & 1.44 \\
				&       & 100   & \textbf{0.97} & 1.04  & 1.06  & 1.67  & 2.84  & 1.38  & 1.75  & 1.77  & 2.12  & 1.60  & 3.01 \\
				&       & 200   & \textbf{1.51} & 1.56  & 1.61  & 1.93  & 4.00  & 2.30  & 2.66  & 2.75  & 3.41  & 2.34  & 7.83 \\
				&       & 300   & \textbf{2.05} & \textbf{2.05} & 2.35  & 2.22  & 4.74  & 3.29  & 3.49  & 3.74  & 4.24  & 3.20  & 10.19 \\
				& \multirow[t]{4}[0]{*}{400} & 40    & \textbf{0.78} & 0.98  & 0.91  & 2.78  & 4.96  & 1.06  & 1.62  & 1.59  & 2.67  & 2.77  & 1.95 \\
				&       & 200   & \textbf{0.98} & 1.14  & 1.36  & 1.89  & 3.68  & 1.36  & 1.79  & 1.84  & 2.36  & 2.10  & 3.46 \\
				&       & 400   & \textbf{1.49} & 1.67  & 1.73  & 2.12  & 4.78  & 2.34  & 2.78  & 2.88  & 3.69  & 2.68  & 8.28 \\
				&       & 600   & \textbf{2.01} & 2.08  & 2.17  & 2.31  & 5.40  & 3.41  & 3.83  & 3.91  & 4.52  & 3.43  & 10.79 \\
				& \multirow[t]{4}[1]{*}{800} & 80    & \textbf{0.77} & 1.24  & 1.12  & 3.34  & 7.11  & 1.12  & 1.78  & 1.77  & 3.21  & 4.25  & 2.25 \\
				&       & 400   & \textbf{1.01} & 1.37  & 1.20  & 2.11  & 4.56  & 1.48  & 1.86  & 1.95  & 2.67  & 2.45  & 3.81 \\
				&       & 800   & \textbf{1.58} & 1.88  & 1.80  & 2.36  & 5.63  & 2.52  & 2.99  & 3.09  & 4.01  & 3.06  & 9.94 \\
				&       & 1200  & \textbf{2.05} & 2.24  & 2.23  & 2.50  & 6.15  & 3.40  & 3.67  & 3.93  & 4.95  & 3.66  & 10.93 \\
				\midrule
				\multirow[t]{12}[2]{*}{RA} & \multirow[t]{4}[1]{*}{200} & 20    & 0.16  & 0.25  & 0.20  & 1.36  & 0.41  & 0.14  & 0.15  & 0.15  & 0.26  & 0.24  & \textbf{0.13} \\
				&       & 100   & 0.29  & 0.28  & 0.38  & 0.44  & 0.41  & 0.33  & 0.30  & 0.32  & 0.38  & 0.34  & \textbf{0.26} \\
				&       & 200   & 0.54  & 0.55  & 0.67  & \textbf{0.53} & 1.26  & 0.95  & 0.89  & 0.94  & 0.89  & 0.63  & 2.16 \\
				&       & 300   & 0.49  & \textbf{0.42} & 0.57  & 0.47  & 0.55  & 0.65  & 0.64  & 0.68  & 0.67  & 0.53  & 0.85 \\
				& \multirow[t]{4}[0]{*}{400} & 40    & 0.10  & 0.21  & 0.15  & 0.52  & 0.22  & \textbf{0.04} & 0.05  & 0.05  & 0.12  & 0.26  & \textbf{0.04} \\
				&       & 200   & \textbf{0.25} & 0.32  & 0.29  & 0.50  & 0.48  & 0.35  & 0.31  & 0.32  & 0.33  & 0.37  & 0.30 \\
				&       & 400   & \textbf{0.48} & 0.49  & 0.53  & 0.52  & 1.15  & 0.83  & 0.77  & 0.79  & 0.72  & 0.60  & 1.84 \\
				&       & 600   & 0.98  & 0.77  & 1.03  & \textbf{0.64} & 2.17  & 1.26  & 1.14  & 1.30  & 0.99  & 0.76  & 3.12 \\
				& \multirow[t]{4}[1]{*}{800} & 80    & 0.18  & 0.35  & 0.24  & 0.73  & 0.51  & 0.14  & 0.14  & 0.15  & 0.25  & 0.26  & \textbf{0.12} \\
				&       & 400   & \textbf{0.28} & 0.37  & 0.33  & 0.53  & 0.61  & 0.44  & 0.37  & 0.38  & 0.39  & 0.37  & 0.40 \\
				&       & 800   & \textbf{0.38} & 0.39  & 0.44  & 0.50  & 0.59  & 0.54  & 0.51  & 0.51  & 0.47  & 0.47  & 0.95 \\
				&       & 1200  & \textbf{0.50} & \textbf{0.50} & 0.54  & 0.51  & 1.01  & 0.86  & 0.78  & 0.83  & 0.78  & 0.59  & 1.42 \\
				\bottomrule
			\end{tabular}%
			\vspace{1ex}
			{\raggedright Note: Values in bold indicate the smallest MSE.\par}
		\end{threeparttable}
	}
	\label{tab:mseFXRApoly}%
\end{table}%

\newpage

\begin{table}[!h]
	\centering
	\caption{MSE comparison in Section \ref{sec3.2}: exponential decay.}
	\scalebox{0.80}{
		\begin{threeparttable}
			\begin{tabular}{cccccccccccccc}
				\toprule
				$\rho$   & $N$     & $K$     & RSA.o & RSA.f & RSA.1 & RSR   & RF    & Lasso & SCAD  & MCP   & PMA   & MCV   & MMA \\
				\midrule
				\multirow[t]{12}[2]{*}{0.1} & \multirow[t]{4}[1]{*}{200} & 20    & 0.07  & 0.36  & 0.10  & 0.71  & 0.22  & \textbf{0.04} & \textbf{0.04} & \textbf{0.04} & 0.14  & 0.15  & \textbf{0.04} \\
				&       & 100   & \textbf{0.15} & 0.40  & 0.17  & 0.69  & 0.27  & 0.18  & 0.16  & 0.16  & 0.19  & 0.26  & 0.16 \\
				&       & 200   & \textbf{0.30} & 0.50  & 0.35  & 0.74  & 0.34  & 0.34  & 0.33  & 0.35  & 0.36  & 0.41  & 0.48 \\
				&       & 300   & \textbf{0.30} & 0.52  & 0.34  & 0.73  & 0.37  & 0.42  & 0.41  & 0.47  & 0.48  & 0.48  & 0.57 \\
				& \multirow[t]{4}[0]{*}{400} & 40    & 0.08  & 0.53  & 0.10  & 0.87  & 0.28  & \textbf{0.05} & \textbf{0.05} & \textbf{0.05} & 0.19  & 0.17  & \textbf{0.05} \\
				&       & 200   & \textbf{0.16} & 0.54  & 0.18  & 0.81  & 0.32  & 0.21  & 0.19  & 0.19  & 0.24  & 0.30  & 0.20 \\
				&       & 400   & \textbf{0.27} & 0.60  & 0.30  & 0.83  & 0.37  & 0.35  & 0.33  & 0.34  & 0.36  & 0.42  & 0.46 \\
				&       & 600   & \textbf{0.34} & 0.60  & 0.39  & 0.78  & 0.40  & 0.42  & 0.40  & 0.43  & 0.45  & 0.48  & 0.62 \\
				& \multirow[t]{4}[1]{*}{800} & 80    & 0.10  & 0.68  & 0.12  & 1.01  & 0.34  & 0.06  & 0.06  & 0.06  & 0.21  & 0.19  & \textbf{0.05} \\
				&       & 400   & \textbf{0.18} & 0.66  & 0.21  & 0.92  & 0.36  & 0.23  & 0.21  & 0.20  & 0.24  & 0.30  & 0.22 \\
				&       & 800   & \textbf{0.29} & 0.67  & 0.33  & 0.88  & 0.39  & 0.36  & 0.32  & 0.32  & 0.33  & 0.38  & 0.52 \\
				&       & 1200  & \textbf{0.36} & 0.64  & 0.41  & 0.82  & 0.40  & 0.42  & 0.38  & 0.37  & 0.40  & 0.42  & 0.68 \\
				\midrule
				\multirow[t]{12}[2]{*}{0.9} & \multirow[t]{4}[1]{*}{200} & 20    & \textbf{0.17} & 0.19  & 0.18  & 0.55  & 0.77  & 0.20  & 0.30  & 0.30  & 0.49  & 0.30  & 0.34 \\
				&       & 100   & \textbf{0.24} & 0.25  & 0.25  & 0.41  & 0.71  & 0.35  & 0.45  & 0.46  & 0.55  & 0.41  & 0.78 \\
				&       & 200   & 0.41  & \textbf{0.40} & 0.42  & 0.48  & 1.02  & 0.63  & 0.74  & 0.73  & 0.89  & 0.63  & 2.12 \\
				&       & 300   & \textbf{0.51} & \textbf{0.51} & 0.54  & 0.53  & 1.19  & 0.81  & 0.88  & 0.93  & 1.09  & 0.78  & 2.60 \\
				& \multirow[t]{4}[0]{*}{400} & 40    & \textbf{0.21} & 0.26  & 0.22  & 0.73  & 1.34  & 0.27  & 0.41  & 0.42  & 0.73  & 0.71  & 0.51 \\
				&       & 200   & \textbf{0.25} & 0.30  & 0.30  & 0.47  & 0.95  & 0.36  & 0.47  & 0.49  & 0.64  & 0.52  & 0.91 \\
				&       & 400   & \textbf{0.37} & 0.42  & 0.41  & 0.51  & 1.18  & 0.60  & 0.71  & 0.73  & 0.93  & 0.65  & 2.11 \\
				&       & 600   & \textbf{0.46} & 0.48  & 0.50  & 0.53  & 1.24  & 0.77  & 0.86  & 0.89  & 1.05  & 0.78  & 2.30 \\
				& \multirow[t]{4}[1]{*}{800} & 80    & \textbf{0.21} & 0.32  & 0.31  & 0.84  & 1.84  & 0.31  & 0.48  & 0.47  & 0.84  & 1.09  & 0.61 \\
				&       & 400   & \textbf{0.25} & 0.34  & 0.31  & 0.50  & 1.12  & 0.35  & 0.46  & 0.47  & 0.67  & 0.59  & 0.93 \\
				&       & 800   & \textbf{0.35} & 0.41  & 0.49  & 0.51  & 1.20  & 0.53  & 0.63  & 0.65  & 0.84  & 0.66  & 1.96 \\
				&       & 1200  & \textbf{0.40} & 0.44  & 0.45  & 0.50  & 1.18  & 0.64  & 0.69  & 0.74  & 0.90  & 0.71  & 2.28 \\
				\midrule
				\multirow[t]{12}[2]{*}{RA} & \multirow[t]{4}[1]{*}{200} & 20    & \textbf{0.17} & 0.23  & 0.21  & 1.25  & 0.54  & 0.18  & 0.21  & 0.20  & 0.31  & 0.28  & \textbf{0.17} \\
				&       & 100   & \textbf{0.22} & 0.27  & 0.25  & 0.46  & 0.40  & 0.34  & 0.34  & 0.33  & 0.33  & 0.34  & 0.32 \\
				&       & 200   & 0.27  & \textbf{0.26} & 0.35  & 0.30  & 0.41  & 0.42  & 0.43  & 0.45  & 0.45  & 0.33  & 0.69 \\
				&       & 300   & \textbf{0.28} & 0.31  & 0.33  & 0.46  & 0.39  & 0.41  & 0.38  & 0.40  & 0.41  & 0.38  & 0.60 \\
				& \multirow[t]{4}[0]{*}{400} & 40    & 0.19  & 0.35  & 0.23  & 1.07  & 0.59  & 0.17  & 0.19  & 0.19  & 0.32  & 0.34  & \textbf{0.15} \\
				&       & 200   & 0.39  & \textbf{0.37} & 0.42  & 0.58  & 0.94  & 0.57  & 0.49  & 0.51  & 0.51  & 0.47  & 0.58 \\
				&       & 400   & 0.63  & \textbf{0.62} & 0.70  & 0.67  & 1.65  & 1.06  & 0.98  & 1.05  & 1.02  & 0.78  & 2.49 \\
				&       & 600   & \textbf{0.33} & \textbf{0.33} & 0.37  & 0.41  & 0.47  & 0.50  & 0.46  & 0.50  & 0.47  & 0.46  & 0.78 \\
				& \multirow[t]{4}[1]{*}{800} & 80    & 0.15  & 0.27  & 0.18  & 0.45  & 0.45  & 0.13  & 0.14  & 0.14  & 0.26  & 0.30  & \textbf{0.12} \\
				&       & 400   & 0.31  & 0.35  & 0.36  & 0.50  & 0.44  & 0.32  & \textbf{0.29} & 0.30  & 0.31  & 0.34  & 0.30 \\
				&       & 800   & \textbf{0.49} & 0.50  & 0.55  & 0.59  & 1.12  & 0.91  & 0.75  & 0.75  & 0.72  & 0.63  & 1.76 \\
				&       & 1200  & 1.23  & \textbf{0.68} & 1.28  & 0.70  & 1.78  & 1.03  & 0.86  & 0.88  & 0.83  & 0.75  & 2.59 \\
				\bottomrule
			\end{tabular}%
			\vspace{1ex}
			{\raggedright Note: Values in bold indicate the smallest MSE.\par}
		\end{threeparttable}
	}
	\label{tab:mseFXRAexp}%
\end{table}%

\newpage

\begin{table}[!h]
	\centering
	\caption{MSE comparison in Section \ref{app:manyrelevantvariables}: polynomial decay.}
	\scalebox{0.80}{
		\begin{threeparttable}
			\begin{tabular}{cccccccccccccc}
				\toprule
				$\rho$   & $N$     & $K$     & RSA.o & RSA.f & RSA.1 & RSR   & RF    & Lasso & SCAD  & MCP   & PMA   & MCV   & MMA \\
				\midrule
				\multirow[t]{6}[2]{*}{0.1} & \multirow[t]{3}[1]{*}{100} & 100   & \textbf{1.36} & 2.29  & 1.84  & 3.62  & 1.75  & 1.95  & 2.05  & 2.64  & 2.34  & 2.50  & 2.29 \\
				&       & 125   & \textbf{1.59} & 2.38  & 1.91  & 3.63  & 1.90  & 2.22  & 2.34  & 2.89  & 2.61  & 2.64  & 2.74 \\
				&       & 150   & \textbf{1.75} & 2.37  & 2.14  & 3.58  & 1.94  & 2.33  & 2.49  & 3.01  & 2.78  & 2.73  & 3.08 \\
				& \multirow[t]{3}[1]{*}{300} & 300   & \textbf{1.46} & 3.13  & 1.75  & 4.43  & 2.03  & 2.10  & 2.23  & 2.61  & 2.62  & 2.72  & 2.62 \\
				&       & 375   & \textbf{1.65} & 3.08  & 1.87  & 4.24  & 2.06  & 2.35  & 2.45  & 2.87  & 2.92  & 2.90  & 2.86 \\
				&       & 450   & 2.55  & 3.10  & 3.03  & 4.17  & \textbf{2.15} & 2.56  & 2.66  & 3.20  & 3.30  & 3.03  & 3.28 \\
				\midrule
				\multirow[t]{6}[2]{*}{0.9} & \multirow[t]{3}[1]{*}{100} & 100   & 4.77  & 4.71  & 5.03  & \textbf{4.22} & 10.15 & 7.03  & 8.64  & 9.12  & 10.53 & 6.22  & 21.39 \\
				&       & 125   & 5.06  & 5.18  & 5.82  & \textbf{4.22} & 11.26 & 8.04  & 9.17  & 9.90  & 11.54 & 6.87  & 25.66 \\
				&       & 150   & 5.69  & 5.78  & 6.30  & \textbf{4.83} & 12.58 & 9.67  & 11.08 & 10.89 & 12.50 & 7.67  & 28.24 \\
				& \multirow[t]{3}[1]{*}{300} & 300   & \textbf{4.07} & 4.36  & 4.38  & 4.56  & 13.78 & 7.19  & 8.96  & 8.93  & 12.36 & 5.89  & 26.62 \\
				&       & 375   & 4.90  & 4.85  & 5.42  & \textbf{4.76} & 14.60 & 8.27  & 9.98  & 10.16 & 13.27 & 6.78  & 28.86 \\
				&       & 450   & 5.37  & 5.41  & 5.64  & \textbf{5.09} & 15.50 & 9.91  & 11.27 & 11.47 & 14.41 & 7.97  & 29.71 \\
				\midrule
				\multirow[t]{6}[2]{*}{RA} & \multirow[t]{3}[1]{*}{100} & 100   & \textbf{0.40} & \textbf{0.40} & 0.46  & 0.42  & 0.58  & 0.65  & 0.61  & 0.63  & 0.66  & 0.48  & 0.93 \\
				&       & 125   & \textbf{0.52} & 0.53  & 0.56  & 0.65  & 0.72  & 0.90  & 0.84  & 0.88  & 0.87  & 0.79  & 1.09 \\
				&       & 150   & 0.51  & \textbf{0.50} & 0.57  & 0.52  & 0.68  & 0.82  & 0.76  & 0.81  & 0.80  & 0.63  & 1.19 \\
				& \multirow[t]{3}[1]{*}{300} & 300   & 0.63  & \textbf{0.60} & 0.72  & \textbf{0.60} & 1.33  & 1.10  & 1.00  & 1.07  & 0.99  & 0.70  & 2.00 \\
				&       & 375   & 0.56  & \textbf{0.51} & 0.64  & 0.56  & 0.82  & 0.79  & 0.78  & 0.81  & 0.83  & 0.63  & 1.30 \\
				&       & 450   & \textbf{0.43} & 0.45  & 0.51  & 0.53  & 0.60  & 0.73  & 0.69  & 0.73  & 0.71  & 0.60  & 0.90 \\
				\bottomrule
			\end{tabular}%
			\vspace{1ex}
			{\raggedright Note: Values in bold indicate the smallest MSE.\par}
		\end{threeparttable}
	}
	\label{tab:mse56}%
\end{table}%

\begin{table}[!h]
	\centering
	\caption{MSE comparison in Section \ref{app:manyrelevantvariables}: exponential decay.}
	\scalebox{0.80}{
		\begin{threeparttable}
			\begin{tabular}{cccccccccccccc}
				\toprule
				$\rho$   & $N$     & $K$     & RSA.o & RSA.f & RSA.1 & RSR   & RF    & Lasso & SCAD  & MCP   & PMA   & MCV   & MMA \\
				\midrule
				\multirow[t]{6}[2]{*}{0.1} & \multirow[t]{3}[1]{*}{100} & 100   & \textbf{0.32} & 0.58  & 0.36  & 0.88  & 0.43  & 0.46  & 0.55  & 0.66  & 0.55  & 0.62  & 0.54 \\
				&       & 125   & \textbf{0.35} & 0.59  & 0.37  & 0.84  & 0.45  & 0.51  & 0.59  & 0.70  & 0.60  & 0.66  & 0.62 \\
				&       & 150   & \textbf{0.42} & 0.58  & 0.49  & 0.84  & 0.46  & 0.57  & 0.66  & 0.80  & 0.64  & 0.67  & 0.72 \\
				& \multirow[t]{3}[1]{*}{300} & 300   & \textbf{0.33} & 0.71  & 0.38  & 0.96  & 0.44  & 0.44  & 0.46  & 0.53  & 0.52  & 0.58  & 0.55 \\
				&       & 375   & \textbf{0.33} & 0.68  & 0.37  & 0.90  & 0.45  & 0.48  & 0.49  & 0.55  & 0.58  & 0.59  & 0.64 \\
				&       & 450   & 0.52  & 0.66  & 0.60  & 0.86  & \textbf{0.45} & 0.51  & 0.51  & 0.59  & 0.61  & 0.60  & 0.71 \\
				\midrule
				\multirow[t]{6}[2]{*}{0.9} & \multirow[t]{3}[1]{*}{100} & 100   & 1.21  & 1.22  & 1.75  & \textbf{1.09} & 2.66  & 1.84  & 2.15  & 2.26  & 2.73  & 1.63  & 5.46 \\
				&       & 125   & 1.60  & 1.25  & 2.19  & \textbf{1.01} & 2.79  & 1.96  & 2.34  & 2.40  & 2.86  & 1.68  & 5.74 \\
				&       & 150   & 1.52  & 1.32  & 1.59  & \textbf{1.07} & 2.87  & 2.18  & 2.48  & 2.58  & 2.93  & 1.92  & 6.41 \\
				& \multirow[t]{3}[1]{*}{300} & 300   & \textbf{0.77} & 0.83  & 0.85  & 0.89  & 2.64  & 1.35  & 1.70  & 1.73  & 2.32  & 1.20  & 4.59 \\
				&       & 375   & \textbf{0.84} & 0.89  & 0.99  & 0.89  & 2.64  & 1.51  & 1.81  & 1.86  & 2.35  & 1.27  & 5.25 \\
				&       & 450   & 0.90  & 0.92  & 1.13  & \textbf{0.89} & 2.59  & 1.64  & 1.86  & 1.95  & 2.40  & 1.37  & 5.42 \\
				\midrule
				\multirow[t]{6}[2]{*}{RA} & \multirow[t]{3}[1]{*}{100} & 100   & \textbf{0.14} & \textbf{0.14} & 0.16  & \textbf{0.14} & 0.22  & 0.23  & 0.21  & 0.21  & 0.19  & 0.16  & 0.36 \\
				&       & 125   & \textbf{0.11} & \textbf{0.11} & 0.14  & 0.12  & 0.14  & 0.16  & 0.17  & 0.17  & 0.17  & 0.14  & 0.21 \\
				&       & 150   & 0.19  & 0.18  & 0.19  & \textbf{0.17} & 0.28  & 0.31  & 0.27  & 0.27  & 0.27  & 0.22  & 0.49 \\
				& \multirow[t]{3}[1]{*}{300} & 300   & \textbf{0.07} & 0.09  & 0.08  & 0.12  & 0.09  & 0.10  & 0.10  & 0.10  & 0.10  & 0.10  & 0.12 \\
				&       & 375   & 0.08  & 0.08  & 0.11  & 0.10  & \textbf{0.07} & 0.09  & 0.10  & 0.10  & 0.10  & 0.08  & 0.11 \\
				&       & 450   & 0.14  & \textbf{0.13} & 0.15  & 0.14  & 0.22  & 0.22  & 0.20  & 0.21  & 0.20  & 0.16  & 0.36 \\
				\bottomrule
			\end{tabular}%
			\vspace{1ex}
			{\raggedright Note: Values in bold indicate the smallest MSE.\par}
		\end{threeparttable}
	}
	\label{tab:mse910}%
\end{table}%

\newpage

\subsection{Additional results for Section \ref{sec4}} \label{app:empirical}

\subsubsection{Plot of Log returns for two periods}
Figure \ref{fig:empiricalsp500} displays the log returns during the pre- and post-crisis periods. Due to macroeconomic shocks, such as the European sovereign debt crisis around 2010-2012, S\&P 500 log returns exhibits significantly higher volatility in the post-crisis period compared to the pre-crisis period. 
\begin{figure}[!h]
	\centering
	\begin{minipage}[b]{0.48\textwidth}
		\centering
		\includegraphics[width=\linewidth, trim={0 0cm 0 0cm}]{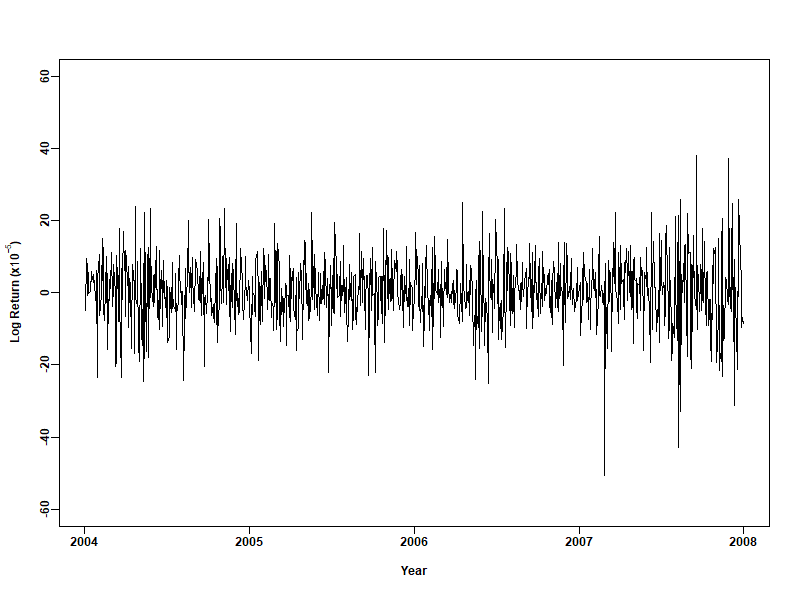}
		
		\small (a) Pre-crisis
	\end{minipage}
	\hfill
	\begin{minipage}[b]{0.48\textwidth}
		\centering
		\includegraphics[width=\linewidth, trim={0 0cm 0 0cm}]{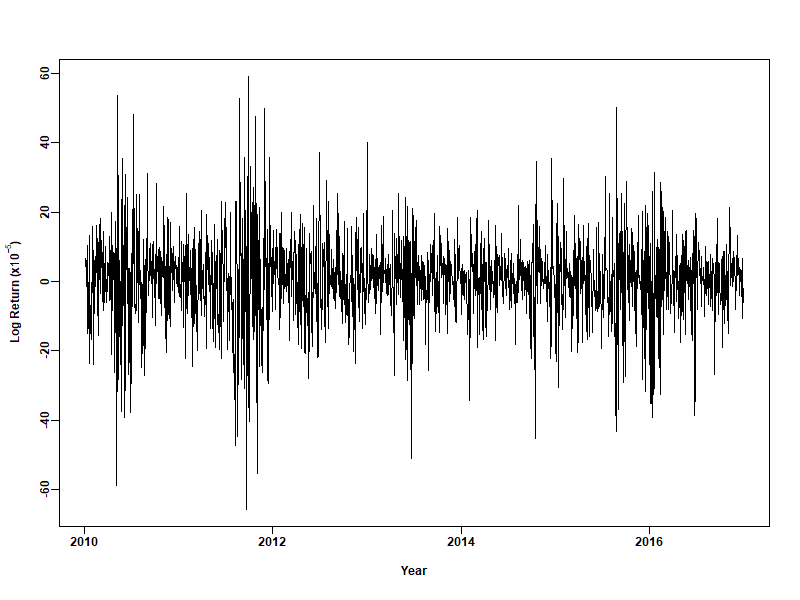}
		
		\small (b) Post-crisis
	\end{minipage}
	\caption{Log returns of the S\&P 500 index across two periods.}
	\label{fig:empiricalsp500}
\end{figure}

\subsubsection{Tuning parameters selected by CV}
Based on the CV-selected tuning parameters from the simulation and our preliminary exploration, we set the tuning grid as $p \in [0.01, 0.3]$ with an increment of 0.02 and $M \in [1,29]$ with an increment of 2 for the pre-crisis period. For the post-crisis period, we use $p \in [0.1, 0.3]$ with an increment of 0.02 and $M \in [1, 29]$ with an increment of 2. Figure~\ref{EP_cv} displays the cross-validation results for both periods. Because the factors are orthogonalized, RSA tends to select each factor with a relatively high selection probability, for example, in the post-crisis period, while its performance is not highly sensitive to the number of candidate models. These results closely resemble the CV findings under the low-correlation setting in Section \ref{app:manyrelevantvariables}.
\begin{figure}[!h]
	\centering
	\begin{minipage}[b]{0.48\textwidth}
		\centering
		\includegraphics[width=0.85\linewidth, trim={0 0cm 0 0cm}]{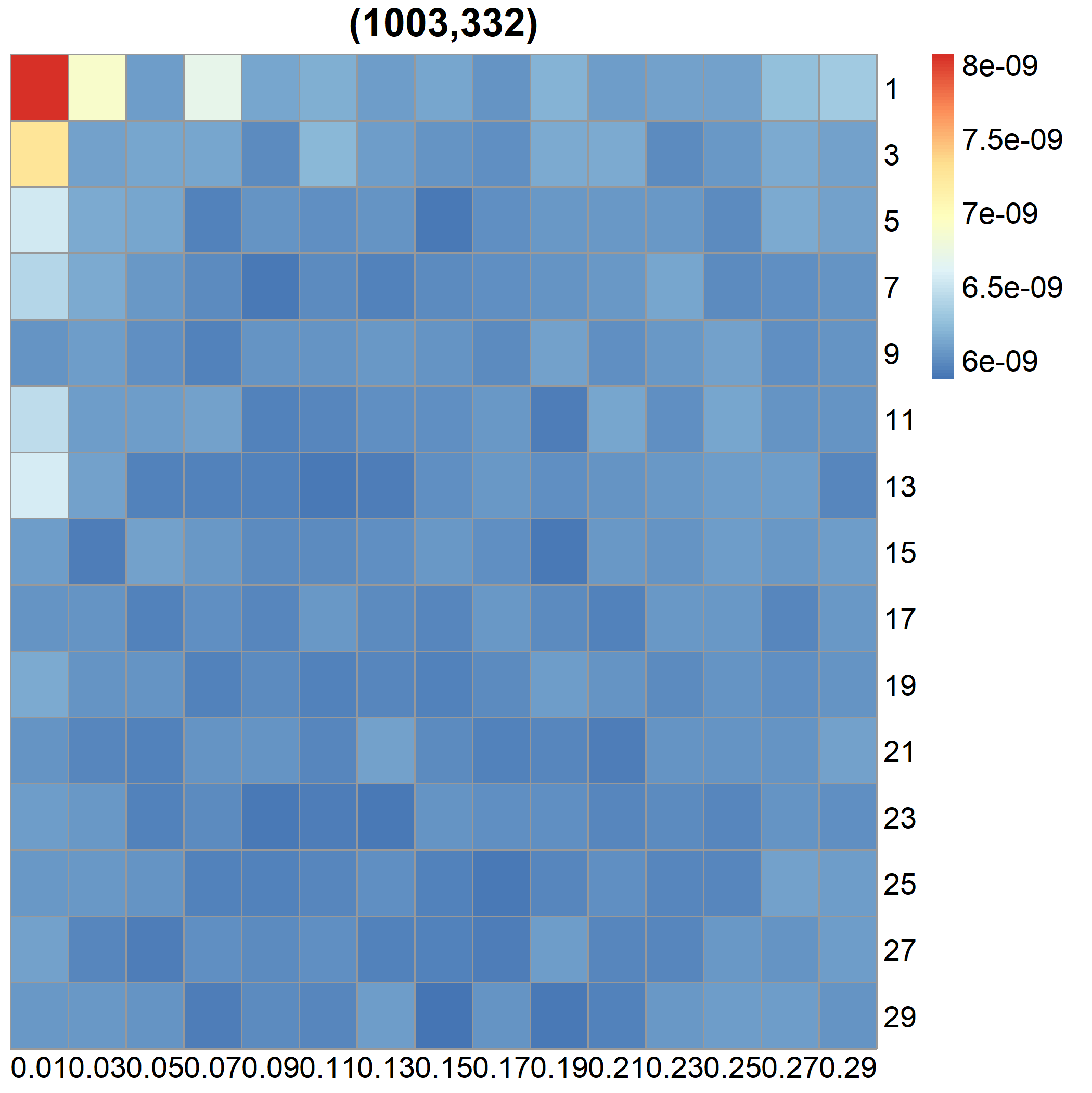}
		
		\small (a) Pre-crisis
	\end{minipage}
	\hfill
	\begin{minipage}[b]{0.48\textwidth}
		\centering
		\includegraphics[width=0.85\linewidth, trim={0 0cm 0 0cm}]{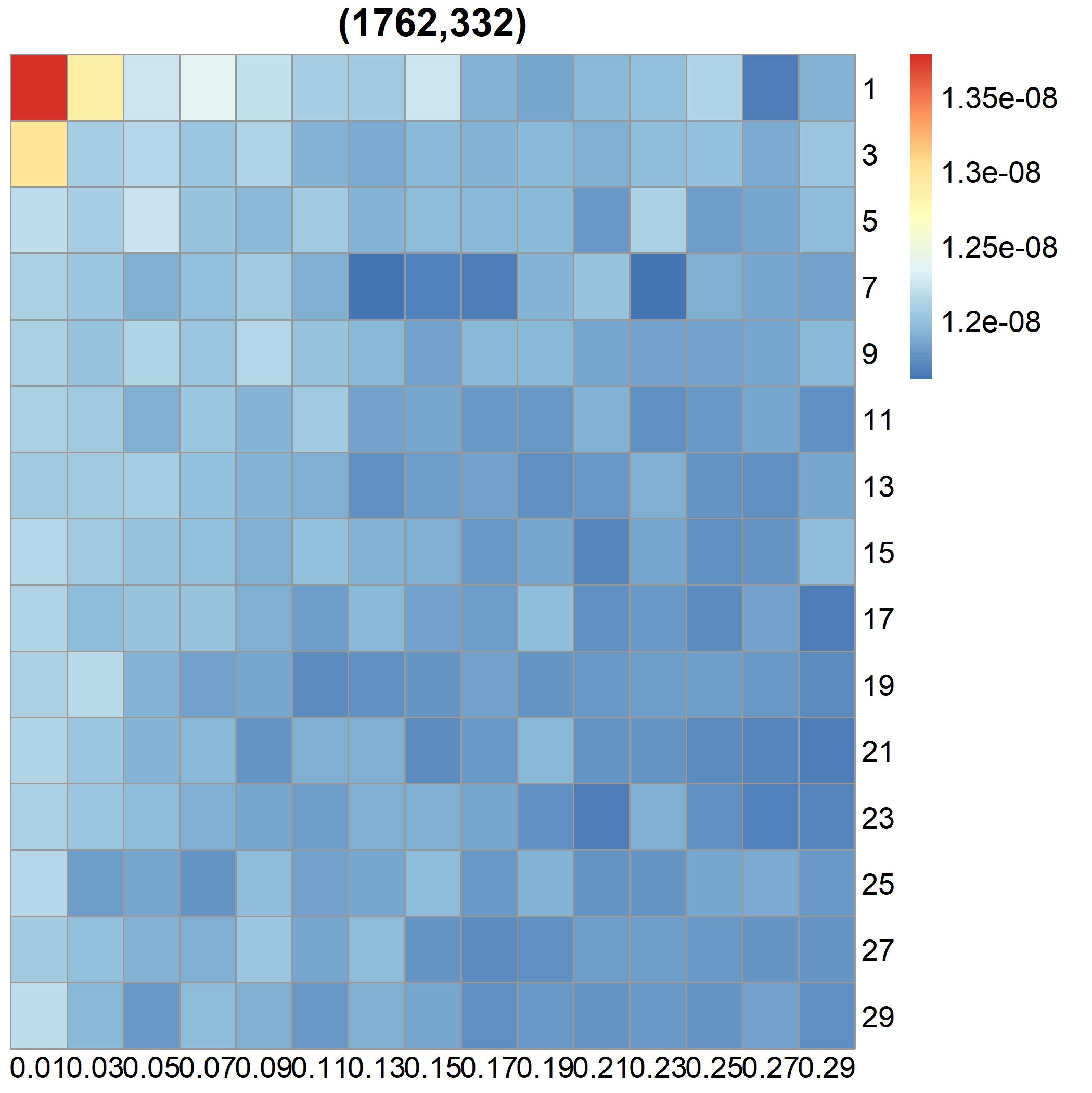}
		
		\small (b) Post-crisis
	\end{minipage}
	\caption{Cross-validation results for Section \ref{sec4}. Values in parentheses denote $(N, K)$. The horizontal axis represents the selection probability $p$, and the vertical axis indicates the number of candidate models $M$. Darker regions correspond to $(p, M)$ combinations yielding lower cross-validation errors.}
	\label{EP_cv}
\end{figure}

\subsubsection{Standard deviation of MSFE for different methods}

Figure \ref{fig:empiricalmsfesd} reports the standard deviation of MSFE for each forecast horizon. In the pre-crisis period, RSA exhibits lower volatility for most horizons, while in the post-crisis period, it consistently achieves the lowest prediction volatility in each horizon. 
\begin{figure}[htbp]
	\centering
	\begin{minipage}[b]{0.48\textwidth}
		\centering
		\includegraphics[width=\linewidth, trim={0 0cm 0 0cm}]{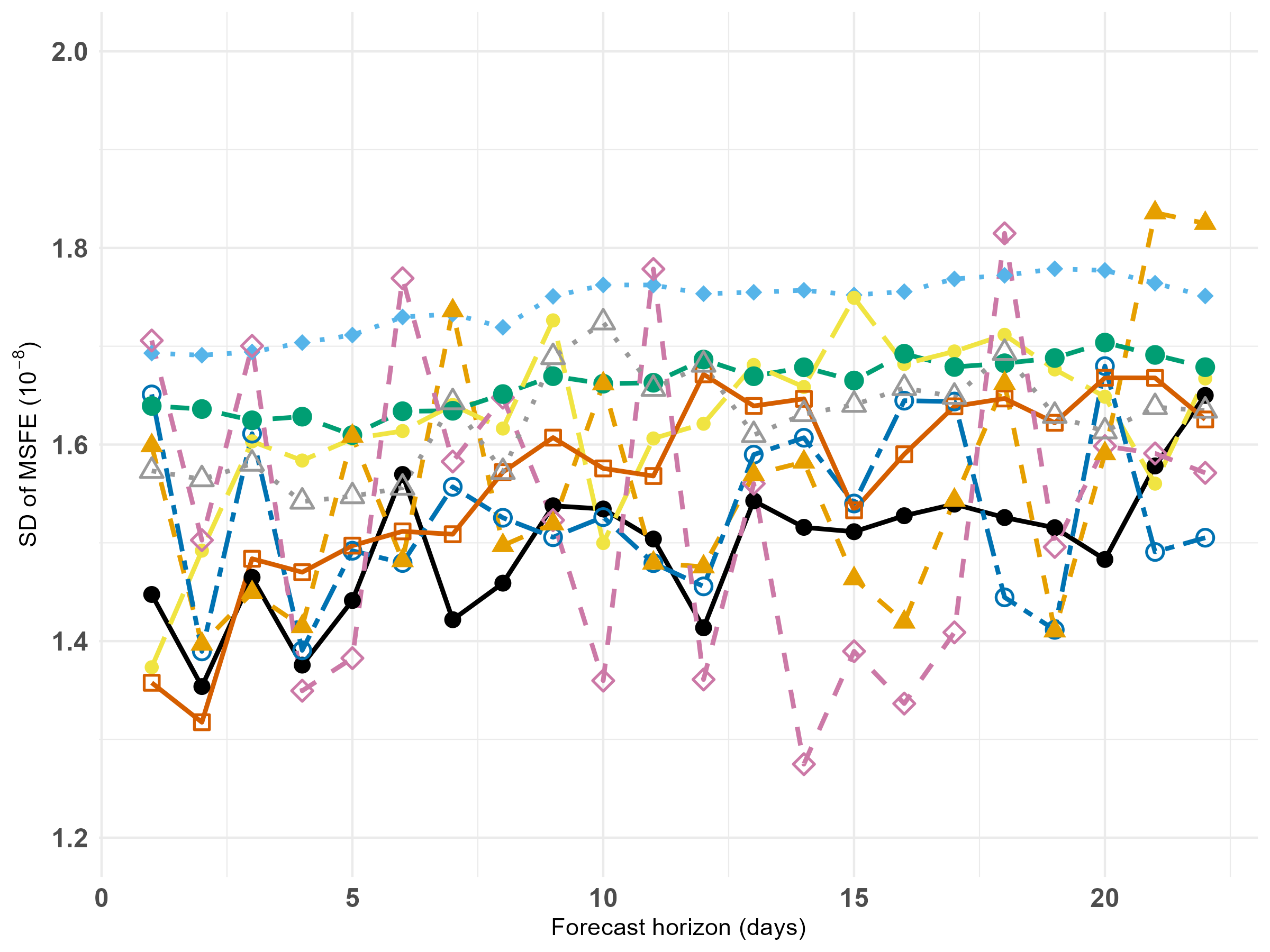}
		
		\small (a) Pre-crisis
	\end{minipage}
	\hfill
	\begin{minipage}[b]{0.48\textwidth}
		\centering
		\includegraphics[width=\linewidth, trim={0 0cm 0 0cm}]{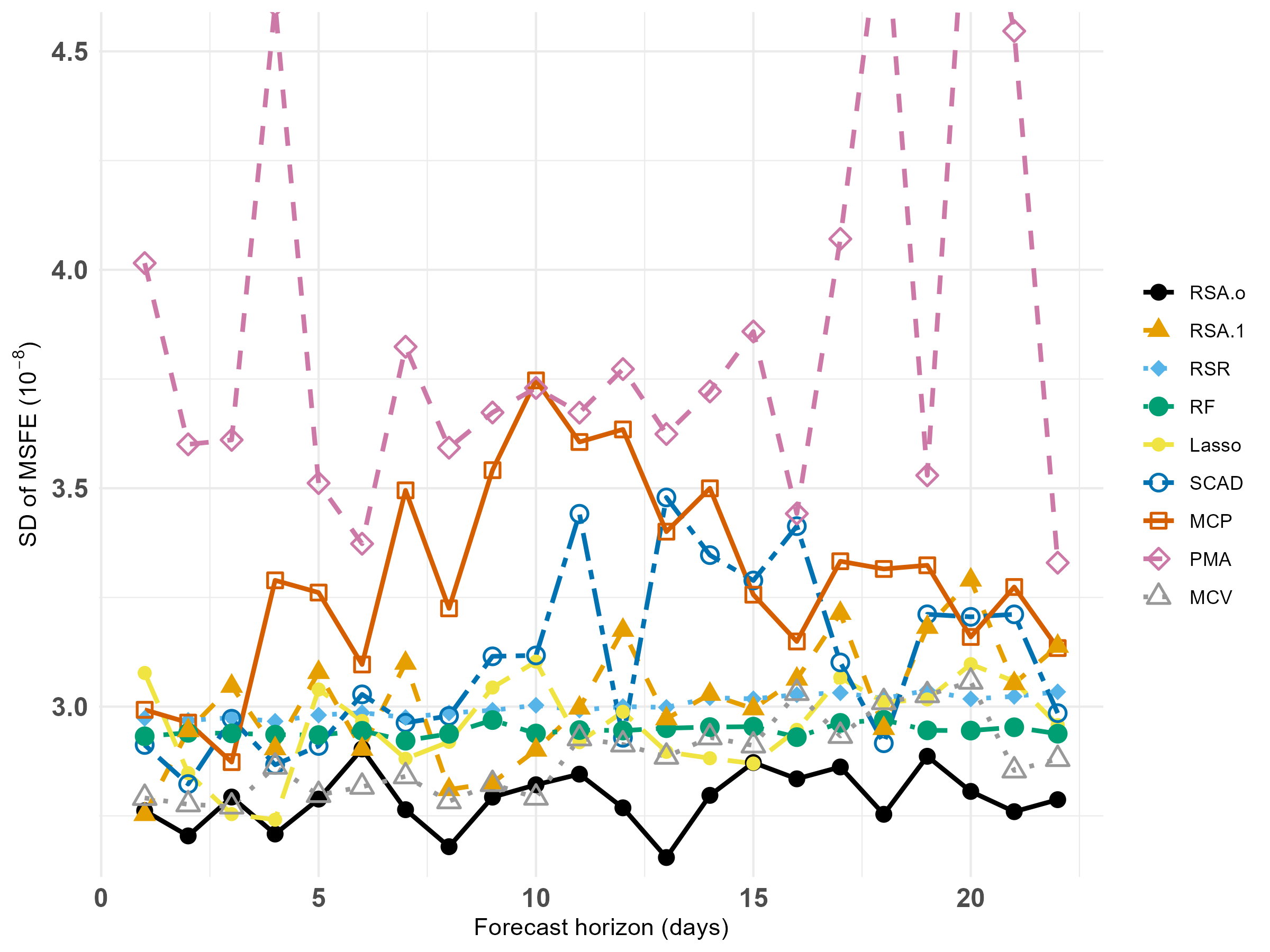}
		
		\small (b) Post-crisis
	\end{minipage}
	\caption{Standard deviation of MSFE across forecast horizons.}
	\label{fig:empiricalmsfesd}
\end{figure}

\subsubsection{MSE results for different methods}
Table \ref{EP_MSE} presents the training error for each subperiod, showing that RSA's superior out-of-sample performance is not achieved at the expense of higher training error.

\begin{table}[!h]
	\centering
	\caption{MSE ($\times 10^{-9}$) comparison for empirical analysis.}
	{\begin{tabular}{lcccccccccc} \toprule  
			Period & 	{RSA} & {RSR} & {Lasso} & {SCAD} & {MCP} & {PMA} & {RF} & RSA.1 & MCV  \\ 
			\midrule
			pre-crisis & 2.56  & 4.43  & 2.20   & 2.53  & 3.20   & 4.30   & 1.31  & 2.94  & 3.56 \\
			post-crisis & 6.77  & 9.40   & 5.25  & 6.07  & 8.03  & 8.73  & 2.93  & 8.12  & 7.66 \\
			\bottomrule
	\end{tabular}}
	\label{EP_MSE}
\end{table}

\end{document}